\numberwithin{equation}{section}
\numberwithin{figure}{section}
\newtheorem{thm}{Theorem}[section]
\newtheorem*{thm*}{Theorem}
\newtheorem{lemma}[thm]{Lemma}
\newtheorem{prop}[thm]{Proposition}
\newtheorem{coro}[thm]{Corollary}
\theoremstyle{definition}
\theoremstyle{remark}
\newtheorem{rmk}{Remark}[section]
\newtheorem{ex}{Example}[section]
\numberwithin{thm}{section}
\definecolor{kallecol}{rgb}{.99,.1,.5}
\definecolor{davidcol}{rgb}{.10,.1,.99}
\definecolor{delaracol}{rgb}{.1,.70,.30}
\definecolor{sketchcol}{rgb}{.4,.4,.8}
\definecolor{outlinecol}{rgb}{.8,.4,.3}
\newcommand{\zu}{\mathbf{u}}
\newcommand{\zubis}{\mathbf{v}}
\newcommand{\zs}{\mathbf{z}}
\newcommand{\zsbis}{\mathbf{w}}
\newcommand{\bu}{\mathbf{b}}
\newcommand{\vertical}{\textnormal{v}}
\newcommand{\horizontal}{\textnormal{h}}
\newcommand{\DomBall}[1]{\mathbf{D}(r)}
\newcommand{\DomBallInt}[1]{\mathbf{D}^\circ(r)}
\newcommand{\DomCirc}[1]{\mathbf{C}(r)}
\newcommand{\Neumann}{\mathtt{N}}
\newcommand{\Dirichlet}{\mathtt{D}}
\newcommand{\DorN}{{\Dirichlet/\Neumann}}
\newcommand{\LatticeSign}{{\epsilon}}
\newcommand{\mybar}[3]{%
	\mathrlap{\hspace{#2}\overline{\scalebox{#1}[1]{\phantom{\ensuremath{#3}}}}}\ensuremath{#3}
}
\newcommand{\Null}{\textnormal{\small Null}}
\newcommand{\Poles}[1]{\textnormal{S}^{[#1]}}
\newcommand{\tens}{\otimes}
\newcommand{\SymmAlg}{\mathrm{S}}
\newcommand{\SymmPow}[1]{\SymmAlg^{#1}}
\newcommand{\dgffptwise}{\phi}
\newcommand{\dgff}[1]{\langle\Phi,#1\rangle}
\newcommand{\gff}{\varphi}
\newcommand{\idField}{I}
\newcommand{\dOrigLocLinFi}{\mathcal{P}}
\newcommand{\dLocLinFi}{{\dOrigLocLinFi_{\nabla}^{\textnormal{lin}}}}
\newcommand{\dLinFields}{\dOrigFields_\nabla^{\textnormal{lin}}}
\newcommand{\dLinFieldsRad}[1]{\dOrigFields_\nabla^{\textnormal{lin}}{(#1)}}
\newcommand{\dOrigLocFi}{\mathcal{P}}
\newcommand{\dLocFi}{{\dOrigLocFi_\nabla}}
\newcommand{\dLocFiDeg}[1]{{\SymmPow{{#1}} \dLocLinFi}}
\newcommand{\dNuFi}{\mathcal{N}}
\newcommand{\dOrigFields}{\mathcal{F}}
\newcommand{\dFields}{{\dOrigFields_\nabla}}
\newcommand{\dFieldsDeg}[1]{{\dOrigFields_\nabla^{(#1)}}}
\newcommand{\ChiralFock}{\mathscr{F}}
\newcommand{\AntiChiralFock}{\mybar{0.85}{2.5pt}{\mathscr{F}}}
\newcommand{\FullFock}{\ChiralFock \tens \AntiChiralFock}
\newcommand{\dHolCurr}{\textnormal{J}}
\newcommand{\dHolCurrMode}[1]{\mathsf{J}_{#1}}
\newcommand{\dAntHolCurrMode}[1]{\mybar{0.95}{1pt}{\mathsf{J}}_{#1}}
\newcommand{\dHomFi}[2]{(\dFields)_{#1,#2}}
\newcommand{\dHomLinFi}[2]{(\dLinFields)_{#1,#2}}
\newcommand{\RadSupp}[1]{R_{\mathrm{Supp}}(#1)}
\newcommand{\dRep}[1]{J_{#1}}
\newcommand{\dRepBar}[1]{\mybar{0.75}{2.7pt}{J}_{#1}}
\newcommand{\FockId}{\mathbbm{1}}
\newcommand{\dLocLinFiRad}[1]{\mathcal{P}_{\nabla}^{\textnormal{lin}}{(#1)}}
\newcommand{\FunSupp}{\textnormal{supp }}
\newcommand{\PolySupp}{\textnormal{Supp }}
\newcommand{\HolCurr}{\textnormal{J}}
\newcommand{\AntiHolCurr}{\mkern 1.5mu\overline{\mkern-2mu {\textnormal{J}\mkern-1.7mu}\mkern 1.5mu}}
\newcommand{\holcurrfield}{\textnormal{j}}
\newcommand{\antiholcurrfield}{\mkern 1.5mu\overline{\mkern 0.5mu {\textnormal{j}\mkern-1.7mu}\mkern 1.5mu}}
\newcommand{\field}{X}
\newcommand{\primary}{\bullet}
\newcommand{\dual}{*}
\newcommand{\medial}{\textnormal{m}}
\newcommand{\corner}{\textnormal{c}}
\newcommand{\SqLat}{\Z^2}
\newcommand{\SqLatMesh}{\meshsz \Z^2}
\newcommand{\SqLatMeshPrim}{\meshsz \Z^2_\primary}
\newcommand{\SqLatMeshDual}{\meshsz \Z^2_\dual}
\newcommand{\SqLatMeshMedial}{\meshsz \Z^2_\medial}
\newcommand{\SqLatMeshDiamond}{\meshsz \Z^2_\diamond}
\newcommand{\SqLatCorner}{\Z^2_\corner}
\newcommand{\SqLatMedial}{\Z^2_\medial}
\newcommand{\SqLatDiamond}{\Z^2_\diamond}
\newcommand{\ZPrimary}{\Z^2_\primary}
\newcommand{\ZDual}{\Z^2_\dual}
\newcommand{\ZMedial}{\Z^2_\medial}
\newcommand{\ZDiamond}{\Z^2_\diamond}
\newcommand{\ZCorner}{\Z^2_\corner}
\newcommand{\CFTcorr}[3]{\left\langle #3 \right\rangle_{#1}^{#2}}
\newcommand{\CFTcorrBig}[3]{\Big\langle #3 \Big\rangle_{#1}^{#2}}
\newcommand{\GFFkernelWO}[2]{\mathrm{K}_{#1}^{#2}}
\newcommand{\GFFkernel}[4]{\GFFkernelWO{#1}{#2} (#3 ; #4)}
\newcommand{\testfun}[1]{\mathcal{D}(#1)}
\newcommand{\testfunZA}[1]{\mathcal{D}_\nabla(#1)}
\newcommand{\distrib}[1]{\mathcal{D}'(#1)}
\newcommand{\distribMC}[1]{\mathcal{D}'_\nabla(#1)}
\newcommand{\ddistribMC}[1]{\mathrm{Fun}_{\nabla}(#1)}
\newcommand{\term}[1]{{\bf #1}}
\newcommand{\ev}{\textnormal{ev}}
\newcommand{\interior}{\textnormal{int}^\sharp}
\newcommand{\Ball}{\mathbf{B}^\sharp}
\newcommand{\norm}[1]{\Vert #1\Vert}
\newcommand{\Green}{\mathrm{G}}
\newcommand{\GreenGrad}{\Green_\nabla}
\newcommand{\psmallbulletbar}{{\tikz[baseline=0ex]\draw[black,fill={black}] (0,0) circle (1.1pt) ;}}%
\newcommand{\psmallbullet}{{\tikz[baseline=-0.55ex]\draw[black,fill={black}] (0,0) circle (1.1pt) ;}}
\newcommand{\Vir}{\mathbf{Vir}}
\newcommand{\Hei}{\mathfrak{hei}}
\newcommand{\AntHei}{\overline{\Hei}}
\newcommand{\UHei}{\mathrm{Hei}}
\newcommand{\AntUHei}{\overline{\UHei}}
\newcommand{\ii}{\mathbbm{i}}
\newcommand{\re}{\textnormal{Re}\,}
\newcommand{\im}{\textnormal{Im}\,}
\newcommand{\eps}{\varepsilon}
\newcommand{\bdry}{\partial}
\newcommand{\diskRC}[2]{\mathbf{D}_{#1}(#2)}
\newcommand{\domain}{\Omega}
\newcommand{\ddomain}{\domain_\meshsz}
\newcommand{\ConfigSp}[2]{\mathrm{Conf}_{#1} \! \left( #2 \right)}
\newcommand{\ContFun}{C}
\newcommand{\cconj}[1]{\overline{#1}}
\newcommand{\half}{\frac{1}{2}}
\newcommand{\SymmGrp}{\mathfrak{S}}
\newcommand{\dist}{\mathrm{dist}}
\newcommand{\OO}{\mathcal{O}}
\newcommand{\oo}{o}
\newcommand{\id}{\mathsf{id}}
\newcommand{\idof}[1]{\id_{#1}}
\newcommand{\dmn}{\mathrm{dim}}
\newcommand{\spn}{\mathrm{span}}
\newcommand{\Kern}{\mathrm{Ker}}
\newcommand{\isom}{\cong}
\newcommand{\C}{\mathbb{C}} 
\newcommand{\R}{\mathbb{R}} 
\newcommand{\Z}{\mathbb{Z}} 
\newcommand{\Znn}{\Z_{\geq 0}} 
\newcommand{\Zpos}{\Z_{> 0}} 
\newcommand{\N}{\mathbb{N}} 
\newcommand{\UD}{\mathbb{D}} 
\newcommand{\bC}{\C} 
\newcommand{\bR}{\R} 
\newcommand{\bZ}{\Z} 
\newcommand{\bN}{\N} 
\newcommand{\dcint}[1]{\int^{\sharp}_{{#1}}} 
\newcommand{\dcoint}[1]{\sqint^{\sharp}_{{#1}}} 
\newcommand{\dd}[1]{\ud^\sharp {#1}}
\newcommand{\ud}{\mathrm{d}} 
\newcommand{\meshsz}{\delta}
\newcommand{\dnabla}{\nabla_{\sharp}}
\newcommand{\deebar}{\mkern 1.5mu\overline{\mkern -1.5mu {\dee\mkern-1.7mu}\mkern 1.5mu}}
\newcommand{\gdeebar}{\deebar_{\sharp}}
\newcommand{\dee}{\partial}
\newcommand{\gdee}{\dee_{\sharp}}
\newcommand{\pdee}{\dee^\psmallbullet_{\sharp}}
\newcommand{\pdeebar}{\deebar^\psmallbulletbar_{\sharp}}
\newcommand{\pder}[1]{\frac{\partial}{\partial{#1}}}
\newcommand{\pdder}[1]{\frac{\partial^2}{\partial{#1}^2}}
\newcommand{\pderof}[2]{\frac{\partial{#2}}{\partial{#1}}}
\newcommand{\Lapl}{\triangle}
\newcommand{\gLapl}{\Lapl_{\sharp}}
\newcommand{\dDLapl}{\Lapl_{\ddomain}^{\Dirichlet}}
\newcommand{\dDLaplDual}{\Lapl_{\ddomain^*}^{\Dirichlet}}
\newcommand{\dNLapl}{\Lapl_{\ddomain}^{\Neumann}}
\newcommand{\dvirL}[1]{\mathsf{L}_{#1}}
\newcommand{\dvirBarL}[1]{\mybar{0.95}{-0.5pt}{\mathsf{L}}_{#1}}
\newcommand{\VirL}[1]{\mathsf{L}_{#1}}
\newcommand{\VirBarL}[1]{\mybar{0.95}{-0.5pt}{\mathsf{L}}_{#1}}
\newcommand{\VirC}{\mathsf{C}}
\newcommand{\HeiJ}[1]{\mathsf{j}_{#1}}
\newcommand{\HeiK}{\mathsf{k}}
\newcommand{\AntHeiJ}[1]{\bar{\mathsf{j}}_{#1}}
\newcommand{\barDelta}{\mybar{0.7}{1.2pt}{\Delta}}
\newcommand{\EX}{\mathbb{E}}
\newcommand{\set}[1]{\left\{ #1 \right\}}
\newcommand{\voavac}{[1]}
\newcommand{\confmap}{\varphi}
\newcommand{\no}[1]{\, \rotatebox[]{90}{\scalebox{.8}{$\ \circ\,\circ$}}\,#1\,\rotatebox[]{90}{\scalebox{.8}{$\ \circ\,\circ$}}\,}
\newcommand{\noQuo}[1]{\, \rotatebox[]{90}{\scalebox{.8}{$\ \bullet\,\bullet$}}\,#1\,\rotatebox[]{90}{\scalebox{.8}{$\ \bullet\,\bullet$}}\,}
\newcommand{\Pair}{\mathrm{Pair}}
\newcommand{\ParPair}{\mathrm{PartPair}}
\titleformat{\subsection}[runin]
{\normalfont\bfseries}{\thesubsection}{0pt}{\hspace{1.5em}}[.]
\begin{document}

\title{\Large\scshape\bfseries
Fock space of local fields of the discrete GFF
and its scaling limit bosonic CFT\vspace{0.5cm}}

\author[$\!\,$]{David Adame$\,$-$\,$Carrillo\footnote{\texttt{david.adamecarrillo@aalto.fi}}}
\author[$\!\,$]{Delara Behzad\footnote{\texttt{d.behzad@uu.nl}}}
\author[$\!\,$]{Kalle Kyt\"ol\"a\footnote{\texttt{kalle.kytola@aalto.fi}}}

\affil[$*$ $\ddagger$]{%
\textit{
Department of Mathematics and Systems Analysis, Aalto University, \linebreak[4]
P.O. Box 11100, FI-00076 Aalto, Finland \linebreak[4]}}
\affil[$\dagger$]{%
\textit{Mathematical Institute, Utrecht University, \linebreak[4] Budapestlaan 6, 3584 CD Utrecht, The Netherlands}}

\date{}

\maketitle

\begin{abstract}
To connect conformal field theories (CFT) to probabilistic lattice models, recent works
\cite{HKV, AdameCarrillo-discrete_symplectic_fermions}
have introduced a novel definition of local fields of the lattice models. Local fields in this
picture are probabilistically concrete: they are built from random variables in the model. The
key insight is that discrete complex analysis ideas allow to equip the space of local fields
with the main structure of a CFT: a representation of the Virasoro algebra.

In this article, for the first time, we fully analyze the structure of the space of
local fields of a lattice model as a representation, and use this to establish a
one-to-one correspondence between the local fields of a lattice model and those of a
conformal field theory.
The CFT we consider is probabilistically realized in terms of the gradient of the
Gaussian Free Field~(GFF). Its space of local fields is just a bosonic Fock space for two chiral
symmetry algebras. The corresponding lattice model is the discrete Gaussian Free Field.
Our first main result is that the space of local fields of polynomials in the gradient of the
discrete GFF is isomorphic to the Fock space. These local fields make sense
with both Dirichlet and Neumann boundary conditions.
Our second main result is that
with the appropriate renormalization,
correlation functions of local fields of the discrete GFF
converge in the scaling limit to the correlation functions of the CFT.
The renormalization needed is, conceptually correctly,
according to the eigenvalue of the Virasoro generator~$\VirL{0} + \VirBarL{0}$
on the local field.
\vfill
\end{abstract}

\tableofcontents

\newpage

\section{Introduction}
	\label{sec:intro}
	\subsection{Lattice models as discretizations of conformal field theories}

Making mathematical sense of quantum field theories
in continuum space(-time) is often very challenging.
Many of the physicists' commonly used ways (such as path integrals) of specifying field theories
are not as such well-defined in con\-tinuum space.
One typically needs to introduce a short-distance cutoff (ultraviolet cutoff)
to write down proper mathematical definitions,
and then one should analyze what happens as the cutoff is removed.
One standard way of implementing the cutoff is to discretize the theory to a lattice.
The mesh size of the lattice then serves as the short-distance cutoff length scale.
The advantage is that defining the discretized field theories as probabilistic models
in (finite) lattice domains is usually straightforward. The difficulty then
lies in removing the ultraviolet cutoff by forming a scaling limit of these well-defined models,
i.e., proving that relevant limits exist when one lets the lattice mesh size tend to zero
and proving field-theoretically desired properties of these limits.
Despite still being challenging, the approach of lattice discretizations and their
scaling limits is one of the main robust and standard ways to construct field
theories~\cite{GJ-quantum_physics}.

Since the directly physically important quantities in a field theory
are its correlation functions, for a scaling limit to be viewed as a construction of a field
theory, it should address the convergence of correlation functions in the disretized
models as the lattice mesh size tends to zero.
In a field theory, correlation functions are assigned to any so-called local fields
at any number of spatial positions. Note that the terminology here can be slightly confusing:
``local fields'' does not simply refer to the ``basic fields'', in terms of which
for example the path integral of the field theory would be written. Local fields are meant to
represent \emph{all} of the locally observable quantities in the theory. The basic fields
are typically among those, but the notion of local fields is vastly more general.

Both the physics and mathematics research places a lot of emphasis on
two-dimensional conformal field theories (CFT) \cite{DMS-CFT, Gawedzki-lectures_on_CFT},
i.e., field theories in two-dimensional space(-time) with certain conformal invariance constraints.
One reason is that by the renormalization group philosophy,
under physically reasonable assumptions, universal
macroscopic (large-scale) behavior of general theories should be governed
by renormalization group fixed points which are CFTs.
For example statistical physics models at their critical points should renormalize
exactly to CFTs, and their universal behavior near the critical point should
be governed by the linearized renormalization group flow near
the fixed point, which in turn can be analyzed in terms of the CFT fixed point itself.
Another reason is that two-dimensional CFTs have remarkable structures that make them both
interesting and significantly more tractable than most other field theories,
and therefore they are great example cases to study in detail.
In particular, by virtue of conformal invariance, the state space of a two-dimensional CFT
carries a representation of (two commuting copies of) the Virasoro algebra.
In turn, by what is known as the state-field correspondence, the state space is exactly identified
with the space of local fields. Therefore the rich algebraic structure of the CFT is
present in the observable quantities which are to be inserted in correlation functions.
The Virasoro algebra action on local fields describes the (infinitesimal) changes of correlation
functions as (infinitesimal) conformal transformations are applied.
Simple but important special cases 
of conformal transformations
are scaling transformations, and consequently the eigenvalues of the corresponding
Virasoro generators~(${\VirL{0} + \VirBarL{0}}$) 
describe exactly how local fields should be renormalized as functions of
the lattice mesh when forming the scaling limit
(or more generally renormalization with any ultra-violet cutoff scale).
This is crucial to the constructive field theory approach via scaling limits, and it
furthermore contains the information about the renormalization group transformation in the vicinity
of the CFT fixed point, and via that, the critical exponents that govern the universal large-scale
behavior of the model near the critical point. The physicists' exact
(albeit not always rigorous) determination
of critical exponents in a number of interesting models via the use of the
representation theory of the Virasoro algebra has indeed been one of the
spectacular successes of two-dimensional conformal field theories. 

The objective in this article is to make concrete sense of
the above picture in the case of one particular conformal field theory and
its probabilistic lattice model discretization.
The field theory is essentially the simplest imaginable one,
the massless free boson in two dimensions, and it correspondingly 
provides the simplest theory in which the general question of (re)constructing
a CFT as a scaling limit is meaningful.
The lattice model is also arguably its simplest discretization,
the discrete Gaussian Free Field.
What is noteworthy is how much detailed structure we can, in this case, match
between the CFT and the
discrete probabilistic model. In fact, we view our results as the first ones to
establish a full conformal field theory as the scaling limit of a probabilistic
lattice model.
We believe that it will eventually be possible to obtain
similar complete CFT scaling limit results also for at least a few other theories~---
among which the most promising candidates are the Ising model and discrete
symplectic fermions,
building on~\cite{HKV} and~\cite{AdameCarrillo-discrete_symplectic_fermions},
respectively.

\subsection{The massless free boson and the Gaussian Free Field}
The massless free boson is, first of all, a fundamental example among conformal
field theories, and it is common to start discussing CFTs with the free boson
as the prototype, see, e.g.,
\cite{DMS-CFT, Gawedzki-lectures_on_CFT, KM-GFF_and_CFT}.
The physics description of the free boson field theory is usually given in terms
of path integral, so that for example the partition function (in the Euclidean signature)
in a planar domain~$\domain$ would be written as
\begin{align}\label{eq: free boson path integral}
\text{``} \,
\int e^{-S(\gff)} \, \mathscr{D} \gff \;
\text{''} , \qquad \text{ where } \qquad
S(\gff) = \frac{1}{8\pi}\int_{\domain} \| \nabla \gff (z) \|^2 \; \ud^2 z \, ,
\end{align}
and with the path integral taken formally over field
configurations~$\gff \colon \domain \to \R$.
The quantity~$S(\gff)$ in the exponential~--- the action of the theory~---
is a quadratic form in~$\gff$, the Dirichlet energy of~$\gff$.
By virtue of this gaussianity, it is actually not very difficult to give a precise
meaning to the massless free boson path integral
as a probability
measure~\cite{Gawedzki-lectures_on_CFT, Sheffield-GFF_for_mathematicians,
PW-lecture_notes_on_GFF, BP-GFF_and_LQG}.
In this probabilistic incarnation, the theory is referred to as the Gaussian Free Field (GFF).
If the domain~$\domain$ has boundaries, 
then in the path-integral
and in the corresponding Gaussian probability measure,
boundary conditions for~$\gff$ should be specified.
Dirichlet and Neumann boundary conditions are the most common choices.
Also unless there are at least some Dirichlet boundary conditions,
the zero-mode (``average'') of~$\gff$ requires special treatment;
the GFF~$\gff$ is then only well-defined up to additive constants.
We will actually work with the gradient~$\nabla \gff$ of the GFF
(see Section~\ref{sec:scal_lim} for the precise definition),
thus removing the additive constant ambiguity, and making the theory
well-defined with both Dirichlet and Neumann boundary conditions.

More complicated field theories can be obtained building on the simple
case of the free boson.
For example, allowing multiple components for the bosonic fields
and compactifying the target space gives rise to interesting
CFTs~\cite{Gawedzki-lectures_on_CFT}.
The GFF is furthermore
at the heart of constructive field theory beyond free fields, because one
can view its well-defined Gaussian measure as a reference and then include
interactions by adding a potential,
see~\cite{GJ-quantum_physics}.\footnote{Often such interacting field theories
would be treated perturbatively around the GFF.
We emphasize, however, that both in principle and in practice,
also nonperturbative constructive treatments of interacting theories
build on the GFF in this way.}
One prominent recent example
is giving a mathematical
meaning to the intricate Liouville conformal field theory via a path integral
\cite{DKRV-LQG_on_Riemann_sphere}, and proving that the path-integral construction indeed gives a conformal field
theory \cite{GKRV-Segals_axioms_and_bootstrap_for_Liouville_theory},
see~\cite{GKR-review_probabilistic_construction_and_bootstrap} for a review.

The GFF fundamentally underlies also many fascinating aspects of random geometry.
Via probabilistic couplings, the GFF is intimately related to SLE-type random curves
\cite{Dubedat-SLE_and_the_free_field_partition_functions_and_couplings,
SS-contour_lines_of_the_DGFF, IK-Hadamard_formula_and_coupling_of_SLE_with_free_field,
SS-contour_line_of_the_continuum_GFF, MS-imaginary_geometry_1, PW-global_and_local_multiple_SLEs}
and conformal loop
ensembles~\cite{QW-coupling_the_GFF_with_free_and_zero_bc,
ASW-bounded_type_thin_local_sets_of_GFF, PW-lecture_notes_on_GFF}.
These random curves can be studied in terms of the GFF and vice versa.

The ubiquity of the GFF, especially its relations to various discrete models,
may appear surprising. This is, however, at least
partly explained by its simplicitly and universality.
The articles \cite{BPR-a_characterization_of_GFF, BPR-one_plus_epsilon_moments_suffice,
AP-a_characterization_of_GFF} indeed show how modest assumptions suffice to characterize
the GFF.

\subsection{Conformal field theory: local fields and their correlation functions}

To properly specify a field theory, it is not sufficient to just give
an action~--- one also has to address what are the (locally)
observable quantities in it, i.e., what is the space of local fields of the theory.
By the state-field correspondence, this is the same data as the state space of the theory,
or its ``spectrum''.

Typically in a field theory there are some basic degrees of freedom, or basic fields,
in terms of which for example the action of the
theory is written in path integral formulations.
The free boson path integral~\eqref{eq: free boson path integral}
involves a single scalar field~$\gff$ in this
role of a basic field.
To illustrate the notion of local fields, it is easiest to start with examples.
It seems natural that the value $\gff(z)$ of the basic field at a point~$z$
would be an observable quantity located at~$z$.
Also the values of the derivatives~$\partial_\mu \gff (z)$,
$\partial_\mu \partial_\nu \gff (z)$, \ldots, of any order,
are determined by the configuration of the field~$\gff$ in an
infinitesimal neighborhood of~$z$, so they can be viewed as locally
observable quantities at~$z$.
Furthermore, one could consider polynomials (with suitable regularization)
in any of the above, such as~$\gff(z)^4$
or~$\big(\partial_\mu \gff(z)\big)^2$, etc.,
or even more complicated functions (again suitably regularized)
such as~$e^{\ii \alpha \gff(z)}$.
All of these examples are indeed quantities that one might want to assign
correlation functions to, and the informal 
examples illustrate
the idea that a local field is an observable quantity whose value is
determined locally (only accessing an infinitesimal neighborhood of a point)
from the appropriate basic degrees of freedom.

Note, however, that the basic degrees of freedom are
not necessarily themselves observable quantities~--- for example in gauge theories,
observable quantities must be independent of the (unphysical) gauge choices, whereas
the basic fields in the path integral depend on the gauge.
Similarly, for the GFF with Neumann boundary conditions (as well as for the GFF
on Riemann surfaces without boundaries), there is an ambiguity of
an additive constant in~$\gff$, so it makes sense to exclude the field value~$\gff(z)$
while allowing the values of its derivatives as local fields. We will indeed
do so: we will consider a bosonic CFT in which the local fields are the
\emph{polynomials in the derivatives of~$\gff$}. There are two reasons for this choice.
First, this allows us to talk about the same CFT
with different boundary conditions (in this article we explicitly state
results for Dirichlet and Neumann boundary conditions, but for example
mixtures of these could be handled similarly).
Second, this choice leads to a simple and concrete CFT.
The space of local fields of this CFT will simply be the chargeless Fock space
for two chiral Heisenberg algebras\footnote{The chiral part~$\ChiralFock$
of this state space is exactly the Heisenberg vertex operator algebra (VOA),
see e.g. \cite{Kac-VAs_for_beginners,
FB-vertex_algebras_and_algebraic_curves, LL-introduction_to_VOAs}.
General CFTs have more modules (for the appropriate VOA) in their state space~---
the bosonic CFT of interest here is particularly simple in this respect.},
which we will denote by 
\begin{align*}
\FullFock
\qquad\qquad \text{(the precise definition will be given in Section~\ref{sec:Fock background}).}
\end{align*}
In any domain~$\domain$, with Dirichlet or Neumann boundary conditions (abbreviated $\DorN$),
any local fields $F_1, \ldots, F_n \in \FullFock$ inserted at any distinct points
$z_1, \ldots, z_n \in \domain$ have correlation functions 
\begin{align*}
\CFTcorrBig{\domain}{\DorN}{F_1(z_1) \cdots F_n(z_n)}
\qquad \text{(the precise definition will be will be given in Section~\ref{sec:scal_lim})}
\end{align*}
in this free bosonic CFT.
Indeed a field theory should fundamentally
consist of the above kind of data:
a space of local fields, and correlation functions assigned to any
local fields at any finite number of distinct points in any domain.
This data should of course also satisfy
suitable axioms of a field theory, notably concerning the existence of
operator product expansions (OPE),
i.e., short-distance expansions of correlation functions
of local fields
in terms of other local fields contained in the theory\footnote{In
typical conformal field theories, there are particularly important local fields
called \emph{primary fields}, but they alone do not form a full conformal field
theory for example for the reason that their OPEs involve other local fields
(\emph{descendants}) besides just the primary fields themselves.}.

As we allow only polynomials in the gradient of~$\gff$ as local fields, we are
excluding not just the field~$\gff(z)$ itself, but also for example the
``vertex operators''~$e^{\ii \alpha \gff(z)}$.
Such vertex operators would be in many ways very interesting local fields.
Still, reasons to exclude them are as above~---
we want local fields that make sense with more general boundary
conditions (including Neumann), and we prefer to focus on
a simple well-behaved CFT.
In the cases where free field is defined without additive constant ambiguities
(for example with Dirichlet boundary conditions),
extending our calculations to cases involving vertex operators seems
feasible. But that would be a distraction from our primary objective, which is to
provide a first complete CFT scaling limit result for a lattice model.

\subsection{Local fields in the lattice model}

The discrete Gaussian Free Field (DGFF)
is the most straightforward discretization of the GFF
to finite subgraphs~$\ddomain \subset \meshsz \bZ^2$ of the square lattice
with mesh size~$\meshsz>0$~\cite{Sheffield-GFF_for_mathematicians}.
It is a Gaussian measure associated with the 
discrete analogue
\begin{align*}
S(\dgffptwise) =
  \frac{1}{8\pi} \sum \big( \dgffptwise(\zs) - \dgffptwise(\zsbis) \big)^2
\end{align*}
of~\eqref{eq: free boson path integral}, where the sum ranges over
pairs of nearest neighbor vertices $\zs,\zsbis \in \ddomain$ in the discrete domain
(the precise definition will be given in Section~\ref{sec:currents}). Again, we
consider~$\dgffptwise$ as defined only up to an additive constant,
i.e., our Gaussian process is just the discrete gradient of~$\dgffptwise$.
This makes sense with both Dirichlet and Neumann boundary conditions~($\DorN$)
in arbitrary discrete domains~$\ddomain$.

\begin{figure}[h!]
	\centering
	\begin{overpic}[scale=0.757, tics=10]{./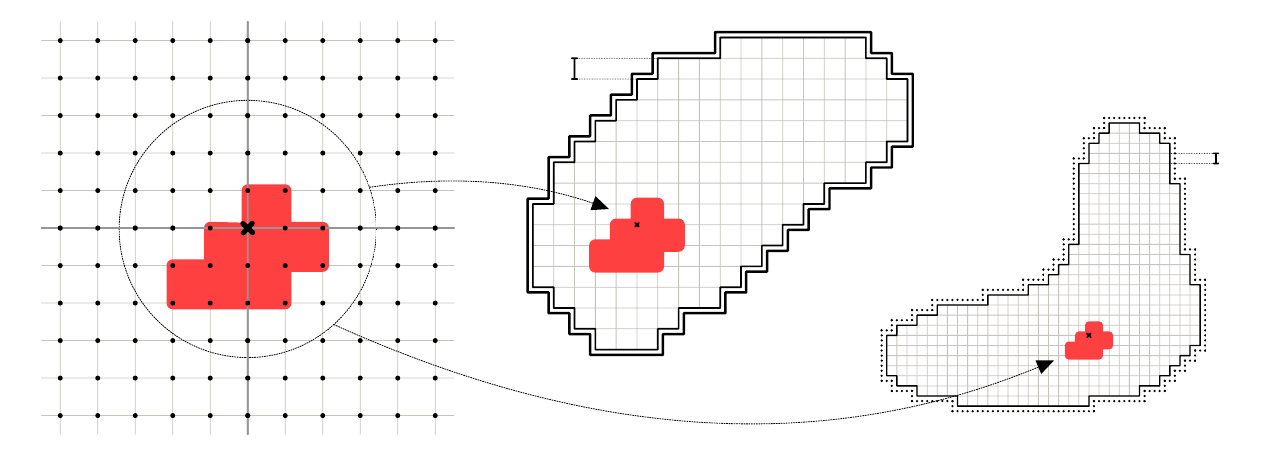}
		\put(0.4,32.2){$\Z^2$}
		\put(43,29.6){\footnotesize$\meshsz$}
		\put(71,31.6){$\domain_\meshsz$}
		\put(96,22.6){\footnotesize$\meshsz'$}
		\put(81.5,25.6){$\domain'_{\meshsz'}$}
		\put(75.2,1.5){\footnotesize$\mathtt{Dirichlet}$}
		\put(58.3,33.8){\footnotesize$\mathtt{Neumann}$}
	\end{overpic}
	\centering
	\caption{%
	A local field for a lattice model 
	encodes a rule to construct concrete random variables at arbitrary
	points~$\zs$ in 
	arbitrary discrete domains~$\ddomain$ of any mesh size~$\meshsz>0$
	and for any choice of boundary conditions. These random variables
	take into account 
	some fixed finite set of lattice sites around their points of insertion.}
	\label{fig: local fields intuition}
\end{figure}

In a lattice model, the notion of a local field obviously should not refer to
infinitesimal neighborhoods of a point~--- rather any finite lattice distance is
to be considered microscopic, and we should allow local fields to ``see''
finite lattice regions around their positions of insertion.
An example is provided by
the following discretization of the gradient squared of the DGFF.
Given a discrete domain~${\ddomain \subset \meshsz \bZ^2}$,
a choice of boundary conditions (Dirichlet or Neumann), and a point~$\zs \in \ddomain$
(such that all of its four neighbors also belong to~$\ddomain$),
the discrete gradient squared is the random variable
\begin{align}\label{eq: discrete gradient squared}
\Big( \frac{\dgffptwise(\zs+\meshsz) - \dgffptwise(\zs-\meshsz)}{2} \Big)^2
    + \Big( \frac{\dgffptwise(\zs+\ii\meshsz) - \dgffptwise(\zs-\ii\meshsz)}{2} \Big)^2
\end{align}
on the probability space of the DGFF in~$\ddomain$.
For different choices of mesh size~$\meshsz$, discrete domain, boundary conditions, and
insertion point, we will want to view the correspondingly obtained random variable as an
incarnation of the same abstract local field. So,
following~\cite{GHP-pattern_probabilities} and~\cite{HKV},
the idea of a local field of the lattice model is a fixed rule,
as illustrated schematically in Figure~\ref{fig: local fields intuition},
to construct random variables in a natural translation
invariant fashion from the model's random configuration restricted to a finite
lattice neighborhood.
The same rule is to be applied
in all discrete domains, with any boundary conditions,
and at any point~--- provided just that the domain is large
enough around the point so that the given finite neighborhood fits in it,
which in scaling limit considerations is guaranteed for all small
enough lattice mesh sizes~$\meshsz$.

For specificity and in accordance with the notion of local fields in our
chosen free boson CFT, we require the rule defining a DGFF local field moreover be
given by polynomials in the differences of the DGFF values.
The space of such local \emph{field polynomials}
is denoted by~$\dLocFi$
(the precise definition will be given in Section~\ref{sec:currents}).
A crucial subtlety, emphasized and treated in~\cite{HKV},
is that among such field polynomials, there may be
different ones which represent the same observable quantity, and should
therefore be identified as local fields.
Namely, it is possible that for two different field
polynomials $P_1, P_2 \in \dLocFi$,
in all sufficiently large discrete domains~$\ddomain$,
with all allowed boundary conditions, we have the coincidence of
all expected values involving the random variables associated
with $P_1$ and $P_2$ at a point~$\zs$ and other random variables
supported at least some microscopic distance~$\OO(\meshsz)$ away from
that point.
In this case $P_1$ and $P_2$ were called \emph{correlation equivalent}
in~\cite{HKV}, and the space~$\dNuFi \subset \dLocFi$ of field polynomials
which are correlation equivalent to zero was called~\emph{null fields}.
The appropriate definition of the space of \emph{local fields},
corresponding to distinct observable quantities, is then the quotient
\begin{align}
\dFields := \dLocFi / \dNuFi \, . 
\end{align}

In \cite{HKV} it was proven that the space~$\dFields$ carries representations
of the Heisenberg algebra and the Virasoro algebra: the Heisenberg algebra
generators act by discrete Laurent-monomial weighted discrete contour
integrations of discrete holomorphic currents, and the Virasoro generators
are obtained from them by a Sugawara construction.
By repeating the same with antiholomorphic variants, one
also obtains second copies of Heisenberg and Virasoro algebras acting
on the space~$\dFields$, which commute\footnote{%
To obtain the commutation of these holomorphic and antiholomorphic chiral
algebras acting of lattice model local fields,
however, it is crucial to fix one convention
about the discrete Laurent monomials from the published version of~\cite{HKV}.
The fixed convention will be given in our Proposition~\ref{prop: monomials},
and with this, we provide the details of the commutation in
Proposition~\ref{prop: comm relations}.} with the original copies.

\subsection{Main results}

As described above, both the space~$\dFields$ of lattice model local fields and
the space~$\FullFock$ of local fields of the free boson CFT
carry representations of two commuting copies of Heisenberg and Virasoro algebras.
Our first main result is the following.
\begin{thm*}[Theorem~\ref{thm: main theorem about Fock space structure}, informally stated] \ \\
The Fock space~$\FullFock$ of local fields of the free boson CFT and the
space~$\dFields$ of local fields of the gradient of the discrete Gaussian Free Field
are isomorphic,
\begin{align*}
\dFields \isom \FullFock \, ,
\end{align*}
as representations of two commuting copies of the Heisenberg algebra and as
representations of two commuting copies of the Virasoro algebra
with central charge~$c=1$.
\end{thm*}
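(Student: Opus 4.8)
The plan is to construct an explicit intertwiner $\Psi \colon \FullFock \to \dFields$ and to prove it is an isomorphism by combining the irreducibility of the source with a generation (spanning) statement for the target. Recall that the chiral Fock space $\ChiralFock$ is the vacuum module of the Heisenberg algebra $\Hei$: it is generated by the vacuum $\FockId$, which is annihilated by all nonnegative modes $\HeiJ{n}$ ($n \geq 0$), and the central element $\HeiK$ acts by a nonzero scalar (the level fixed by the $c=1$ normalization). The first step is therefore to verify that the identity field $\IdField \in \dFields$ is a highest-weight vector for the two commuting discrete Heisenberg actions with exactly the same data: $\HeiJ{n} \IdField = \AntHeiJ{n} \IdField = 0$ for all $n \geq 0$, and $\HeiK$ acts on $\dFields$ by the same level as on $\FullFock$. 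The vanishing under nonnegative modes should follow directly from the discrete contour-integral definition of the current modes applied to the (empty) identity field, while the matching of levels is read off from the discrete Heisenberg commutation relations established earlier. Granting this, the universal property of the induced (Fock) module produces a unique homomorphism of $\Hei \tens \AntHei$-modules $\Psi$ sending $\FockId \mapsto \IdField$, explicitly $\HeiJ{-\lambda} \AntHeiJ{-\mu} \FockId \mapsto \HeiJ{-\lambda} \AntHeiJ{-\mu} \IdField$ on the basis indexed by pairs of partitions $(\lambda,\mu)$.

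Injectivity is then automatic. For nonzero level the vacuum Fock module of a Heisenberg algebra is irreducible, and hence so is $\FullFock = \ChiralFock \tens \AntiChiralFock$ as a module over the two commuting copies $\Hei \tens \AntHei$. A nonzero intertwiner out of an irreducible module is injective, and $\Psi(\FockId) = \IdField \neq 0$. This reduces the entire theorem to surjectivity, i.e., to showing that $\IdField$ generates $\dFields$ under the two Heisenberg actions; equivalently, that every local field polynomial is, modulo null fields, a linear combination of the current-words $\HeiJ{-\lambda} \AntHeiJ{-\mu} \IdField$.

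Surjectivity is where I expect the real work to lie, and it is precisely where the analysis of the null fields $\dNuFi$ enters. The strategy is a leading-term (triangularity) argument organized by the bigrading by $(\VirL{0}, \VirBarL{0})$-eigenvalues together with the filtration by polynomial degree in the discrete gradient. On the Fock side the current-words give a basis with graded dimensions equal to the product of the two partition-generating series. On the lattice side one writes an arbitrary field polynomial in the discrete gradient values, identifies the contribution of each mode $\HeiJ{-n}$ as a specific discrete-holomorphic building block with a controlled leading term, and argues inductively that any field polynomial reduces, modulo $\dNuFi$, to a combination of current-words of the same or lower degree. Concretely, I would prove the graded upper bound $\dim (\dFields)_{h,\bar h} \leq \#\{(\lambda,\mu)\}$; together with the injectivity of $\Psi$ this forces equality in every graded piece, and hence that $\Psi$ is onto. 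The hard part is establishing that there are \emph{enough} null fields to make this upper bound hold --- that no extra local observables survive beyond the current-words --- which is exactly the discrete-complex-analytic content of the correlation-equivalence relation defining $\dNuFi$.

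Finally, because the Virasoro generators on both $\FullFock$ and $\dFields$ are obtained from the respective Heisenberg generators by the same Sugawara formula, the Heisenberg intertwiner $\Psi$ automatically intertwines the two commuting Virasoro actions, and the central charge matches at $c=1$ since it is determined by the common Heisenberg level. Thus $\Psi$ is simultaneously an isomorphism of Heisenberg and of Virasoro representations, as claimed.
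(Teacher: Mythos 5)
Your opening and closing steps coincide with the paper's easy half: the annihilation of $\idField$ by the nonnegative modes, the observation $\idField \neq 0$, the universal property of the Fock module plus irreducibility giving an injective intertwiner (Lemma~\ref{lem: abstract nonsense} and Corollary~\ref{cor: easy inclusion}), and the automatic Sugawara compatibility at the end. You also correctly locate the entire difficulty in surjectivity, i.e.\ in producing enough null fields. But the mechanism you propose for that step has a genuine gap, in two respects.

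First, organizing the dimension count by the joint $(\dvirL{0},\dvirBarL{0})$-eigenspaces is circular: on $\dFields$ these operators are only known to be diagonalizable with finite-dimensional joint eigenspaces \emph{after} the isomorphism is established (the paper is explicit about this at the end of Section~\ref{sub: Sugawara}). A partition-counting upper bound on the joint eigenspaces would not rule out local fields lying outside the span of eigenvectors, so the graded comparison cannot get started. The paper's substitute is a manifestly exhaustive filtration of the \emph{linear} fields by radius of support, $\dLinFieldsRad{r}$, with the upper bound $\dim \dLinFieldsRad{r} \le 4r-1$ produced by an actual construction: the discrete divergence theorem and harmonic measure push any representative $\field(\zubis)$ to a combination of $\field(\bu)$ with $\bu$ on the boundary of the lattice ball, and the nullity of $\field(0)$ supplies one further relation (Lemma~\ref{lem: dimension upper bounds in filtration}). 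This is the step that manufactures the null fields you need, and your sketch contains no replacement for it.

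Second, the reduction from higher degree to linear is not a routine triangularity argument in the polynomial degree, because the null fields are not homogeneous in degree --- e.g.\ $\big(\gLapl\field(\zu)\big)\field(\zubis)+4\pi(\delta_{\zu,\zubis}-\delta_{\zu,0})$ from Example~\ref{ex: quadratic nulls} mixes degrees $2$ and $0$ --- so $\dFields$ inherits neither a multiplication nor a usable degree grading from $\dLocFi$. The paper's device is a normal ordering whose contractions use the \emph{full-plane} lattice Green's function, together with the key fact (Lemma~\ref{lem: normal ordering with null fields}, whose proof requires the complete classification of linear null fields from Corollary~\ref{coro: linear nulls}) that a single null linear factor annihilates the entire normal-ordered product; only then can iterated current-mode words be matched with normal-ordered products of linear fields (Lemma~\ref{lem: basis elements as normal ordered products}) and the linear classification be leveraged. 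Without these two ingredients your surjectivity argument does not close.
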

To spell out the meaning in the most concrete terms,
the isomorphism of Theorem~\ref{thm: main theorem about Fock space structure}
provides a one-to-one correspondence
\begin{align*}
\set{\begin{array}{c}
      \text{lattice model} \\
      \text{local fields}
     \end{array}}
\; \xleftrightarrow{\ \text{ 1-to-1 }\ }\;
\set{\begin{array}{c}
      \text{conformal field theory} \\
      \text{local fields}
     \end{array}}
\end{align*}
whose two directions have the following interpretations:
\begin{itemize}
\item
Any random variable, which is a
polynomial in differences of the DGFF values at finitely many lattice neighbors
of a point, determines by a translation invariant rule an abstract field polynomial
$P \in \dLocFi$, and then by the above isomorphism also a unique associated 
local field $F \in \FullFock$ of the free boson CFT.
The CFT local field~$F$ is zero in the Fock space
if and only if the field polynomial is null\footnote{In practical terms,
being null means that the DGFF random variables given by that formula
in general discrete domains and with general boundary conditions have
vanishing correlations with anything else at least some microscopic distance away.},
$P \in \dNuFi$.
\item To any Fock space local field $F \in \FullFock$ of the free boson CFT,
one can associate a corresponding lattice model local field and
a representative abstract field polynomial $P \in \dLocFi$,
and therefore corresponding
random variables in the DGFF at any point in any sufficiently large discrete
domain. These random variables are 
polynomials in differences of the DGFF values at finitely many lattice neighbors
of the point, given by a translation invariant formula.
The field polynomial~$P$ associated to~$F$ is unique modulo
null fields.\footnote{In practical terms again, this
means uniqueness up to the addition of random variables that do not affect
correlations with anything else at least some microscopic distance away.}
\end{itemize}

The raison d'\^etre of local fields is the formation of correlation
functions. In turn, lattice discretizations of
quantum field theories are meant to recover such correlation functions as
appropriately renormalized scaling limits.
Our second main result concerns the
scaling limits of probabilistic correlations of local fields
in the lattice model when conceptually correct renormalization
is applied, and it identifies these scaling limits as CFT correlations.
By general principles, the right renormalization of a local field
should be determined via the structure of the Virasoro representations,
in terms of its $\VirL{0} + \VirBarL{0}$ eigenvalue, and our result
does precisely this.

Let us concretely illustrate scaling limit considerations with the field
polynomial~\eqref{eq: discrete gradient squared} from our earlier example.
Perhaps contrary to a naive expectation for a squared discrete gradient,
this local field has nonvanishing limits for its correlation functions without
any renormalization. These limits, however, are trivial in the
sense that they are constant as functions of the position where the
field is inserted~--- they in fact accidentally capture a component
corresponding to the CFT identity field!
A field polynomial which actually corresponds
to the regularized gradient squared field in the CFT
via the isomorphism~${\FullFock \isom \dFields}$ of our first main result, is
obtained by subtracting from~\eqref{eq: discrete gradient squared}
a suitable constant, $4 \pi - 8$.
After thus getting rid of the unwanted identity field component, we can apply the
renormalization anticipated for the gradient squared, i.e., divide by the
square of the lattice mesh~$\meshsz$. The random variables
\begin{align*}
\meshsz^{-2} \Bigg( \Big( \frac{\dgffptwise(\zs+\meshsz) - \dgffptwise(\zs-\meshsz)}{2} \Big)^2
    + \Big( \frac{\dgffptwise(\zs+\ii\meshsz) - \dgffptwise(\zs-\ii\meshsz)}{2} \Big)^2
    - 4 \pi + 8 \Bigg)
\end{align*}
then indeed have nontrivial scaling limits for their correlation functions.
For more general field polynomials,
the right procedure of subtracting accidental
components and applying the right renormalization is not a priori obvious.
Serendipitously, the isomorphism of our
first main result directly gives
the right counterparts to any CFT fields.
Our second main theorem, given below, then identifies the right renormalization
and scaling limits of correlations.
It states, in particular, that the homogeneous components for
renormalization are exactly the eigenspaces of
$\VirL{0} + \VirBarL{0}$ in our space~$\dFields$
of lattice model local fields. Any Fock space field can
be written as a sum of such homogeneous components.

\begin{thm*}[Theorem~\ref{thm: main theorem about scaling limits}, informally stated] \ \\
Let $F_1 , \ldots, F_n \in \FullFock$ be $n$ eigenvectors of
$\VirL{0} + \VirBarL{0}$ with respective eigenvalues $D_1 , \ldots, D_n$.
Let $\domain \subset \bC$ be an open
simply-connected proper subset of the plane,
and let $z_1, \ldots, z_n \in \domain$ be $n$ distinct points in it.
Let $(\ddomain; \zs_1^\meshsz, \ldots, \zs_n^\meshsz)$ be discrete domains
$\ddomain \subset \meshsz \bZ^2$ with $n$ marked vertices approximating
$(\domain;z_1,\ldots,z_n)$ in Carath\'eodory sense as $\meshsz \to 0$.
Fix a choice of boundary conditions, Dirichlet or Neumann~($\DorN$).
Then we have the following scaling limit of discrete GFF expected values
\begin{align*}
\frac{1}{\meshsz^{D_1 + \cdots + D_n}} \,
\EX_{\ddomain}^{\DorN} \Big[ F_1^{\ddomain}(\zs_1^\meshsz)
        \cdots F_n^{\ddomain}(\zs_n^\meshsz) \Big]
    \; \underset{\meshsz \to 0}{\longrightarrow} \;
    \CFTcorrBig{\domain}{\DorN}{F_1(z_1) \cdots F_n(z_n)} \,
\end{align*}
where each
$F_j^{\ddomain}(\zs_j^\meshsz)$, is a random
variable at~$\zs_j^\meshsz$ associated with the CFT
field~$F_j$ via the isomorphism
$\FullFock \isom \dFields$, and the right hand side is the CFT
correlation of the Fock space fields $F_1, \ldots, F_n$
in~$\domain$ with the chosen boundary conditions.
\end{thm*}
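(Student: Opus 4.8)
The plan is to exploit the Gaussian nature of both the lattice and the continuum theories, so that everything reduces to the convergence of (discrete derivatives of) Green's functions. Since both sides of the claimed identity are multilinear in $F_1, \ldots, F_n$, and since the isomorphism $\FullFock \isom \dFields$ of the first main theorem respects the $\VirL{0} + \VirBarL{0}$-grading, it suffices to prove convergence when each $F_j$ is a convenient homogeneous generator of the eigenspace with eigenvalue $D_j$. The natural choice is to take the $F_j$ to be the images under the isomorphism of vectors $\HeiJ{-n_1} \cdots \HeiJ{-n_k} \AntHeiJ{-m_1} \cdots \AntHeiJ{-m_l} \FockId$, whose $\VirL{0} + \VirBarL{0}$-eigenvalue is $D_j = \sum_i n_i + \sum_i m_i$. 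Under the isomorphism these correspond to explicit normally-ordered field polynomials $\no{\cdots} \in \dLocFi$ in discrete difference quotients of the DGFF; any null-field representative may be discarded, since by the definition of $\dNuFi$ it does not affect correlations with insertions at macroscopic distance.

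The second step is to apply Wick's theorem on both sides. The DGFF is a centered Gaussian field, so the expectation $\EX_\ddomain^\DorN\big[ F_1^\ddomain(\zs_1^\meshsz) \cdots F_n^\ddomain(\zs_n^\meshsz) \big]$ of a product of polynomials in the discrete currents expands, by the discrete Isserlis--Wick formula, into a sum over pairings of the elementary current factors. Because each $F_j^\ddomain$ is normally ordered, self-contractions among factors at the same insertion point $\zs_j^\meshsz$ are removed, and only pairings between factors at distinct points survive; each such pair contributes a discrete mixed difference of the discrete Green's function $\Green_\ddomain$ with the chosen boundary conditions, evaluated at a pair of distinct points. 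The same combinatorics governs the continuum correlation $\CFTcorr{\domain}{\DorN}{F_1(z_1) \cdots F_n(z_n)}$, which is likewise defined by Wick's theorem applied to the Gaussian gradient field, each pair contributing a derivative of the continuum Green's function $\Green_\domain$. Thus the two sides share an identical pairing structure, and the convergence reduces to that of the paired two-point functions.

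The third step is the analytic core. An elementary current factor is a discrete difference of the DGFF of order $\meshsz$, so a factor of conformal dimension $d$ carries a power $\meshsz^{d}$, and a pairing of two factors of dimensions $d_a$ and $d_b$ yields a discrete two-point function equal to $\meshsz^{d_a + d_b}$ times a discrete approximation of the corresponding derivative of $\Green_\domain$. Summed over a complete pairing the total power of $\meshsz$ equals the sum of all factor dimensions, namely $D_1 + \cdots + D_n$, exactly the renormalization factor in the statement. Invoking the convergence of discrete (derivatives of) Green's functions to the continuum ones --- uniformly on compact subsets of $\domain \times \domain$ away from the diagonal, for both Dirichlet and Neumann boundary conditions, and under Carath\'eodory convergence of $\ddomain$ to $\domain$ --- each paired discrete two-point function, multiplied by its matching power of $\meshsz$, converges to its continuum counterpart. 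Multiplying over pairs and summing then reproduces precisely the continuum Wick sum, which is the claimed limit.

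The main obstacle is concentrated in this last step. One must establish convergence of the discrete mixed differences of $\Green_\ddomain$ to the corresponding continuum derivatives of $\Green_\domain$ with the precise power of $\meshsz$, uniformly under Carath\'eodory convergence and for both boundary conditions, the Neumann case being the more delicate (the gradient formulation removes the zero-mode ambiguity, but uniform control near the free boundary still requires care). Equally important, one must verify that the discrete normal ordering is asymptotically compatible with the continuum one: the self-contraction constants subtracted on the lattice --- the phenomenon illustrated by the $4\pi - 8$ correction to the discrete gradient squared --- must be exactly those that remove all lower-dimensional, in particular identity-field, components, so that no spurious term of order lower than $\meshsz^{D_1 + \cdots + D_n}$ survives the renormalization. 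Reconciling the discrete and continuum normal-ordering prescriptions at every order, uniformly in the approximation, is where the bulk of the technical work lies.
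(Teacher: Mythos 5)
Your overall skeleton (reduce to homogeneous basis vectors, expand both sides by Wick, converge Green's functions) matches the paper's, but the central mechanism producing the renormalization exponent is missing, and the substitute you offer is incorrect. You assert that ``an elementary current factor of conformal dimension $d$ carries a power $\meshsz^d$'' because it is a discrete difference of order $\meshsz$. That is not what the representatives look like: the field polynomial representing $\dHolCurrMode{-k}\idField$ is a \emph{fixed} finite linear combination of \emph{first-order} differences $\pdee\field(\zu_\medial)$ with $\meshsz$-independent weights $\gdeebar\zu_\medial^{[-k]}$ (Example~\ref{ex: id primary}), regardless of $k$. Consequently the Wick contraction of two such representatives at macroscopically separated points is naively a sum of terms each of order $\meshsz^2$ (one discrete double derivative of the Green's function per term), \emph{not} of order $\meshsz^{k+\ell}$. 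The true order $\meshsz^{k+\ell}$ emerges only after cancellations enforced by the discrete-residue structure of the weights, and extracting it from the local representatives would require asymptotics of the discrete Green's function to order $\meshsz^{k+\ell}$ with controlled error --- far beyond the locally uniform convergence of second derivatives (Lemmas~\ref{lem: discrete Dirichlet Green function convergence} and~\ref{lem: discrete Neumann Green function convergence}) that is actually available. The paper avoids this entirely by \emph{deforming the discrete contour integrals defining the current modes out to macroscopic contours} (legitimate inside correlations because current correlations are discrete holomorphic away from insertions, Lemma~\ref{lem: discrete holomorphicity properties of currents}); there the monomial asymptotics $\zu^{[-k]} = \zu^{-k} + \oo(|\zu|^{-k})$ applied at $\zu \approx \meshsz^{-1}(\zeta - z_i)$ supply exactly the factor $\meshsz^{k}(\zeta-z_i)^{-k}$, the $\OO(\meshsz^{-2})$ terms of the double Riemann sum absorb the $\meshsz^2$ from the Green's function derivative, and the limit is manifestly the defining contour-integral formula~\eqref{eq: CFT correlation} for the CFT correlation. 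Your proposal never mentions contour deformation, and without it the third step does not go through.

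A secondary point: the difficulty you flag at the end --- reconciling discrete and continuum normal orderings ``at every order'' --- largely dissolves in the paper's setup. The continuum correlations are \emph{defined} by radially ordered contour integrals of the current kernel (Proposition~\ref{prop: folklore}), which already contain the same self-pairing terms (currents on distinct concentric contours around the same $z_i$) as the discrete Wick expansion over macroscopic contours; the two sides then match pairing by pairing with no normal-ordering bookkeeping. By contrast, in your formulation the claim that normal ordering ``removes self-contractions'' is only exact for the full-plane kernel $\GreenGrad$ used in the contractions, not for the finite-domain two-point function, so a residual term involving the regular part $g_\domain^{\DorN}$ survives at each insertion and you would still have to show it is subleading after renormalization. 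Both issues point to the same missing idea: the bridge from the local representatives to macroscopic discrete contour integrals is what makes the renormalization and the comparison with~\eqref{eq: CFT correlation} work.
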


We view the combination of the two main results as a complete
realization of the free boson CFT as the scaling limit of
the discrete Gaussian Free Field.

\subsection{Organization of the article and outlines of the proofs}

Let us briefly outline the structure and ideas of the proofs of the two
main results, and simultaneously describe the organization of the article.

There are two sections about the necessary preliminaries.
Section~\ref{sec:Fock background} contains definitions and background
related to the Heisenberg algebra, Virasoro algebra, and bosonic Fock spaces.
Section~\ref{sec:discrete background} contains definitions, conventions
and background about discrete complex analysis.
Many of the details in these sections could just be consulted
when they become relevant to the main arguments.

The gradient of the discrete Gaussian Free Field and its local fields are
defined in Section~\ref{sec:currents}, and
the constructions of the two commuting Heisenberg and Virasoro algebra
representations on the space of these lattice model local fields are given.
This mostly amounts to recalling results
from~\cite{HKV} but with slight modifications to the exact setup
(gradient of DGFF instead of GFF) and one minor change
which is necessary to ensure the commutation of the two
copies of both algebras.

The proof of the first main result, giving the
isomorphism $\FullFock \isom \dFields$, is carried out in
Sections~\ref{sec:linear}~--~\ref{sec:higher}.
The starting point (Corollary~\ref{cor: easy inclusion}) is to note
that an embedding ${\FullFock \subset \dFields}$ is obtained
just using the universal property and irreducibility of the
Fock space~${\FullFock}$, as soon as a few very simple properties
about the Heisenberg algebra actions
on the constant field polynomial~$1 \in \dLocFi$ are verified.
Therefore the essence of the proof is to
get the opposite inclusion $\dFields \subset \FullFock$. The rough idea
is to show that dimensions of our lattice local fields
are bounded from above by the corresponding dimensions
in the embedded Fock space, which then rules out the existence of any
superfluous local fields in the lattice model.
Both spaces are actually infinite-dimensional,
so such dimension bounds must be obtained in
some appropriate finite-dimensional subspaces instead.

The first step,
undertaken in Section~\ref{sec:linear},
is to consider only those lattice local fields which correspond to
linear polynomials in the gradient of the DGFF.
The space~$\dLinFields \subset \dFields$ of such linear local fields
of the lattice model is still infinite-dimensional, but it admits a
filtration by finite-dimensional subspaces~$\dLinFieldsRad{r}$ of
linear local fields which have representative field polynomials supported
in a lattice ball of size~$r \in \bN$. On the one hand, a version of the
domain Markov property of the DGFF can be used to find
essentially canonical representatives supported on the boundary of
the lattice ball. This gives dimension upper bounds
$\dmn \big( \dLinFieldsRad{r} \big) \le 4r - 1$ in the filtration
(Lemma~\ref{lem: dimension upper bounds in filtration}).
On the other hand, for particular linear local fields in the embedded
Fock space $\FullFock \subset \dFields$ we perform explicit
calculations with discrete contour integration and discrete Laurent
monomials to get representatives of small enough radius of support,
which allow us to conclude that such local fields already
saturate the obtained dimension upper bound for~$\dLinFieldsRad{r}$
(Lemma~\ref{lem: dimension lower bounds in filtration}).
Therefore the linear local fields in the embedded Fock
space exhaust all linear local fields of the DGFF
(Theorem~\ref{thm: basis lin loc fields}). As an important
by-product, we simultaneously obtain an explicit characterization
of all linear null
fields (Corollary~\ref{coro: linear nulls}).\footnote{The result is
that the linear null fields are essentially just the discrete Laplacians
of the DGFF at all possible locations. This is not unexpected:
these vanishing Laplacians are the ``equations of motion
of the corresponding (classical) field theory''.}

The remaining task of treating higher degree lattice local fields is
then undertaken in Section~\ref{sec:higher}.
The key tool here is a suitable version
of normal ordering on the lattice local fields. The combinatorics
of the appropriate normal ordering is identical to the usual Wick
products, but it is important to use infinite square grid quantities
in the contractions.\footnote{The interpretation is that the infinite square grid
plays a role similar to the specification of a local coordinate
at the position of the field insertion. Conceptually,
local field correlations in CFT require such local coordinate specifications;
see, e.g., \cite{FB-vertex_algebras_and_algebraic_curves} and~\cite{KM-GFF_and_CFT}.}
The fundamental observation about normal ordering is that field polynomials with
linear null field factors behave like an ideal with respect to it:
any linear null factor in a normal ordered product renders the full
result null (Lemma~\ref{lem: normal ordering with null fields}).
This makes normal ordering well-defined for local
fields $F \in \dFields = \dLocFi / \dNuFi$, not merely for field
polynomials~$P \in \dLocFi$.
The proof of this well-definedness relies on the full classification of linear
null fields from the previous section.
Finally, one needs to relate normal ordering to the Heisenberg algebra actions
on local fields (Lemma~\ref{lem: basis elements as normal ordered products}).
In combination with the earlier result
that all linear local fields are in the embedded Fock space,
this yields the proof of the nontrivial inclusion $\dFields \subset \FullFock$,
and finishes the proof of the first main result: $\FullFock \isom \dFields$
(Theorem~\ref{thm: main theorem about Fock space structure}).

The starting point of the proof of the second main result, about scaling limits, 
is to write formulas for the CFT correlations of Fock space fields as multiple
contour integrals
(Proposition~\ref{prop: folklore} and Equation~\eqref{eq: CFT correlation}),
since the corresponding lattice local fields are by
construction given by discrete contour integrals.
The straightforward idea,
then, is to expand expected values of DGFF local fields by Wick's formula 
and interpret the discrete contour integrals appearing in them
as Riemann sum approximations of the integrals which give the CFT correlations.
The discrete integrals involve derivatives of discrete Green's functions with
Dirichlet or Neumann boundary conditions as well as discrete Laurent monomials.
Both converge in the scaling limit to their continuum counterparts,
with suitable scaling factors extracted from the Laurent monomials.
In view of this, the convergence of the discrete integrals is entirely
unsurprising, and the remaining part of the proof amounts to taking care
of essentially combinatorial
details from the specific discretizations. The scaling limit result
(Theorem~\ref{thm: main theorem about scaling limits}) then follows.

\subsection{Scaling limit results in the literature}

We finish the introduction by discussing prior results
on conformally invariant scaling limits of correlation functions obtained
for probabilistic lattice models.
This is a major research topic with a long history, so
it would be impossible to give a comprehensive account even of results
on a single model. We simply aim to highlight some notable results for
comparisons to the present work.

The two-dimensional critical Ising model is a quintessential
lattice model which is believed to converge to a CFT in its scaling limit,
so it provides a fruitful point of first comparisons.

A landmark result on the correlation functions of the Ising model
was the proof by Chelkak, Hongler, and
Izyurov of the existence of conformally covariant scaling limits for all
spin correlation functions~\cite{CHI-spin_correlations}.
That result had been preceded by the breakthrough work of Smirnov and others on
conformally convariant scaling limits for fermionic
observables and their
generalizations~\cite{Smirnov-holomorphic_fermions,CS-universality_in_the_Ising_model,
Izyurov-thesis} (which was in particular a key input to conformally invariant
random geometry descriptions)
and of energy correlations~\cite{Hongler-thesis,HS-energy_density}.
Later the result was generalized to the proof of
conformally covariant scaling limits of any mixed correlations of
spins, disorders, fermions, and energies,
in arbitrary finitely connected domains, with the most general boundary
conditions allowed in the corresponding
CFT~\cite{CHI-primary_field_correlations},
see also~\cite{BIVW-bosonization_of_primary_fields_for_Ising}.
This is essentially the full extent to which one can hope to understand the
scaling limits of primary field correlation functions in the critical Ising
model in planar domains.
Despite the spectacular success, we argue that this does not yet
establish a full CFT as the scaling limit of the Ising model, nor
does it fully describe the scaling limits of all locally defined random
variables in the Ising model.

Regarding the former point, note that
besides the finitely many primary fields,
a full CFT contains a vast amount of other local fields, including
infinite-dimensional spaces of descendant fields to every primary field.
This is crucial, since it is this infinite-dimensional space of local fields that carries
the algebraic structures that are the hallmarks of conformal field theory.
Given the algebraic structure, one should furthermore proceed to verify that
the scaling limits satisfy the characterizing
conditions of CFT correlation functions.

Regarding the latter point, note that there are locally
defined random variables in the Ising model more general than the spin at one point,
the energy (the product of spins at two lattice neighbors), and the other
specific ones mentioned above.
Such general local random variables and their scaling limits, ``pattern probabilities'',
have been studied in~\cite{GHP-pattern_probabilities}. The scaling
limits there were, however, taken with the generic renormalization for spin-flip even
and spin-flip odd patterns separately, and the limits with these renormalizations
therefore only retain the projections to the corresponding CFT primary fields.
While impressive, the result does not, therefore, describe the probabilities of arbitrary
(finely-tuned to cancel the leading scaling components) local
patterns in their true asymptotic scale as a function of the lattice mesh,
and it does not correspond to the renormalization of local fields by their
scaling dimensions as required for field theory.

Complementarily,
the articles~\cite{HKV, AdameCarrillo-discrete_symplectic_fermions}
do identify the algebraic structures of CFTs acting
on the infinite-dimensional spaces of local fields for
three lattice models: the Ising model,
discrete Gaussian Free Field, and discrete symplectic fermions
(within the double-dimer model). On the other hand, in these works
is it not proven that the space of local fields corresponds exactly to the
CFT state space, and scaling limits of correlations of local fields are not
treated.

Let us then compare with scaling limit results of discrete models
believed to converge to the Gaussian Free Field.

Discrete-valued random height functions on a (discrete) lattice
are natural higher-dimen\-sional analogues of random walks, just like the GFF is
a higher-dimensional generalization of the Brownian motion.
It is natural to conjecture ``functional central limit theorems''
stating their convergence to GFF in the scaling limit,
for at least suitable exact models or under suitable assumptions.
Recent progress has been made, e.g.,
in~\cite{DHLRR-logarithmic_variance_for_square_ice,
CPST-delocalization_of_uniform_graph_homomorphisms, DKMO-delocalization_6V},
but scaling limits for these discretizations remain challenging.

Dimer configurations on suitable planar graphs can be encoded,
following Thurston,
into discrete-valued discrete height functions (modulo an additive constant)
which bear resemblance with the random walk analogues but can be studied
by more powerful combinatorial tools related to discrete complex analysis.
In a celebrated result,
Kenyon proved the conformal invariance of uniform domino tilings
(dimers on the square grid) in the scaling limit,
and showed that dimer height functions on Temperleyan domains tend to the
GFF~\cite{Kenyon-conformal_invariance_of_domino_tiling,
Kenyon-dominos_and_the_GFF}.
In more general domains, the same results still hold~\cite{Russ-hedgehog}, and, on non-simply connected surfaces, such height functions
converge in the scaling limit to compactified free
fields~\cite{Dubedat-dimers_and_familier_of_CR_operators, Basok-dimers_on_Riemann_surfaces}.
Moreover, double dimers, i.e., the superposition of two independent dimer
covers, form loops which have been shown to converge in the scaling limit
to the conformal loop ensemble $\mathrm{CLE}_4$
\cite{Kenyon-conformal_invariance_od_loops_in_the_double_dimer_model,
Dubedat-double_dimers_CLEs_and_isomonodromic_deformations,
BC-tau_functions_a_la_Dubedat_and_probabilities,
BW-crossing_estimates_of_simple_CLEs},
which is naturally coupled
with the GFF, and the loop scaling limit result can be seen as
a convergence result of appropriate height level lines.
Many of these results are stated in terms of convergence in law
of globally defined
random configurations, not in terms of local fields correlations.
These and related works contain, however, also scaling limit results for
correlations of dimer height
gradients~\cite{Kenyon-conformal_invariance_of_domino_tiling} and
for monomer correlations and the correlations of
so-called electric operators (dimer analogues of
vertex operators)~\cite{Dubedat-dimers_and_familier_of_CR_operators}.
From the point of view of CFT,
these should again be viewed as scaling limit results
of primary field correlations.

There are also recent results on scaling limits of correlations
in percolation \cite{Camia-percolation_primary, CF-percolation_log_OPE}
and in the abelian sandpile model~\cite{PR-multipoint_abelian_sandpile,
CCRR-fDGFF_sandpile_UST}.
The scaling limits of these models are believed to be logarithmic conformal
field theories: they have non-diagonalizable Virasoro generators $\VirL{0}$
and $\VirBarL{0}$, and correspondingly local fields with much more intricate
scaling behavior. Gaining full CFT
scaling limit results for such lattice models could be particularly valuable,
as it would shed light on a class of CFTs that remains extremely poorly
understood while containing important examples for statistical physics.


\textbf{Acknowledgments:}
DAC, DB, and KK were supported by the
Research Council of Finland (project 346309: Finnish Centre
of Excellence in Randomness and Structures, ``FiRST'').
The authors would like to thank Mikhail Basok, Nathana\"el Berestycki, Dmitry Chelkak,
Cl\'ement Hongler, Konstantin Izyurov, Richard Kenyon, Antti Kupiainen, and Wioletta Ruszel for valuable discussions.

\section{Preliminaries: the Fock space}
	\label{sec:Fock background}
	A central object in the present work is a Fock space,
which should be interpreted
as the space of polynomial local fields in
the gradient and higher order derivatives of the bosonic free field.
This interpretation will eventually be made concrete
in Section~\ref{subsec: local fields of GFF}.
By way of preliminaries, we start in this section by explicitly defining
the Fock space as a vector space
and as a representation of the key algebraic structures present in the
free boson conformal field theory (CFT).

In Section~\ref{ssec: Heisenberg basics}, we fix notation and conventions
related to the Heisenberg algebra and its chargeless Fock representation,
and we recall some essential properties of these.
The Heisenberg algebra plays the role of a chiral symmetry algebra of the free boson
CFT. In the full CFT with holomorphic and antiholomorphic
chiralities, one has two commuting copies of the
Heisenberg algebra,
and the full Fock space of interest to us is built from Fock representations for these two
chiral parts.
A textbook reference for (a more general version of) these topics is
\cite[Ch.~6.2~--~6.3]{LL-introduction_to_VOAs}.

In Section~\ref{ssec: Sugawara basics} we recall
how the Fock space becomes a representation of two commuting copies of Virasoro algebra
via the Segal--Sugawara construction.
Eigenvalues of particular Virasoro generators give rise to gradings
of the Fock space by conformal dimensions, or equivalently by the scaling dimension and
spin of the local fields. These scaling dimensions will later feature crucially in
the renormalization of fields in our scaling limit result.

To finish this background section, we make explicit observations
in Section~\ref{ssec: primary fields} about Virasoro primary fields in the Fock space.

\subsection{Heisenberg algebra and Fock space}
\label{ssec: Heisenberg basics}

The \term{Heisenberg algebra} is the $\bC$-vector space
\begin{align}\label{eq: Heisenberg basis}
\Hei = \Big( \bigoplus_{k \in \bZ} \bC \HeiJ{k} \Big) \oplus \, \bC \HeiK ,
\end{align}
equipped with the Lie algebra structure uniquely determined by the Lie brackets
\begin{align}\label{eq: Heisenberg bracket}
[\HeiJ{k},\HeiJ{\ell}] = k \, \delta_{k+\ell} \, \HeiK ,
\qquad\qquad
[\HeiK,\HeiJ{\ell}] = 0 ,
\end{align}
where we use the Kronecker delta with notation
\begin{align*}
\delta_n = \begin{cases}
            1 & \text{ if } n = 0 \\
            0 & \text{ if } n \ne 0 .
           \end{cases}
\end{align*}

As is common, we will only consider representations of $\Hei$ where the central element $\HeiK$ acts as
the identity, so the reader should feel free to think ``$\HeiK = 1$''. Somewhat more formally,
we consider the associative algebra
\begin{align}\label{eq: Heisenberg UEA}
 \UHei = \mathcal{U}(\Hei) \big/ (\HeiK - 1)
\end{align}
obtained as the quotient of the universal enveloping algebra $\mathcal{U}(\Hei)$
by the two-sided ideal generated by $\HeiK - 1$. A Poincar\'e--Birkhoff--Witt (PBW) basis
of~$\UHei$ consists of words
\begin{align*}
\cdots \HeiJ{-2}^{n_{-2}} \, \HeiJ{-1}^{n_{-1}} \, \HeiJ{0}^{n_{0}} \, \HeiJ{1}^{n_{1}} \, \HeiJ{2}^{n_{2}} \cdots 
\end{align*}
with only finitely many of the exponents $n_k \in \Znn$ nonzero.

The most fundamental representation of the Heisenberg algebra~$\Hei$ is the
(chargeless) \term{Fock representation}
\begin{align}\label{eq: Fock representation}
\ChiralFock = \UHei \big/ (\HeiJ{k} : k \ge 0)
\end{align}
obtained as a quotient of the $\UHei$-module $\UHei$ by the left ideal generated by the
$\HeiJ{k}$ with nonnegative indices~$k$.
The vector $\voavac \in \ChiralFock$,
i.e., the equivalence class of the algebra unit $1 \in \UHei$ in the
quotient~\eqref{eq: Fock representation},
is called the \term{vacuum} vector
in the Fock representation~$\ChiralFock$.
From the PBW basis of~$\UHei$, one gets a basis of $\ChiralFock$ consisting of
the vectors
\begin{align}\label{eq: chiral Fock basis}
\HeiJ{-k_m} \cdots \HeiJ{-k_2} \, \HeiJ{-k_1} \voavac 
\qquad \text{ with } \qquad
m \in \Znn, \; 0 < k_1 \le k_2 \le \cdots \le k_m .
\end{align}
Note that the zero-mode $\HeiJ{0}$ is in the centre of~$\Hei$ and
acts as zero on the whole chargeless Fock space~$\ChiralFock$.
Let us also mention, although this fact will not be directly used, that
the (chargeless) Fock space~$\ChiralFock$ admits the structure of
a \emph{vertex operator algebra (VOA)} \cite[Theorem~6.3.2]{LL-introduction_to_VOAs};
it plays the role of the chiral symmetry algebra of the free boson conformal field
theory (CFT).

As usual in bulk conformal field theory, we will in fact need two chiralities: the holomorphic
and the antiholomorphic. Algebraically both are exactly the same, but we distinguish the latter
by overline in the notation. The full (two-chiral) Lie algebra is the direct sum
\begin{align*}
\Hei \oplus \AntHei
\end{align*}
of two commuting copies of the Heisenberg algebra~\eqref{eq: Heisenberg basis}, and the appropriate associative algebra is
the tensor product
\begin{align*}
\UHei \tens \AntUHei 
\end{align*}
of two commuting copies of the associative algebra~\eqref{eq: Heisenberg UEA}.
The \term{(full) Fock space}
\begin{align*}
\FullFock
\end{align*}
is also the tensor product of two copies of the Fock representation~\eqref{eq: Fock representation},
with $\UHei$ acting on the first tensor factor and $\AntUHei$ acting on the second.
Let us denote by $\FockId = \voavac \otimes \voavac \in \FullFock$ the vector obtained as the
tensor product of the vacuum vectors of the two chiral Fock representations.
A natural basis of the full Fock space
is obtained by tensoring the bases of the form~\eqref{eq: chiral Fock basis} in both chiral
Fock representations; explicitly,
denoting the generators of the two commuting copies by $\HeiJ{k}$ and $\AntHeiJ{k}$,
that basis consists of vectors
\begin{align}\label{eq: full Fock basis}
& \HeiJ{-k_m} \cdots \HeiJ{-k_2} \, \HeiJ{-k_1} \, 
\AntHeiJ{-k'_{m'}} \cdots \AntHeiJ{-k'_2} \, \AntHeiJ{-k'_1} \FockId \; \in \, \FullFock \qquad
\text{ with } \\ \nonumber
& m, m' \in \Znn, \quad 0 < k_1 \le k_2 \le \cdots \le k_m, \quad 0 < k'_1 \le k'_2 \le \cdots \le k'_{m'} .
\end{align}

We note the following simple fact, which will be used for the easy half of our
first main result.
\begin{lemma}\label{lem: abstract nonsense}
Suppose that $V$ is a representation of $\UHei \tens \AntUHei$, 
and $v \in V \setminus \set{0}$ is a nonzero vector such that $\HeiJ{k} v = 0$ and $\AntHeiJ{k} v = 0$
for all $k \ge 0$. Then there exists a unique map
$\FullFock \to V$ of representations of $\UHei \tens \AntUHei$
such that $\FockId \mapsto v$. This map is injective and onto the subrepresentation
$(\UHei \tens \AntUHei) v \subset V$. In particular if $V$ is irreducible, then
the map gives an isomorphism $\FullFock \cong V$
of representations of $\UHei \tens \AntUHei$.
\end{lemma}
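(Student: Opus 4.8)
The plan is to recognize this as the universal property of the cyclic module $\FullFock$, whose only nonformal ingredient is the injectivity of the resulting map. Recall that, as a module over $\UHei \tens \AntUHei$, the Fock space is the quotient $\FullFock = (\UHei \tens \AntUHei)/J$, where $J$ is the left ideal generated by the elements $\HeiJ{k} \tens 1$ and $1 \tens \AntHeiJ{k}$ with $k \ge 0$, and $\FockId$ is the image of the unit. To build the map I would first consider the left-module homomorphism $\Phi \colon \UHei \tens \AntUHei \to V$, $a \mapsto a \cdot v$. The hypotheses $\HeiJ{k} v = 0$ and $\AntHeiJ{k} v = 0$ for $k \ge 0$ say exactly that the generators of $J$ lie in $\Kern \Phi$, so $J \subseteq \Kern \Phi$ and $\Phi$ descends to a homomorphism $\bar\Phi \colon \FullFock \to V$ with $\bar\Phi(\FockId) = v$ and image the cyclic submodule $(\UHei \tens \AntUHei) v$. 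Uniqueness is immediate: the basis \eqref{eq: full Fock basis} exhibits $\FullFock$ as generated by $\FockId$, and a homomorphism is determined by the image of a generating set.

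The heart of the matter is injectivity, which I would obtain by showing that the images of the basis vectors \eqref{eq: full Fock basis} are linearly independent in $V$. Write $\HeiJ{-\alpha} = \HeiJ{-k_m} \cdots \HeiJ{-k_1}$ and $\HeiJ{\alpha} = \HeiJ{k_1} \cdots \HeiJ{k_m}$ for a partition $\alpha = (0 < k_1 \le \cdots \le k_m)$ of $|\alpha| = \sum_i k_i$, and similarly $\AntHeiJ{-\beta}$, $\AntHeiJ{\beta}$ on the antichiral side; then the basis images are $w_{\alpha,\beta} := \HeiJ{-\alpha}\, \AntHeiJ{-\beta}\, v$. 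Suppose, for contradiction, that a nontrivial finite relation $\sum_{\alpha,\beta} c_{\alpha,\beta}\, w_{\alpha,\beta} = 0$ holds, and choose a pair $(\alpha_0, \beta_0)$ with $c_{\alpha_0,\beta_0} \ne 0$ whose total degree $|\alpha_0| + |\beta_0|$ is maximal among pairs with nonzero coefficient.

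To extract this coefficient I would apply the matching annihilation monomial $\HeiJ{\alpha_0}\, \AntHeiJ{\beta_0}$ to the relation and commute all annihilation operators to the right using the bracket \eqref{eq: Heisenberg bracket}, keeping in mind that the two chiralities commute and that $v$ is killed by every nonnegative mode of both. Three features make all but one term vanish: first, $\HeiJ{k}$ pairs under \eqref{eq: Heisenberg bracket} only with $\HeiJ{-k}$ (and $\AntHeiJ{k}$ only with $\AntHeiJ{-k}$), so the contraction pairing is diagonal in the partitions; second, any term of negative net chiral or antichiral degree leaves an annihilation operator acting on $v$ and hence vanishes; and third, by maximality of $|\alpha_0| + |\beta_0|$ no term of strictly larger total degree is present to produce leftover creation operators. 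What remains is the single fully contracted term $c_{\alpha_0,\beta_0}\, z_{\alpha_0}\, \bar z_{\beta_0}\, v$, with nonzero scalars $z_{\alpha_0} = \prod_k k^{m_k} m_k!$ (and its antichiral analogue) determined by the multiplicities $m_k$ of the parts. Since $v \ne 0$, this forces $c_{\alpha_0,\beta_0} = 0$, a contradiction; hence $\bar\Phi$ is injective. (Equivalently, this computation is precisely the statement that $\FullFock$ is irreducible, so that the proper submodule $\Kern \bar\Phi$ must be zero.)

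Finally, when $V$ is irreducible the image $(\UHei \tens \AntUHei) v$ is a nonzero submodule and therefore all of $V$, so together with injectivity $\bar\Phi$ is an isomorphism $\FullFock \cong V$. I expect the only genuine obstacle to be the bookkeeping in the injectivity step, namely verifying that the iterated commutators leave a diagonal, nondegenerate pairing on partitions; everything else is formal module theory that follows directly from the presentation of $\FullFock$ as a cyclic quotient.
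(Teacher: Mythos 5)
Your proposal is correct and follows essentially the same route as the paper: existence and uniqueness via the universal property of the cyclic quotient $\FullFock = (\UHei \tens \AntUHei)/J$, and injectivity via irreducibility of the Fock space. The only difference is that the paper cites irreducibility of $\ChiralFock$ as well-known, whereas you carry out the standard verification explicitly (applying the matching annihilation monomial to a maximal-degree term and checking the diagonal, nondegenerate pairing with scalar $\prod_k k^{m_k} m_k!$); that computation is sound.
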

\begin{proof}
The condition $\FockId \mapsto v$ and the requirement of being a map of $\UHei \tens \AntUHei$
representations immediately fixes the values of the map on basis vectors~\eqref{eq: full Fock basis},
\begin{align*}
\HeiJ{-k_m} \cdots \HeiJ{-k_2} \, \HeiJ{-k_1} \, 
\AntHeiJ{-k'_{m'}} \cdots \AntHeiJ{-k'_2} \, \AntHeiJ{-k'_1} \FockId \; \mapsto \; 
\HeiJ{-k_m} \cdots \HeiJ{-k_2} \, \HeiJ{-k_1} \, 
\AntHeiJ{-k'_{m'}} \cdots \AntHeiJ{-k'_2} \, \AntHeiJ{-k'_1} v ,
\end{align*}
so the uniqueness of such a map $\FullFock \to V$ is clear.
The existence follows from a universal property (note that the properties assumed
of the vector~$v$ correspond exactly to the generators of the left ideal factored out in the
construction~\eqref{eq: Fock representation} of the Fock representation).

It is well-known and easy to check that $\ChiralFock$ is an irreducible representation of
$\UHei$, so $\FullFock$ is an irreducible representation of $\UHei \tens \AntUHei$. Therefore, the
kernel of the nonzero map ${\FullFock \to V}$ is zero, giving injectivity. By construction
the map is onto the subrepresentation ${(\UHei \tens \AntUHei) v \subset V}$.
If $V$ is irreducible, this nonzero subrepresentation must be the whole space
$(\UHei \tens \AntUHei) v = V$, so we get an isomorphism.
\end{proof}

\subsection{Virasoro algebra and the Sugawara construction}
\label{ssec: Sugawara basics}

By construction, the Fock representation $\ChiralFock$ is a representation of the Heisenberg algebra.
It also becomes a representation of the Virasoro algebra by the Sugawara construction, outlined below.

The \term{Virasoro algebra} is the $\bC$-vector space
\begin{align}\label{eq: Virasoro basis}
\Vir = \Big( \bigoplus_{n \in \bZ} \bC \VirL{n} \Big) \oplus \, \bC \VirC ,
\end{align}
equipped with the Lie algebra structure uniquely determined by the Lie brackets
\begin{align}\label{eq: Virasoro bracket}
[\VirL{n},\VirL{m}] = (n-m) \, \VirL{n+m} +  \delta_{n+m} \, \frac{n^3 - n}{12} \, \VirC ,
\qquad\qquad
[\VirC,\VirL{n}] = 0 ,
\end{align}
where the Kronecker delta notation convention is as in~\eqref{eq: Heisenberg bracket}. In CFT,
one considers representations where the central element~$\VirC$ acts as 
a fixed scalar multiple of identity, and the value~$c$ of the scalar is called
the \term{central charge} of the CFT. For CFT of interest to us ---the free boson---, the
central charge is~$c=1$.

The following is the most basic variant of Sugawara constructions.
\begin{lemma}[Sugawara construction]\label{lemma: sugawara Fock}
On the Fock represetation~$\ChiralFock$ of~\eqref{eq: Fock representation}, the
operators defined by the formulas
\begin{align}\label{eq: Sugawara on chiral Fock space}
\VirL{n} = \frac{1}{2} \Big( \sum_{k \ge 0} \HeiJ{n-k} \, \HeiJ{k} + \sum_{k < 0} \HeiJ{k} \, \HeiJ{n-k} \Big)
\qquad \text{ and } \qquad
\VirC = \idof{\ChiralFock} 
\end{align}
are well-defined (their action on any vector has only finitely many nonzero terms) and
they render~$\ChiralFock$ a representation of the Virasoro algebra with central charge~$c=1$.
\end{lemma}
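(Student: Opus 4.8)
The plan is to establish the standard Sugawara construction for the Heisenberg algebra at central charge $c=1$. The proof splits naturally into three parts: well-definedness of the operators $\VirL{n}$, verification that they satisfy the Virasoro bracket relations, and identification of the central charge.

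\textbf{Well-definedness.} First I would check that each $\VirL{n}$ acts on any basis vector~\eqref{eq: chiral Fock basis} with only finitely many nonzero terms. The normal-ordering prescription in~\eqref{eq: Sugawara on chiral Fock space} places annihilation operators (those $\HeiJ{k}$ with $k \ge 0$) to the right. Applied to a finite word $\HeiJ{-k_m} \cdots \HeiJ{-k_1} \voavac$, only finitely many of the summands $\HeiJ{n-k}\HeiJ{k}$ (resp.\ $\HeiJ{k}\HeiJ{n-k}$) fail to annihilate the vector, since the rightmost annihilation mode must contract against one of the finitely many creation modes present. This makes the infinite sums effectively finite on each vector.

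\textbf{The commutation relations.} The main computational step, and the one I expect to be the principal obstacle, is verifying $[\VirL{n},\VirL{m}] = (n-m)\VirL{n+m} + \delta_{n+m}\frac{n^3-n}{12}$. I would approach this in two stages. First, establish the auxiliary bracket $[\VirL{n},\HeiJ{\ell}] = -\ell\,\HeiJ{n+\ell}$, which says that each $\HeiJ{\ell}$ transforms as a weight-one object under the $\VirL{n}$. This follows by a direct application of~\eqref{eq: Heisenberg bracket} to the two sums defining $\VirL{n}$, being careful that the normal-ordering boundary between the two sums does not introduce extra terms (the only subtlety is the central term $k\,\delta_{k+\ell}\HeiK$, which must be tracked). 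Second, using this auxiliary relation together with the Leibniz rule $[\VirL{n},\HeiJ{a}\HeiJ{b}] = [\VirL{n},\HeiJ{a}]\HeiJ{b} + \HeiJ{a}[\VirL{n},\HeiJ{b}]$, I would compute $[\VirL{n},\VirL{m}]$ by letting $\VirL{n}$ act through the defining sum of $\VirL{m}$. Collecting the terms linear in the Heisenberg modes reproduces $(n-m)\VirL{n+m}$, while the genuinely dangerous part is the anomaly: a scalar (proportional to the identity) arising from the double contractions when $n+m=0$. This is where the regularization inherent in normal ordering matters, and a careful bookkeeping of the commutators of the ordering-boundary terms is essential to avoid a divergent or incorrect constant.

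\textbf{The central charge.} To pin down the anomaly coefficient, I would isolate the $\delta_{n+m}$ scalar term by evaluating $[\VirL{n},\VirL{-n}]$ on the vacuum $\voavac$ (on which all $\HeiJ{k}$ with $k \ge 0$ vanish), thereby extracting the central contribution cleanly. A standard summation, reducing to $\sum_{k=1}^{n-1} k(n-k)$ or an equivalent finite sum, yields $\frac{n^3-n}{12}$, confirming central charge $c=1$ in accordance with $\VirC = \idof{\ChiralFock}$. The entire verification is a well-known computation, so I would aim to present the two key brackets explicitly and summarize the anomaly extraction rather than expand every intermediate term; the delicate point throughout is the correct treatment of the normal-ordering boundary in the infinite sums, which is precisely what generates the central extension.
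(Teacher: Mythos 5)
Your outline is the standard Sugawara computation and is correct in all its essentials: finiteness of the action via the normal-ordering of annihilators to the right, the intermediate bracket $[\VirL{n},\HeiJ{\ell}]=-\ell\,\HeiJ{n+\ell}$, and extraction of the anomaly $\tfrac{1}{2}\sum_{k=1}^{n-1}k(n-k)=\tfrac{n^3-n}{12}$ on the vacuum (the one step you rightly flag as delicate is that the termwise Leibniz expansion of $[\VirL{n},\VirL{m}]$ must be re-normal-ordered before reindexing the sums, which is exactly where the central term is produced). The paper does not prove this lemma itself but cites \cite{Kac-VAs_for_beginners}, \cite{LL-introduction_to_VOAs}, and \cite[Theorem~4.10]{HKV}, all of which carry out essentially the computation you describe.
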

\begin{proof}
This is well-known; see, e.g.,
\cite[Section~5.7]{Kac-VAs_for_beginners} or \cite[Theorem~6.2.16]{LL-introduction_to_VOAs}
for proofs in a more general vertex operator algebra setting, or
\cite[Theorem~4.10]{HKV} for a similar calculation in a specific setting
directly related to that of the present article.
\end{proof}

The eigenvalues of~$\VirL{0}$ give an important grading on the Fock representation~$\ChiralFock$.
A direct calculation shows that the basis vectors~\eqref{eq: Heisenberg basis} are
$\VirL{0}$-eigenvectors with eigenvalues $\Delta = k_1 + \cdots + k_m$. We thus
make $\ChiralFock$ an $\Znn$-graded vector space by
\begin{align}\label{eq: chiral Fock grading}
\ChiralFock = \; & \bigoplus_{\Delta \in \Znn} \ChiralFock_\Delta , \qquad \text{ where } \\ \nonumber
\ChiralFock_\Delta = \; & \spn \set{ \, \HeiJ{-k_m} \cdots \HeiJ{-k_2} \, \HeiJ{-k_1} \voavac \; \bigg| \;
    m \in \Znn, \; 0 < k_1\le \cdots \le k_m, \; \sum_{j=1}^m k_j = \Delta } ,
\end{align}
and then we have
\begin{align*}
\VirL{0} v = \Delta v \qquad \text{ for any } v \in \ChiralFock_\Delta .
\end{align*}
From the Virasoro commutation relations~\eqref{eq: Virasoro bracket} it also follows easily that
$ \VirL{n} \ChiralFock_\Delta \subset \ChiralFock_{\Delta - n} $, 
so $\VirL{n}$ is a homogeneous operator of degree~$-n$ with respect to the grading~\eqref{eq: chiral Fock grading}.

On the full (two-chiral) Fock space~$\FullFock$, we similarly obtain two commuting actions of the
Virasoro algebra, both with central charge~$c=1$. We denote the generators of these
by~$\VirL{n}$ and $\VirBarL{n}$ (we do not introduce notation for the central elements, since they simply act
as identity on~$\FullFock$).
On the full Fock space~$\FullFock$ we then have gradings by $\VirL{0}$ and $\VirBarL{0}$-eigenvalues,
\begin{align}\label{eq: Fock bigrading}
\FullFock = \; & \bigoplus_{\Delta, \bar{\Delta} \in \Znn} \ChiralFock_\Delta \otimes \AntiChiralFock_{\bar{\Delta}} .
\end{align}
The above bigrading by the pairs of $\VirL{0}$ and $\VirBarL{0}$-eigenvalues is said to be according
to \term{conformal dimensions}. The $\Znn$-grading by eigenvalues of the sum $\VirL{0} + \VirBarL{0}$
is an $\Znn$-grading by \term{scaling dimensions}, and it will be important for us to determine the
appropriate renormalization of lattice local fields in the scaling limit.
The $\bZ$-grading by the eigenvalues of the difference $\VirL{0} - \VirBarL{0}$ is also physically
significant, interpreted as giving the \emph{spins} of the corresponding local fields,
but this last $\bZ$-grading will not be needed in the present work.

\subsection{Primary fields in the Fock space}
\label{ssec: primary fields}

Let us finish with a comment about primary fields, as
these are generally of interest in CFTs.
The Fock space~$\FullFock$ is irreducible
as a representation of~$\UHei \tens \AntUHei$, and has the vacuum~$\FockId$ as the unique
(up to scalars) \term{Heisenberg primary state}, i.e., an eigenvector
for~$\HeiJ{0}$ and~$\AntHeiJ{0}$
annihilated by~$\HeiJ{k}$ and~$\AntHeiJ{k}$ for~$k > 0$.
The situation is different for
\term{Virasoro primary states}, i.e., eigenvectors for~$\VirL{0}$ and~$\VirBarL{0}$
which are annihilated by~$\VirL{n}$ and~$\VirBarL{n}$ for~${n > 0}$.
The Heisenberg primary state~$\FockId$ is also Virasoro primary, but
in addition there is a countably infinite number of linearly independent
Virasoro primaries in the Fock space, as made explicit in the following.

\begin{rmk}\label{rmk: Virasoro primaries in Fock space}
For every~$a,b \in \Znn$ there exists a Virasoro primary
state
in the Fock space ${\FullFock}$ with conformal weights
$\Delta = a^2$, $\bar{\Delta} = b^2$, and this state is
unique up to a multiplicative constant.
\hfill{$\diamond$}
\end{rmk}
The above is a known fact and not logically used in our main results, so we give only a brief
outline of the argument leading to it.
\begin{proof}[Sketch of proof of Remark~\ref{rmk: Virasoro primaries in Fock space}]
Due to the tensor product form of the
Fock space~$\FullFock$, it is enough to show that the (chiral) Fock
representation~$\ChiralFock$
has unique (up to scalars) Virasoro highest weight states of all 
highest weights~$\Delta=a^2$ for $a \in \Znn$.
The Fock representation~$\ChiralFock$ has an invariant inner product,
with respect to which $\VirL{n}^\dagger = \VirL{-n}$ for $n \in \Z$. Consequently,
any Virasoro submodule in~$\ChiralFock$ has a complementary submodule, and any Virasoro
submodule which is a highest weight module must be irreducible. The argument from here on
relies on dimension counting for the $\VirL{0}$-eigenspaces
in~$\ChiralFock$. On the one hand, the basis~\eqref{eq: chiral Fock basis}
readily shows that the dimension of $\Kern(\VirL{0}-\Delta) \subset \ChiralFock$
is $p(\Delta)$, the number of integer partitions of~$\Delta$.
On the other hand, the irreducible Virasoro highest weight modules of
central charge~$c=1$ and highest weight~$h=a^2$, with $a \in \Znn$, has
dimension~$p\big(\Delta-a^2\big)-p\big(\Delta-(a+1)^2\big)$ for the corresponding
eigenspace (this is the only needed fact whose proof is not completely elementary).
Starting from the fact that the vacuum~$\voavac \in \ChiralFock$ is a highest weight vector with
highest weight~$h=0$, and inductively looking for the complementary subspaces to
the subspace already found, and comparing dimensions of $\VirL{0}$-eigenspaces,
one sees that also (up to scalars unique) highest weight vectors of highest weights
$1,4,9,25,36,\ldots$ (and only these highest weights)
can be found in~$\ChiralFock$.
\end{proof}

Formulas for any of these Virasoro primary states in the Fock space~$\FullFock$
can be obtained by straightforward computation: examples include
\begin{align}
\label{eq: primary vacuum}
\FockId & & \text{ primary with conformal weights }
    & \; \Delta = 0, \; \bar{\Delta} = 0 \\
\label{eq: primary holom current}
\HeiJ{-1} \FockId & & \text{ primary with conformal weights }
    & \; \Delta = 1, \; \bar{\Delta} = 0 \\
\label{eq: primary antiholom current}
\AntHeiJ{-1} \FockId & & \text{ primary with conformal weights }
    & \; \Delta = 0, \; \bar{\Delta} = 1 \\
\label{eq: quadratic primary}
\HeiJ{-1} \AntHeiJ{-1} \FockId & & \text{ primary with conformal weights }
    & \; \Delta = 1, \; \bar{\Delta} = 1 \\
\HeiJ{-1} \, \Big(\AntHeiJ{-1}^4 + \frac{3}{2} \AntHeiJ{-2}^2 - 2 \AntHeiJ{-3}\AntHeiJ{-1}\Big) \, \FockId
    & & \text{ primary with conformal weights }
    & \; \Delta = 1, \; \bar{\Delta} = 4 
\end{align}
\begin{align} \nonumber
\bigg(
\HeiJ{-1}^9
+ 9 \, \HeiJ{-2}^2 \HeiJ{-1}^5 
- \frac{135}{4} \, \HeiJ{-2}^4 \HeiJ{-1}
- 12 \, \HeiJ{-3} \HeiJ{-1}^6 
+ 90 \, \HeiJ{-3} \HeiJ{-2}^2 \HeiJ{-1}^2
+ 40 \, \HeiJ{-3}^3 
- 90 \, \HeiJ{-4} \HeiJ{-2} \HeiJ{-1}^3 & \\ \nonumber
- 90 \, \HeiJ{-4} \HeiJ{-3} \HeiJ{-2} 
+ \frac{135}{2} \, \HeiJ{-4}^2 \HeiJ{-1} 
+ 36 \, \HeiJ{-5} \HeiJ{-1}^4 
+ 54 \, \HeiJ{-5} \HeiJ{-2}^2 
-72 \, \HeiJ{-5} \HeiJ{-3} \HeiJ{-1} &\,
\bigg) \, \FockId \\
\label{eq: primary nine}
\text{ primary with conformal weights } \Delta = 9, \; \bar{\Delta} = 0 \, .
\end{align}

A few of these will appear throughout the present work.
The vacuum~\eqref{eq: primary vacuum} will correspond to the identity
field of the CFT, as made concrete in Section~\ref{subsec: local fields of GFF}.
The fields corresponding to~\eqref{eq: primary holom current}
and~\eqref{eq: primary antiholom current}, will in various guises play central
roles both in the discrete and in the continuum; these fields are
called the holomorphic and antiholomorphic currents. The infinitely many other
Virasoro primaries will not directly show up in our construction, but we find it
worthwhile to remark that those primary fields are nevertheless present in our
space of local fields.

\section{Preliminaries: discrete complex analysis}
	\label{sec:discrete background}
	A key feature of the lattice models that enabled
equipping the spaces of their local fields with
representations of Virasoro algebra in~\cite{HKV} is a form of
exact solvability expressed in terms of discrete complex analysis.
This second background section collects notions and results about discrete
complex analysis that we will build on.

Section~\ref{subsec: lattices} contains the definitions of the sublattices of square grids
that we use as well as our conventions about discretizations of differential operators.
In Section~\ref{subsec: discrete domains}, we introduce discrete domains and
discrete Green's functions with both Dirichlet and Neumann boundary conditions.
The necessary scaling limit results for these discrete Green's functions
are addressed in Section~\ref{subsec: scaling limit Green functions}.
The final notions of discrete complex analysis needed
are discrete contour integration and discrete Laurent monomials,
and we give the precise conventions and results about these
in Section~\ref{subsec: discrete monomials}.

\subsection{Square grids and discrete differential operators}
\label{subsec: lattices}\label{subsec: difference operators}
Our discrete setup is based on the square grid~$\bZ^2$ and a few related graphs.
The classification of lattice model local fields can be
formulated on the unit-mesh square grid~$\bZ^2$,
but for the scaling limit results of Section~\ref{sec:scal_lim}, we need a general
\term{mesh size}~$\meshsz > 0$~---
the scaling limit amounts to letting~$\meshsz$ tend to zero.

All of our square-grid graphs are embedded in the complex plane~$\bC$. Throughout the article,
without further comments, we often identify vertices with complex numbers via the embedding,
and edges and faces (square plaquettes) with complex
numbers corresponding to the midpoints of the embedded edges or square plaquettes.

The infinite \term{square lattice} of mesh size $\meshsz>0$ is the set
\begin{align}\label{eq: square grid with mesh}
\SqLatMesh := \set{ n \meshsz + \ii m \meshsz \; \big| \; n,m \in \bZ} \subset \bC .
\end{align}
We view $\SqLatMesh$ as a nearest-neighbor graph: an edge connects $\zs,\zsbis \in \SqLatMesh$ when~$|\zs-\zsbis|=\meshsz$,
and this adjacency relation is denoted by $\zs \sim \zsbis$.
Since we use also various other related grids for discrete complex analysis, we, for clarity, occasionally denote $\SqLatMeshPrim := \SqLatMesh$ and call this the \term{primal lattice}
to distinguish it from the variants.
Two related grids are
\begin{align*}
\SqLatMeshDual := \; & \set{ k \meshsz + \ii \ell \meshsz \; \Big| \; k,\ell \in \bZ + \half}
    & \text{\term{(dual lattice)}} \\
\SqLatMeshMedial := \; & \set{ \frac{\zs+\zsbis}{2} \; \bigg| \; \zs,\zsbis \in \SqLatMesh, \; \zs \sim \zsbis}
    & \text{\term{(medial lattice)}}
\end{align*}
and we refer to the elements of~$\SqLatMeshDual$ as {dual vertices} or {plaquette centers},
and to the elements of~$\SqLatMeshMedial$ as {medial vertices} or {edge midpoints}~---
see Figure~\ref{fig: lattices} for an illustration.
There is a bipartition
$\SqLatMeshMedial = \big(\meshsz (\Z + \half) \times \meshsz \Z \big) \cup \big(\meshsz \Z \times \meshsz (\Z + \half) \big)$
of the medial lattice to midpoints of \term{horizontal and vertical edges}.
Finally, the bipartite grid
\begin{align*}
\SqLatMeshDiamond := \; & \SqLatMeshPrim \cup \SqLatMeshDual
    \qquad \qquad \qquad & \text{\term{(diamond lattice)}}
\end{align*}
is essentially dual to the medial lattice~$\SqLatMeshMedial$, and it is a natural domain of definition
of some of our discrete functions.
The elements of $\SqLatMeshDiamond$ are referred to as diamond vertices.

\begin{figure}[h!]
\centering
\begin{overpic}[scale=0.776, tics=10]{./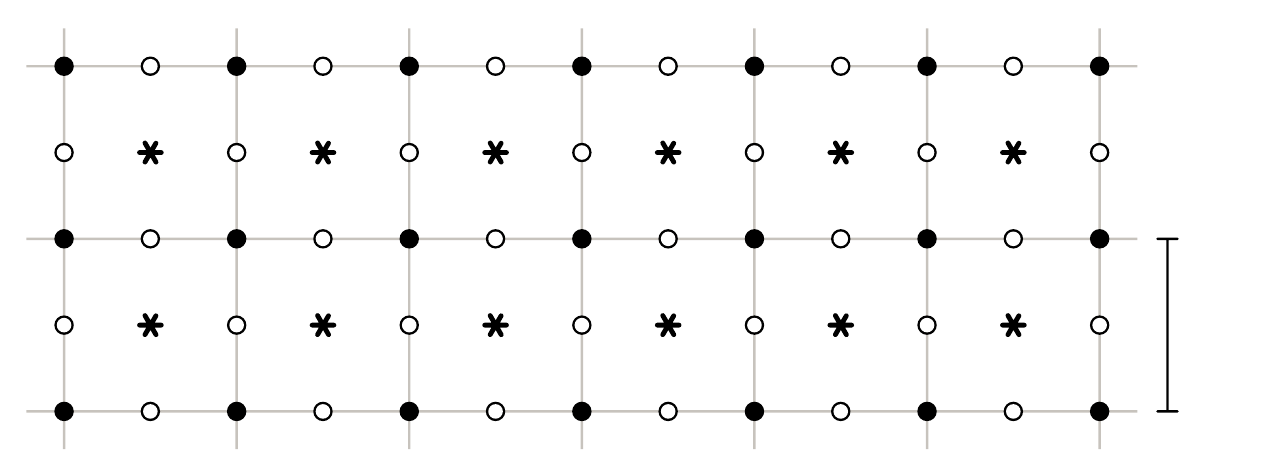}
	\put(93.5,10.2){$\meshsz$}
\end{overpic}
\centering
\caption{The $\meshsz$-mesh square grid with the sublattices
$\SqLatMeshPrim$, $\SqLatMeshDual$, and $\SqLatMeshMedial$.}
\label{fig: lattices}
\end{figure}

Finite-difference operators are analogues of differential operators in the
discrete setup, i.e., on the above lattices.
For concreteness, let us only write the defining formulas of these finite-difference
operators acting on complex-valued functions, although the action by the same formulas
will also be used on functions with values in other vector spaces.
The (combinatorially normalized) discrete 
\term{holomorphic} and \term{antiholomorphic deri\-vatives} of a function $f \colon \SqLatMeshDiamond \to \bC$
on the diamond lattice are functions $\gdee f , \gdeebar f \colon \SqLatMeshMedial \to \bC$
on the medial lattice given by the formulas
\begin{align}\nonumber
\gdee f(\zs) := \; &
    \frac{f(\zs+\frac{\meshsz}{2})-f(\zs-\frac{\meshsz}{2})}{2} - \ii\, \frac{f(\zs+\frac{\ii \meshsz}{2})-f(\zs-\frac{\ii \meshsz}{2})}{2}\,,
	\\
\label{eq: discrete Wirtinger derivatives}
\gdeebar f(\zs) := \; &
    \frac{f(\zs+\frac{\meshsz}{2})-f(\zs-\frac{\meshsz}{2})}{2} + \ii\, \frac{f(\zs+\frac{\ii \meshsz}{2})-f(\zs-\frac{\ii \meshsz}{2})}{2}\,.
\end{align}
Similarly, for a function $f \colon \SqLatMeshMedial \to \bC$ on the medial lattice,
$\gdee f, \gdeebar f \colon \SqLatMeshDiamond \to \bC$
are functions on the diamond lattice defined by exactly the same formulas as above.
If a function $f$ defined either on the diamond or the medial lattice 
satisfies $\gdeebar f(\zs) = 0$ at a medial or diamond vertex~$\zs$, it is said to be \term{discrete holomorphic} at~$\zs$.
If it satisfies $\gdee f(\zs) = 0$, then it is said to be \term{discrete antiholomorphic} at $\zs$.

The discrete \term{Laplacian} of a function~$f$ on any of the lattices
($\SqLatMeshPrim$, $\SqLatMeshDual$, $\SqLatMeshMedial$, or $\SqLatMeshDiamond$) is the
function $\gLapl f$ on the same lattice defined by
\begin{align}\label{eq: combinatorial Laplacian}
\gLapl f(\zs) := \; & 
    f(\zs + \meshsz) + f(\zs + \ii \meshsz) + f(\zs - \meshsz) + f(\zs - \ii \meshsz) - 4 \, f(\zs)\,.
\end{align}
Note that discrete Laplacian admits the following factorization:
\begin{align*}
\gLapl = 4 \, \gdee\gdeebar = 4 \, \gdeebar\gdee\,.
\end{align*}

At times we furthermore need versions of the holomorphic and antiholomorphic derivative
operators acting on functions defined only on the primal graph.
For $f \colon \SqLatMeshPrim \to \bC$
we define $\pdee f \colon \SqLatMeshMedial \to \bC$ by
\begin{align}\label{eq: primal lattice dee}
\pdee f(\zs) \coloneqq
	f \Big( \zs+\frac{\meshsz}{2} \Big) - f \Big( \zs-\frac{\meshsz}{2} \Big)
\quad \text{ and } \quad
\pdee f(\zs) \coloneqq
-\ii \, \bigg( f \Big( \zs+\frac{\ii\meshsz}{2} \Big) - f \Big( \zs-\frac{\ii\meshsz}{2} \Big) \bigg)
\end{align}
when $\zs$ is, respectively, a horizontal and a vertical edge.
Similarly, we define $\pdeebar f \colon \SqLatMeshMedial \to \bC$ by
\begin{align}\label{eq: primal lattice deebar}
\pdeebar f(\zs) \coloneqq
	f \Big( \zs+\frac{\meshsz}{2} \Big) - f \Big( \zs-\frac{\meshsz}{2} \Big)
\quad \text{ and } \quad
\pdeebar f(\zs) \coloneqq
\ii \, \bigg( f \Big( \zs+\frac{\ii\meshsz}{2} \Big) - f \Big( \zs-\frac{\ii\meshsz}{2} \Big) \bigg)
\end{align}
when $\zs$ is, respectively, a horizontal and a vertical edge.
For a function $f \colon \ZDiamond \to \C$, we then have
the following variants of factorizations of the discrete Laplacian
\begin{align}\label{eq: factorizations of Laplacian with primal graph derivatives}
\gdee\pdeebar f(\zs) = \gdeebar\pdee f(\zs)
    = \begin{cases} 
        \frac{1}{2}\gLapl f(\zs) & \zs\in\ZPrimary \\
        \mspace{20mu}0 & \zs\in\ZDual \, .
      \end{cases}
\end{align}

\subsection{Discrete domains and Green's functions}\label{subsec: discrete domains}

By a \term{discrete domain} 
with mesh size $\meshsz>0$ we mean a
planar region bounded by a polygonal Jordan curve made of the edges of the $\meshsz$-mesh square
grid~$\SqLatMesh$, 
such that the induced subgraph of~$\SqLatMesh$ consisting of the primal vertices
inside the Jordan curve is connected.
We write $\ddomain \subset \SqLatMesh$ for the set of primal vertices in the closure of the
Jordan domain, and $\bdry\ddomain$ for the set of primal vertices on the Jordan curve.
Vertices in $\ddomain \setminus \bdry \ddomain$ are called interior vertices.
By a mild abuse of terminology, we refer to $\ddomain$ as the discrete domain.

Discrete differential operators act naturally on functions defined on discrete domains, too,
once we specify some boundary conditions.

The \term{discrete Dirichlet
Laplacian} on $\ddomain$ is the operator $\dDLapl$ acting
by the formula~\eqref{eq: combinatorial Laplacian} on functions
$f \colon \ddomain \setminus \bdry \ddomain \to \R$
defined on the interior primal vertices
, with the interpretation that the values
of the function are zero outside the interior (in particular on the
boundary~$\bdry \ddomain$). This operator is negative definite, and its inverse
is up to a sign
the \term{discrete Dirichlet Green's function} in~$\ddomain$, defined as
\begin{align}
\Green^\Dirichlet_{\ddomain} \, = \, & - (\dDLapl)^{-1} \colon
\ddomain \times \ddomain \to \R,
\end{align}
setting again the values to zero if either of the arguments is not an interior vertex.
Concretely, the discrete Dirichlet Green's function 
is determined by
\begin{align}\label{eq: defining properties of discrete Dirichlet Green function}
\begin{cases}
\, \gLapl \Green_{\ddomain}^{\Dirichlet}(\cdot, \zsbis) = - \delta_\zsbis(\cdot) & \text{ on } \ddomain \setminus \bdry \ddomain \\
\, \Green_{\ddomain}^{\Dirichlet}(\zs, \zsbis) = 0 & \text{ if either $\zs \in \bdry \ddomain$ or $\zsbis \in \bdry \ddomain$}.
\end{cases}
\end{align}
Put differently, the discrete Green's function is a
kernel for the Dirichlet problem for the discrete Laplacian: 
for any $\psi \colon \ddomain \setminus \bdry \ddomain \to \R$, the function
$f(\zs) = \sum_{\zsbis \in \ddomain \setminus \bdry \ddomain} \Green_{\ddomain}^{\Dirichlet}(\zs, \zsbis) \, \psi(\zsbis)$
is the unique solution to $\dDLapl f = - \psi$ with $f|_{\bdry \ddomain} \equiv 0$.

The \term{discrete Neumann Laplacian} is the operator $\dNLapl$ acting
on functions $f \colon \ddomain \to \C$ by the formula
\begin{align}\label{eq: discrete Neumann Laplacian}
(\dNLapl f)(\zs) :=
    \sum_{\substack{\zsbis \in \ddomain \\ |\zs-\zsbis| = \meshsz}} \big( f(\zsbis) - f(\zs) \big) .
\end{align}
The crucial difference to the Dirichlet case is that the coefficient of the
``diagonal'' term $f(\zs)$ on the right is
proportional to the number of neighbors of~$\zs$ in the discrete domain~$\ddomain$ for the
discrete Neumann Laplacian. The operator~$\dNLapl$ is negative semidefinite
but not invertible, it has a one-dimensional kernel consisting of constant functions
(the connectedness of the discrete domain is essential here).
The degeneracy can be remedied by restricting to the space
\begin{align}\label{eq: discr zero avg fun}
\ddistribMC{\ddomain} := \set{ f \colon \ddomain \to \R \; \Bigg| \; \sum_{\zs \in \ddomain} f(\zs) = 0 }
\end{align}
of \term{zero-average functions}: the Neumann Laplacian~$\dNLapl$ is injective
on~$\ddistribMC{\ddomain}$ and its range lies in~$\ddistribMC{\ddomain}$, so
\begin{align*}
\dNLapl \colon \ddistribMC{\ddomain} \to \ddistribMC{\ddomain} 
\end{align*}
is invertible. 
By fixing any probability mass function on the boundary, i.e.,
${b \colon \bdry \ddomain \to [0,1]}$ such that $\sum_{x \in \bdry \ddomain} b(x) = 1$,
we can still define a (choice of the) \term{discrete Neumann Green's function}
\begin{align}\label{eq: definition of discrete Dirichlet Green function}
{\Green_{\ddomain}^{\Neumann} \colon \ddomain \times \ddomain \to \R}
\qquad \text{ by } \qquad
{\Green_{\ddomain}^{\Neumann}(\cdot , \zsbis) := (\dNLapl)^{-1} (b - \delta_\zsbis)} \, .
\end{align}
We then have ${\gLapl \Green_{\ddomain}^{\Neumann}(\cdot , \zsbis) = - \delta_\zsbis(\cdot)}$
in $\ddomain \setminus \bdry \ddomain$, and for any $\psi \in \ddistribMC{\ddomain}$ the function
${f(\zs) := \sum_\zsbis \Green_{\ddomain}^{\Neumann}(\zs,\zsbis) \psi(\zsbis)}$ is the unique zero-average solution to
$\dNLapl f = - \psi$. 
Note also that all of the discrete derivatives of Section~\ref{subsec: difference operators} are
zero-average linear combinations of values, so the discrete derivatives
of $\Green_{\ddomain}^{\Neumann}$ with respect to its second argument
are well-defined (independent of the choice of~$b$).

\subsection{Scaling limits of discrete Green's functions}
\label{subsec: scaling limit Green functions}

An ingredient of the proof of our scaling limit result for correlations of local fields
is the convergence of discrete double derivatives of the discrete Green's functions to their
continuum counterparts. Below we introduce these continuum objects, and then state the
convergence in the form that it will be used.

Let $\domain \subset \bC$ be a nonempty open simply-connected proper subset of the complex plane.
Both the Dirichlet and Neumann Green's functions in such domains~$\domain$
can be defined making use of conformal invariance.
Choose a conformal map
\begin{align*}
\confmap \colon \domain \to \UD 
\end{align*}
to the unit disk
$\UD = \set{ z \in \bC \; \big| \; |z|<1} $.
The Dirichlet Green's function in the unit disk is
\begin{align*}
\Green_\UD^\Dirichlet(z,w) := \frac{-1}{2 \pi} \, \log \bigg| \frac{z-w}{1-z\overline{w}} \bigg|
\qquad \text{ for } z, w \in \UD, \; z \ne w ,
\end{align*}
and (a choice of) the Neumann Green's function in the unit disk is
\begin{align*}
\Green_\UD^\Neumann(z,w) := \frac{-1}{2 \pi} \, \log \Big| (z-w) (1-z \overline{w}) \Big|
\qquad \text{ for } z, w \in \UD, \; z \ne w .
\end{align*}
We may then define the \term{Dirichlet Green's function} in $\domain$ by
\begin{align*}
\Green_\domain^\Dirichlet(z,w) :=
\Green_\UD^\Dirichlet \big( \confmap(z) , \confmap(w) \big)
\qquad \text{ for } z, w \in \domain, \; z \ne w ,
\end{align*}
since it is easy to check that the result does not depend on the chosen~$\confmap$.
For the \term{Neumann Green's function}, we can use
\begin{align*}
\Green_\domain^\Neumann(z,w) := 
\Green_\UD^\Neumann \big( \confmap(z) , \confmap(w) \big)
\qquad \text{ for } z, w \in \domain, \; z \ne w \, ,
\end{align*}
which depends on the chosen conformal map~$\confmap$, but the difference of
any two choices is of the form $h_1(z) + h_2(w)$,
with $h_1,h_2 \colon \domain \to \R$ harmonic.

These Green's functions are real-analytic functions on the configuration
space
\begin{align*}
\ConfigSp{2}{\domain} = \set{(z,w) \in \domain \times \domain \, \big| \, z \ne w} \; . 
\end{align*}
They are Green's functions for the Laplacian $\Lapl = \pdder{x} + \pdder{y}$ (where we write
$z = x + \ii y$ with $x,y \in \R$) in the sense that
\begin{align}\nonumber
\, \Lapl \Green_\domain^{\DorN}(\, \cdot\, , w) = \; & 0 \; \quad \text{ on } \domain \setminus \set{w} \, \\
\label{eq: Green's function log singularity}
\text{and } \qquad 
\Green_\domain^{\DorN}(z, w) = \; & \frac{1}{2\pi} \, \log \frac{1}{|z-w|} + g_\domain^\DorN(z,w) \, .
\end{align}
where $g_\domain^\DorN \colon \domain \times \domain \to \bR$ is real-analytic, and harmonic
in both variables separately.
By direct inspection, we also observe symmetricity of the Green's functions
\begin{align*}
\Green_\domain^{\DorN}(z, w) = \Green_\domain^{\DorN}(w, z)
\qquad \text{ for } z, w \in \domain, \; z \ne w .
\end{align*}

We need to consider directional derivatives of discrete and
continuum Green's functions
with respect to both variables.
The (combinatorially normalized)
discrete directional derivative of a function $f \colon \ddomain \to \bC$
in a direction~$\mu \in \set{+1,\ii,-1,-\ii}$ is
\begin{align}\label{eq: discrete directional derivative}
\dnabla^{\mu} f (\zs) = \frac{f(\zs+\half\mu\meshsz) - f(\zs-\half\mu\meshsz)}{\mu} .
\end{align}
In the continuum, the directional derivative of a function~$f \colon \domain \to \bC$
in the direction of a complex number $\mu \in \bC$ of unit modulus~$|\mu|=1$, is
\begin{align}\label{eq: continuum directional derivative}
\nabla^\mu f (z) = \frac{\ud}{\ud t} f(z + \mu t) \Big|_{t=0} .
\end{align}
When differentiating functions of several variables, we indicate
the index or label of the variable acted on by an extra subscript in a hopefully
self-explanatory way.

The following convergence results of discrete Dirichlet Green's functions are well-known.

\begin{lemma}\label{lem: discrete Dirichlet Green function convergence}
When discrete domains $(\ddomain)_{\meshsz>0}$ form an approximation to~$\domain$ in the
Carath\'eo\-dory sense and $\zs^\meshsz,\zsbis^\meshsz\in\ddomain$ are the closest
points in the discrete domain~$\ddomain$ to given points ${z,w\in\domain}$,
the discrete Dirichlet Green's functions converge
\begin{align*}
\Green_{\ddomain}^\Dirichlet(\zs^\meshsz,\zsbis^\meshsz) = \Green_{\domain}^\Dirichlet(z,w) + \oo(1),
\end{align*}
and the error term~$\oo(1)$ is uniformly small for $(z,w)$ on compact subsets of~$\ConfigSp{2}{\domain}$.
\end{lemma}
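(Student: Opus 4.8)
The plan is to reduce the convergence to the two classical inputs of lattice potential theory: the asymptotics of the square-grid potential kernel, and the convergence of discrete harmonic measure. Let $a \colon \bZ^2 \to \bR$ denote the potential kernel of the simple random walk, normalized so that $\gLapl a = \delta_0$ on the unit grid; recall its asymptotic expansion
\begin{align*}
a(x) = \frac{1}{2\pi} \log|x| + \kappa + \OO(|x|^{-2})
\qquad \text{as } |x| \to \infty ,
\end{align*}
for an explicit lattice constant $\kappa$. Writing $S$ for the simple random walk on $\ddomain$ started at $\zs$ and $\tau$ for its exit time from $\ddomain$, the harmonic extension of the boundary data identifies the discrete Dirichlet Green's function with
\begin{align*}
\Green_{\ddomain}^{\Dirichlet}(\zs,\zsbis)
= \EX_{\zs}\Big[ a\Big( \tfrac{S_\tau - \zsbis}{\meshsz} \Big) \Big]
- a\Big( \tfrac{\zs - \zsbis}{\meshsz} \Big) ,
\end{align*}
since the right-hand side has discrete Laplacian $-\delta_{\zsbis}$ in the interior (the potential-kernel term supplies the singularity and the expectation is discrete-harmonic there) and vanishes on $\bdry\ddomain$ (there $\tau=0$ and $S_\tau=\zs$); hence it agrees with $\Green_{\ddomain}^{\Dirichlet}$ by uniqueness for the defining properties~\eqref{eq: defining properties of discrete Dirichlet Green function}.

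First I would insert the potential-kernel asymptotics into both terms. For $(z,w)$ in a compact subset of $\ConfigSp{2}{\domain}$, the target $\zsbis^\meshsz$ stays a fixed positive distance from $\bdry\ddomain$, so both $|\zs - \zsbis|/\meshsz$ and $|S_\tau - \zsbis|/\meshsz$ tend to infinity, and the divergent term $-\tfrac{1}{2\pi}\log\meshsz$ together with the additive constant $\kappa$ appears identically in the two contributions and cancels in the difference. This leaves
\begin{align*}
\Green_{\ddomain}^{\Dirichlet}(\zs^\meshsz,\zsbis^\meshsz)
= \frac{1}{2\pi} \Big( \EX_{\zs^\meshsz}\big[ \log|S_\tau - \zsbis^\meshsz| \big] - \log|\zs^\meshsz - \zsbis^\meshsz| \Big) + \oo(1) ,
\end{align*}
which is exactly the discrete counterpart of the probabilistic formula for the continuum object,
\begin{align*}
\Green_{\domain}^{\Dirichlet}(z,w)
= \frac{1}{2\pi} \Big( \EX_{z}\big[ \log|B_\tau - w| \big] - \log|z-w| \Big) ,
\end{align*}
where $B$ is Brownian motion in $\domain$ and $\tau$ its exit time. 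The elementary term $\log|\zs^\meshsz - \zsbis^\meshsz| \to \log|z-w|$ converges trivially.

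It then remains to prove $\EX_{\zs^\meshsz}[\log|S_\tau - \zsbis^\meshsz|] \to \EX_{z}[\log|B_\tau - w|]$. Because $\zsbis^\meshsz$ stays bounded away from the boundary, the test function $x \mapsto \log|x - \zsbis^\meshsz|$ is smooth and bounded and, uniformly over $w$ in a compact set, equicontinuous in a neighborhood of $\bdry\domain$; the claim is therefore the statement that the exit distribution of the walk started at $\zs^\meshsz$ converges weakly to the exit distribution of Brownian motion started at $z$, i.e.\ convergence of discrete harmonic measure to continuum harmonic measure. This is the standard consequence of Carath\'eodory convergence together with the convergence of discrete harmonic functions to continuum harmonic ones, and I would invoke it from the literature on discrete complex analysis.

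The hard part is precisely this last input, and all the genuine analytic difficulty of the lemma is concentrated there: Carath\'eodory convergence controls the domains only through their conformal uniformizations and permits very irregular approaching boundaries, so one must know that neither the random walk nor Brownian motion charges thin or oscillating boundary features, and that the convergence of $\EX_{\zs^\meshsz}[\,\cdot\,]$ holds uniformly for $z$ on compact subsets of $\domain$ and $w$ on compact subsets bounded away from the boundary. Granting this---or, alternatively, extracting locally uniform subsequential limits of the discrete harmonic functions $\zs \mapsto \Green_{\ddomain}^{\Dirichlet}(\zs,\zsbis^\meshsz)$ via discrete Harnack estimates and identifying the limit through its boundary values and logarithmic singularity---the uniformity over compacts of $\ConfigSp{2}{\domain}$ asserted in the statement follows from the uniformity of the potential-kernel expansion combined with the uniform weak convergence of harmonic measure, completing the proof.
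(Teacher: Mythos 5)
The paper does not prove this lemma at all: it is stated as a well-known fact (``The following convergence results of discrete Dirichlet Green's functions are well-known''), so there is no argument of the authors' to compare yours against step by step. Your proposal is the standard proof of that well-known fact, and it is correct in outline. The normalizations check out: with the paper's combinatorial Laplacian~\eqref{eq: combinatorial Laplacian}, a potential kernel satisfying $\gLapl a = \delta_0$ indeed has asymptotics $\tfrac{1}{2\pi}\log|x| + \kappa + \OO(|x|^{-2})$ (it is $-1$ times the paper's full-plane Green's function~\eqref{eq: square grid Green function} up to the additive constant), the representation $\Green_{\ddomain}^{\Dirichlet}(\zs,\zsbis) = \EX_{\zs}[a((S_\tau-\zsbis)/\meshsz)] - a((\zs-\zsbis)/\meshsz)$ satisfies the defining properties~\eqref{eq: defining properties of discrete Dirichlet Green function} by the maximum principle, the $\log\meshsz$ and $\kappa$ terms cancel in the difference with errors $\OO(\meshsz^2)$ uniformly on compacts of $\ConfigSp{2}{\domain}$, and the continuum formula $\Green_{\domain}^{\Dirichlet}(z,w)=\tfrac{1}{2\pi}(\EX_z[\log|B_\tau-w|]-\log|z-w|)$ is consistent with the paper's disk formula and conformal invariance. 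You have also correctly located where all the genuine work lives: the weak convergence of discrete harmonic measure (equivalently, of solutions of the discrete Dirichlet problem) under Carath\'eodory approximation, which requires Beurling-type estimates to rule out mass escaping through boundary features not seen by the kernel convergence. Deferring that input to the discrete complex analysis literature is no less rigorous than what the paper itself does, since the paper defers the entire lemma.

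Two minor points you should tidy if you write this up in full. First, Carath\'eodory convergence controls $\ddomain$ only through the kernel containing the base point, so the approximating domains may contain large regions (reached through shrinking necks) that are absent from $\domain$; since your test function $x\mapsto\log|x-\zsbis^\meshsz|$ is unbounded above, you need either an a priori bound keeping the $\ddomain$ in a fixed bounded set, or a Beurling estimate showing the walk exits through such necks with probability small enough to kill the logarithmic growth. Second, for the claimed uniformity over compacts of $\ConfigSp{2}{\domain}$ your ``alternative'' compactness route (subsequential limits of the discrete harmonic functions $\zs\mapsto\Green_{\ddomain}^{\Dirichlet}(\zs,\zsbis^\meshsz)+\tfrac{1}{2\pi}\log|\zs-\zsbis^\meshsz|$ identified by their boundary behavior and singularity) is actually the cleaner one to make uniform, since it avoids quantifying the weak convergence of a family of exit measures.
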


\begin{coro}
When discrete domains $(\ddomain)_{\meshsz>0}$ form an approximation to~$\domain$ in the
Cara\-th\'eo\-dory sense and $\zs_\medial^\meshsz,\zsbis_\medial^\meshsz$ are
the closest edges of directions~$\mu,\nu \in \set{\pm 1 , \pm \ii}$
in the discrete domain~$\ddomain$ to given points~${z,w\in\domain}$,
the discrete double derivatives of the Dirichlet Green's functions converge
\begin{align}\label{eq: Dirichlet Greens double derivative scaling limit}
\meshsz^{-2} (\dnabla^{\mu})_\zs (\dnabla^{\nu})_\zsbis
	\Green_{\domain_\meshsz}^\Dirichlet
	\big( \zs^\meshsz_\medial , \zsbis^\meshsz_\medial \big)
	\, = & \ 
	\nabla^\mu_z \nabla^\nu_w
	\Green_{\domain}^\Dirichlet
	\big( z , w \big)
	+
	o(1)\,,\phantom{\Big\vert}
\end{align}
and the error term~$\oo(1)$ is uniformly small for $(z,w)$ on compact
subsets of~$\ConfigSp{2}{\domain}$.
\end{coro}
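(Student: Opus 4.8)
The plan is to bootstrap the desired convergence of discrete double derivatives from the convergence of the Green's functions themselves (Lemma~\ref{lem: discrete Dirichlet Green function convergence}) by exploiting harmonicity. Note that the Lemma alone is insufficient: the difference operator $(\dnabla^\mu)_\zs (\dnabla^\nu)_\zsbis$ expands into a fixed linear combination of four values of $\Green_{\ddomain}^\Dirichlet$ at primal vertices within $\OO(\meshsz)$ of $z$ and $w$, each of which converges to the \emph{same} limit $\Green_\domain^\Dirichlet(z,w)$, so the naive bound on the combination is $\oo(1)$ and the prefactor $\meshsz^{-2}$ produces an indeterminate $\meshsz^{-2}\cdot\oo(1)$. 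The resolution is the classical principle that, for (discrete) harmonic functions, interior derivative estimates upgrade uniform convergence of the functions to uniform convergence of their derivatives. Since $(\dnabla^\mu)_\zs$ and $(\dnabla^\nu)_\zsbis$ act on different arguments, they commute, and one can peel off one variable at a time.

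First I would fix a compact set $K \finsubset \ConfigSp{2}{\domain}$. Each $(z,w) \in K$ satisfies $z \ne w$ and has positive distance to $\bdry\domain$, so by compactness $K$ is covered by finitely many product neighborhoods $U \times V$ with $\overline U, \overline V \finsubset \domain$ and $\overline U \cap \overline V = \emptyset$; it suffices to prove the claim uniformly on one such $U \times V$. On $U \times V$ the continuum Green's function $\Green_\domain^\Dirichlet(z,w)$ is harmonic in $z$ and in $w$ separately (the log singularity in~\eqref{eq: Green's function log singularity} is avoided), and for all small enough $\meshsz$ the discrete Green's function $\Green_{\ddomain}^\Dirichlet(\zs,\zsbis)$ is exactly discrete harmonic in each lattice argument on the corresponding discrete neighborhoods, by~\eqref{eq: defining properties of discrete Dirichlet Green function}. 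By Lemma~\ref{lem: discrete Dirichlet Green function convergence}, $\Green_{\ddomain}^\Dirichlet \to \Green_\domain^\Dirichlet$ uniformly on $\overline U \times \overline V$.

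Next I would invoke the standard interior derivative estimates for discrete harmonic functions: a function discrete harmonic on a lattice ball of radius $r$ (in lattice units) has discrete gradient bounded by $\OO(1/r)$ times its supremum on the ball, with analogous bounds for higher discrete differences. Applied first in the $\zsbis$ variable, these show that the normalized discrete derivatives $\meshsz^{-1}(\dnabla^\nu)_\zsbis \Green_{\ddomain}^\Dirichlet(\zs,\cdot)$ are uniformly bounded and equicontinuous on compact subsets of $V$; by Arzel\`a--Ascoli they converge along subsequences, and since the underlying functions converge to $\Green_\domain^\Dirichlet$, every subsequential limit must equal $\nabla^\nu_w \Green_\domain^\Dirichlet(\zs,\cdot)$, forcing convergence of the full sequence. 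Because $(\dnabla^\nu)_\zsbis$ is a linear operation in $\zsbis$, it commutes with the discrete Laplacian in $\zs$, so $\meshsz^{-1}(\dnabla^\nu)_\zsbis \Green_{\ddomain}^\Dirichlet(\cdot,\zsbis)$ is again discrete harmonic in $\zs$ on $U$ and converges uniformly to the $z$-harmonic function $\nabla^\nu_w \Green_\domain^\Dirichlet(\cdot,w)$. Repeating the same derivative-estimate and Arzel\`a--Ascoli argument in the $\zs$ variable then yields
\[
\meshsz^{-2} (\dnabla^\mu)_\zs (\dnabla^\nu)_\zsbis \Green_{\ddomain}^\Dirichlet(\zs^\meshsz_\medial, \zsbis^\meshsz_\medial) = \nabla^\mu_z \nabla^\nu_w \Green_\domain^\Dirichlet(z, w) + \oo(1),
\]
uniformly on $U \times V$, and hence on $K$ after taking the maximum over the finite cover.

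I expect the only genuine content to be the discrete interior derivative estimates that convert uniform convergence of discrete harmonic functions into convergence of their discrete derivatives; these are classical in discrete potential theory (e.g.\ random-walk or discrete-analyticity gradient bounds), so the main obstacle is bookkeeping rather than any new difficulty. Some care is needed to match the closest medial edges $\zs^\meshsz_\medial, \zsbis^\meshsz_\medial$ appearing here with the closest primal vertices used in Lemma~\ref{lem: discrete Dirichlet Green function convergence}, but since both lie within $\OO(\meshsz)$ of $z$ and $w$ and the argument above already establishes uniform convergence of the discrete derivatives as functions on $\overline U \times \overline V$, this mismatch only contributes to the error at order $\oo(1)$.
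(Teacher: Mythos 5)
The paper offers no proof of this corollary: it and the preceding Lemma~\ref{lem: discrete Dirichlet Green function convergence} are stated as ``well-known'' convergence results and simply cited, so there is no in-paper argument to compare yours against. Your proposal is nevertheless correct, and it is essentially the standard route by which such statements are proved in the discrete complex analysis literature: you correctly identify that the $\meshsz^{-2}$ prefactor makes the naive bound $\meshsz^{-2}\cdot\oo(1)$ useless, and that the missing ingredient is the interior derivative estimate for discrete harmonic functions, which on a lattice ball of macroscopic radius $r$ bounds the combinatorially normalized difference $\dnabla^{\nu}f$ by $\OO(\meshsz\,\sup|f|/r)$ and the second differences by $\OO(\meshsz^{2}\sup|f|/r^{2})$, so that $\meshsz^{-1}\dnabla^{\nu}\Green_{\ddomain}^{\Dirichlet}$ is uniformly bounded and Lipschitz on compact subsets, uniformly in~$\meshsz$. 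Two small points deserve an explicit word in a written-out version. First, to apply Arzel\`a--Ascoli you must interpolate the lattice functions to continuum ones (or use an equivalent notion of convergence); this is routine but should be said. Second, the identification of every subsequential limit with $\nabla^{\nu}_{w}\Green_{\domain}^{\Dirichlet}$ is not automatic from uniform convergence of the underlying functions alone: the clean way is to sum the discrete derivative along a lattice path joining two points of $V$, which recovers an increment of $\Green_{\ddomain}^{\Dirichlet}$ converging (by the Lemma) to the corresponding increment of $\Green_{\domain}^{\Dirichlet}$, while the Riemann sums converge to the line integral of the subsequential limit $g$; since $g$ is continuous this forces $g=\nabla^{\nu}_{w}\Green_{\domain}^{\Dirichlet}$. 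With those two routine additions, the two-step peeling argument (first in $\zsbis$, then, using that $\dnabla^{\nu}_{\zsbis}$ commutes with the discrete Laplacian in $\zs$, again in $\zs$) gives exactly the asserted locally uniform convergence, including the harmless $\OO(\meshsz)$ mismatch between nearest medial edges and nearest primal vertices.
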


We also need a counterpart of the
result~\eqref{eq: Dirichlet Greens double derivative scaling limit}
for Neumann boundary conditions. This is less well-known, so we provide the proof below.
\begin{lemma}\label{lem: discrete Neumann Green function convergence}
When discrete domains $(\ddomain)_{\meshsz>0}$ form an approximation to~$\domain$ in the
Carath\'eo\-dory sense and $\zs_\medial^\meshsz,\zsbis_\medial^\meshsz$ are
the closest edges of directions~$\mu,\nu \in \set{\pm 1 , \pm \ii}$
in the discrete domain~$\ddomain$ to given points~${z,w\in\domain}$,
the discrete double derivatives of the Neumann Green's functions converge
\begin{align}\label{eq: Neumann Greens double derivative scaling limit}
\meshsz^{-2} (\dnabla^{\mu})_\zs (\dnabla^{\nu})_\zsbis
	\Green_{\domain_\meshsz}^\Neumann
	\big( \zs^\meshsz_\medial , \zsbis^\meshsz_\medial \big)
	\, = & \ 
	\nabla^\mu_z \nabla^\nu_w
	\Green_{\domain}^\Neumann
	\big( z , w \big)
	+
	o(1)\,,\phantom{\Big\vert}
\end{align}
and the error term~$\oo(1)$ is uniformly small for $(z,w)$ on compact
subsets of~$\ConfigSp{2}{\domain}$.
\end{lemma}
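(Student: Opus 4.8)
The plan is to reduce the Neumann statement to the already-established Dirichlet convergence~\eqref{eq: Dirichlet Greens double derivative scaling limit} by isolating the difference of the two Green's functions. At a fully interior vertex all four neighbors lie in~$\ddomain$, so there the discrete Neumann Laplacian~$\dNLapl$ coincides with the combinatorial Laplacian~$\gLapl$; hence both $\Green_\ddomain^\Neumann(\cdot,\zsbis)$ and $\Green_\ddomain^\Dirichlet(\cdot,\zsbis)$ satisfy $\gLapl G(\cdot,\zsbis)=-\delta_\zsbis$ on $\ddomain\setminus\bdry\ddomain$, so the point sources cancel and
\[
D_\ddomain(\zs,\zsbis) \;:=\; \Green_\ddomain^\Neumann(\zs,\zsbis) - \Green_\ddomain^\Dirichlet(\zs,\zsbis)
\]
is discrete harmonic on all of $\ddomain\setminus\bdry\ddomain$ in its first argument (and, by symmetry, in its second). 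In the continuum the analogous difference $D_\domain := \Green_\domain^\Neumann - \Green_\domain^\Dirichlet$ has no logarithmic singularity and is harmonic in each variable separately on all of~$\domain$. Since~\eqref{eq: Dirichlet Greens double derivative scaling limit} already gives convergence of the double derivative of the Dirichlet part, it suffices to prove
\[
\meshsz^{-2}(\dnabla^{\mu})_\zs (\dnabla^{\nu})_\zsbis D_\ddomain(\zs^\meshsz_\medial,\zsbis^\meshsz_\medial) \; \longrightarrow \; \nabla^\mu_z\nabla^\nu_w D_\domain(z,w)
\]
uniformly on compact subsets of $\ConfigSp{2}{\domain}$.

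Next I would upgrade from $0$th-order to derivative convergence using interior regularity of discrete harmonic functions. The key mechanism is that a sequence of discrete harmonic functions converging uniformly on compact subsets automatically converges together with all of its discrete directional derivatives, the limits being the corresponding continuum derivatives of the (harmonic) limit; this follows from standard discrete gradient and Harnack estimates, which give equicontinuity of the rescaled discrete derivatives, combined with the identification of subsequential limits as continuum derivatives of~$D_\domain$. Applying this separately in each of the two variables---both of which leave $D_\ddomain$ discrete harmonic in the interior---reduces the whole lemma to the $0$th-order statement $D_\ddomain(\zs^\meshsz,\zsbis^\meshsz)\to D_\domain(z,w)$ uniformly on compacts, equivalently the convergence of the Neumann Green's function $\Green_\ddomain^\Neumann$ itself, since the Dirichlet $0$th-order convergence is Lemma~\ref{lem: discrete Dirichlet Green function convergence}.

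For the remaining $0$th-order Neumann convergence I would use a potential-kernel decomposition. Writing $a_\meshsz$ for the full-plane potential kernel on $\meshsz\bZ^2$, with its classical asymptotics $a_\meshsz(\zs-\zsbis) = \tfrac{1}{2\pi}\log\tfrac{|\zs-\zsbis|}{\meshsz}+\kappa+\oo(1)$, one sets $H_\ddomain(\cdot,\zsbis) := \Green_\ddomain^\Neumann(\cdot,\zsbis)+a_\meshsz(\cdot-\zsbis)$, which is discrete harmonic in the interior and carries the reflecting (Neumann) boundary data forced by~$a_\meshsz$. The additive constant and the $\log(1/\meshsz)$ term depend only on $\zs-\zsbis$ and are annihilated by the subsequent derivatives, so they are harmless. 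Carath\'eodory convergence of the domains, together with convergence of the reflected-random-walk representation (equivalently of the discrete Neumann harmonic measure) to its continuum counterpart, then yields uniform convergence of $H_\ddomain$ on compacts to the continuum regular part, and hence the desired $0$th-order convergence of $\Green_\ddomain^\Neumann$ modulo the immaterial constant.

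The main obstacle is precisely this Neumann boundary analysis. In the Dirichlet case the correction vanishes on $\bdry\ddomain$, which makes its control routine; here the reflecting boundary condition means $H_\ddomain$ does not vanish on $\bdry\ddomain$, and controlling a discrete Neumann-harmonic function along a rough polygonal boundary under only Carath\'eodory convergence is the delicate point. I would handle it through the reflected random walk and discrete Neumann harmonic measure, and its convergence to the continuum harmonic measure underlying the conformally covariant definition of $\Green_\domain^\Neumann$. Crucially, because the statement only concerns the two-fold derivative on compact subsets of the \emph{interior}, the interior estimates above let me work on a slightly larger compact and avoid any sharp boundary regularity, so uniform $0$th-order convergence on compacts is all that the boundary analysis needs to deliver.
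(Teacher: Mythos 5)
Your overall strategy---peel off the Dirichlet part, note that the difference is discrete harmonic in the interior, and bootstrap derivative convergence from $0$th-order convergence---is plausible in outline, but it has a genuine gap at exactly the point you flag as ``the delicate point'', and that gap is not one you can route around the way you claim. Your reduction makes the $0$th-order convergence of $\Green_{\ddomain}^{\Neumann}$ (equivalently, of the discrete Neumann harmonic measure / reflected random walk) the load-bearing step. That convergence is \emph{not} a routine known fact under mere Carath\'eodory approximation: the reflected walk is sensitive to the microscopic structure of the polygonal boundary, and the available convergence results for reflected random walk to reflected Brownian motion require boundary regularity far beyond what Carath\'eodory convergence provides. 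Moreover, your remark that working on compact subsets of the interior ``lets you avoid any sharp boundary regularity'' does not hold here: the Neumann problem is nonlocal, so the values of the regular part $H_{\ddomain}$ at interior points are determined by propagating boundary data with the Neumann harmonic measure, and there is no interior estimate that substitutes for controlling that measure. (There is also a smaller normalization issue: $\Green_{\ddomain}^{\Neumann}$ is only defined up to the choice of the boundary mass $b$ and the zero-average convention, and $\Green_{\domain}^{\Neumann}$ up to $h_1(z)+h_2(w)$, so your $0$th-order statement must be phrased modulo such terms before the derivative bootstrap can be applied.)

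The paper avoids all of this with a duality that you are missing: fixing $w$ and $\nu$, the function $F(\zs_*)=(\dnabla^{-\ii\nu})_{\zsbis}\Green_{\ddomain^*}^{\Dirichlet}(\zs_*,\zsbis_{\medial}^{\meshsz})$ on the \emph{dual} graph admits a discrete harmonic conjugate $H$, and one checks combinatorially that $H$ solves exactly the discrete Neumann problem defining $-(\dnabla^{\nu})_{\zsbis}\Green_{\ddomain}^{\Neumann}(\cdot,\zsbis_{\medial}^{\meshsz})$ (the Dirichlet boundary condition for $F$ is what produces the Neumann harmonicity of $H$ at the boundary). This gives the \emph{exact} lattice identity
\begin{align*}
\meshsz^{-2}(\dnabla^{\mu})_\zs(\dnabla^{\nu})_\zsbis\Green_{\ddomain}^{\Neumann}(\zs_\medial^\meshsz,\zsbis_\medial^\meshsz)
= -\,\meshsz^{-2}(\dnabla^{-\ii\mu})_\zs(\dnabla^{-\ii\nu})_\zsbis\Green_{\ddomain^*}^{\Dirichlet}(\zs_\medial^\meshsz,\zsbis_\medial^\meshsz)\,,
\end{align*}
so the Neumann convergence follows immediately from the already-established Dirichlet convergence~\eqref{eq: Dirichlet Greens double derivative scaling limit}, together with the elementary continuum identity $-\nabla^{-\ii\mu}_z\nabla^{-\ii\nu}_w\Green_{\domain}^{\Dirichlet}=\nabla^{\mu}_z\nabla^{\nu}_w\Green_{\domain}^{\Neumann}$. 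No analysis of the Neumann boundary condition, reflected walks, or potential kernels is needed. To repair your argument you would need to supply a proof of convergence of the discrete Neumann harmonic measure under Carath\'eodory approximation, which is at least as hard as the lemma itself; the harmonic-conjugate identity is the idea that makes the lemma cheap.
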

\begin{proof}
Fix $w \in \domain$ and a direction~$\nu \in \set{\pm 1 , \pm \ii}$, and let
$\zsbis_\medial^\meshsz$ be the edge of direction~$\nu$ in~$\ddomain$ that is nearest to~$w$.
Consider the discrete Dirichlet Green's function~$\Green_{\ddomain^*}^\Dirichlet$
on the dual graph~$\ddomain^* \subset \meshsz \ZDual$, and particularly its discrete derivative
in the second variable at the medial vertex~$\zsbis_\medial^\meshsz$ in the direction~$-\ii \nu$,
\begin{align*}
F(\zs^\meshsz_*) :=
    (\dnabla^{-\ii \nu})_\zsbis \Green_{\ddomain^*}^\Dirichlet (\zs^\meshsz_*, \zsbis_\medial^\meshsz) .
\end{align*}
By construction, the function $F \colon \ddomain^* \to \R$ is discrete harmonic except at
$\zs_*^\meshsz = \zsbis_\medial^\meshsz \pm \ii \frac{\meshsz}{2} \nu$, where its discrete Laplacian is
\begin{align*}
\dDLaplDual F \big( \zsbis_\medial^\meshsz \pm \ii \frac{\meshsz}{2} \nu \big) = \pm 1 .
\end{align*}
In particular, we can define a function~$H$ (a discrete harmonic conjugate of~$F$), by
requiring
\begin{align*}
\dnabla^{\mu} H (\zs_\medial^\meshsz) = \dnabla^{-\ii \mu} F (\zs_\medial^\meshsz)
\end{align*}
for all edges $\zs_\medial^\meshsz \ne \zsbis_\medial^\meshsz$;
the single-valuedness of~$H$ around~$\zsbis_\medial^\meshsz$ relies on the
opposite values
of the Laplacian of~$F$ at the two adjacent dual
vertices $\zsbis_\medial^\meshsz \pm \ii \frac{\meshsz}{2}\nu$.
From the construction, routine combinatorial considerations yield
\begin{align*}
\dNLapl H ( \zs^\meshsz ) = 0 \quad
    \text{ for } \zs^\meshsz \ne \zsbis_\medial^\meshsz \pm \frac{\meshsz}{2} \nu 
\qquad \text{ and } \qquad 
\dNLapl H \big( \zsbis_\medial^\meshsz \pm \frac{\meshsz}{2} \nu \big) = \pm 1 ,
\end{align*}
where the Dirichlet boundary conditions for~$F$ are used for the Neumann
Laplacian harmonicity of~$H$ on the boundary. As a consequence,
fixing the additive constant in the harmonic conjugate so that~$H$ it is
zero-average for definiteness, we find
\begin{align*}
H (\zs^\meshsz) =
  - (\dnabla^{\nu})_\zsbis \Green_{\ddomain}^\Neumann (\zs^\meshsz, \zsbis_\medial^\meshsz) .
\end{align*}
We can therefore write the double derivative of the discrete Neumann Green's
function in the following form
\begin{align*}
\meshsz^{-2} \, (\dnabla^{\mu})_\zs (\dnabla^{\nu})_\zsbis
    \Green_{\ddomain}^\Neumann (\zs_\medial^\meshsz, \zsbis_\medial^\meshsz)
= \; & - \meshsz^{-2} \, \dnabla^{\mu} H (\zs_\medial^\meshsz) \\
= \; & - \meshsz^{-2} \, \dnabla^{-\ii \mu} F (\zs_\medial^\meshsz) \\
= \; & - \meshsz^{-2} \, (\dnabla^{-\ii \mu})_\zs (\dnabla^{-\ii \nu})_\zsbis
    \Green_{\ddomain}^\Dirichlet (\zs_\medial^\meshsz, \zsbis_\medial^\meshsz) \\
= \; & - \nabla^{-\ii \mu}_z \nabla^{-\ii \nu}_w \Green_{\domain}^\Dirichlet \big( z , w \big) + o(1) ,
\end{align*}
where the last step uses~\eqref{eq: Dirichlet Greens double derivative scaling limit}.
This shows that the double derivative has a scaling limit, given in terms of
different ($90^\circ$ rotated) directional derivatives of the Dirichlet Green's
function.
It remains to observe the relationship
\begin{align*}
- \nabla^{-\ii \mu}_z \nabla^{-\ii \nu}_w \Green_{\domain}^\Dirichlet \big( z , w \big)
= \nabla^{\mu}_z \nabla^{\nu}_w \Green_{\domain}^\Neumann \big( z , w \big) .
\end{align*}
of directional double derivatives of continuum Green's functions, which can
for example be directly verified from the defining
expressions of~$\Green_{\domain}^\DorN$.
\end{proof}

\subsection{Discrete contour integration and discrete monomial functions}
\label{subsec: discrete integration}\label{subsec: discrete monomials}
Two final notions of discrete complex analysis are employed in constructing
the representations of Heisenberg and Virasoro algebras on the space of local fields
of the discrete model as in~\cite{HKV}: discrete analogues of
contour integration and of Laurent monomials.
These notions are only used as such on the square grid of unit-mesh~$\meshsz=1$.
As a notational distinction, we typically denote vertices of the unit-mesh square grid
and the associated diamond and medial lattices by
$\zu, \zubis \in \SqLat$,
$\zu_\diamond, \zubis_\diamond \in \SqLatDiamond$, and
$\zu_\medial, \zubis_\medial \in \SqLatMedial$.
For the appropriate notion of discrete contour integration, we need yet one more
lattice,
\begin{align*}
\SqLatCorner := \; & \set{ r  + \ii s \; \Big| \; r,s \in \half\bZ + \frac{1}{4}}
    & \text{\term{(corner lattice)}}
\end{align*}
whose vertices are referred to as corners.
A \term{corner path} is a finite sequence $\gamma=(c_0,\ldots,c_\ell)$ of consecutively nearest corners, i.e.,
$c_j\in\ZCorner$ for $0 \le j \le \ell$ such that ${|c_j-c_{j-1}|=\frac{1}{2}}$ for $0 < j \le \ell$.
Then, given two functions $f \colon \SqLatDiamond \to \C$ and $g \colon \SqLatMedial \to \C$ we define
\begin{align}\label{eq: discrete integration}
\dcint{\gamma} f(\zu_\diamond)g(\zu_\medial)\dd \zu :=
	\sum_{j=1}^\ell \, (c_j - c_{j-1})
        f(\zu_j^\diamond) g(\zu_j^\medial)\,, &\ \ \ \ \textnormal{and}
	\\ \nonumber
\dcint{\gamma} f(\zu_\diamond)g(\zu_\medial)\dd{\cconj{\zu}} :=
	\sum_{j=1}^\ell \, \cconj{(c_j - c_{j-1})}
        f(\zu_j^\diamond)g(\zu_j^\medial)\,, &
\end{align}
where $\zu_j^\diamond \in \SqLatDiamond$ and $\zu_j^\medial \in \SqLatMedial$ are the unique diamond vertex
and medial vertex that have both $c_j$ and $c_{j-1}$ among their nearest corner vertices~---
see Figure~\ref{fig: discrete contour}.
Note, furthermore, that bilinear discrete integration may be defined by the same formulas also when one
of the functions $f, g$ is complex-valued and the other one takes values in a complex vector space.

\begin{figure}[h!]
\centering
\begin{overpic}[scale=0.776, tics=10]{./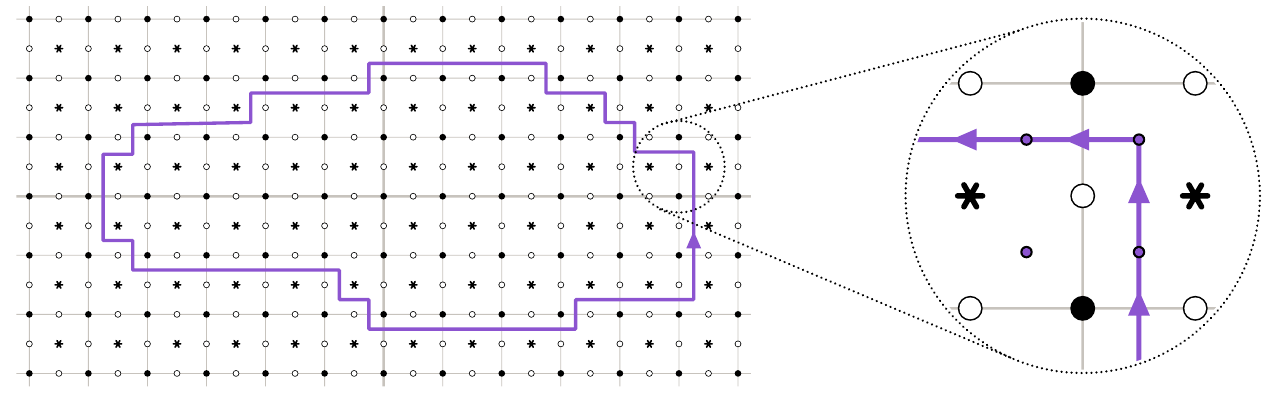}
	\put(95.5,13){$\zu^{\diamond}_j$}
	\put(85.5,13){$\zu^{\medial}_j$}
	\put(90.5,18.5){$c_j$}
	\put(90.5,9.3){$c_{j-1}$}
	\put(60,29){$\Z^2$}
\end{overpic}
\centering
\caption{A discrete contour on the corner lattice. Each step $(c_{j-1},c_j)$ of it
separates a medial vertex $\zu_j^\medial \in \SqLatMedial$ from a diamond vertex
$\zu_j^\diamond \in \SqLatDiamond$.}
\label{fig: discrete contour}
\end{figure}

A corner path $\gamma=(c_0,\ldots,c_\ell)$ is said to be closed if~$c_\ell = c_0$. If, moreover, the corners
$c_1, \ldots, c_\ell$ are distinct, we call $\gamma$ a \term{corner contour}. A corner contour~$\gamma$
is essentially a Jordan curve on the lattice; it surrounds a bounded region.
The set ${\interior{\gamma} \subset \SqLatDiamond \cup \SqLatMedial}$ of surrounded diamond and
medial vertices is called the \term{interior} of~$\gamma$, and we also denote
${\interior_\diamond \gamma := \interior \gamma \cap \SqLatDiamond}$ and
${\interior_\medial \gamma := \interior \gamma \cap \SqLatMedial}$
for the two types of surrounded vertices separately~--- see Figure~\ref{fig: discrete contour}.
A corner contour $\gamma=(c_0,\ldots,c_\ell)$ is said to be \term{positively oriented}, if the
corners $c_1, \ldots, c_\ell$ appear counterclockwise along~$\gamma$ seen as a Jordan curve.
The symbol~$\dcoint{\gamma}$ is used
for discrete integrations~\eqref{eq: discrete integration} along positively oriented
corner contours~$\gamma$.

For calculations with discrete integration, the most important properties are
the following analogues of Stokes' formula, contour deformation, and integration
by parts.

Let $\gamma$ be a positively-oriented corner contour.
For any two functions $f \colon \SqLatDiamond \to \C$ and $g \colon \SqLatMedial \to \C$ we have
the \term{discrete Stokes' formulas}
\begin{align}\label{eq: discrete Stokes}
\dcoint{\gamma} f(\zu_\diamond)g(\zu_\medial)\dd{\zu}
= \; & \phantom{-} \ii \, \sum_{\zubis_\diamond\in\interior_\diamond\gamma} f(\zubis_\diamond)\gdeebar g(\zubis_\diamond)
	  + \ii \, \sum_{\zubis_\medial\in\interior_\medial\gamma} \gdeebar f(\zubis_\medial) g(\zubis_\medial) \, , 
\\
\dcoint{\gamma} f(\zu_\diamond)g(\zu_\medial)\dd{\cconj \zu}
= \; & - \ii \, \sum_{\zubis_\diamond\in\interior_\diamond\gamma} f(\zubis_\diamond) \gdee g(\zubis_\diamond)
	  - \ii \, \sum_{\zubis_\medial\in\interior_\medial\gamma} \gdee f(\zubis_\medial) g(\zubis_\medial) \, . &
\end{align}
In particular
if the two functions $f$ and $g$ are discrete holomorphic on the symmetric differences
of the appropriate subsets of the interior of two positively-oriented corner contours $\gamma_1$ and $\gamma_2$
(namely, $\gdeebar f \equiv 0$ on ${(\interior_\medial \gamma_1 \setminus \interior_\medial \gamma_2)
\cup (\interior_\medial \gamma_2 \setminus \interior_\medial \gamma_1)}$
and $\gdeebar g \equiv 0$ on  ${(\interior_\diamond \gamma_1 \setminus \interior_\diamond \gamma_2)
\cup (\interior_\diamond \gamma_2 \setminus \interior_\diamond \gamma_1)}$),
then one has the \term{contour deformation} equality
\begin{align*}
\dcoint{\gamma_1} f(\zu_\diamond)g(\zu_\medial)\dd{\zu} = \dcoint{\gamma_2} f(\zu_\diamond)g(\zu_\medial)\dd{\zu} .
\end{align*}
A similar contour deformation equality holds for discrete antiholomorphic functions and the integrals
$\dcoint{\gamma_i} \cdots \dd{\cconj{\zu}}$.

The \term{discrete integration by parts} equalities
\begin{align}\nonumber
\dcoint{\gamma} f(\zu_\diamond) \, \gdee h(\zu_\medial) \, \dd{\zu}
= \; & - \dcoint{\gamma} h(\zu_\diamond) \, \gdee f(\zu_\medial) \, \dd{\zu} \,
\mspace{30mu} \text{ and } \\ \label{eq: integration by parts}
\dcoint{\gamma} f(\zu_\diamond) \, \gdeebar h(\zu_\medial) \, \dd{\cconj{\zu}}
= \; & - \dcoint{\gamma} h(\zu_\diamond) \, \gdeebar f(\zu_\medial) \, \dd{\cconj{\zu}}
\end{align}
hold whenever ${f, h \colon \SqLatDiamond \to \C}$ are two discrete holomorphic functions
on a discrete neighborhood of a corner contour~$\gamma$.

\begin{figure}
	\centering
	\subfigure[The values of $\zu^{[1]}$ are simply the complex points where the diamond and
	medial vertices~$\zu$ are embedded.]{
		\includegraphics[scale=2]{./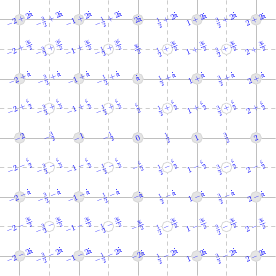}} \\
	\vspace{.1cm}
	\subfigure[The values of $\zu^{[3]}$ vanish in a finite neighborhood of the origin.
	By property~3 of Proposition~\ref{prop: monomials},
	such vanishing neighborhoods grow with the order of the monomial.]{
		\includegraphics[scale=2]{./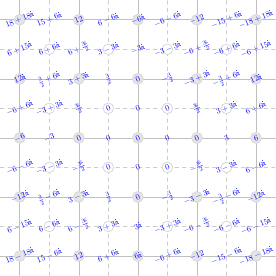}}
	\caption{The values of the positive Laurent monomials $\zu^{[1]}$ and $\zu^{[3]}$.}
	\label{fig: positive monomial values}
\end{figure}

The final necessary ingredient is
discrete analogues of the Laurent monomials
${z \mapsto z^n}$ for~$n \in \bZ$.
Such discrete Laurent monomials were constructed in~\cite{HKV}, but
we crucially need a minor modification here~--- without this,
the exact correspondence between the CFT local fields
and the lattice model local fields simply does not work.
We state the result of the modified construction here. The modification is in
the exact coefficients used in property~5 below.
We still refer to the original proof, as it remains
in all essential ways similar.

\begin{figure}
	\centering
\subfigure[The values of $\zu^{[-1]}$.]{
	\includegraphics[scale=2]{./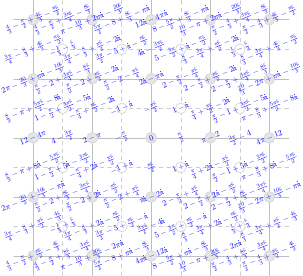}} \\
\vspace{.1cm}
\subfigure[The values of $\zu^{[-3]}$.]{
	\centering
	\includegraphics[scale=2]{./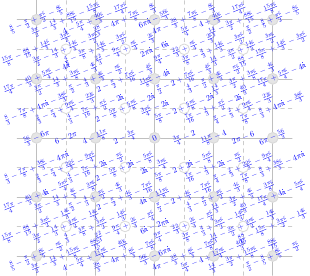}}
	\caption{The values of the negative Laurent monomials $\zu^{[-1]}$ and $\zu^{[-3]}$.}
	\label{fig: negative monomial values}
\end{figure}

\begin{prop}[{\cite[Proposition~2.1]{HKV}}]\label{prop: monomials}
There exists a unique family of $\C$-valued functions $\{\zu\mapsto \zu^{[n]}\}_{n\in\Z}$ on
$\ZDiamond\cup\ZMedial$ that satisfies the following properties:
\begin{enumerate}
\item For all $n\in\Z$, the function $\zu \mapsto \zu^{[n]}$ has the same square-grid symmetries
as the Laurent monomial $z \mapsto z^n$,
i.e., $(\ii \zu)^{[n]}=\ii^n \zu^{[n]}$ and $\cconj{\zu}^{[n]} = \cconj{\zu^{[n]}}$ for all
$\zu \in \ZDiamond\cup\ZMedial$.
\item For all $\zu\in\ZDiamond\cup\ZMedial$, $\zu^{[0]}=1$ and, for all $n\in\Z$, $\gdee \zu^{[n]} = n\,\zu^{[n-1]}$.
\item For each $\zu\in\ZDiamond\cup\ZMedial$, there exists an $N\in\N$ such that $\zu^{[n]}=0$ for all $n\geq N$.
\item For $n<0$, we have $\zu^{[n]} \to 0$ as $\vert \zu\vert\rightarrow \infty$.
\item The first negative-power monomial has the following explicit
failure of discrete holomorphicity near the origin
\begin{align*}
    \frac{1}{2\pi}\gdeebar \zu^{[-1]}
    =
    \frac{1}{2}\delta_{\zu,0}
    +
    \frac{1}{4}\sum_{\vert \zubis\vert=\frac{1}{2}}\delta_{\zu,\zubis}
    +
    \frac{1}{8}\sum_{\zubis=\frac{\pm1\pm\ii}{2}}\delta_{\zu,\zubis}\,.
\end{align*}
\item \label{property: poles}
For any $n\geq 0$ and all $\zu\in\ZDiamond\cup\ZMedial$ we have $\gdeebar \zu^{[n]}=0$. 
For any $n<0$, we have $\gdeebar \zu^{[n]}=0$ except at finitely many
points\footnote{This finite set of failure of discrete holomorphicity
is made more explicit in the considerations below the proposition.}
~$\zu \in \ZDiamond\cup\ZMedial$.
\item For any $n,m\in\Z$, we have the discrete residue formula
\begin{align*}
    \frac{1}{2\pi\ii}
    \dcoint{\gamma} \zu^{[n]}_\diamond \zu^{[m]}_\medial \, \dd \zu 
    \; = \; \delta_{n+m+1}\,,
\end{align*}
for any large enough positively-oriented corner contour $\gamma$ that encircles the origin.
\item For any $n \in \bZ$, as~${|\zu| \to \infty}$, the discrete monomial has the asymptotics
\begin{align}\label{eq: discrete monomial asymptotics}
\zu^{[n]} = \zu^n + \oo(|\zu|^n) \, .
\end{align}
\end{enumerate}
\end{prop}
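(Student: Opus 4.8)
The plan is to build the family $\{\zu \mapsto \zu^{[n]}\}_{n \in \Z}$ from two independent seeds together with the derivative recursion, and then to read off the eight listed properties. The nonnegative monomials are generated upward from $\zu^{[0]} = 1$ by discrete antidifferentiation: given $\zu^{[n-1]}$, which by induction is discrete holomorphic ($\gdeebar \zu^{[n-1]} = 0$ everywhere), the condition $\gdee \zu^{[n]} = n\, \zu^{[n-1]}$ determines $\zu^{[n]}$ on $\ZDiamond \cup \ZMedial$ up to an additive constant: discrete holomorphicity of the integrand makes its discrete contour integral vanish around every closed corner contour (by the discrete Stokes formula), so the discrete primitive is single-valued. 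The remaining constant is pinned down by the rotational symmetry demanded in property~1 (equivalently by requiring vanishing at the origin), and this same symmetry propagates $\gdeebar \zu^{[n]} = 0$ to the next step. This yields the derivative relation (property~2), everywhere-holomorphicity for $n \ge 0$ (first half of property~\ref{property: poles}), and, by comparing the primitive to its continuum analogue, the asymptotics $\zu^{[n]} = \zu^n + \oo(|\zu|^n)$ (property~8).

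The negative monomials require a genuinely new seed: since $\gdee \zu^{[0]} = 0$, the first negative power $\zu^{[-1]}$ cannot be recovered from $\zu^{[0]}$ and must be constructed directly as a discrete Cauchy kernel --- the function, discrete holomorphic away from the origin, that decays at infinity (property~4) and has the prescribed local failure of holomorphicity of property~5. I would obtain it from the discrete Green's function on $\Z^2$ (equivalently, by Fourier inversion of the discrete Cauchy--Riemann symbol), and then fix the exact weights $\tfrac12,\tfrac14,\tfrac18$ in $\tfrac{1}{2\pi}\gdeebar \zu^{[-1]}$ by a direct local computation. All further negative monomials are produced downward by discrete differentiation, $\zu^{[n-1]} = \tfrac1n \gdee \zu^{[n]}$ for $n < 0$, which immediately gives the decay and confines the non-holomorphicity to finitely many points near the origin (second half of property~\ref{property: poles}).

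With the family in hand, the two remaining structural properties follow. The eventual vanishing of property~3 --- that for each fixed $\zu$ one has $\zu^{[n]} = 0$ once $n$ is large, with the vanishing neighborhood of the origin growing in $n$ --- rests on the fact that, unlike in the continuum, a nonzero discrete holomorphic function may vanish identically on a large region; one shows inductively that the normalization fixing both the $\zu^n$ asymptotics and the symmetry leaves a growing ball around the origin on which $\zu^{[n]}$ is identically zero. The discrete residue formula (property~7) then follows by applying the discrete Stokes formula to $\zu^{[n]}_\diamond \zu^{[m]}_\medial$: the integrand is discrete holomorphic except at the finitely many points near the origin where the negative factor fails to be holomorphic, so the contour integral localizes there and evaluates, via property~5, to $\delta_{n+m+1}$. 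Uniqueness of the whole family is then clear: the recursion propagates the nonnegative and the negative monomials from their respective seeds, $\zu^{[0]} = 1$ is fixed, and $\zu^{[-1]}$ is the unique decaying function with the prescribed singularity (the difference of two candidates is discrete holomorphic everywhere and decaying, hence zero by discrete Liouville), while the upward additive constants are fixed by symmetry.

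The main obstacle I anticipate is the simultaneous control of the two ends for the nonnegative monomials: arranging that the discrete primitives match $\zu^n$ at infinity (property~8) while vanishing on an ever-larger neighborhood of the origin (property~3). These demands are reconciled only because discrete holomorphic functions violate unique continuation, and turning this flexibility into the precise growing-support statement is the delicate combinatorial step --- exactly the content carried over from the proof of \cite[Proposition~2.1]{HKV}. The sole place where the present construction deviates from that reference is the explicit choice of coefficients in property~5 for the seed $\zu^{[-1]}$, so the one computation I would redo carefully is the local evaluation of $\gdeebar \zu^{[-1]}$ near the origin, checking that these coefficients are consistent with the recursion and produce the clean residue normalization of property~7.
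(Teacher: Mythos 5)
Your proposal is correct and takes essentially the same route as the paper, which itself offers no argument beyond citing \cite[Proposition~2.1]{HKV} and observing that only the property-5 coefficients (hence only the negative monomials) are modified; your two-seed construction (upward antidifferentiation from $\zu^{[0]}=1$, downward differentiation from a discrete Cauchy-kernel seed $\zu^{[-1]}$), your deferral of the growing-zero-neighborhood combinatorics to the HKV proof, and your plan to re-verify the $\gdeebar\zu^{[-1]}$ weights against the residue normalization of property~7 reproduce exactly that plan. The one imprecision, absorbed into the step you defer, is that for $n\equiv 0\pmod 4$ the rotational symmetry of property~1 does \emph{not} pin down the additive constant of the discrete primitive (and ``vanishing at the origin'' is not equivalent to it, nor even true for $n=0$ since $0^{[0]}=1$), so the normalization in those degrees must come from the growing vanishing neighborhood of property~3 rather than from symmetry alone.
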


Example values of some positive monomials are illustrated in Figure~\ref{fig: positive monomial values}.
These values remain identical to~\cite{HKV}. Our modification only affects the values of negative monomials,
such as those illustrated in Figure~\ref{fig: negative monomial values}.

According to property~6 above,
for each $k>0$, there is a finite set
\begin{align}\label{eq: monomial pole support set}
\Poles{-k} := \set{ \zu \in \ZDiamond \cup \ZMedial \; \Big| \; \gdeebar \zu^{[k]} \ne 0 }
\end{align}
where the discrete holomorphicity of $\zu \mapsto \zu^{[-k]}$ fails.
Let us also denote $\Poles{-k}_\medial\coloneqq\Poles{-k}\cap\ZMedial$.
The values of~$\gdeebar \zu^{[-3]}$ in Figure~\ref{fig: monomial poles}(b) display
the shape of~$\Poles{-3}$.

In our analysis, we will need the exact growth of~$\Poles{-k}_\medial$ with~$k$.
For that purpose, define discrete balls in $\ZDiamond\cup\ZMedial$ as
\begin{align}\label{eq: lattice ball}
\Ball(r) := \set{ \zu \in \ZDiamond \cup \ZMedial \; \Big| \; \| \zu \|_1\leq r } \, ,
\end{align}
where $\|\zu\|_1 := \re(\zu)+\im(\zu)$,
and denote by $\Ball_\medial(r) := \Ball(r) \cap \ZMedial$ the set of its medial vertices.
We record the following facts for later reference.

\begin{figure}
	\centering
\subfigure[The values of $\gdeebar \zu^{[-1]}$ are prescribed in property~5 of
Proposition~\ref{prop: monomials}, and they are the single crucial modification to the monomial definition
that we made compared to~\cite{HKV}.]{
	\centering
	\includegraphics[scale=2]{./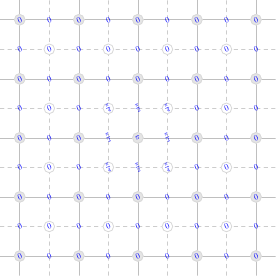}} \\
\vspace{.1cm}
\subfigure[The values of $\gdeebar \zu^{[-3]}$ illustrate the shape
of the finite set $\Poles{-3} \subset \ZDiamond \cup \ZMedial$ of failure of
discrete holomorphicity of $\zu^{[-3]}$.]{
	\centering
	\includegraphics[scale=2]{./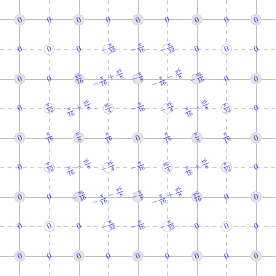}}
	\caption{The values of $\gdeebar \zu^{[-1]}$ and $\gdeebar \zu^{[-3]}$.}
	\label{fig: monomial poles}
\end{figure}

\begin{rmk}\label{rmk: pole extension}
From property~5 of Proposition \ref{prop: monomials} it follows
that ${\Poles{-1}\subset \Ball(1)}$ and that ${\Poles{-1}_\medial\subset\Ball_\medial(\frac{1}{2})}$;
see also Figure~\ref{fig: monomial poles}(a).
For $k>0$, we have
\begin{align}\label{eq: }
\Poles{-k} \subset \Ball\bigg(\frac{k+1}{2}\bigg)
    \mspace{30mu}
    \text{ and }
    \mspace{30mu}
\Poles{-k}_\medial \subset
    \Ball_\medial\bigg(\left\lfloor\frac{k}{2}\right\rfloor+\frac{1}{2}\bigg) ,
\end{align}
by virtue of the formula $\zu^{[-k]}= \frac{(-1)^{k-1}}{(k-1)!} \, \gdee^{k-1} \zu^{[-1]} $
from property 2 and the form~\eqref{eq: discrete Wirtinger derivatives} of the finite difference
operator~$\gdee$.
\hfill$\diamond$
\end{rmk}

\section{Discrete Gaussian Free Field and its local fields}
	\label{sec:currents}
	In this section we define the discrete Gaussian free field (DGFF), its local fields, and
the current modes that provide the representation of two commuting copies of the Heisenberg
algebra on the space of these local fields. This essentially amounts to recalling results
from~\cite{HKV}~--- but two differences are worth pointing out.
First of all, it is necessary to use the slightly modified discrete Laurent monomials of
Section~\ref{subsec: discrete monomials}, to get the commutation of the two chiralities.\footnote{
In \cite{HKV} only one chiral Heisenberg (and Virasoro) algebra action was written down.
It is straightforward to define another chiral copy analogously.
But the two chiral copies would not commute if one were to use the discrete Laurent monomials exactly
as defined originally.}
Second, we allow for both Dirichlet and Neumann boundary conditions for the DGFF,
and we correspondingly focus on local fields built from the discrete gradient of the DGFF,
as this is exactly what corresponds to the (full) Fock space of local fields of the free boson CFT
(see Section~\ref{sec:scal_lim}).

Our probabilistic models, the DGFF with Dirichlet and Neumann boundary conditions, are
defined in Section~\ref{sub: DGFF def}.
In Section~\ref{sub: currents} we introduce
the discrete holomophic and antiholomorphic currents of the DGFF,
which are the quantities that enable the subsequent 
discrete complex analysis approach.
Section~\ref{sub: field polynomials} details what we mean by using abstract
polynomials to specify fields, and how these become
concrete random variables in the model in any fixed domain and either Dirichlet
or Neumann boundary conditions.
The final notion of local fields is derived from such field polynomials
in Section~\ref{sub: corr equiv}, by
forming equivalence classes of field polynomials which have indistinguishable
correlations at macroscopic distances.
The construction of the fundamental algebraic structure of local fields
is given in Section~\ref{sub: current modes}: there are two commuting
Heisenberg algebra actions on the space of local fields given by
Laurent-monomial weighted discrete contour integrals of the discrete currents
of the DGFF.
Finally, the two commuting Virasoro actions on the space of local fields of the
DGFF are obtained by Sugawara constructions.
Section~\ref{sub: Sugawara} recalls how the Sugawara construction
applies in this discrete setup, and draws attention in particular to the grading
of local fields by their scaling dimensions,
i.e., eigenvalues of~$\dvirL{0} + \dvirBarL{0}$.

\subsection{DGFF with Dirichlet and Neumann boundary conditions}\label{sub: DGFF def}
Let $\ddomain$ be a discrete domain on the $\meshsz$-mesh square
grid as in Section~\ref{subsec: discrete domains}.

The discrete Gaussian free field with Dirichlet boundary conditions on~$\ddomain$
could be defined as the centered Gaussian
process~$(\dgffptwise(\zs))_{\zs \in \ddomain}$ indexed by the vertices $\zs$ of the discrete domain,
whose covariance is
\begin{align*}
\EX^\Dirichlet_{\ddomain} \big[ \dgffptwise(\zs) \, \dgffptwise(\zsbis) \big]
    = 4 \pi \, \Green^\Dirichlet_{\ddomain}(\zs,\zsbis) 
\end{align*}
where $\Green^\Dirichlet_{\ddomain} \colon \ddomain \times \ddomain \to [0,\infty)$
is the discrete Dirichlet Green's function~\eqref{eq: defining properties of discrete Dirichlet Green function}.
Equivalently, the probability density of the vector~$(\dgffptwise(\zs))_{\zs \in \ddomain \setminus \bdry \ddomain}$ is
\begin{align*}
\frac{1}{\mathcal{Z}_{\ddomain}} \;
    \exp \bigg( -\frac{1}{8\pi} \sum_{\substack{\set{\zs,\zsbis} \subset \ddomain \\ |\zs-\zsbis| = \meshsz}}
        \big( \dgffptwise(\zs) - \dgffptwise(\zsbis) \big)^2 \bigg) \; \prod_{\zs \in \ddomain \setminus \bdry \ddomain} \ud \dgffptwise(\zs) ,
\end{align*}
where $\mathcal{Z}_{\ddomain}$ is a normalization constant and
$\dgffptwise(\zs)$ is interpreted as $0$ if $\zs \in \bdry \ddomain$.
The quadratic form in the exponential above is a constant multiple of the discrete Dirichlet energy of~$\dgffptwise$.

Imagining the DGFF concretely as a pointwise defined random field is convenient,
but for the purposes of this article we make two adjustments.
First, for a closer parallel with the continuum Gaussian free field, it is more appropriate to consider the
``mollified'' values $\dgff{f} := \meshsz^2 \sum_{\zs \in \ddomain} \dgffptwise(\zs) f(\zs)$,
and view the DGFF as a process indexed by ``test functions''~$f$.
Furthermore, the principal object for us is the gradient of the field,
so we define the
\term{discrete Gaussian free field with Dirichlet boundary conditions} on~$\ddomain$
(Dirichlet DGFF for short) to be the centered Gaussian process
\begin{align*}
(\dgff{f})_{f \in \ddistribMC{\ddomain}} ,
\end{align*}
indexed by zero-average test functions~$f \in \ddistribMC{\ddomain}$
as defined in~\eqref{eq: discr zero avg fun}, whose covariance is given by
\begin{align*}
\EX^\Dirichlet_{\ddomain} \big[ \dgff{f_1} \dgff{f_2} \big]
    = 4 \pi \meshsz^4 \, \sum_{\zs_1, \zs_2 \in \ddomain} f_1(\zs_1) \, \Green^\Dirichlet_{\ddomain}(\zs_1,\zs_2) \, f_2(\zs_2) .
\end{align*}
Discrete derivatives are recovered by suitable zero-average mollifiers, so they
are well-defined random variables which we for convenience still denote by
\begin{align*}
\pdee \dgffptwise (\zs) = \meshsz^{-2} \, \dgff{- \pdee \delta_\zs}, \quad
\pdeebar \dgffptwise (\zs) = \meshsz^{-2} \, \dgff{- \pdeebar \delta_\zs}, \quad
\gLapl \dgffptwise (\zs) = \meshsz^{-2} \, \dgff{\gLapl \delta_\zs}, \quad \text{etc.}
\end{align*}
In principle, the pointwise values $(\dgffptwise(\zs))_{\zs \in \ddomain}$
and the test-function indexed field \linebreak[4]
${(\dgff{f})_{f \in \ddistribMC{\ddomain}}}$
contain exactly the same information; $(\dgff{f})_{f \in \ddistribMC{\ddomain}}$
straightforwardly determines the discrete derivatives, and one can
``integrate the discrete derivatives'' to recover values of~$\dgffptwise$ anywhere in~$\ddomain$
starting from the boundary~$\bdry \ddomain$ where the values are zero.
However, it is more appropriate to think of $\dgff{\cdot}$ as a description of the
gradient of the free field only,
since we will only ever probe the free field $(\dgff{f})_{f \in \ddistribMC{\ddomain}}$ with
finitely many finitely supported test functions~$f$
(see also Section~\ref{subsec: GFF def} for the analogue in the continuum).

For Neumann boundary conditions, there is an inherent ambiguity about an additive constant in
the free field, and in this case only the gradient of the field is meaningful.
We define the
\term{discrete Gaussian free field with Neumann boundary conditions} on~$\ddomain$
(Neumann DGFF for short) to be the centered Gaussian process
\begin{align*}
(\dgff{f})_{f \in \ddistribMC{\ddomain}} ,
\end{align*}
indexed by zero-average test functions~$f \in \ddistribMC{\ddomain}$, whose covariance is given by
\begin{align*}
\EX^\Neumann_{\ddomain} \big[ \dgff{f_1} \dgff{f_2} \big]
= - 4 \pi \meshsz^2 \sum_{\zs \in \ddomain} f_1(\zs) \; \big( (\dNLapl)^{-1} f_2 \big) (\zs) .
\end{align*}
Note that if $\Green^\Neumann_{\ddomain}$ is any choice of a discrete Neumann Green's function, this
covariance may be alternatively written in a form where $\Green^\Neumann_{\ddomain}$ is an approximate integral kernel,
\begin{align*}
\EX^\Neumann_{\ddomain} \big[ \dgff{f_1} \dgff{f_2} \big]
    = 4 \pi \meshsz^4 \, \sum_{\zs_1, \zs_2 \in \ddomain} f_1(\zs_1) \, \Green^\Neumann_{\ddomain}(\zs_1,\zs_2) \, f_2(\zs_2) .
\end{align*}
Although the pointwise values of the Neumann DGFF are not defined,
the discrete derivatives are recovered by zero-average mollifiers, and
by a mild abuse of notation we
still denote the corresponding random variables by
$\pdee \dgffptwise (\zs)$, $\pdeebar \dgffptwise (\zs)$, $\gLapl \dgffptwise (\zs)$ as above.

Wick's formula applies to any centered Gaussians,
in particular to both the Dirichlet and Neumann DGFF. Concretely,
for any $f_1, \ldots, f_n \in \ddistribMC{\ddomain}$, \term{Wick's formula} gives
\begin{align}
    \label{eq: Wick formula for DGFF}
\EX_{\ddomain}^{\DorN} \Big[ \prod_{i=1}^{n} \dgff{f_i} \Big]
= \; & \sum_{P \in \Pair(n)} \prod_{\set{i,j} \in P} 
    \EX_{\ddomain}^{\DorN} \big[ \dgff{f_i}, \dgff{f_j} \big] \\
\nonumber
= \; & \meshsz^{2n} \sum_{\zs_1, \ldots, \zs_n \in \ddomain}
    f_1(\zs_1) \cdots f_n(\zs_n)
    \bigg( \sum_{P \in \Pair(n)} \prod_{\set{i,j} \in P} 4 \pi \, \Green^\DorN_{\ddomain}(\zs_i,\zs_j) \bigg) ,
\end{align}
where the sums are over the set~$\Pair(n)$ of \term{pairings}~$P$ of the index
set~$\set{1,2,\ldots,n}$, i.e., partitions of the index set into subsets of size two each
(for $n$ odd, there are no such pairings and the empty sum is zero).
It is convenient to keep in mind the more concise formal version
\begin{align*}
\text{``}\, \EX_{\ddomain}^{\DorN} \big[ \dgffptwise (\zs_1) \cdots \dgffptwise (\zs_n) \big]
= \; & \sum_{P \in \Pair(n)} \prod_{\set{i,j} \in P} 4 \pi \, \Green^\DorN_{\ddomain}(\zs_i,\zs_j) \,\text{''}
\end{align*}
of~\eqref{eq: Wick formula for DGFF}, which is meaningful for the pointwise defined
Dirichlet DGFF and can also be used for calculations with the Neumann DGFF if one ensures
that only zero-average linear combinations are considered.

\begin{rmk}\label{rmk: lap 0 corr}
Let $\ddomain$ be a discrete domain and let $\zs \in \ddomain \setminus \bdry \ddomain$ be an interior point.
Recall that the discrete Laplacian of the DGFF at~$\zs$ is the random variable
$\gLapl \dgffptwise (\zs) = \meshsz^{-2} \, \dgff{\gLapl \delta_\zs}$.
For any $f \in \ddistribMC{\ddomain}$ the covariance of $\gLapl \dgffptwise (\zs)$ and
$\dgff{f}$ simplifies to the following frequently useful formula
\begin{align}\label{eq: 2pt function of the Laplacianf of DGFF}
\phantom{\Big\vert}\EX_{\ddomain}^{\DorN} \big[ \gLapl \dgffptwise(\zs) \; \dgff{f} \big]
    = \; & - 4\pi \meshsz^2 \, f(\zs) \\ \nonumber
\quad \text{ or formally } \phantom{\Big\vert} 
\quad \text{``}\,
\EX_{\ddomain}^{\DorN} \big[ \gLapl \dgffptwise(\zs) \, \dgffptwise(\zsbis) \big]
    = \; &- 4\pi \, \delta_{\zs,\zsbis} \,\text{''} .
\end{align}
Taking furthermore into account Wick's formula~\eqref{eq: Wick formula for DGFF},
a particularly simple
consequence of~\eqref{eq: 2pt function of the Laplacianf of DGFF}
is that the correlation functions of~$\gLapl \dgffptwise(\zs)$ of the form
$\EX_{\ddomain}^{\DorN} \big[ \gLapl \dgffptwise(\zs) \, \prod_{i=1}^{n} \dgff{f_i} \big]$
vanish when none of the supports of $f_1, \ldots, f_n$ contain~$\zs$.
\hfill $\diamond$
\end{rmk}

\subsection{Discrete holomorphic and antiholomorphic currents}\label{sub: currents}
Using the primal lattice discrete holomorphic and antiholomorphic derivatives
$\pdee$ and $\pdeebar$ defined in~\eqref{eq: primal lattice dee}
and~\eqref{eq: primal lattice deebar},
we define the \term{discrete holomorphic current} of the DGFF (either Dirichlet
or Neumann) at an edge midpoint~$\zs$ of~$\ddomain$ as
\begin{align}\label{eq: discrete holomorphic current}
\HolCurr(\zs) \, := \, \ii \, \pdee \dgffptwise(\zs)
  \, = \, \frac{\ii \meshsz}{2} \sum_{\substack{\zsbis \in \ddomain \\ |\zsbis-\zs| = \meshsz/2}} \frac{\dgffptwise(\zsbis)}{\zsbis-\zs} \, ,
\end{align}
where, as usual, the zero-average linear combination of DGFF values on the right should properly
be interpreted as~$\dgff{\cdots}$ with the suitable zero-average function inserted.
Similarly, we define the \term{discrete antiholomorphic current} at~$\zs$ as
\begin{align}\label{eq: discrete antiholomorphic current}
\AntiHolCurr(\zs) \, := \, -\ii \, \pdeebar \dgffptwise(\zs)
  \, = \, \frac{-\ii \meshsz}{2} \sum_{\substack{\zsbis \in \ddomain \\ |\zsbis-\zs| = \meshsz/2}} \frac{\dgffptwise(\zsbis)}{\cconj{\zsbis}-\cconj{\zs}} \, .
\end{align}

The following discrete holomorphicity/antiholomorphicity of correlations of the currents
is a consequence of the factorizations~\eqref{eq: factorizations of Laplacian with primal graph derivatives}
of the Laplacian and the vanishing of correlations of the discrete Laplacian of the free field
(see Remark~\ref{rmk: lap 0 corr}). 
\begin{lemma}[{\cite[Lemma~3.5]{HKV}\footnote{In the article \cite{HKV}
there is a factor two mistake in this statement and the proof has an error which is simply
fixed by correctly using the factorization~\eqref{eq: factorizations of Laplacian with primal graph derivatives}.}}]
\label{lem: discrete holomorphicity properties of currents}
Let $\ddomain \subset \SqLatMesh$ be a discrete domain and let
\linebreak[4]
${f \in \ddistribMC{\ddomain}}$ be a zero-average function.
Then the functions
$\zs \mapsto \EX_{\ddomain}^{\DorN} \big[ \, \HolCurr(\zs) \, \dgff{f} \big]$ and
$\zs \mapsto \EX_{\ddomain}^{\DorN} \big[ \, \AntiHolCurr(\zs) \, \dgff{f} \big]$ are, respectively,
discrete holomorphic and discrete antiholomorphic on the set
\begin{align*}
\big(\ddomain \cap \meshsz\ZDiamond\big) \setminus \big( \bdry \ddomain \; \cup \, \set{ \zsbis \in \ddomain \; | \; f(\zsbis) \ne 0 } \big)
\end{align*}
of interior diamond vertices excluding the support of~$f$.
\end{lemma}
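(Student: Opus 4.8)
The plan is to verify discrete holomorphicity directly, by computing $\gdeebar$ of the correlation function and showing that it vanishes; everything then reduces to the two ingredients already at hand: the factorization~\eqref{eq: factorizations of Laplacian with primal graph derivatives} of the Laplacian through the primal-graph derivatives, and the two-point identity of Remark~\ref{rmk: lap 0 corr}. Recall that the function $\zs \mapsto \EX_{\ddomain}^{\DorN}[\HolCurr(\zs)\,\dgff{f}]$ lives on medial vertices, since $\HolCurr(\zs) = \ii\,\pdee\dgffptwise(\zs)$ is defined at edge midpoints; applying $\gdeebar$ then produces a function on diamond vertices, whose vanishing is exactly the assertion of discrete holomorphicity.

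First I would note that $\gdeebar$ is a fixed finite linear combination of evaluations at neighbouring medial vertices, and each current value is itself a fixed zero-average linear combination of field values; hence $\gdeebar \HolCurr(\zs)$ equals $\ii\,\gdeebar\pdee\dgffptwise(\zs)$ as a random variable, and by linearity of the expectation
\begin{align*}
\gdeebar_\zs\, \EX_{\ddomain}^{\DorN}\big[\HolCurr(\zs)\,\dgff{f}\big]
= \ii\, \EX_{\ddomain}^{\DorN}\big[\gdeebar\pdee\dgffptwise(\zs)\,\dgff{f}\big].
\end{align*}
Provided $\zs$ is an interior diamond vertex (so that the stencil reaching one lattice step out from $\zs$ stays inside $\ddomain$, which is why the boundary $\bdry\ddomain$ is excluded), the factorization~\eqref{eq: factorizations of Laplacian with primal graph derivatives} applies: at a dual vertex one has $\gdeebar\pdee\dgffptwise(\zs) = 0$, whereas at a primal vertex $\gdeebar\pdee\dgffptwise(\zs) = \tfrac{1}{2}\gLapl\dgffptwise(\zs)$.

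The two cases then close immediately. At a dual vertex the integrand itself is zero, so the correlation vanishes. At a primal vertex the right-hand side becomes $\tfrac{\ii}{2}\,\EX_{\ddomain}^{\DorN}[\gLapl\dgffptwise(\zs)\,\dgff{f}]$, and by~\eqref{eq: 2pt function of the Laplacianf of DGFF} this equals $-2\pi\ii\,\meshsz^2 f(\zs)$, which vanishes precisely when $\zs$ lies outside the support of~$f$. Since the support of $f$ consists only of primal vertices, combining the two cases shows that $\gdeebar_\zs\,\EX_{\ddomain}^{\DorN}[\HolCurr(\zs)\,\dgff{f}]$ vanishes at every interior diamond vertex outside $\set{\zsbis \in \ddomain \mid f(\zsbis)\ne 0}$, which is the claimed discrete holomorphicity. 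The antiholomorphic statement is entirely symmetric: with $\AntiHolCurr(\zs) = -\ii\,\pdeebar\dgffptwise(\zs)$ one computes $\gdee_\zs\,\EX_{\ddomain}^{\DorN}[\AntiHolCurr(\zs)\,\dgff{f}]$, uses the companion factorization $\gdee\pdeebar\dgffptwise(\zs) = \tfrac{1}{2}\gLapl\dgffptwise(\zs)$ at primal vertices (and $0$ at dual vertices) from~\eqref{eq: factorizations of Laplacian with primal graph derivatives}, and concludes in the same way.

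There is no substantial obstacle here; the only points requiring care are the bookkeeping of which diamond vertices are primal versus dual — the factorization treats the two differently, and it is precisely the vanishing of the $\gLapl\dgffptwise$ correlations away from the support that supplies the result on the primal part — and the verification that the stencils involved remain inside $\ddomain$, which is what the exclusion of $\bdry\ddomain$ guarantees. For the Neumann field one should additionally observe that all the random variables appearing ($\pdee\dgffptwise$, $\gLapl\dgffptwise$, and the current values) are zero-average combinations of field values, so that every expectation above is well-defined and Remark~\ref{rmk: lap 0 corr} applies verbatim for both boundary conditions.
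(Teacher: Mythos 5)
Your proof is correct and follows exactly the route the paper indicates (and attributes to \cite[Lemma~3.5]{HKV} with the corrected factorization): apply $\gdeebar$ to the correlation, use the factorization~\eqref{eq: factorizations of Laplacian with primal graph derivatives} to reduce to the Laplacian at primal diamond vertices and to zero at dual ones, and invoke Remark~\ref{rmk: lap 0 corr} to kill the Laplacian correlation away from $\FunSupp f$. The bookkeeping of primal versus dual vertices, the role of the excluded boundary in keeping the stencil inside $\ddomain$, and the zero-average remark for the Neumann case are all handled correctly.
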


\subsection{Field polynomials}\label{sub: field polynomials}
Recall the rough idea of a local field in a field theory:
a quantity that is determined by the values of
the basic fields of the theory in a microscopic neighborhood of
its point of insertion, in a manner that does not depend on the
domain, boundary conditions, or other details.
In the lattice model context,
an abstract local field is meant to encode a rule to construct
concrete random variables from the basic degrees of freedom in a
finite set of lattice sites around any point of any discrete domain~---
see Figure~\ref{fig: local fields intuition} again for an illustration.
%

We define a \term{field polynomial} of the (gradient of the) DGFF to be a
polynomial in indeterminates $\field(\zu)$ indexed by the points $\zu\in\Z^2$ of the
unit-mesh square grid, i.e., an element of the polynomial ring
\begin{align}\label{eq: space of field polynomials}
\dLocFi \coloneqq \C \big[ \field(\zu) \, \colon \zu\in\Z^2 \big] \, .
\end{align}
Any such polynomial is ``local'' in the sense that there are only 
finitely many terms in the polynomial. We define the \term{support} of a field
polynomial $F \in \dLocFi$ to be the minimally chosen (finite) subset of those~$\zu \in \Z^2$
such that $\field(\zu)$ appears in~$F$,
\begin{align}\label{eq: support}
\PolySupp F \, := \; \bigcap \set{ S \subset \Z^2 \; \Big| \; F \in \C[\field(\zu) \,\colon \zu \in S]} .
\end{align}
When a discrete domain~$\ddomain \subset \SqLatMesh$, a choice of boundary conditions,
and a point~$\zs \in \ddomain$ are given,
we define an \term{evaluation} of field polynomials
\begin{align*}
\ev_\zs^{\ddomain} \colon \dLocFi \to \set{\text{random variables for the DGFF in $\ddomain$}} .
\end{align*}
The evaluation of general field polynomials will be determined by linear extension of
an evaluation of monomials~$\field(\zu_1) \cdots \field(\zu_n)$.
When $\zs + \meshsz \zu_1 , \ldots, \zs + \meshsz \zu_n \in \ddomain$, 
such a monomial is evaluated to
\begin{align*}
\ev_\zs^{\ddomain} \colon
	\field(\zu_1) \cdots \field(\zu_n)
	\; \mapsto \;
	\Big( \dgffptwise(\zs + \meshsz \zu_1)-\dgffptwise(\zs) \Big)
        \cdots \Big( \dgffptwise(\zs + \meshsz \zu_n)-\dgffptwise(\zs) \Big) \, .
\end{align*}
This is the only case we actually care about, because when $(\ddomain)_{\meshsz > 0}$ are
discrete approximations to a continuum domain, i.e., an open set~$\domain \subset \bC$, then
we indeed have $\zs + \meshsz \zu_j \in \ddomain$
for any small enough lattice mesh~$\meshsz > 0$.
For completeness of the definition,
monomials $\field(\zu_1) \cdots \field(\zu_n)$
such that $\zs + \meshsz \zu_j \notin \ddomain$ for some~$j$ are evaluated to zero
(this is somewhat arbitrary, but the exact choice in this irrelevant case does not matter).

\begin{ex}\label{eq: currents as local fields}
For any edge midpoint $\zu_\medial\in\ZMedial$ of the unit-mesh square lattice,
the linear combinations
\begin{align*}
\holcurrfield(\zu_\medial) := \; & \phantom{-} \ii \, \pdee\field(\zu_\medial)
    = \begin{cases}
      \field(\zs + \frac{1}{2}) - \field(\zs - \frac{1}{2}) & \text{ if $\zu_\medial$ is on a horizontal edge} \\
      -\ii \, \field(\zs + \frac{\ii}{2}) + \ii \, \field(\zs - \frac{\ii}{2}) & \text{ if $\zu_\medial$ is on a vertical edge}
      \end{cases} \\
\antiholcurrfield(\zu_\medial) := \; & -\ii \, \pdeebar\field(\zu_\medial)
    = \begin{cases}
      \field(\zs + \frac{1}{2}) - \field(\zs - \frac{1}{2}) & \text{ if $\zu_\medial$ is on a horizontal edge} \\
      \ii \, \field(\zs + \frac{\ii}{2}) - \ii \, \field(\zs - \frac{\ii}{2}) & \text{ if $\zu_\medial$ is on a vertical edge}
      \end{cases}
\end{align*}
are field polynomials. When evaluated at a point~$\zs \in \ddomain$ of a discrete domain,
they give rise to the random variables
\begin{align*}
\ev^{\ddomain}_\zs \big(\holcurrfield(\zu_\medial)\big)=\HolCurr(\zs + \meshsz \zu_\medial)
\qquad \text{ and } \qquad
\ev^{\ddomain}_\zs \big(\antiholcurrfield(\zu_\medial)\big)=\AntiHolCurr(\zs + \meshsz \zu_\medial)
\end{align*}
which are the values of the discrete holomorphic and discrete antiholomorphic currents,
\eqref{eq: discrete holomorphic current} and~\eqref{eq: discrete antiholomorphic current},
on an edge at a fixed finite number of lattice steps away from the point~$\zs$.
In particular the field polynomials $\holcurrfield(\frac{\ii}{2})$, $\holcurrfield(\frac{1}{2})$,
$\holcurrfield(\frac{-\ii}{2})$ and $\holcurrfield(\frac{-1}{2})$ evaluate to the discrete
holomorphic currents on the edge to the north, east, south, and west of~$\zs$, respectively.
Trusting that no confusion arises, the field polynomial valued functions
$\holcurrfield \colon \ZMedial \to \dLocFi$ and
$\antiholcurrfield \colon \ZMedial \to \dLocFi$ on the unit-mesh medial lattice
will still be referred to as the discrete holomorphic and antiholomorphic currents.
\hfill$\diamond$
\end{ex}

\subsection{Local fields}\label{sub: corr equiv}
It can happen that two different field polynomials produce random variables
which have indistinguishable correlation functions with anything at a macroscopic distance
away from their insertion.
Such field polynomials do not
then really represent different observable quantities in a field theory, so we want to identify
them. Following~\cite{HKV}, we now define precisely the equivalence relation, and then define
local fields as the equivalence classes.
These local fields are going to be the main object of our interest.

We say that a field polynomial $F\in\dLocFi$ is a \term{null field} if
\begin{align}
\nonumber
\text{for } \quad
    & \text{all discrete domains~$\ddomain \subset \SqLatMesh$}, \\
\nonumber
    & \text{both choices of boundary conditions (Dirichlet or Neumann)}  \\
\nonumber
    & \text{any point } \zs \in \ddomain  \\
\nonumber
    & \text{any  $n \in \bN$ and all test functions} f_1 , \ldots, f_n \in \ddistribMC{\ddomain}  \\
\nonumber
\text{such that } \quad
    & \zs + \meshsz \, \PolySupp F \; \subset \; \ddomain \setminus \bigcup_{j=1}^n \FunSupp f_j \\ 
\label{eq: null field condition}
\text{we have } \quad
    & \EX_{\ddomain}^{\DorN} \big[ ( \ev^{\ddomain}_\zs F ) \, \dgff{f_1} \cdots \dgff{f_n} \big] = 0 \,.
\end{align}
We let $\dNuFi\subset \dLocFi$ denote the set of null fields.

\begin{ex}\label{ex: trivialer null field}
Any monomial $\field(\zu_1)\cdots\field(\zu_n) \in \dLocFi$ with $\zu_j = 0$ for some $j$ is
trivially null, since in its evaluation as a random variable, the factor
$\field(\zu_j)$ becomes
$\dgffptwise(\zs+\meshsz \zu_j) - \dgffptwise(\zs) = \dgffptwise(\zs) - \dgffptwise(\zs) = 0$.
\hfill$\diamond$
\end{ex}

\begin{ex}\label{ex: Laplacian nulls}
The discrete Laplacians of the basic field~$\field$ are null fields:
it follows easily from~\eqref{eq: 2pt function of the Laplacianf of DGFF}
and Wick's formula~\eqref{eq: Wick formula for DGFF} that
for any~$\zu \in \Z^2$ we have
\begin{align}\label{eq: Laplacian nulls}
\gLapl\field(\zu) = \sum_{\substack{\zubis \in \Z^2 \\ \zubis \sim \zu}}
        \big( \field(\zubis) - \field(\zu) \big) \in \dNuFi .
\end{align}
As a consequence  of the
factorization~\eqref{eq: factorizations of Laplacian with primal graph derivatives},
we get that for any $\zu_\diamond \in \ZDiamond$
\begin{align}\label{eq: current derivative nulls}
\gdeebar \holcurrfield(\zu_\diamond) \in \dNuFi
\qquad \text{ and } \qquad
\gdee \antiholcurrfield(\zu_\diamond) \in \dNuFi .
\end{align}
This observation is closely related to
Lemma~\ref{lem: discrete holomorphicity properties of currents}; it is a
discrete holomorphicity (resp. antiholomorphicity) property of the current,
now viewed as a field polynomial.
\hfill$\diamond$
\end{ex}
The null fields~\eqref{eq: Laplacian nulls}
are closely related to
the equations of motion of the theory: the minimizers of the discrete Dirichlet energy are
discrete harmonic functions.
It is natural to anticipate that the null fields $\gLapl \field(\zu)$ will
play a particularly important role.

We record one more explicit form of null fields that will appear in later calculations.

\begin{ex}\label{ex: quadratic nulls}
For any $\zu,\zubis\in\Z^2$, the field polynomial
\begin{align*}
\big( \gLapl\field(\zu) \big) \field(\zubis)+4\pi(\delta_{\zu,\zubis}-\delta_{\zu,0})
\end{align*} 
is null again by~\eqref{eq: 2pt function of the Laplacianf of DGFF} and Wick's
formula~\eqref{eq: Wick formula for DGFF}.
\hfill$\diamond$
\end{ex}

The space of \term{local fields} of (the gradient of) the DGFF 
is now defined as the quotient
\begin{align}\label{eq: space of correlation equivalence classes of fields}
\dFields := \dLocFi / \dNuFi \, .
\end{align}
In other words, we view two field polynomials $F_1, F_2 \in \dLocFi$ as correlation equivalent
if their difference $F_1-F_2$ is null. Since working with concrete field polynomial representatives
is still often convenient, we write $F+\Null$ for the local field that is the equivalence class of
$F\in\dLocFi$ in $\dFields$.

Note that while $\dNuFi \subset \dLocFi$ is evidently a vector subspace,
Examples~\ref{ex: quadratic nulls} and~\ref{ex: Laplacian nulls} show that it is not an
ideal in~$\dLocFi = \C[\field(\zu) \, \colon \zu\in\Z^2]$ with the usual 
polynomial ring structure.
In particular, the space of local
fields~\eqref{eq: space of correlation equivalence classes of fields}
does not inherit any obvious multiplication from the polynomial ring.

Different representatives of the same equivalence class in $\dFields$ may of course
have different supports~\eqref{eq: support}. A meaningful and useful notion, however, is
the \term{minimal radius of support} of a local field $F+\Null$, defined as the smallest~$r$
such that a representative with a support in a ball~\eqref{eq: lattice ball}
of $r$ lattice units exists, i.e.,
\begin{align*}
\RadSupp{F + \Null} := \min \set{r \in \Znn \; \Big| \; \exists \tilde{F}\in\dLocFi :
                \tilde{F}-F \in \dNuFi
                \, \text{ and } \,
                \PolySupp \tilde{F}\subset \Ball(r)
                } .
\end{align*}
Since exactly determining the minimal radius of support for interesting fields is
not straightforward (requires controlling all choices of representatives),
we postpone examples for later.

Understanding the space~\eqref{eq: space of correlation equivalence classes of fields}
of local fields
is at the heart of the present work.
Let us pause to comment on why this is nontrivial.
The space ${\dFields = \dLocFi / \dNuFi}$ is formed as a quotient by null fields,
which are defined by a philosophically motivated
but very implicit condition. 
No apparent tractable procedure exists to decide whether two polynomials
$F_1, F_2 \in \dLocFi$ differ by a null field:
by definition this would involve inspecting correlation
functions in all discrete domains, with all boundary conditions,
at all points, and with all possible other fields.
With enough algebraic structure on the space~$\dFields$
and some concretely verifiable dimension bounds,
we will, however, ultimately arrive at a fully explicit description
of the quotient~$\dFields = \dLocFi / \dNuFi$.

\begin{figure}[h!]
\centering
\begin{overpic}[scale=0.776, tics=10]{./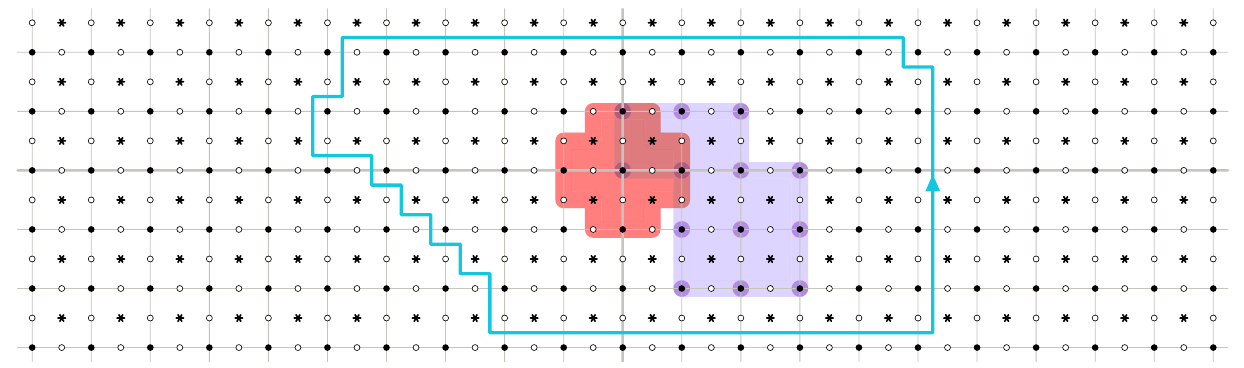}
\end{overpic}
\caption{A visualization of the definition \eqref{eq: discrete current modes} for $k=-2$: In purple the support of
a representative $F$, in red the set $\Poles{-2}$ of nonholomorphicity of the monomial $\zu\longmapsto\zu^{[-2]}$,
and in light blue a choice of positively oriented corner contour $\gamma$;
all of them laid on the infinite square grid $\Z^2$ and its sublattices $\ZDiamond$ and $\ZMedial$.
\label{fig: def currents}}
\end{figure}

\subsection{Current modes}\label{sub: current modes}
On the space~\eqref{eq: space of correlation equivalence classes of fields}
of correlation equivalence classes of local fields
there are operators closely analogous to the Laurent modes of the currents as
operators in conformal field theory.

The \term{holomorphic current modes} $\dHolCurrMode{k}$ and
\term{antiholomorphic current modes} $\dAntHolCurrMode{k}$, for $k\in\Z$,
are the linear operators on the space of fields $\dFields$ defined by
\begin{align}\label{eq: discrete current modes}
\dHolCurrMode{k}(F+\Null)
	:= \; & \phantom{-}\;\, \frac{1}{2\pi\ii}
        \dcoint{\gamma} \, \zu_\diamond^{[k]} \, \holcurrfield(\zu_\medial) \, F \; \dd{\zu} \, + \, \Null\,
	\\ \nonumber
\dAntHolCurrMode{k}(F+\Null)
	:= \; & -\frac{1}{2\pi\ii}
        \dcoint{\gamma} \, \cconj{\zu}_\diamond^{[k]} \, \antiholcurrfield(\zu_\medial) \, F \; \dd{\cconj \zu} \, + \, \Null\, ,
\end{align}
where $\holcurrfield \colon \ZMedial \to \dLocFi$ and $\antiholcurrfield  \colon \ZMedial \to \dLocFi$
are the holomorphic and antiholomorphic currents
from Example~\ref{eq: currents as local fields}, and
the discrete integration in the sense~\eqref{eq: discrete integration} is performed over
any positively oriented corner contour~$\gamma$ that surrounds both the support of the
field polynomial~$F \in \dLocFi$
and the set~\eqref{eq: monomial pole support set} of nonholomorphicity of the
Laurent monomial~$\zu \mapsto \zu^{[k]}$; i.e., $\gamma$ is required to
satisfy $\Poles{k} \cup \PolySupp F \, \subset \, \interior\gamma$.
The setup is illustrated in Figure~\ref{fig: def currents}.
To see that the operators $\dHolCurrMode{k}, \dAntHolCurrMode{k} \colon \dFields \to \dFields$
are indeed well-defined by~\eqref{eq: discrete current modes},
one checks that the discrete integral on the right hand side is in the same correlation equivalence
class for any representative~$F$ and for
any allowed choice of $\gamma$, see~\cite[Lemma 4.2]{HKV}.

\begin{ex}\label{ex: id field}\label{ex: id primary}
The \term{identity field} 
is defined as 
\begin{align*}
\idField := 1 + \Null \in \dFields \, .
\end{align*}
We note that
\begin{align*}
\idField \ne 0 \, , 
\end{align*}
or equivalently $1 \notin \dNuFi$,
because the random variable~$\ev_\zs^{\ddomain} (1) = 1$ (the evaluation of $1 \in \dLocFi$)
has a nonvanishing expected value.
This may sound like
an utterly trivial observation, but let us emphasize that $1$ is the
\emph{only} field polynomial we need to explicitly verify not to be null;
current modes allow us to generate new local fields starting from~$\idField$,
and algebraic considerations will imply that we get infinitely many linearly independent ones.
Let us first consider the action of a single current mode on the identity field.
For $k\in\Z$, using Stokes' formula~\eqref{eq: discrete Stokes} and the
definition $\holcurrfield=\ii\pdee\field$, we find
\begin{align*}
    \dHolCurrMode{k}\idField
    = &\ 
    \frac{1}{2\pi\ii}
    \dcoint{\gamma}
    \zu_\diamond^{[k]}
    \holcurrfield(\zu_\medial)\,
    \dd{\zu}
    +
    \Null
    \\
    = &\ 
    \frac{\ii}{2\pi}
    \sum_{\zu_\medial\in\interior_\medial\gamma}
    \deebar \zu_\medial^{[k]}\pdee\field(\zu_\medial)
    +
    \frac{\ii}{2\pi}
    \sum_{\zu_\diamond\in\interior_\diamond\gamma}
    \zu_\diamond^{[k]}\deebar\pdee\field(\zu_\diamond)
    +
    \Null\,.
\end{align*}
The second term is null for any $k\in\Z$, by the factorization of the
Laplacian~\eqref{eq: factorizations of Laplacian with primal graph derivatives}
and the null fields in Example~\ref{ex: Laplacian nulls}.
For $k\geq 0$, the first term is identically $0$ by the discrete holomorphicity
of the monomials --- see Proposition \ref{prop: monomials}.
We conclude
\begin{align}\label{eq: current primary property of id}
\dHolCurrMode{k} \idField = 0 \; \in \dFields
\qquad \text{ for all } k\in\Znn .
\end{align}
A formula that is also valid for~$k < 0$ is obtained by keeping the first term,
\begin{align}\label{eq: representative for Jk}
\dHolCurrMode{k} \idField = \dRep{k} + \Null
\qquad \text{ where } \qquad
\dRep{k} := \frac{\ii}{2\pi}
    \sum_{\zu_\medial\in\Poles{k}_\medial}\deebar \zu_\medial^{[k]}\pdee\field(\zu_\medial) \, .
\end{align}
Note that the formula for the representative $\dRep{k} \in \dLocFi$ here is manifestly independent
of the choice of the contour~$\gamma$. The support of the representative is also
transparently related to the nonholomorphicity set~\eqref{eq: monomial pole support set} of
the Laurent monomial.
Specifically, using Remark~\ref{rmk: pole extension} we obtain that
\begin{align*}
\PolySupp \dRep{-k} \subset
    \Ball_\primary \bigg(\left\lfloor\frac{k}{2}\right\rfloor+1\bigg)
\quad \text{ and thus } \quad
\RadSupp{\dHolCurrMode{-k}\idField}
    \leq \left\lfloor\frac{k}{2}\right\rfloor + 1 \, ,
    \quad \text{ for } k \in \Zpos.
\end{align*}
Similarly, $\dAntHolCurrMode{-k}\idField=\dRepBar{-k}+\text{\small Null}$ with $\dRepBar{k} \coloneqq
\frac{-\ii}{2\pi}
\sum_{\zu_\medial\in\Poles{k}_\medial}\dee \cconj{\zu}_\medial^{[k]}\pdeebar\field(\zu_\medial)$ and the same arguments yield $\RadSupp{\dAntHolCurrMode{-k}\idField}
\leq
\left\lfloor k/2\right\rfloor + 1$.
\hfill$\diamond$
\end{ex}

The next proposition states that the space of correlation-equivalent local fields $\dFields$ can be equipped
with a representation of two commuting copies of~$\Hei$; one for the holomorphic and one for the antiholomorphic
chirality. The construction of the holomorphic representation was the key content in~\cite{HKV}, and the
antiholomorphic one essentially repeats the same~--- but it is here that the small convention differences
become important: with the exact conventions of~\cite{HKV}, the two chiralities would fail to commute!
With our slightly modified Laurent monomials, the desired commutation property is recovered.
\begin{prop}\label{prop: comm relations}
The current modes satisfy, for all $k,\ell\in\Z$,
\begin{align*}
[\dHolCurrMode{k},\dHolCurrMode{\ell}]
    = [\dAntHolCurrMode{k},\dAntHolCurrMode{\ell}]
    = k \, \delta_{k+\ell} \, \id_\dFields
    \mspace{30mu}
    \text{ and }
    \mspace{40mu}
    [\dHolCurrMode{k},\dAntHolCurrMode{\ell}] = 0 \, .
\end{align*}
\end{prop}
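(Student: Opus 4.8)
The plan is to compute each bracket by the standard nested-contour manipulation. Evaluating $\dHolCurrMode{k}\dHolCurrMode{\ell}(F+\Null)$ produces a double discrete contour integral in which the $k$-mode contour in a variable $\zu$ encircles the $\ell$-mode contour in a variable $\zubis$, the latter in turn encircling $\PolySupp F \cup \Poles{k} \cup \Poles{\ell}$, whereas in $\dHolCurrMode{\ell}\dHolCurrMode{k}(F+\Null)$ the nesting is reversed. Since the two integrands coincide, the commutator equals the difference of the two $\zu$-contours at fixed $\zubis$, which by contour deformation collapses to a small corner contour around $\zubis$ alone. The whole computation thus reduces to a single contact term, and the only inputs are discrete Stokes~\eqref{eq: discrete Stokes}, contour deformation, the discrete residue formula (property~7 of Proposition~\ref{prop: monomials}), and the explicit null fields of Section~\ref{sub: corr equiv}.

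For the same-chirality bracket $[\dHolCurrMode{k},\dHolCurrMode{\ell}]$ I would make this explicit as follows. On the small $\zu$-contour around $\zubis$, Stokes turns the integral into a sum of $\gdeebar_\zu$ of the integrand; since $\gdeebar_\zu \zu_\diamond^{[k]} = 0$ away from $\Poles{k}$ (property~6), the only surviving piece is $\zu_\diamond^{[k]}\,\gdeebar_\zu \holcurrfield(\zu) = \tfrac{\ii}{2}\,\zu^{[k]}\,\gLapl\field(\zu)$, using $\holcurrfield = \ii\,\pdee\field$ and the factorization~\eqref{eq: factorizations of Laplacian with primal graph derivatives}. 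This is where the null fields enter: by the quadratic null of Example~\ref{ex: quadratic nulls}, the factor $\gLapl\field(\zu)$ contracts against the current $\holcurrfield(\zubis)$ sitting at $\zubis$ into discrete delta functions localizing $\zu$ at the two primal neighbours of $\zubis$, the spurious $\delta_{\zu,0}$ contributions cancelling between the two neighbours. Performing the now-localized $\zu$-sum and the remaining $\zubis$-contour integral, and using $\gdee \zu^{[k]} = k\,\zu^{[k-1]}$ (property~2) together with the residue formula (property~7), yields exactly $k\,\delta_{k+\ell}\,\id_\dFields$. This is precisely the computation of~\cite{HKV}; the antiholomorphic bracket $[\dAntHolCurrMode{k},\dAntHolCurrMode{\ell}]$ is obtained by the identical argument with $\gdee$ and $\gdeebar$ interchanged and $\zu^{[k]}$ replaced by $\cconj{\zu}^{[k]}$ throughout.

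The genuinely new and delicate point is the mixed bracket $[\dHolCurrMode{k},\dAntHolCurrMode{\ell}] = 0$, which is exactly the relation that fails with the unmodified monomials of~\cite{HKV}. The same reduction applies: the commutator is a contact term in which $\gdeebar_\zu \holcurrfield(\zu) = \tfrac{\ii}{2}\gLapl\field(\zu)$ contracts against $\antiholcurrfield(\zubis) = -\ii\,\pdeebar\field(\zubis)$, again localizing $\zu$ near $\zubis$ through Example~\ref{ex: quadratic nulls}. After this contraction the surviving double residue factorizes into a holomorphic $\zu$-integral carrying $\zu^{[k]}$ and an antiholomorphic $\zubis$-integral carrying $\cconj{\zubis}^{[\ell]}$, integrated against $\dd{\zu}$ and $\dd{\cconj{\zubis}}$ respectively. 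I would show this mixed residue vanishes identically, the holomorphic and antiholomorphic sectors decoupling. The cancellation is not automatic: it hinges on the precise symmetric weights $\tfrac12,\tfrac14,\tfrac18$ prescribed for $\tfrac{1}{2\pi}\gdeebar\zu^{[-1]}$ in property~5 of Proposition~\ref{prop: monomials}, which are exactly the modification over~\cite{HKV}. I would verify the vanishing by writing the contact term with these weights and checking that the resulting finite sum telescopes to zero for every $k,\ell$.

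The main obstacle is not the bookkeeping of the residues but the justification of the contour deformation itself, because the space $\dNuFi$ of null fields is \emph{not} an ideal in $\dLocFi$ (Examples~\ref{ex: Laplacian nulls} and~\ref{ex: quadratic nulls}): one cannot simply discard $\gdeebar_\zu \holcurrfield(\zu) \in \dNuFi$ after multiplying by the remaining field polynomial. The deformation must instead be justified representative-by-representative, using that a $\gLapl\field(\zu)$ factor contracts with any field polynomial only at finitely many lattice points (Remark~\ref{rmk: lap 0 corr} and Example~\ref{ex: quadratic nulls}); this is precisely what renders the integrand holomorphic-modulo-null in $\zu$ throughout the annulus between the two contours except at the single contact point $\zubis$, legitimizing the collapse to a small contour. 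Handling this localization cleanly — and, in the mixed case, tracking it through property~5 to see the would-be anomaly cancel — is the crux of the argument.
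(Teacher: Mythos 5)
Your overall reduction is the same as the paper's: nested contours, discrete Stokes' formula to convert the commutator into a sum over the annulus between the two $\zu$-contours, the factorization $\gdeebar\pdee = \tfrac12\gLapl$ on primal vertices, and the quadratic nulls of Example~\ref{ex: quadratic nulls} to contract $\gLapl\field(\zu_\primary)$ against $\pdeebar\field(\zubis_\medial)$ into lattice deltas (the contraction with the spectator polynomial $F$ vanishes precisely because the annulus is disjoint from $\PolySupp F$, which is the correct resolution of the non-ideal issue you raise). One correction to your description of the outcome: after localization the surviving object is a \emph{single} contour integral, not a factorized pair of $\zu$- and $\zubis$-integrals. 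One is left with $\frac{\ii}{2\pi}\dcoint{\gamma}\cconj{\zubis}_\diamond^{[\ell]}\,\pdeebar\zubis_\medial^{[k]}\,\dd{\cconj{\zubis}}$, and since $\zubis\mapsto\zubis^{[k]}$ is discrete holomorphic on $\gamma$ one has $\pdeebar\zubis_\medial^{[k]} = k\,\LatticeSign(\zubis_\medial)\,\zubis_\medial^{[k-1]}$, where $\LatticeSign$ is the parity function separating horizontal from vertical edges. The whole mixed bracket thus reduces to the vanishing of $\dcoint{\gamma}\cconj{\zu}_\diamond^{[\ell]}\,\LatticeSign(\zu_\medial)\,\zu_\medial^{[k-1]}\,\dd{\cconj{\zu}}$ for all $k,\ell$.

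The gap is that you dispose of this last step by asserting that the resulting "finite sum telescopes to zero for every $k,\ell$". That is not yet an argument: the pole sets $\Poles{-k}$ grow with $k$, so there are infinitely many cases and no single finite computation to check. The paper isolates this step as Lemma~\ref{lemma: comm integrals}, and its proof needs three structural reductions before the explicit weights of property~5 are ever touched: (i) for $k,\ell\ge0$ there are no poles and Stokes gives zero; (ii) for $k+\ell\le-2$ the integrand decays fast enough that pushing the contour to infinity kills the integral; (iii) in the remaining cases one integrates by parts repeatedly, using $\gdee\zu^{[n]}=n\,\zu^{[n-1]}$ together with the sign flip $\gdee(\LatticeSign\cdot f)=\LatticeSign\cdot\gdeebar f$, to reduce to a first-order pole, after which Stokes leaves a sum over the few sites of $\Poles{-1}$ and only the cases $k+\ell+1\in\{0,1,2\}$ are nonempty. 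Those are then killed by oddness of the summands, by the $\pm1$ parity of $\LatticeSign$, and by the rotational symmetry of $\zu^{[2]}$ --- and it is exactly here that the symmetric weights $\tfrac12,\tfrac14,\tfrac18$ of property~5 enter. Without the reductions (i)--(iii) your proposed verification cannot be carried out, so this lemma, rather than the contour bookkeeping, is the actual content you still owe.
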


The proof is presented after an auxiliary result regarding discrete integration.
In that auxiliary result, we use the function
\begin{align*}
\LatticeSign \colon \ZDiamond\cup\ZMedial \to \{+1,-1\}
\qquad \text{given by} \qquad 
\LatticeSign(\zu) := (-1)^{2 \, \im(\zu)} ,
\end{align*}
i.e., the function which takes the value~$+1$ on primal vertices and on horizontal
edges
and that takes the value~$-1$ on dual vertices and vertical edges.

\begin{lemma}\label{lemma: comm integrals}
For any $\ell,k\in\Z$, we have
\begin{align*}
\dcoint{\gamma} \cconj{\zu}_\diamond^{[\ell]} \; \LatticeSign(\zu_\medial) \, \zu_\medial^{[k]} \; \dd{\cconj \zu}
    = 0 \, ,
\end{align*}
where $\gamma$ is any positively oriented corner contour surrounding the
support of the discrete poles of both discrete monomials, i.e.,
$\Poles{k}\cup\Poles{\ell} \subset \interior \gamma$. 
\end{lemma}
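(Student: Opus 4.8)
The plan is to turn the contour integral into a finite lattice sum via discrete Stokes, then annihilate that sum with the $90^\circ$ rotation symmetry of the grid, treating a short list of exceptional $(k,\ell)$ by hand. The one computational fact that tames the sign is this: since $\LatticeSign$ is unchanged under the horizontal half-shifts $\zu\mapsto\zu\pm\tfrac12$ but reverses under the vertical half-shifts $\zu\mapsto\zu\pm\tfrac{\ii}{2}$, the defining formulas~\eqref{eq: discrete Wirtinger derivatives} immediately give the sign-twist identities $\gdee(\LatticeSign f)=\LatticeSign\,\gdeebar f$ and $\gdeebar(\LatticeSign f)=\LatticeSign\,\gdee f$. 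Combined with the conjugation symmetry $\cconj{\zu}^{[n]}=\cconj{\zu^{[n]}}$ of Proposition~\ref{prop: monomials} (which gives $\gdee\,\cconj{\zu}^{[\ell]}=\cconj{\gdeebar\,\zu^{[\ell]}}$), this shows both factors of the integrand are discrete antiholomorphic away from the finite sets $\Poles{k}$ and $\Poles{\ell}$. Hence the integral is contour-independent, and the antiholomorphic Stokes formula~\eqref{eq: discrete Stokes} collapses it onto the poles:
\[
I=-\ii\sum_{\zubis_\diamond\in\Poles{k}\cap\ZDiamond}\LatticeSign(\zubis_\diamond)\,\cconj{\zubis}_\diamond^{[\ell]}\,\gdeebar\zubis_\diamond^{[k]}\;-\;\ii\sum_{\zubis_\medial\in\Poles{\ell}\cap\ZMedial}\LatticeSign(\zubis_\medial)\,\zubis_\medial^{[k]}\,\cconj{\gdeebar\zubis_\medial^{[\ell]}}.
\]

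The main engine is rotation. The pole sets are invariant under $\zu\mapsto\ii\zu$ because $\gdeebar(\ii\zu)^{[k]}=\ii^{k+1}\gdeebar\zu^{[k]}$, so I may replace $\gamma$ by $\ii\gamma$ without changing $I$ and reparametrize by $\zu=\ii\zubis$. Using $(\ii\zu)^{[k]}=\ii^{k}\zu^{[k]}$, $\cconj{(\ii\zu)}^{[\ell]}=\ii^{-\ell}\cconj{\zu}^{[\ell]}$, the fact that $\zu\mapsto\ii\zu$ interchanges horizontal and vertical edges so that $\LatticeSign(\ii\zu_\medial)=-\LatticeSign(\zu_\medial)$, and $\dd{\cconj{(\ii\zu)}}=-\ii\,\dd{\cconj\zu}$, the factors collect to $I=\ii^{k-\ell+1}I$. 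Thus $I=0$ whenever $k-\ell\not\equiv-1\pmod 4$. Two further regimes are immediate: if $k,\ell\ge 0$ then $\Poles{k}=\Poles{\ell}=\varnothing$ and both localized sums are empty; and if $k+\ell\le -2$ one deforms $\gamma$ to an arbitrarily large contour and uses the asymptotics~\eqref{eq: discrete monomial asymptotics} to bound $|I|$ by $C\cdot(\text{perimeter})\cdot\max|\cconj{\zu}^{[\ell]}\zu^{[k]}|=O(N^{k+\ell+1})\to 0$.

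The main obstacle is the residual family $k-\ell\equiv-1\pmod4$ in which exactly one of $k,\ell$ is negative (so $k+\ell\ge-1$, and neither the rotation relation nor the large-contour bound applies; the smallest instance is $(k,\ell)=(-1,0)$). There one localized sum is empty (the pole set of the nonnegative index vanishes) and the other is a finite sum against the explicit weights of property~5. I would dispatch it by playing the vanishing of the positive monomial $\zu^{[k]}$ (resp.\ $\cconj{\zu}^{[\ell]}$) near the origin—property~3, whose vanishing radius grows with the order—against the radius of the pole set from Remark~\ref{rmk: pole extension}: when the nonnegative index is large the monomial already vanishes on the entire pole set and the sum is $0$; when it is small one has $k+\ell\le-2$ and the decay bound applies; and the overlap leaves only a short list of explicit configurations in which the opposite values of $\LatticeSign$ on the primal versus dual (and horizontal versus vertical) sublattices force the cancellation—e.g.\ for $(-1,0)$ the contribution $+\pi$ at the primal origin cancels $4\times(-\tfrac{\pi}{4})$ at the four surrounding dual vertices. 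I expect the genuinely delicate point to be verifying that these two regimes truly overlap (equivalently, that at the borderline the monomial's vanishing radius already reaches the pole radius, as is forced at $(\tfrac12)^{[2]}=0$), so that every $(k,\ell)$ is covered and the staggered-sign cancellation closes the argument.
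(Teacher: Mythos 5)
Your setup is sound and largely matches the paper's: the sign-twist identities $\gdee(\LatticeSign\cdot f)=\LatticeSign\cdot\gdeebar f$ and $\gdeebar(\LatticeSign\cdot f)=\LatticeSign\cdot\gdee f$, the Stokes localization onto $\Poles{k}_\diamond$ and $\Poles{\ell}_\medial$, the empty-pole case $k,\ell\ge 0$, and the large-contour decay for $k+\ell\le -2$ are all correct. The rotation identity $I=\ii^{\,k-\ell+1}I$ is also correct (I checked the Jacobian factors $\ii^{-\ell}\cdot\ii^{k}\cdot(-1)\cdot(-\ii)$) and is a genuine addition not used in the paper; it cleanly kills three quarters of the index pairs. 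Your explicit check of $(k,\ell)=(-1,0)$, namely $\pi-4\cdot\tfrac{\pi}{4}=0$ from the staggered signs of $\LatticeSign$ on the primal origin versus the four surrounding dual vertices, is also right.

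The gap is in the residual family $\{k+\ell\ge -1,\ \min(k,\ell)<0\}$, and it is exactly where you suspected: the two regimes do \emph{not} overlap. For $\ell=-m<0$ the localized sum runs over $\Poles{-m}_\medial\subset\Ball_\medial(\lfloor m/2\rfloor+\tfrac12)$, whose radius grows like $m/2$, while the smallest admissible nonnegative exponent is $k=m-1$, whose monomial $\zu^{[m-1]}$ has vanishing radius only about $(m-1)/2$. So at the borderline $k+\ell\in\{-1,0,1\}$ the positive monomial fails to vanish on the outer shell of the pole set for \emph{every} $m$, and the rotation constraint $k-\ell\equiv 3\pmod 4$ still leaves infinitely many such pairs (e.g.\ all $(m-1,-m)$ with $m$ even). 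Each of these would require an explicit cancellation involving the values of $\gdeebar\zu^{[-m]}$ on a set of unbounded size, which is not a finite check and not obviously forced by the staggered-sign pattern alone. The missing tool is the paper's repeated discrete integration by parts: using $\gdee\,\zu^{[n]}=n\,\zu^{[n-1]}$ together with~\eqref{eq: integration by parts} (and your sign-twist identities to move $\LatticeSign$ past the derivatives), one trades one unit of the negative exponent for one unit of the nonnegative one at each step, arriving at an integral with pole order exactly $-1$ against $\zu^{[k+\ell+1]}$. The localized sum then lives on the \emph{fixed} set $\Ball(1)$ supporting $\gdeebar\zu^{[-1]}$; the case $k+\ell+1\ge 3$ is killed because $\zu^{[n]}$ vanishes there for $n\ge 3$, and only $k+\ell+1\in\set{0,1,2}$ remain, which is the genuinely finite list your sign/parity cancellations then dispose of. Without that reduction your argument does not close.
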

\begin{proof}
Note, first of all, that, for any function $f$ on $\ZDiamond\cup\ZMedial$, we have
\[ \gdee(\LatticeSign\cdot f) = \LatticeSign\cdot\gdeebar f
\qquad \text{ and } \qquad
\gdeebar(\LatticeSign\cdot f) = \LatticeSign\cdot\gdee f 
\] where the
dot $\cdot$ stands for pointwise multiplication of functions.
By Stokes' formula~\eqref{eq: discrete Stokes} combined with this observation,
the integral can be written as
\begin{align*}
\dcoint{\gamma} \cconj{\zu}_\diamond^{[\ell]} \; \LatticeSign(\zu_\medial) \, \zu_\medial^{[k]} \; \dd{\cconj \zu}
& \; =  - \ii \, \sum_{\zubis_\diamond \in \interior_\diamond \gamma} \cconj{\zubis}_\diamond^{[\ell]}
                        \, \LatticeSign(\zubis_\diamond) \, \big( \gdeebar \zubis_\diamond^{[k]} \big)
        - \ii \, \sum_{\zubis_\medial \in \interior_\medial \gamma} \big( \gdee \cconj{\zubis}_\medial^{[\ell]} \big)
                        \, \LatticeSign(\zubis_\medial) \, \zubis_\medial^{[k]} \\
& \; =  - \ii \, \sum_{\zubis_\diamond \in \Poles{k}_\diamond} \cconj{\zubis}_\diamond^{[\ell]}
                        \, \LatticeSign(\zubis_\diamond) \, \big( \gdeebar \zubis_\diamond^{[k]} \big)
        - \ii \, \sum_{\zubis_\medial \in \Poles{\ell}_\medial} \cconj{\big( \gdeebar \zubis_\medial^{[\ell]} \big)}
                        \, \LatticeSign(\zubis_\medial) \, \zubis_\medial^{[k]} .
\end{align*}
If $k,\ell \ge 0$ then there are no discrete poles that contribute, $\Poles{\ell} = \Poles{k} = \emptyset$,
so the integral vanishes as asserted.

If $k + \ell \le -2$, then using the fact that the integral does not depend on $\gamma$, we can argue as
follows. Taking $\gamma$ to be a symmetric square path at distance~$r$ from the origin,
the integrand is $\OO(r^{k+\ell})$ and the length of the integration contour is $\OO(r)$, so the integral
is $\OO(r^{1+k+\ell})$, and taking $r \to \infty$ shows that it must vanish.

It remains to consider the case $k < 0$ and $\ell \ge -1-k \ge 0$, and the case
$\ell < 0$ and $k \ge -1-\ell \ge 0$. By repeated integration by parts,
these can be reduced to cases when the negative exponent is~$-1$.

For example if~$\ell < 0$ then
with~\eqref{eq: integration by parts} we can rewrite the integral as
\begin{align*}
\dcoint{\gamma} \cconj{\zu}_\diamond^{[\ell]} \; \LatticeSign(\zu_\medial) \, \zu_\medial^{[k]} \; \dd{\cconj \zu}
& \; = \phantom{-} \frac{1}{\ell+1} \dcoint{\gamma} \big( \gdeebar \cconj{\zu}_\diamond^{[\ell+1]} \big)
                                \; \LatticeSign(\zu_\medial) \, \zu_\medial^{[k]} \; \dd{\cconj \zu} \\
& \; = - \, \frac{1}{\ell+1} \dcoint{\gamma} \cconj{\zu}_\medial^{[\ell+1]}
                                \; \LatticeSign(\zu_\diamond) \, \big( \gdee \zu_\diamond^{[k]} \big) \; \dd{\cconj \zu} \\
& \; = - \, \frac{k}{\ell+1} \dcoint{\gamma} \cconj{\zu}_\medial^{[\ell+1]}
                                \; \LatticeSign(\zu_\diamond) \, \zu_\diamond^{[k-1]} \; \dd{\cconj \zu} .
\end{align*}
Applying this recursively, the integral is seen to be proportional to either
\begin{align*}
\dcoint{\gamma} \cconj{\zu}_\medial^{[-1]}\; 
    \LatticeSign(\zu_\diamond) \zu_\diamond^{[k+\ell+1]} \; \dd{\cconj \zu} 
\qquad \text{ or } \qquad
\dcoint{\gamma} \cconj{\zu}_\diamond^{[-1]} \; 
    \LatticeSign(\zu_\medial) \zu_\medial^{[k+\ell+1]} \; \dd{\cconj \zu} \, ,
\end{align*}
depending on the parity of $\ell < 0$.
These integrals are then evaluated by Stokes' formula~\eqref{eq: discrete Stokes}, as above.
Note that we have
$\gdeebar \zubis^{[k+\ell+1]} \equiv 0$ since $k+\ell+1 \ge 0$, so
one of the terms in Stokes' formula does not contribute.
The integrals above thus become, up to multiplicative constants,
\begin{align*}
\sum_{\vert \zubis_\diamond\vert \leq \frac{1}{\sqrt{2}}} \, \cconj{ \big( \gdeebar \zubis_\diamond^{[-1]} \big) }
        \; \LatticeSign(\zubis_\diamond) \, \zubis_\diamond^{[k + \ell + 1]} 
\qquad \text{ or } \qquad
\sum_{\vert \zubis_\medial\vert = \frac{1}{2}} \, \cconj{ \big( \gdeebar \zubis_\medial^{[-1]} \big) }
        \; \LatticeSign(\zubis_\medial) \, \zubis_\medial^{[k + \ell + 1]} .
\end{align*}
For $k + \ell + 1 \ge 3$, the monomial $\zubis^{[k + \ell + 1]}$ vanishes on the support of
$\gdeebar \zubis^{[-1]}$, so all terms in these sums are zero, and the original integral vanishes again
as asserted. Only in the cases $k+\ell+1 \in \set{0, 1, 2}$ the sums above have nonzero terms.
In the case $k+\ell+1 = 1$ the summands are odd, and they therefore cancel.
In the case $k+\ell+1 = 0$ there are equal contributions with both signs of~$\LatticeSign$,
and they therefore cancel.
In the case $k+\ell+1 = 2$ on the medial lattice the relevant values of the
monomial~$\zubis_\medial^{[2]}$ are again zero, and on the diamond lattice
the value at the origin is zero and the other four values change sign under
$90^\circ$-rotations, by virtue of symmetries of the monomial~$\zubis_\diamond^{[2]}$,
leading to cancellations again.

Similarly if $k<0$, with repeated integrations by parts one reduces to two cases with
a first order pole, both of which are evaluated by Stokes' formula, and both of which can be
explicitly seen to vanish~--- case by case according to the value of~$k+\ell+1 \ge 0$.
\end{proof}

\begin{figure}[h!]
\centering
\begin{overpic}[scale=0.776, tics=10]{./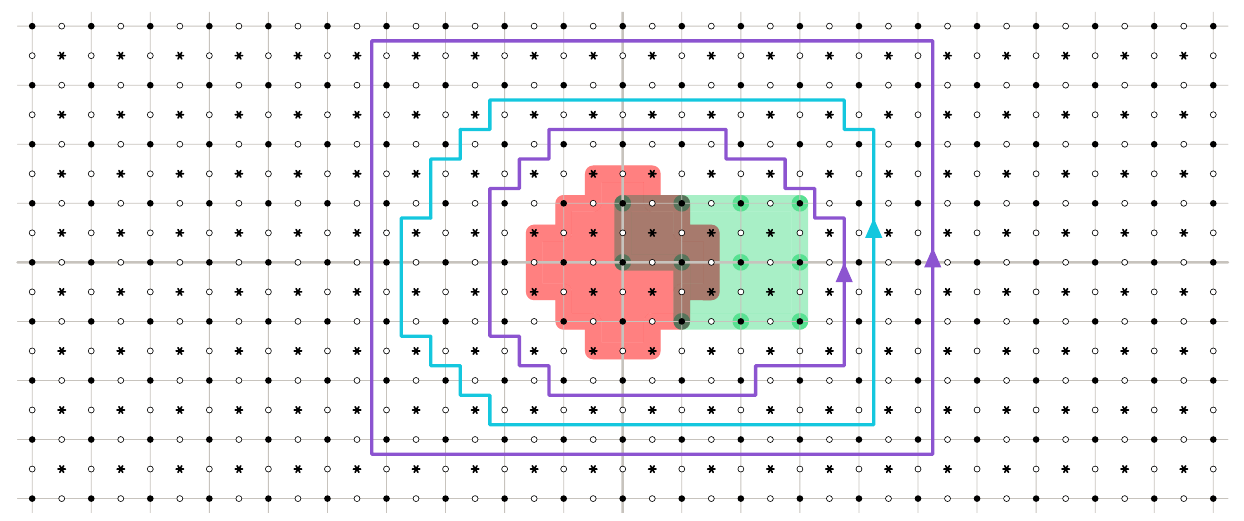}
\end{overpic}
\centering
\caption{An example of the sets and contours involved in the computation
of $[\dHolCurrMode{k},\dAntHolCurrMode{\ell}]$. In red the set
$\Poles{k}\cup\Poles{\ell}$ of nonholomorphicity of the relevant monomials,
in green the support of a representative $F$, in purple a choice of corner
contours $\gamma_-$ (inner) and $\gamma_+$ (outer), and in light blue a
choice of corner contour $\gamma$;
all of them laid on the infinite square grid $\Z^2$ and its
sublattices $\ZDiamond$ and $\ZMedial$.
\label{fig: nested integrals}}
\end{figure}

\noindent
\begin{proof}[Proof of Proposition~\ref{prop: comm relations}]
For the proof of $[\dHolCurrMode{k},\dHolCurrMode{\ell}] =
[\dAntHolCurrMode{k},\dAntHolCurrMode{\ell}] = k \, \delta_{k+\ell} \, \id_\dFields$
we refer the reader to \cite[Proposition 4.5]{HKV}; the minor differences in our
conventions do not affect the essence of this part of the proof. It remains to prove
that $[\dHolCurrMode{k},\dAntHolCurrMode{\ell}] = 0$.
Let $F\in\dLocFi$ be a local field.
Take three disjoint corner contours $\gamma_-$, $\gamma$, and $\gamma_+$ satisfying
\begin{align*}
\PolySupp F \, \cup \, \Poles{k} \, \cup \, \Poles{\ell}
    \; \subset \; \interior\gamma_-
    \; \subset \; \interior\gamma
    \; \subset \; \interior\gamma_+ \, ,
\end{align*}
i.e., the contours must be nested and all surround
the support of $F$ as well as the sets~\eqref{eq: monomial pole support set}
of failure of the discrete (anti)holomorphicity of the monomials of order~$k$ and~$\ell$;
see Figure~\ref{fig: nested integrals}.
Then, by the discrete Stokes' formula~\eqref{eq: discrete Stokes} and the
factorization~\eqref{eq: factorizations of Laplacian with primal graph derivatives}
of the discrete Laplacian we calculate
\begin{align*}
[\dHolCurrMode{k},\dAntHolCurrMode{\ell}] \big( F + \Null\big) 
= &\ \bigg( \frac{1}{2\pi\ii} \dcoint{\gamma_+} \zu_\diamond^{[k]} \holcurrfield(\zu_\medial) \dd{\zu} \bigg)
    \bigg( \frac{-1}{2\pi\ii} \dcoint{\gamma} \cconj{\zubis}_\diamond^{[\ell]}
        \antiholcurrfield(\zubis_\medial) \dd{\cconj \zubis} \bigg) \,F \\
& \qquad - \bigg( \frac{-1}{2\pi\ii} \dcoint{\gamma} \cconj{\zubis}_\diamond^{[\ell]}
        \antiholcurrfield(\zubis_\medial) \dd{\cconj \zubis} \bigg)
    \bigg( \frac{1}{2\pi\ii} \dcoint{\gamma_-} \zu_\diamond^{[k]} \holcurrfield(\zu_\medial) \dd{\zu} \bigg) \,F
    + \Null \\
= &\  \frac{1}{4\pi^2} \dcoint{\gamma}\dd{\cconj \zubis}\,\cconj{\zubis}_\diamond^{[\ell]}
    \bigg[ \dcoint{\gamma_+}-\dcoint{\gamma_-} \bigg] \dd{\zu}\,\zu_\diamond^{[k]}  
        \pdeebar\field(\zubis_\medial) \pdee\field(\zu_\medial) \,F
    + \Null \\
= &\  \frac{\ii}{4\pi^2} \dcoint{\gamma}\dd{\cconj \zubis}\,\cconj{\zubis}_\diamond^{[\ell]}
    \sum_{\zu_\diamond \in \interior_\diamond\gamma_+ \, \setminus \; \interior_\diamond\gamma_-}
        \zu_\diamond^{[k]} \pdeebar\field(\zubis_\medial) \deebar\pdee\field(\zu_\diamond) \,F
    + \Null  \\
= &\  \frac{\ii}{8\pi^2} \dcoint{\gamma}\dd{\cconj \zubis}\,\cconj{\zubis}_\diamond^{[\ell]}
    \sum_{\zu_\primary \in \interior_\primary\gamma_+ \, \setminus \; \interior_\primary\gamma_-}
        \zu_\primary^{[k]} \pdeebar\field(\zubis_\medial) \gLapl \field(\zu_\primary) \,F
    + \Null \, .
\end{align*}
Now recall the quadratic null fields of
Example~\ref{ex: quadratic nulls}, which yield in particular that
\begin{align*}
\pdeebar\field(\zubis_\medial) \gLapl \field(\zu_\primary)
    \; = \; 4 \pi (\pdeebar)_{\zubis_\medial} \delta_{\zu_\primary , \zubis_\medial} + \Null \, .
\end{align*}
We may therefore simplify the earlier calculation to
\begin{align*}
[\dHolCurrMode{k},\dAntHolCurrMode{\ell}] \big( F + \Null\big)
= &\  \frac{\ii}{2\pi} \dcoint{\gamma}\dd{\cconj \zubis}\,\cconj{\zubis}_\diamond^{[\ell]}
    \sum_{\underset{\zu_\primary\notin\interior_\primary\gamma_-}{\zu_\primary\in\interior_\primary\gamma_+}}
        \zu_\primary^{[k]} (\pdeebar)_{\zubis_\medial} (\delta_{\zu_\primary,\zubis_\medial}) \,F
    + \Null \\
= &\  \frac{\ii}{2\pi} \dcoint{\gamma}\dd{\cconj \zubis}\,\cconj{\zubis}_\diamond^{[\ell]}
    \pdeebar \zubis_\medial^{[k]} \,F
    + \Null\,.
\end{align*}
Now note that if $f$ is a function on $\ZDiamond \cup \ZMedial$, then at any edge
midpoint $\zubis_\medial \in \ZMedial$ we have
\begin{align*}
\pdeebar f (\zubis_\medial) = \gdeebar f (\zubis_\medial) + \LatticeSign (\zubis_\medial) \, \gdee f (\zubis_\medial) .
\end{align*}
In particular if $f$ is discrete holomorphic we have
$\pdeebar f (\zubis_\medial) = \LatticeSign (\zubis_\medial) \, \gdee f (\zubis_\medial)$.
The function $\zubis_\medial \mapsto \zubis_\medial^{[k]}$ is discrete holomorphic on $\gamma$,
because the contour~$\gamma$ surrounds the poles of the monomials by assumption,
so on $\gamma$ we get
$\pdeebar \zubis_\medial^{[k]} = k \, \LatticeSign (\zubis_\medial) \, \zubis_\medial^{[k-1]}$,
using also the derivative property of the discrete monomials from Proposition~\ref{prop: monomials}.
The earlier calculation thus simplifies to
\begin{align*}
[\dHolCurrMode{k},\dAntHolCurrMode{\ell}] \big( F + \Null\big)
= &\  \frac{\ii \, k}{2\pi} \dcoint{\gamma}\dd{\cconj \zubis}\,\cconj{\zubis}_\diamond^{[\ell]}
     \, \LatticeSign (\zubis_\medial) \, \zubis_\medial^{[k-1]} \,F
    + \Null\, ,
\end{align*}
which vanishes by Lemma~\ref{lemma: comm integrals}.
\end{proof}

By Proposition~\ref{prop: comm relations},
the space~$\dFields$ of local fields
carries representations of two commuting Heisenberg algebras.
Let us again denote correspondingly by $\UHei$ and $\AntUHei$ the two
commuting associative algebras~\eqref{eq: Heisenberg UEA},
whose representations on~$\dFields$ are determined
by formulas~\eqref{eq: discrete current modes} for their generators, i.e.,
$\HeiJ{k} \mapsto \dHolCurrMode{k}$ and 
$\AntHeiJ{k} \mapsto \dAntHolCurrMode{k}$, respectively.

Recall that our first main goal is to put
the space~$\dFields$ of
local fields
of the (gradient of the) discrete Gaussian Free Field
into a one-to-one correspondence with the two-chiral Fock space
$\FullFock$, which serves as the space of local fields for the
bosonic CFT of the (gradient of the) continuum Gaussian Free Field.
The Fock space~$\FullFock$ carries representations of two
commuting copies of the Heisenberg algebra by construction,
and we want our correspondence to respect this structure, i.e.,
to be a map of representations of $\UHei \tens \AntUHei$.
By purely algebraic arguments, one direction of the desired one-to-one
correspondence now becomes very easy: an isomorphic copy of the
Fock space is found inside the space~$\dFields$ of local fields
of the DGFF as follows.
\begin{coro}\label{cor: easy inclusion}
The $\UHei \tens \AntUHei$ subrepresentation
in $\dFields$ generated by the identity field~$\idField$
(Example~\ref{ex: id field}) is isomorphic to the
full Fock space,
\begin{align*}
\FullFock \, \isom \, (\UHei \tens \AntUHei) \idField \, \subset \, \dFields .
\end{align*}
\end{coro}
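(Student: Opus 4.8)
The plan is to obtain the stated isomorphism as an immediate application of Lemma~\ref{lem: abstract nonsense}. I would take $V = \dFields$ with its $\UHei \tens \AntUHei$-action from Proposition~\ref{prop: comm relations} (where $\HeiJ{k}$ acts as $\dHolCurrMode{k}$ and $\AntHeiJ{k}$ acts as $\dAntHolCurrMode{k}$), and take the distinguished vector $v = \idField$. Everything then reduces to checking that $\idField$ satisfies the two hypotheses of that lemma: that it is nonzero, and that it is annihilated by all nonnegative-index current modes of both chiralities.

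The nonvanishing $\idField \ne 0$ is exactly the content of Example~\ref{ex: id field}: the field polynomial $1 \in \dLocFi$ is not null, since its evaluation is the constant random variable~$1$, whose expectation does not vanish. For the annihilation condition, the holomorphic half $\dHolCurrMode{k}\idField = 0$ for all $k \in \Znn$ is established in Example~\ref{ex: id primary} as Equation~\eqref{eq: current primary property of id}; the antiholomorphic half $\dAntHolCurrMode{k}\idField = 0$ for all $k \in \Znn$ follows verbatim from the same discrete Stokes' formula computation applied to the antiholomorphic current and the conjugated monomials. Thus $\idField$ plays, inside $\dFields$, precisely the role that the vacuum $\FockId$ plays inside $\FullFock$.

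With both hypotheses verified, Lemma~\ref{lem: abstract nonsense} produces a unique $\UHei \tens \AntUHei$-map $\FullFock \to \dFields$ sending $\FockId \mapsto \idField$; it is injective (because $\FullFock$ is irreducible and the map is nonzero) and its image is exactly the subrepresentation $(\UHei \tens \AntUHei)\idField$. This is precisely the asserted isomorphism $\FullFock \isom (\UHei \tens \AntUHei)\idField \subset \dFields$.

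I should emphasize that there is no genuine obstacle in this direction, and that is exactly the point of phrasing it as a corollary: all the substantive work has already been done, partly in the current-mode computations of Example~\ref{ex: id primary} and partly in the universal-property argument packaged into Lemma~\ref{lem: abstract nonsense}. The real difficulty of the first main result lies entirely in the reverse inclusion $\dFields \subset \FullFock$, which is treated separately and does not enter here.
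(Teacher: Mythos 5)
Your proposal is correct and follows exactly the paper's own route: verify that $\idField \ne 0$ and that $\dHolCurrMode{k}\idField = \dAntHolCurrMode{k}\idField = 0$ for $k \ge 0$ (both from Example~\ref{ex: id field}), then invoke Lemma~\ref{lem: abstract nonsense}. Nothing to add.
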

\begin{proof}
In~\eqref{eq: current primary property of id}
we saw that $\dHolCurrMode{k} \idField = 0$
and $\dAntHolCurrMode{k} \idField = 0$ for all $k \ge 0$,
and we also noted that ${\idField \ne 0 \in \dFields}$. The asserted isomorphism
therefore follows immediately from Lemma~\ref{lem: abstract nonsense}.
\end{proof}


\subsection{Sugawara construction and homogeneous local fields}\label{sub: Sugawara}
In Section~\ref{ssec: Sugawara basics}, we recalled how the Fock space representation of the Heisenberg algebra can be rendered a Virasoro representation via the Sugawara construction.
A key observation in~\cite{HKV} was that the same construction can be applied in
the space~$\dFields$ of correlation equivalence classes of local fields
of the discrete GFF, thanks to the following truncation property.
\begin{lemma}\label{lemma: truncation}
	For any $F\in\dLocFi$, there exists $K\in\Z_{>0}$ such that 
	\begin{align*}
		\dHolCurrMode{k}\big(\,F + \Null\,\big)
		=
		\dAntHolCurrMode{k}\big(\,F + \Null\,\big)
		=\,
		0\,+\,\Null
	\end{align*}
	for all $k\geq K$.
\end{lemma}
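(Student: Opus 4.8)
The plan is to prove something slightly stronger than asserted: for $k$ large enough the field-polynomial representative produced by the defining contour integral~\eqref{eq: discrete current modes} is \emph{identically zero} in $\dLocFi$, not merely null. The starting observation is that for $k \geq 0$ the monomial $\zu \mapsto \zu^{[k]}$ is discrete holomorphic everywhere, so $\Poles{k} = \emptyset$ (property~6 of Proposition~\ref{prop: monomials}). Hence the admissible contours need only surround $\PolySupp F$, and I can fix once and for all a single positively oriented corner contour $\gamma$ with $\PolySupp F \subset \interior\gamma$, valid simultaneously for all $k \geq 0$. This uniformity of $\gamma$ in $k$ is the crucial structural point.

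Next I would expand the integral by the discrete Stokes' formula~\eqref{eq: discrete Stokes}, applied with $f(\zu_\diamond) = \zu_\diamond^{[k]}$ and the $\dLocFi$-valued factor $g(\zu_\medial) = \holcurrfield(\zu_\medial)\,F$ (bilinear integration against a vector-valued argument, as permitted below~\eqref{eq: discrete integration}). This rewrites
\[
\dHolCurrMode{k}(F + \Null) = \frac{1}{2\pi}\sum_{\zu_\diamond \in \interior_\diamond\gamma} \zu_\diamond^{[k]}\, \gdeebar\big(\holcurrfield(\cdot)\,F\big)(\zu_\diamond) \;+\; \frac{1}{2\pi}\sum_{\zu_\medial \in \interior_\medial\gamma} \gdeebar\zu_\medial^{[k]}\;\holcurrfield(\zu_\medial)\,F \;+\; \Null.
\]
For $k \geq 0$ the second sum drops out entirely, since $\gdeebar \zu^{[k]} \equiv 0$; this is exactly the mechanism already seen for the identity field in Example~\ref{ex: id field}. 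What remains is a sum of field polynomials indexed by the \emph{fixed finite} set $\interior_\diamond\gamma$, each weighted by the scalar $\zu_\diamond^{[k]}$.

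Finally I would invoke property~3 of Proposition~\ref{prop: monomials}: for every fixed $\zu_\diamond$ there is an $N(\zu_\diamond)$ with $\zu_\diamond^{[k]} = 0$ for all $k \geq N(\zu_\diamond)$. Taking $K := \max_{\zu_\diamond \in \interior_\diamond\gamma} N(\zu_\diamond)$, a maximum over finitely many vertices and hence finite, kills every surviving term, so the representative is the zero polynomial and $\dHolCurrMode{k}(F + \Null) = 0 + \Null$ for all $k \geq K$. The antiholomorphic statement is identical after replacing $\zu_\diamond^{[k]}$, $\holcurrfield$, $\dd{\zu}$ by $\cconj{\zu}_\diamond^{[k]}$, $\antiholcurrfield$, $\dd{\cconj\zu}$ and using $\gdee \cconj{\zu}^{[k]} \equiv 0$ for $k \geq 0$; enlarging $K$ to cover both chiralities finishes the argument. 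I do not expect a genuine obstacle: the only delicate point is the legitimacy of fixing a $k$-independent contour, which is precisely what $\Poles{k} = \emptyset$ for $k \geq 0$ guarantees. An analogous argument with negative-index monomials would fail exactly because this uniformity breaks down.
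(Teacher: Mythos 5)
Your proof is correct and is essentially the argument the paper relies on: the paper's own ``proof'' simply cites \cite[Lemma~4.4]{HKV}, whose mechanism is exactly yours --- since $\Poles{k}=\emptyset$ for $k\ge 0$ one may fix a single admissible contour $\gamma\supset\PolySupp F$ for all nonnegative modes, and property~3 of Proposition~\ref{prop: monomials} kills the coefficients $\zu_\diamond^{[k]}$ on the finitely many relevant vertices once $k$ exceeds a finite maximum. The detour through Stokes' formula is harmless but not even needed: the defining integral~\eqref{eq: discrete current modes} is already a finite sum over the steps of the fixed contour, each weighted by $(\zu_j^\diamond)^{[k]}$, and these weights all vanish for $k$ large, so the representative is the zero polynomial directly.
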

\begin{proof}
See \cite[Lemma~4.4]{HKV} for the holomorphic sector.
The argument is identical for the antiholomorphic sector.
\end{proof}
The important conclusion about
the space~$\dFields$ of correlation equivalence classes of local fields
is the following.
\begin{coro}
The formulas
\begin{align*}
\dvirL{n} = \frac{1}{2} \Big( \sum_{k \ge 0} \dHolCurrMode{n-k} \, \dHolCurrMode{k}
            + \sum_{k < 0} \dHolCurrMode{k} \, \dHolCurrMode{n-k} \Big)
\quad \text{ and } \quad
\dvirBarL{n} = \frac{1}{2} \Big( \sum_{k \ge 0} \dAntHolCurrMode{n-k} \, \dAntHolCurrMode{k}
            + \sum_{k < 0} \dAntHolCurrMode{k} \, \dAntHolCurrMode{n-k} \Big)
\end{align*}
equip the space~$\dFields$ with two commuting Virasoro representations
with central charge~$c=1$.
\end{coro}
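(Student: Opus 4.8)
The plan is to treat this corollary as the assembly of three ingredients, each of which is either already available or a direct transcription of a standard argument. The first task is well-definedness of the two displayed operators: although the sums over $k$ are infinite, I will show that on any fixed local field they have only finitely many nonzero terms, so $\dvirL{n}$ and $\dvirBarL{n}$ are genuine linear maps $\dFields \to \dFields$. Once this is in place, the Virasoro relations with central charge $c=1$ follow from a purely formal computation that uses nothing but the Heisenberg commutation relations of Proposition~\ref{prop: comm relations}, identical to the abstract Sugawara calculation of Lemma~\ref{lemma: sugawara Fock}. Finally, the commutation of the two copies reduces to the commutation $[\dHolCurrMode{k},\dAntHolCurrMode{\ell}]=0$ of the chiral current modes.

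For well-definedness, fix a local field $F + \Null$ and a mode $n \in \bZ$. By the truncation property (Lemma~\ref{lemma: truncation}), there is $K \in \Zpos$ with $\dHolCurrMode{k}(F + \Null) = 0$ for all $k \ge K$. In the first sum $\sum_{k \ge 0} \dHolCurrMode{n-k}\dHolCurrMode{k}$ this annihilates every term with $k \ge K$, leaving a finite sum of well-defined local fields. In the second sum $\sum_{k<0}\dHolCurrMode{k}\dHolCurrMode{n-k}$ the operator applied first is $\dHolCurrMode{n-k}$ with $n-k > n$; since $\dHolCurrMode{n-k}(F+\Null)=0$ once $n-k \ge K$, i.e. once $k \le n-K$, only the finitely many indices $n-K < k < 0$ contribute. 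Thus $\dvirL{n}(F+\Null)$ is a finite linear combination of local fields, and the same argument with the antiholomorphic modes $\dAntHolCurrMode{\cdot}$ handles $\dvirBarL{n}$.

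With the operators well-defined, the relations $[\dvirL{n},\dvirL{m}] = (n-m)\,\dvirL{n+m} + \delta_{n+m}\frac{n^3-n}{12}\,\idof{\dFields}$ (and their antiholomorphic analogue) are established by exactly the formal manipulation carried out in the abstract setting: one rewrites the commutator as a doubly-indexed sum, uses $[\dHolCurrMode{k},\dHolCurrMode{\ell}] = k\,\delta_{k+\ell}\,\idof{\dFields}$ to reorder factors, and collects terms; the central term with coefficient $c=1$ arises from the reordering constants and is dictated solely by the normalization $k\,\delta_{k+\ell}$ of the current bracket. Since on each local field all the sums involved are locally finite by the previous paragraph, these rearrangements are legitimate, and the computation is word-for-word that of Lemma~\ref{lemma: sugawara Fock} (see also \cite[Theorem~4.10]{HKV} for the same calculation in the present discrete framework). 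I expect this bookkeeping of the central extension to be the only genuinely delicate point, since it is precisely where the truncation property must be invoked to license the reorderings that in the Fock-space setting would instead be guaranteed by the positive-energy grading.

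It remains to check $[\dvirL{n},\dvirBarL{m}] = 0$. On any given local field, $\dvirL{n}$ is a finite sum of compositions of holomorphic modes $\dHolCurrMode{\cdot}$, while $\dvirBarL{m}$ is a finite sum of compositions of antiholomorphic modes $\dAntHolCurrMode{\cdot}$. Since $[\dHolCurrMode{k},\dAntHolCurrMode{\ell}] = 0$ by Proposition~\ref{prop: comm relations}, every holomorphic factor commutes past every antiholomorphic factor, so the two Sugawara operators commute termwise. Hence $\dFields$ carries two commuting Virasoro representations of central charge $c=1$, as claimed.
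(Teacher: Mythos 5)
Your proposal is correct and follows essentially the same route as the paper: well-definedness via the truncation property of Lemma~\ref{lemma: truncation}, the Virasoro relations with $c=1$ by the standard Sugawara computation as in \cite[Theorem~4.10]{HKV}, and the mutual commutation reduced to $[\dHolCurrMode{k},\dAntHolCurrMode{\ell}]=0$ from Proposition~\ref{prop: comm relations}. You simply spell out the local-finiteness bookkeeping in more detail than the paper does.
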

\begin{proof}
The Virasoro commutation relations with $c=1$ for both $\dvirL{n}$
and $\dvirBarL{n}$ are shown as in
\cite[Theorem~4.10]{HKV}, using the 
truncation property of 
Lemma~\ref{lemma: truncation}.
The mutual commutation, $[\dvirL{n}, \dvirBarL{m}]=0$,
follows from the mutual commutation of the
corresponding Heisenberg modes,
$[\dHolCurrMode{k}, \dAntHolCurrMode{\ell}]=0$, proven in
Proposition~\ref{prop: comm relations}.
\end{proof}

A particular role is played by
the holomorphic and antiholomorphic Virasoro generators with index $n=0$,
\begin{align*}
\dvirL{0}
= \frac{1}{2} \dHolCurrMode{0}\dHolCurrMode{0}
    + \sum_{k=1}^\infty\dHolCurrMode{-k}\dHolCurrMode{k} 
\qquad \text{ and } \qquad
\dvirBarL{0}
= \frac{1}{2} \dAntHolCurrMode{0} \dAntHolCurrMode{0}
    + \sum_{k=1}^\infty \dAntHolCurrMode{-k} \dAntHolCurrMode{k}\, .
\end{align*}
In CFT the sum $\dvirL{0} + \dvirBarL{0}$
is the Hamiltonian (energy) operator, which in radial quantization serves
as the infinitesimal generator of scalings.
The difference $\dvirL{0} - \dvirBarL{0}$
is the spin operator, which serves as the infinitesimal
generator of rotations. For determining the needed renormalization of
fields in the scaling limit, the eigenvalues of $\dvirL{0} + \dvirBarL{0}$
will be crucial~--- these are called the scaling dimensions of the fields.
The pair of eigenvalues for both $\dvirL{0}$ and $\dvirBarL{0}$ carries
the information on both scaling dimension and spin;
we define
the space of \term{homogeneous} local fields of \term{conformal dimensions}
$\Delta,\barDelta\in\C$ as the joint eigenspace
\begin{align}\label{eq: homogeneous local fields}
\dHomFi{\Delta}{\bar\Delta}
\, := \, \ker\big(\dvirL{0}-\Delta\big) \, \cap \, \ker\big(\dvirBarL{0} - \barDelta\big)
    \, \subset \, \dFields \, .
\end{align}
While it was very easy to see that the Fock space~$\FullFock$
has a grading~\eqref{eq: Fock bigrading} by conformal dimensions,
at this stage we have not yet established the same conclusion about the
space~$\dFields$: diagonalizability of
$\dvirL{0}$ and $\dvirBarL{0}$ and finite-dimensionality of the
joint eigenspaces still need to be proven in order for the homogeneous
components~\eqref{eq: homogeneous local fields} to be complete and usable
decomposition of correlation equivalence classes of local fields of the
discrete GFF. We can, however, already give some examples of homogeneous fields,
because by Corollary~\ref{cor: easy inclusion}, the space $\dFields$ contains a
subspace $(\UHei \tens \AntUHei) \idField$ isomorphic to the Fock space.
\begin{ex}\label{ex: L0 eigenvalues}
The basis vectors~\eqref{eq: full Fock basis} of the Fock space
are eigenvectors, and correspondingly we have fields
\begin{align}\label{eq: scaling basis fields}
\dHolCurrMode{-k_m} \cdots \dHolCurrMode{-k_2} \, \dHolCurrMode{-k_1} \, 
\dAntHolCurrMode{-k'_{m'}} \cdots \dAntHolCurrMode{-k'_2} \, \dAntHolCurrMode{-k'_1} \idField
    \; \in \, \dHomFi{\Delta}{\bar{\Delta}} & \\ \nonumber
\text{ with conformal dimensions } \;
\Delta = \sum_{i=1}^m k_i
\; \text{ and } \;
\bar{\Delta} = \sum_{j=1}^{m'} k'_j. &
\end{align}
The most obvious special case is the identity field $\idField$:
the eigenvalue properties $\dvirL{0} \idField = 0$
and $\dvirBarL{0} \idField = 0$ in fact also follow easily
from~\eqref{eq: current primary property of id}
and we indeed have $\idField \in \dHomFi{0}{0}$.
\hfill$\diamond$
\end{ex}

\section{Linear local fields of the DGFF}
	\label{sec:linear}
	Our first main goal is to fully work out the structure of the
space~$\dFields := \dLocFi/\dNuFi$
of local fields of the DGFF.
We seek to show that it has the same structure as the space of local fields
of a CFT, i.e., that it is isomorphic to the Fock space~$\FullFock$.
A priori, the difficulty stems from the fact that 
$\dFields = \dLocFi/\dNuFi$ involves a quotient by null fields,
which are defined by an implicit condition that cannot be decided
by a straightforward method.

Recall, however, that one inclusion,
$\FullFock \isom (\UHei \tens \AntUHei) \idField \subset \dFields$,
was already obtained in Corollary~\ref{cor: easy inclusion}
by virtue of the Heisenberg algebra actions of
Proposition~\ref{prop: comm relations}. Establishing the remaining
opposite inclusion
$\dFields \, \subset \, (\UHei \tens \AntUHei) \idField$
now amounts to showing that
the whole space $\dFields$ is exhausted by linear combinations of
those fields that can be obtained from the
identity field~$\idField$ by repeated actions of the Heisenberg
generators~\eqref{eq: discrete current modes}.
A natural strategy for doing that is to exhibit concrete
upper bounds for the dimensions of some suitably chosen subspaces of~$\dFields$,
and showing that the upper bounds are already saturated within the
subspace $(\UHei \tens \AntUHei) \idField \subset \dFields$.
We will carry out such a strategy in two steps in this section and the next.
The present section achieves dimension upper bounds for \emph{linear} local
fields, i.e., those corresponding to homogeneous field polynomials of
degree one. The task in Section~\ref{sec:higher} will then be to reduce the
case of higher degree fields to such linear factors.

To achieve useful dimension bounds for linear local fields, we must still
refine to further subspaces: the space of all linear local fields
(even modulo null fields) remains infinite dimensional, so counting arguments
without refinement would be doomed. What turns out to work is to
construct a filtration of linear local fields by the finite-dimensional subspaces
with at most a given radius of support.

We start in Section~\ref{subsec: def linear local fields} by
defining linear local fields and stating the 
result (Theorem~\ref{thm: basis lin loc fields}) which gives an explicit basis
for them. Then, in
Section~\ref{subsec: filtration by radius}, we introduce the filtration
with finite-dimensional subspaces in which dimension counting is to be performed.
Here we also already record the dimension lower bounds, which follow from the
earlier observation $(\UHei \tens \AntUHei) \idField \subset \dFields$
and some observations about the radii of supports of some explicit
linear field polynomials. The main task of proving the matching upper bounds
for the dimensions is done in Section~\ref{subsec: dimension upper bounds},
and once this is done, we give the proof of the basis
theorem for linear local fields (Theorem~\ref{thm: basis lin loc fields}).
We conclude in Section~\ref{subsec: homogeneous linear local fields} with
simple remarks on what the basis theorem says about homogeneous
linear local fields.

\subsection{Linear local fields}
\label{subsec: def linear local fields}

By definition~\eqref{eq: space of field polynomials},
the space of field polynomials
\begin{align*}
\dLocFi = \C[\field(\zu) \, \colon \zu\in\Z^2] 
\end{align*}
is the free commutative (polynomial) algebra generated by
the symbols~$\field(\zu)$.
The space of \term{linear field polynomials} is now defined to be the subspace
\emph{spanned} by these symbols,
\begin{align}\label{eq: space of linear field polynomials}
\dLocLinFi := \spn_\C \set{ \,\field(\zu)\,\colon \zu\in\Z^2\, }
    \, \subset \, \dLocFi\, .
\end{align}
The space of \term{linear local fields} is then again
defined by identifying field polynomials which differ by a null field,
\begin{align}\label{eq: space of correlation equivalent linear local fields}
\dLinFields := \dLocLinFi / \dNuFi \, \subset \, \dFields \, .
\end{align}

\begin{ex}\label{ex: corr-equiv lin loc fi}
Recall from Example \ref{ex: id primary} that, for $k \in \Zpos$, we can write
$\dHolCurrMode{-k}\idField = \dRep{-k} + \Null$ and
$\dAntHolCurrMode{-k}\idField = \dRepBar{-k} + \Null$
with linear field polynomials $\dRep{-k}, \dRepBar{-k} \in \dLocLinFi$
given explicitly as in~\eqref{eq: representative for Jk}.
Therefore, 
\begin{align*}
\dHolCurrMode{-k}\idField, \, \dAntHolCurrMode{-k}\idField \, \in \, \dLinFields
\end{align*}
are linear local fields.
\hfill~$\diamond$
\end{ex}

Corollary~\ref{cor: easy inclusion},
together with the fact that~\eqref{eq: full Fock basis} is a basis of the Fock space,
implies that the linear local fields given in Example~\ref{ex: corr-equiv lin loc fi}
are linearly independent.
The main goal of this section is to prove that
they in fact form a basis.
\begin{thm}\label{thm: basis lin loc fields}
The set
\begin{align}\label{eq: basis of lin loc fields}
\set{\dHolCurrMode{-k}\idField \; \big| \; k\in\Zpos} \cup
\set{\dAntHolCurrMode{-k}\idField \; \big| \; k\in\Zpos}
\end{align}
is a basis of the space $\dLinFields$ of linear local fields.
\end{thm}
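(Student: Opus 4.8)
The plan is to prove the two defining properties of a basis separately: linear independence, which is essentially immediate, and spanning, which carries all the weight. Linear independence comes for free from the Fock embedding: by Corollary~\ref{cor: easy inclusion} the subrepresentation $(\UHei\tens\AntUHei)\idField \subset \dFields$ is isomorphic to $\FullFock$ with $\idField \mapsto \FockId$, so $\dHolCurrMode{-k}\idField$ and $\dAntHolCurrMode{-k}\idField$ correspond under this isomorphism to the basis vectors $\HeiJ{-k}\FockId$ and $\AntHeiJ{-k}\FockId$ of~\eqref{eq: full Fock basis}. These are pairwise distinct basis vectors of $\FullFock$, so the candidate fields of~\eqref{eq: basis of lin loc fields} are linearly independent in $\dLinFields$.

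For spanning I would work through a finite filtration. Let $\dLinFieldsRad r \subset \dLinFields$ be the subspace of linear local fields admitting a representative supported in the ball $\Ball(r)$; since every field polynomial has finite support, $\dLinFields = \bigcup_{r\ge 1}\dLinFieldsRad r$, and it suffices to show $\dLinFieldsRad r \subset \spn$ of the candidate set for each $r$. The key is to squeeze $\dmn \dLinFieldsRad r$ between matching bounds, both equal to $4r-1$. For the upper bound, take $F \in \dLinFieldsRad r$ represented on the primal vertices of $\Ball(r)$ and invoke the domain Markov property of the DGFF for the subdomain $\Ball(r)$: conditioning on the boundary $\bdry\Ball(r)$ (the $4r$ vertices with $\|\zu\|_1 = r$) replaces each interior value by its discrete harmonic extension, up to an independent field vanishing on $\bdry\Ball(r)$ that is uncorrelated with everything outside $\Ball(r)$. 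Hence each generator is correlation-equivalent to a boundary combination, $\field(\zu) \equiv \sum_{\zubis \in \bdry\Ball(r)}(\omega_\zu(\zubis) - \omega_0(\zubis))\,\field(\zubis)$, with $\omega$ the intrinsic harmonic measure of $\Ball(r)$; as the coefficient sums $\sum_\zubis(\omega_\zu(\zubis)-\omega_0(\zubis))$ vanish, every element of $\dLinFieldsRad r$ has a representative in the zero-sum subspace of $\spn\{\field(\zubis)\,\colon\,\zubis \in \bdry\Ball(r)\}$, of dimension $4r-1$. Combinatorially this is exactly the reduction modulo the interior Laplacian nulls~\eqref{eq: Laplacian nulls} of Example~\ref{ex: Laplacian nulls}.

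For the matching lower bound I would exhibit $4r-1$ linearly independent candidate-combinations inside $\dLinFieldsRad r$. From the explicit representative $\dRep{-k}$ of Example~\ref{ex: id primary} and Remark~\ref{rmk: pole extension} one gets $\RadSupp{\dHolCurrMode{-k}\idField} \le \lfloor k/2\rfloor + 1$, and likewise for the antiholomorphic modes, so the $2(2r-1)$ fields $\dHolCurrMode{-k}\idField,\dAntHolCurrMode{-k}\idField$ with $1\le k\le 2r-1$ already lie in $\dLinFieldsRad r$. These fall exactly one short of $4r-1$; the missing direction is supplied by the top even mode $k=2r$, whose individual radius bound is only $r+1$, but for which the \emph{symmetric} combination $\dHolCurrMode{-2r}\idField + \dAntHolCurrMode{-2r}\idField$ has its long-range cross-derivative part cancel and so drops to radius $r$. (For $r=1$ this combination is the discrete $\partial_x^2-\partial_y^2$ field $\field(1)+\field(-1)-\field(\ii)-\field(-\ii)$, visibly supported in $\Ball(1)$.) Built from the Fock vectors $\HeiJ{-k}\FockId,\AntHeiJ{-k}\FockId$ with $k\le 2r$, these $4r-1$ fields are linearly independent.

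Combining the two bounds forces $\dmn\dLinFieldsRad r = 4r-1$, so the $4r-1$ candidate-combinations form a basis of $\dLinFieldsRad r$; in particular $\dLinFieldsRad r \subset \spn$ of the candidate set. Taking the union over $r$ yields spanning, and together with linear independence this proves Theorem~\ref{thm: basis lin loc fields}. I expect the saturation step to be the main obstacle: verifying the radius reduction of $\dHolCurrMode{-2r}\idField + \dAntHolCurrMode{-2r}\idField$, and more generally pinning down small-radius representatives of the top modes, requires explicit computation with the discrete monomials $\zu^{[-2r]}$ through discrete Stokes' and integration-by-parts, and it is precisely here that the modified monomials of Proposition~\ref{prop: monomials} and the cancellation of the cross-derivative terms must be controlled.
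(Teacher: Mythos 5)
Your strategy is the paper's own: linear independence from the Fock embedding of Corollary~\ref{cor: easy inclusion}, and spanning by squeezing $\dmn\big(\dLinFieldsRad{r}\big)$ between matching bounds $4r-1$, with the upper bound obtained from harmonic-measure representatives on the $4r$ boundary vertices plus the one relation coming from the nullity of $\field(0)$ (your ``zero-sum subspace'' formulation is equivalent to the paper's spanning set of $4r$ boundary fields minus one relation, and your domain Markov heuristic is made rigorous in the paper via the divergence theorem and the Laplacian null fields, exactly as you note).

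There is, however, one concrete error in the lower bound: the degree-$2r$ combination that drops to radius $r$ is the \emph{difference} $\dHolCurrMode{-2r}\idField - \dAntHolCurrMode{-2r}\idField$, not the sum. As in Lemma~\ref{lemma: -2r and -2r bar}, the two modes have representatives $\frac{\pm\ii}{(2r-1)!}\big[\tfrac12\,\gdee^{2r-1}\pdee\field(0)+\tfrac18\sum_{\zubis=\frac{\pm1\pm\ii}{2}}\gdeebar^{2r-1}\pdeebar \text{ resp. } \gdee^{2r-1}\pdee\,\field(\zubis)\big]$, and the corner terms acquire a relative sign flip between the two chiralities (odd number of vertical half-steps) while the origin term does not. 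Hence in the difference the four dual-corner contributions cancel and only $\gdee^{2r-1}\pdee\field(0)$ survives, supported in $\Ball_\primary(r)$; in the sum it is the origin term that cancels and the corner terms that survive, giving support only in $\Ball(r+1)$. Your own $r=1$ example exposes the slip: $\field(1)+\field(-1)-\field(\ii)-\field(-\ii)$ equals $-2\ii\,\big(\dHolCurrMode{-2}\idField-\dAntHolCurrMode{-2}\idField\big)$ modulo null fields. Moreover, the sum \emph{cannot} lie in $\dLinFieldsRad{r}$: if both the sum and the difference did, then $\dHolCurrMode{-2r}\idField$ and $\dAntHolCurrMode{-2r}\idField$ would separately lie there, and together with the $4r-2$ lower modes this would give $4r$ independent elements of $\dLinFieldsRad{r}$, contradicting your own upper bound $4r-1$. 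So with the plus sign the dimension count does not close; with the minus sign the argument goes through exactly as in the paper.
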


We moreover obtain a complete characterization of linear null fields,
which will also be used in Section~\ref{sec:higher}. Among linear
null fields, the explicit ones given in Examples~\ref{ex: Laplacian nulls}
and~\ref{ex: trivialer null field} are all there is.

\begin{coro}\label{coro: linear nulls}
The set
\begin{align*}
\set{ \gLapl \field(\zu) \; \big| \; \zu\in\Z^2 } \cup \set{ \field(0) }
\end{align*}
spans the subspace $\dLocLinFi\cap\dNuFi$ of null linear local fields.
\end{coro}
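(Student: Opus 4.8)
The plan is to establish the two inclusions of the asserted spanning statement, writing $V := \spn\big(\set{\gLapl\field(\zu) : \zu\in\Z^2}\cup\set{\field(0)}\big)$ for the span of the proposed generators. That $V\subseteq\dLocLinFi\cap\dNuFi$ is immediate and purely a matter of bookkeeping: $\field(0)$ is null by Example~\ref{ex: trivialer null field}, each $\gLapl\field(\zu)$ is null by Example~\ref{ex: Laplacian nulls}, and all of these are visibly linear field polynomials. The entire content of the corollary is therefore the reverse inclusion $\dLocLinFi\cap\dNuFi\subseteq V$, i.e.\ that there are no further linear null fields, and I would extract this from the dimension bookkeeping already available in the filtration by radius of support.

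Fix $r\in\Zpos$ and restrict the canonical projection $\pi\colon\dLocLinFi\to\dLinFields$ to the finite-dimensional subspace $\dLocLinFiRad r = \spn\set{\field(\zu):\zu\in\Ball_\primary(r)}$. By construction its image is $\dLinFieldsRad r$ and its kernel is exactly $\dLocLinFiRad r\cap\dNuFi$, the linear null fields admitting a representative supported in $\Ball(r)$. Rank--nullity gives $\dmn\big(\dLocLinFiRad r\cap\dNuFi\big) = \dmn\dLocLinFiRad r - \dmn\dLinFieldsRad r$. Into this I would feed the filtration dimension $\dmn\dLinFieldsRad r = 4r-1$, established while proving Theorem~\ref{thm: basis lin loc fields} (the upper bound of Lemma~\ref{lem: dimension upper bounds in filtration} together with the matching lower bound of Lemma~\ref{lem: dimension lower bounds in filtration}). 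Since the boundary shell $\set{\zu : \norm{\zu}_1 = r}$ consists of exactly $4r$ primal vertices, we have $\dmn\dLocLinFiRad r - \dmn\dLocLinFiRad{r-1} = 4r$, and hence $\dmn\big(\dLocLinFiRad r\cap\dNuFi\big) = \dmn\dLocLinFiRad{r-1} + 1$.

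Next I would produce exactly that many manifestly independent null fields inside $\dLocLinFiRad r\cap\dNuFi$, namely the interior Laplacians $\set{\gLapl\field(\zu) : \zu\in\Ball_\primary(r-1)}$ together with $\field(0)$; the former land in $\dLocLinFiRad r$ because $\gLapl\field(\zu)$ is supported on $\zu$ and its four neighbours, which all lie in $\Ball(r)$ once $\zu\in\Ball(r-1)$. Their independence is elementary: a vanishing combination $\sum_\zu c_\zu\gLapl\field(\zu) = 0$ forces the finitely supported coefficient function $\zu\mapsto c_\zu$ to be discrete harmonic on $\Z^2$, hence zero by the maximum principle; and $\field(0)$ cannot lie in the span of the $\gLapl\field(\zu)$, since each Laplacian stencil has vanishing coefficient sum whereas $\field(0)$ does not. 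As this family has cardinality $\dmn\dLocLinFiRad{r-1} + 1$, it is a basis of $\dLocLinFiRad r\cap\dNuFi$, so $\dLocLinFiRad r\cap\dNuFi = \spn\big(\set{\gLapl\field(\zu):\zu\in\Ball_\primary(r-1)}\cup\set{\field(0)}\big)\subseteq V$.

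Finally I would pass to the limit $r\to\infty$: any linear null field has finite support, hence lies in $\dLocLinFiRad r\cap\dNuFi$ for some $r$, and the previous step places it in $V$. This yields $\dLocLinFi\cap\dNuFi\subseteq V$ and finishes the proof. The single load-bearing external input is the filtration dimension $\dmn\dLinFieldsRad r = 4r-1$; granting it, the a priori worrisome possibility that clever cancellations among non-interior Laplacians near $\partial\Ball(r)$ might create null fields beyond those listed is ruled out automatically by the exact dimension match, so no separate analysis of the boundary shell is required. The main work, and the only genuinely nontrivial step, is thus the dimension count imported from Theorem~\ref{thm: basis lin loc fields}.
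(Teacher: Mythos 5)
Your proposal is correct and follows essentially the same route as the paper: reduce to the finite-dimensional pieces $\dLocLinFiRad{r}$, compute $\dmn\big(\dNuFi\cap\dLocLinFiRad{r}\big)=|\Ball_\primary(r)|-(4r-1)=|\DomBallInt{r}|+1$ from Lemma~\ref{lem: dimension upper bounds in filtration}, and match this against the manifestly independent null fields $\gLapl\field(\zu)$, $\zu\in\DomBallInt{r}$ (which equals your $\Ball_\primary(r-1)$), together with $\field(0)$. The only difference is that you spell out the independence argument (discrete harmonicity of the coefficient function plus the zero-coefficient-sum observation) that the paper dismisses as ``standard considerations with the discrete Laplacians''.
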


\subsection{Filtration by radius}
\label{subsec: filtration by radius}
We now present the filtration which enables a dimension counting
argument that is the key to proving Theorem~\ref{thm: basis lin loc fields}.

Consider the \term{discrete balls} of radii~$r \in \Zpos$
\begin{align}\label{eq: discrete ball}
\Ball_\primary(r)
\, := \, \set{ \zu \in \bZ^2 \; \Big| \; \|\zu\|_1 \le r } ,
\end{align}
with respect to the Manhattan norm $\norm{z}_1 := \re(z)+\im(z)$.
In the space~\eqref{eq: space of linear field polynomials} of linear field polynomials,
the subspace of those fields whose support is in~$\Ball_\primary(r)$ is
\begin{align}\label{eq: space of linear local fields of given radius}
\dLocLinFiRad{r} := \spn_\C \set{\,\field(\zu)\,\colon \zu \in \Ball_\primary(r)\,}
    \, \subset \, \dLocLinFi \, .
\end{align}
These subspaces are evidently finite-dimensional since the subset $\Ball_\primary(r) \subset \bZ$
is finite,
\begin{align*}
\dmn \big( \dLocLinFiRad{r} \big) \le |\Ball_\primary(r)| = 2 r^2 + 2 r + 1
\qquad \text{ for } r \in \Zpos .
\end{align*}
Denote the corresponding subspace of the
quotient~\eqref{eq: space of correlation equivalent linear local fields} by
\begin{align}\label{eq: space of corr equiv linear local fields of given radius}
\dLinFieldsRad{r} := \dLocLinFiRad{r}/\dNuFi \subset \dLinFields \, .
\end{align}
These subspaces form a filtration of~$\dLinFields$,
\begin{align*}
\dLinFieldsRad{1} \subset \dLinFieldsRad{2}
    \subset \dLinFieldsRad{3}
    \subset \cdots \subset \dLinFields
\qquad \text{ and } \qquad
\sum_{r \in \Zpos} \dLinFieldsRad{r} \, = \, \dLinFields \, ,
\end{align*}
because  the discrete balls form an increasing sequence of (finite) subsets
that exhaust the square grid~$\bZ^2$,
\begin{align*}
\Ball_\primary(1) \subset \Ball_\primary(2)
    \subset \Ball_\primary(3)
    \subset \cdots \subset \bZ^2
\qquad \text{ and } \qquad
\bigcup_{r \in \Zpos} \Ball_\primary(r) = \bZ^2 \, .
\end{align*}

\begin{ex}\label{ex: modes in filtration}
Recall from Example \ref{ex: id primary} that, for $k\in\Zpos$, the
linear field polynomials $\dRep{-k}$ and $\dRepBar{-k}$ are supported
in $\Ball_\primary(r_k)$ with $r_k \coloneqq \lfloor k/2 \rfloor + 1$.
By Example \ref{ex: corr-equiv lin loc fi}, we therefore see that
$\dHolCurrMode{-k}\idField, \dAntHolCurrMode{-k}\idField \in \dLinFieldsRad{r_k}$. 
\hfill~$\diamond$
\end{ex}

Let us give one slightly more subtle example in the form of a lemma.
In this example the precise form (slightly different from \cite{HKV}) of our monomials
defined in Section~\ref{subsec: discrete monomials} again becomes important.

\begin{lemma}\label{lemma: -2r and -2r bar}
For $r\in\Zpos$, we have
\begin{align*}
\dHolCurrMode{-2r}\idField - \dAntHolCurrMode{-2r}\idField
    \, \in \, \dLinFieldsRad{r}\,.
\end{align*}
\end{lemma}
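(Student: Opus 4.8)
The plan is to compare the canonical representatives of the two fields and exhibit a cancellation in the outermost layer of their supports. By Example~\ref{ex: id primary} we have $\dHolCurrMode{-2r}\idField = \dRep{-2r} + \Null$ and $\dAntHolCurrMode{-2r}\idField = \dRepBar{-2r} + \Null$, where
\begin{align*}
\dRep{-2r} = \frac{\ii}{2\pi}\sum_{\zu_\medial\in\Poles{-2r}_\medial}\gdeebar\zu_\medial^{[-2r]}\,\pdee\field(\zu_\medial),
\qquad
\dRepBar{-2r} = \frac{-\ii}{2\pi}\sum_{\zu_\medial\in\Poles{-2r}_\medial}\gdee\cconj{\zu}_\medial^{[-2r]}\,\pdeebar\field(\zu_\medial).
\end{align*}
By Remark~\ref{rmk: pole extension} we have $\Poles{-2r}_\medial\subset\Ball_\medial(r+\tfrac12)$, so a priori each representative only gives support in $\Ball_\primary(r+1)$; the potentially offending radius-$(r+1)$ primal vertices arise solely as neighbors, through $\pdee\field$, of the outermost medial vertices at Manhattan distance $r+\tfrac12$. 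The first step is to record two elementary identities. From the definitions~\eqref{eq: primal lattice dee}, \eqref{eq: primal lattice deebar} one reads off $\pdeebar\field(\zu_\medial) = \LatticeSign(\zu_\medial)\,\pdee\field(\zu_\medial)$ at every medial vertex (the two primal derivatives agree on horizontal edges and differ by a sign on vertical edges); and from the reflection symmetry $\cconj{\zu}^{[n]} = \cconj{\zu^{[n]}}$ of Proposition~\ref{prop: monomials} together with the conjugation relation $\gdee\cconj{f} = \cconj{\gdeebar f}$ one gets $\gdee\cconj{\zu}_\medial^{[-2r]} = \cconj{\gdeebar\zu_\medial^{[-2r]}}$.

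Using these, I would merge the two sums into a single sum supported on $\Poles{-2r}_\medial$,
\begin{align*}
\dHolCurrMode{-2r}\idField - \dAntHolCurrMode{-2r}\idField
= \frac{\ii}{2\pi}\sum_{\zu_\medial\in\Poles{-2r}_\medial}
\Big(\gdeebar\zu_\medial^{[-2r]} + \LatticeSign(\zu_\medial)\,\cconj{\gdeebar\zu_\medial^{[-2r]}}\Big)\,\pdee\field(\zu_\medial) + \Null,
\end{align*}
and then argue that the coefficient in parentheses conspires, on the outermost medial layer, to kill all radius-$(r+1)$ contributions. Here the evenness of the index is essential: the rotational symmetry $(\ii\zu)^{[-2r]} = \ii^{-2r}\zu^{[-2r]} = (-1)^r\zu^{[-2r]}$ (special to even powers) together with the reflection symmetry fixes the phases of $\gdeebar\zu_\medial^{[-2r]}$ on the symmetry-related outer medial vertices, and the precise normalization of property~5 of Proposition~\ref{prop: monomials} — the single modification we made to the monomials — is what makes the combination $\gdeebar\zu_\medial^{[-2r]} + \LatticeSign(\zu_\medial)\cconj{\gdeebar\zu_\medial^{[-2r]}}$ behave correctly on that layer. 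Whatever residual radius-$(r+1)$ terms survive, I would remove by adding suitable multiples of the null fields $\gLapl\field(\zu)$ of Example~\ref{ex: Laplacian nulls} at primal sites $\zu$ with $\|\zu\|_1\le r$ (each such null field is supported in $\Ball_\primary(r+1)$ and touches the outer layer), exploiting that this does not change the class in $\dFields$. The outcome is a representative of $\dHolCurrMode{-2r}\idField - \dAntHolCurrMode{-2r}\idField$ supported in $\Ball_\primary(r)$, which is exactly the assertion $\dHolCurrMode{-2r}\idField - \dAntHolCurrMode{-2r}\idField\in\dLinFieldsRad{r}$.

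The main obstacle is precisely the outer-layer analysis in the last step: for a single mode, no such support reduction is possible (the bound $r_k=\lfloor k/2\rfloor+1$ from Example~\ref{ex: modes in filtration} is sharp), so the cancellation is a genuine feature of the \emph{difference} at even index and cannot be seen abstractly. Establishing it honestly requires tracking the values of $\gdeebar\zu_\medial^{[-2r]}$ at distance $r+\tfrac12$ — equivalently, via $\gdeebar\zu_\medial^{[-2r]} = \tfrac{1}{4(1-2r)}\gLapl\zu_\medial^{[-(2r-1)]}$ from property~2 and the Laplacian factorization, the boundary values of the pole set of the odd monomial $\zu^{[-(2r-1)]}$ — and checking, layer by layer, that the surviving coefficients assemble into outer parts of the Laplacian null fields $\gLapl\field$. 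I expect this to be the one place where the modified coefficients of property~5 are indispensable, mirroring their role in the proof of the mutual commutation in Proposition~\ref{prop: comm relations}; everything else is bookkeeping with the two displayed identities and the discrete Stokes formula.
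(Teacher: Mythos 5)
Your setup is sound as far as it goes: the representatives $\dRep{-2r}$, $\dRepBar{-2r}$, the identities $\pdeebar\field(\zu_\medial) = \LatticeSign(\zu_\medial)\,\pdee\field(\zu_\medial)$ and $\gdee\cconj{\zu}_\medial^{[-2r]} = \cconj{\gdeebar\zu_\medial^{[-2r]}}$, and the merged sum with coefficient $\gdeebar\zu_\medial^{[-2r]} + \LatticeSign(\zu_\medial)\,\cconj{\gdeebar\zu_\medial^{[-2r]}}$ are all correct. But the proof stops exactly where the lemma begins: you assert that this coefficient ``conspires'' to kill the radius-$(r+1)$ contributions on the outermost medial layer, and that ``whatever residual terms survive'' can be absorbed into Laplacian null fields, without computing anything. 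That cancellation \emph{is} the content of the lemma --- everything before it is bookkeeping --- and you acknowledge as much in your final paragraph. As written there is no argument that the outer-layer coefficients vanish or assemble into outer parts of $\gLapl\field$; verifying this from the values of $\gdeebar\zu_\medial^{[-2r]}$ at Manhattan distance $r+\tfrac12$ is a genuine computation that you have not carried out, so the proposal has a real gap at its only essential step.

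The paper avoids this computation by reordering the steps: it first integrates by parts $2r-1$ times inside the defining contour integral, reducing the weight $\zu^{[-2r]}$ to $\zu^{[-1]}$ at the cost of replacing $\pdee\field$ by $\gdee^{2r-1}\pdee\field$. After Stokes' formula the only monomial input is $\gdeebar\zu^{[-1]}$, which property~5 of Proposition~\ref{prop: monomials} gives explicitly and which contributes (on $\ZDiamond$) at just the five points $0$ and $\frac{\pm1\pm\ii}{2}$. The difference $\dHolCurrMode{-2r}\idField - \dAntHolCurrMode{-2r}\idField$ then reduces to a parity statement about the compositions $\gdee^{2r-1}\pdee$ versus $\gdeebar^{2r-1}\pdeebar$: their path weights agree at primal vertices and differ by a sign at dual vertices, so the four dual-vertex terms cancel and one is left with the single representative $\frac{\ii}{(2r-1)!}\,\gdee^{2r-1}\pdee\field(0)$, manifestly supported in $\Ball_\primary(r)$ since it uses exactly $2r$ half-steps from the origin. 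If you want to complete your route, the cleanest repair is to import exactly this integration-by-parts reduction; evaluating $\gdeebar\zu_\medial^{[-2r]}$ on the outer layer directly (via $\gdeebar\zu^{[-2r]} = \tfrac{-1}{(2r-1)!}\,\gdee^{2r-1}\gdeebar\zu^{[-1]}$) is possible in principle but amounts to redoing the same computation in a less transparent order.
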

\begin{proof}
Take a corner contour $\gamma$ sufficiently large for the action of $\dHolCurrMode{-2r}$.
Using integration by parts~\eqref{eq: integration by parts} $2r-1$ times,
the properties of the discrete monomials from Proposition~\ref{prop: monomials},
Stokes' formula~\eqref{eq: discrete Stokes},
and the null fields of Example~\ref{ex: Laplacian nulls},
we calculate
\begin{align*} 
\dHolCurrMode{-2r}\idField
= \; & \frac{1}{2\pi} \dcoint{\gamma} \zu_{\diamond}^{[-2r]} \pdee \field (\zu_\medial) \dd{\zu} \,+\, \Null \\
= \; & \frac{1}{2\pi(2r-1)!}
    \dcoint{\gamma} \zu_{\medial}^{[-1]} \gdee^{2r-1}\pdee \field (\zu_\diamond) \dd{\zu} \,+\, \Null \\
= \; & \frac{\ii}{2\pi(2r-1)!} \sum_{\zu_\diamond\in\interior_\diamond\gamma}
    \gdeebar \zu_{\diamond}^{[-1]} \gdee^{2r-1}\pdee \field (\zu_\diamond) \,+\, \Null \\
= \; & \frac{\ii}{(2r-1)!} \Bigg[ \frac{1}{2}\,\gdee^{2r-1}\pdee \field (0)
        + \frac{1}{8} \sum_{\zubis=\frac{\pm 1\pm \ii}{2}} \gdee^{2r-1}\pdee \field (\zubis) \Bigg]
    \,+\, \Null\,.
\end{align*}
Similarly we get
\begin{align*} 
\dAntHolCurrMode{-2r}\idField
= \; & \frac{-\ii}{(2r-1)!} \Bigg[ \frac{1}{2}\,\gdeebar^{2r-1}\pdeebar \field (0)
        + \frac{1}{8} \sum_{\zubis=\frac{\pm 1\pm \ii}{2}} \gdeebar^{2r-1}\pdeebar \field (\zubis) \Bigg]
    \,+\, \Null\,.
\end{align*}
We now make some observations about the compositions~$\gdee^{2r-1}\pdee$
and~$\gdeebar^{2r-1}\pdeebar$ of the finite difference operators appearing in the above formulas.
The expression $\gdee^{2r-1}\pdee \field (\zubis)$ is a linear combination of~$\field (\zu)$,
with contributions from paths from~$\zubis \in \ZDiamond$ to the primary vertex~$\zu \in \ZPrimary$ which use $2r-1$
arbitrary half lattice unit steps and then one half lattice unit step along a half-edge leading to~$\zu$,
with coefficients that are products of the weights in the finite difference
operators~\eqref{eq: discrete Wirtinger derivatives}, \eqref{eq: primal lattice dee}
and~\eqref{eq: primal lattice deebar}
corresponding to the steps.
If~$\zubis \in \ZPrimary$ is a primal vertex, then necessarily an even number of vertical
half-steps is used by the path, and an even number of weight factors are imaginary, so the
coefficient is real. By contrast, if~$\zubis \in \ZDual$ is a dual vertex, then an odd number of
vertical half-steps is used, and the coefficient is imaginary. Moreover, the only difference in
the weights between~$\gdee^{2r-1}\pdee$ and $\gdeebar^{2r-1}\pdeebar$ is a different sign for the
vertical steps, so for $\zubis \in \ZPrimary$ we have
$\gdeebar^{2r-1}\pdeebar \field (\zubis) = \gdee^{2r-1}\pdee \field (\zubis)$,
and for $\zubis \in \ZDual$ we have
$\gdeebar^{2r-1}\pdeebar \field (\zubis) = - \gdee^{2r-1}\pdee \field (\zubis)$.

The above considerations of coefficients in compositions of finite difference operators
show that in~$\dHolCurrMode{-2r}\idField - \dAntHolCurrMode{-2r}\idField$
there is a
cancellation of the terms corresponding to $\zubis=\frac{\pm 1\pm \ii}{2} \in \ZDual$.
More precisely, $\dHolCurrMode{-2r}\idField - \dAntHolCurrMode{-2r}\idField$ has
a representative
\begin{align*}
\frac{\ii}{(2r-1)!} \gdee^{2r-1}\pdee \field (0) \, \in \, \dLocLinFiRad{r} \, ,
\end{align*}
where the sufficiency of the radius of support~$r$ is a consequence of using 
exactly $2r$ half-steps starting from the origin~$0 \in \ZPrimary$. This proves
$\dHolCurrMode{-2r}\idField - \dAntHolCurrMode{-2r}\idField \in \dLinFieldsRad{r}$.
\end{proof}

From the explicit examples above, we get lower bounds for the dimension of the
subspaces in the filtration.
\begin{lemma}\label{lem: dimension lower bounds in filtration}
For every $r \in \Zpos$, we have
\begin{align*}
\dmn \big( \dLinFieldsRad{r} \big) \ge 4 r - 1 .
\end{align*}
\end{lemma}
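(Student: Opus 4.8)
The plan is to prove the lower bound by exhibiting $4r-1$ explicit elements of $\dLinFieldsRad{r}$ and verifying that they are linearly independent. The key leverage is Corollary~\ref{cor: easy inclusion}, which identifies the subrepresentation $(\UHei \tens \AntUHei)\idField \subset \dFields$ with the full Fock space $\FullFock$. This means that linear independence of our candidate fields can be checked after transporting them into the Fock space, where the basis~\eqref{eq: full Fock basis} and the bigrading~\eqref{eq: Fock bigrading} make the verification immediate.

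First I would collect the fields coming directly from the Heisenberg modes. By Example~\ref{ex: modes in filtration}, for each $k$ with $r_k = \lfloor k/2\rfloor + 1 \le r$, equivalently for each $k \in \{1,2,\ldots,2r-1\}$, both
\begin{align*}
\dHolCurrMode{-k}\idField \qquad \text{ and } \qquad \dAntHolCurrMode{-k}\idField
\end{align*}
lie in $\dLinFieldsRad{r}$. This already produces $2(2r-1) = 4r-2$ local fields inside $\dLinFieldsRad{r}$. To obtain one further field, I would invoke Lemma~\ref{lemma: -2r and -2r bar}: although neither $\dHolCurrMode{-2r}\idField$ nor $\dAntHolCurrMode{-2r}\idField$ individually is known to have radius of support $\le r$ (the naive radius being $r+1$), their difference
\begin{align*}
\dHolCurrMode{-2r}\idField - \dAntHolCurrMode{-2r}\idField
\end{align*}
does lie in $\dLinFieldsRad{r}$. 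Adjoining this to the previous list yields a set of $4r-1$ elements of $\dLinFieldsRad{r}$.

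Finally I would check linear independence. Under the isomorphism of Corollary~\ref{cor: easy inclusion}, these fields correspond to $\HeiJ{-k}\FockId$ and $\AntHeiJ{-k}\FockId$ for $k=1,\ldots,2r-1$, together with $\HeiJ{-2r}\FockId - \AntHeiJ{-2r}\FockId$. The first $4r-2$ are distinct vectors of the Fock basis~\eqref{eq: full Fock basis}, occupying pairwise distinct bigraded components $\ChiralFock_k \otimes \AntiChiralFock_0$ and $\ChiralFock_0 \otimes \AntiChiralFock_k$ of~\eqref{eq: Fock bigrading}; the last vector is a nonzero combination of the two further basis vectors $\HeiJ{-2r}\FockId$ and $\AntHeiJ{-2r}\FockId$, neither of which appears among the first $4r-2$ (since $2r \notin \{1,\ldots,2r-1\}$). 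Hence all $4r-1$ vectors are linearly independent in $\FullFock$, so the corresponding local fields are linearly independent in $\dLinFieldsRad{r}$, giving $\dmn \big( \dLinFieldsRad{r} \big) \ge 4r-1$.

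The genuinely subtle input is Lemma~\ref{lemma: -2r and -2r bar}, namely that the difference of the two degree-$2r$ chiral fields has strictly smaller radius of support than either field alone. This is precisely where the modified normalization of the discrete Laurent monomials enters, through the cancellation of the dual-vertex contributions at $\zubis = \frac{\pm1\pm\ii}{2}$. Without this extra field the argument only reaches $4r-2$, so the careful bookkeeping of which modes land in which filtration level is the main point to get right; combined with the matching upper bound of Lemma~\ref{lem: dimension upper bounds in filtration}, it pins down $\dmn \big( \dLinFieldsRad{r} \big) = 4r-1$ exactly.
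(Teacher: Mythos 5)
Your proposal is correct and follows essentially the same route as the paper's proof: the same $4r-2$ fields from Example~\ref{ex: modes in filtration}, the same extra field $\dHolCurrMode{-2r}\idField - \dAntHolCurrMode{-2r}\idField$ from Lemma~\ref{lemma: -2r and -2r bar}, and linear independence via Corollary~\ref{cor: easy inclusion}. Your explicit verification of independence through the bigrading~\eqref{eq: Fock bigrading} is a slightly more detailed spelling-out of what the paper leaves implicit, but it is not a different argument.
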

\begin{proof}
Fix $r \in \Zpos$. For $1 \le k \le 2r - 1$ we have
$\dHolCurrMode{-k}\idField, \dAntHolCurrMode{-k}\idField \in \dLinFieldsRad{r}$
by Example~\ref{ex: modes in filtration}. This gives $2(2r-1) = 4r - 2$ fields in~$\dLinFieldsRad{r}$.
Lemma~\ref{lemma: -2r and -2r bar} gives one more,
$\dHolCurrMode{-2r} \idField - \dAntHolCurrMode{-2r} \idField \in \dLinFieldsRad{r}$.
Corollary~\ref{cor: easy inclusion}
implies that these $4r-1$
fields are linearly independent.
\end{proof}

\subsection{Dimension upper bounds}
\label{subsec: dimension upper bounds}

We now need dimension upper bounds for the subspaces in the
filtration, which match the lower bounds of
Lemma~\ref{lem: dimension lower bounds in filtration}.
Starting from the obvious spanning set
$\set{\field(\zu) + \Null \; \big| \; \zu \in \Ball_\primary(r)}$
for~$\dLinFieldsRad{r}$,
which has $|\Ball_\primary(r)| = 2r^2 + 2r + 1 \gg 4 r - 1$ elements,
we use a method based on discrete harmonic measures and divergence theorem,
which yields a smaller set of canonical representatives supported on
the boundary of the lattice ball.
Besides concretely yielding the desired dimension upper bound,
these could be seen conceptually as a discrete version of the state-field
correspondence in CFT.\footnote{In, e.g., Segal's axiomatization of CFT, states
live on circular boundary components of a bordered Riemann surface domain.
Thus representing an arbitrary local field inside a
lattice ball in terms of only the field values on the boundary of the ball
corresponds to forming an incoming ``state'' on a boundary
component of the complement of the ball in the discrete domain.
For a complete analogue of the state-field correspondence,
explicit representatives of linear local fields provided in this section
should further be combined with the results of Section~\ref{sec:higher}
on higher degree local fields to represent general states.}
From a practical point of view, 
the explicit representatives also naturally
enable symbolic computer computations in the
space of local fields.\footnote{A priori,
it would not be straightforward to perform computation in the
abstract quotient $\dFields = \dLocFi/\dNuFi$.}

In order to describe the harmonic measures literally
in terms of the discrete Laplacian operators
and Green's functions introduced in Section~\ref{subsec: difference operators},
we start with choosing an appropriate discrete domain, which
is just a slight modification of the discrete ball~$\Ball_\primary(r)$.
\begin{figure}[tb]
\centering
\subfigure[The discrete domain $\DomBall{r} \subset \bZ^2$ with ${r=4}$.
The interior vertices~$\DomBallInt{r} \subset \DomBall{r}$
are colored gray.
The subset $\DomCirc{r} \subset \bdry \DomBall{r}$ of boundary vertices
where harmonic measures are supported are colored white: the other
boundary vertices are black.]
{
  \includegraphics[width=.4\textwidth]{./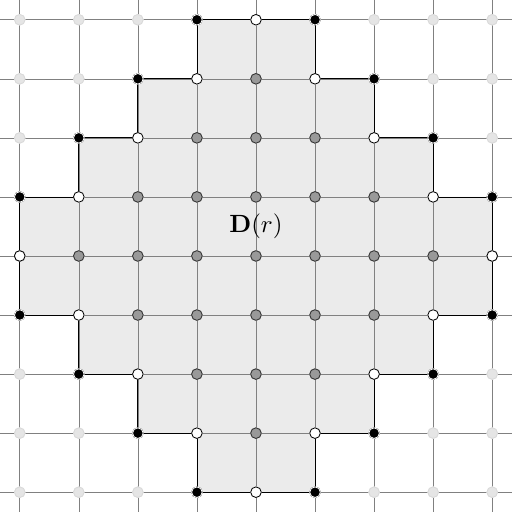}
  \label{sfig: discrete ball}
}
\hspace{1.0cm}
\subfigure[The discrete harmonic measure ${\zubis \mapsto H_\bu(\zubis)}$
at a given boundary point~$\bu \in \DomCirc{r}$ represents
the weight with which the field $\field(\zubis)$ at $\zubis \in \DomBallInt{r}$
contributes to the
coefficient of~$\field(\bu)$ in the linear local field
representative which has its support on~$\DomCirc{r}$.
In this figure $r=30$ and $\bu = 19-11\,\ii$ and the magnitudes of the values
of $H_\bu(\cdot)$ are indicated by colors.] 
{
  \includegraphics[width=.4\textwidth]{./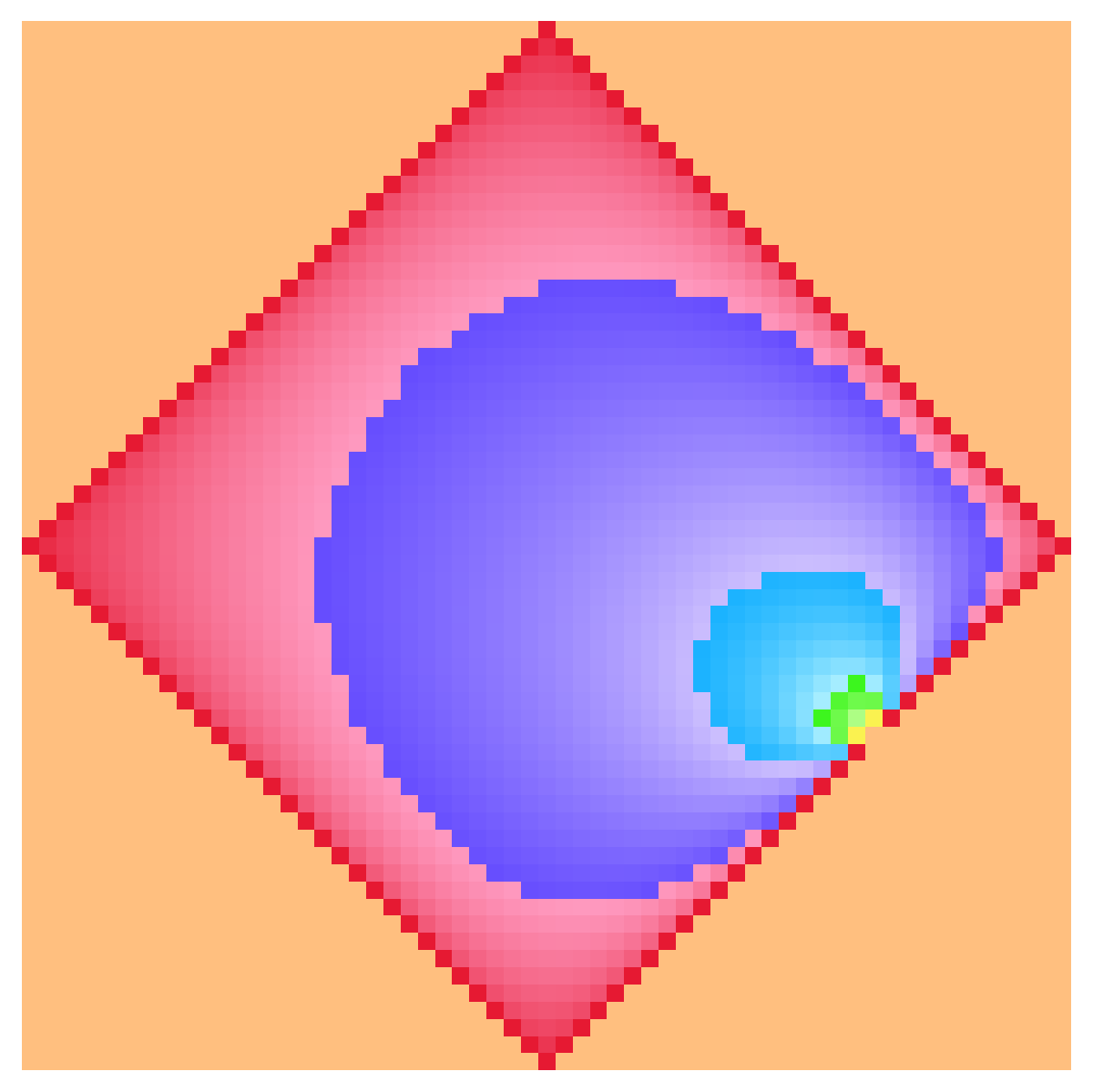}
  \label{sfig: discrete poisson kernel}
}
	\caption{The discrete ball-domains and Poisson kernels in them.}
	\label{fig: poisson kernel}
\end{figure}
For $r \in \Zpos$, consider the discrete domain~$\DomBall{r} \subset \bZ^2$
as illustrated in Figure~\ref{sec:linear}.\ref{sfig: discrete ball},
with interior
\begin{align}\label{eq: discrete ball interior}
\DomBallInt{r} = \set{ \zu \in \bZ^2 \; \Big| \; \|\zu\|_1 < r } .
\end{align}
The boundary $\bdry \DomBall{r}$ contains the subset
\begin{align}\label{eq: discrete circle}
\DomCirc{r} = \set{ \zu \in \bZ^2 \; \Big| \; \|\zu\|_1 = r } ,
\end{align}
which will carry all harmonic measure seen from the interior
points, and which is for that reason going to be the important part of the
boundary for our purposes. In Figure~\ref{sec:linear}.\ref{sfig: discrete ball},
the subset $\DomCirc{r} \subset \bdry \DomBall{r}$
and its complement $\bdry \DomBall{r} \setminus \DomCirc{r}$
are colored white and black, respectively.
Note that the discrete ball is the union of the interior and the subset which
supports the harmonic measures,
\begin{align*}
\Ball_\primary(r) = \DomBallInt{r} \cup \DomCirc{r} \, .
\end{align*}

A general graph-theoretic fact about Neumann Laplacians gives
the following ``divergenece theorem''.
\begin{lemma}\label{lem: divergence thm}
Let $\gLapl^{\Neumann; \DomBall{r}}$ denote the discrete Neumann
Laplacian in the domain~$\DomBall{r}$.
Then for any
two functions $f,g$ on $\DomBall{r}$ we have the equality
\begin{align*}
\sum_{\zu \in \DomBall{r}} f(\zu) \, \big( (\gLapl^{\Neumann; \DomBall{r}} g) (\zu) \big)
     = \sum_{\zu \in \DomBall{r}} \big( (\gLapl^{\Neumann; \DomBall{r}} f) (\zu) \big) \, g(\zu).
\end{align*}
\end{lemma}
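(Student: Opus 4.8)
The plan is to recognize that the asserted identity is nothing but the symmetry of the combinatorial graph Laplacian viewed as a matrix, so the proof reduces to an elementary rearrangement of a finite sum over adjacent pairs of vertices. First I would unfold the left-hand side using the defining formula~\eqref{eq: discrete Neumann Laplacian} of the Neumann Laplacian (with unit mesh on~$\bZ^2$), obtaining
\begin{align*}
\sum_{\zu \in \DomBall{r}} f(\zu) \, \big( (\gLapl^{\Neumann; \DomBall{r}} g) (\zu) \big)
= \sum_{\substack{\zu, \zubis \in \DomBall{r} \\ \zu \sim \zubis}} f(\zu) \, \big( g(\zubis) - g(\zu) \big) ,
\end{align*}
where the sum now runs over \emph{ordered} pairs of adjacent vertices, both lying in~$\DomBall{r}$. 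The crucial point, which is exactly where the Neumann (as opposed to Dirichlet) prescription enters, is that the adjacency relation $\zu \sim \zubis$ restricted to~$\DomBall{r}$ is symmetric, so the index set of this double sum is invariant under swapping the two summation variables.

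Next I would group together the two orderings of each unordered edge: for an edge $\{\zu,\zubis\}$ the combined contribution is
\begin{align*}
f(\zu)\big(g(\zubis)-g(\zu)\big) + f(\zubis)\big(g(\zu)-g(\zubis)\big)
= -\big(f(\zu)-f(\zubis)\big)\big(g(\zu)-g(\zubis)\big) ,
\end{align*}
so that
\begin{align*}
\sum_{\zu \in \DomBall{r}} f(\zu) \, \big( (\gLapl^{\Neumann; \DomBall{r}} g) (\zu) \big)
= - \sum_{\{\zu,\zubis\}} \big(f(\zu)-f(\zubis)\big)\big(g(\zu)-g(\zubis)\big) ,
\end{align*}
the sum ranging over unordered edges of~$\DomBall{r}$. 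The right-hand side is manifestly symmetric under exchanging the roles of~$f$ and~$g$; performing the identical computation starting instead from $\sum_{\zu} \big( (\gLapl^{\Neumann; \DomBall{r}} f)(\zu) \big) g(\zu)$ produces the very same expression, which yields the claimed equality.

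There is essentially no obstacle here, only a point requiring care: the ``diagonal'' term $-\deg_{\DomBall{r}}(\zu) \, f(\zu)$ in the Neumann Laplacian counts precisely the neighbors of~$\zu$ \emph{inside} the domain, so no boundary corrections appear and the adjacency index set is genuinely symmetric. Equivalently, one may phrase the whole argument as the observation that the matrix of~$\gLapl^{\Neumann; \DomBall{r}}$ has off-diagonal entries equal to~$1$ exactly on adjacent pairs (hence symmetric) and is diagonal otherwise, so it is a symmetric matrix~--- which is exactly the general graph-theoretic fact alluded to in the statement.
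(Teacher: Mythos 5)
Your proof is correct and follows essentially the same route as the paper: both arguments separate the contributions edge by edge and observe that each side equals $-\sum_{\{\zu,\zubis\}}\big(f(\zu)-f(\zubis)\big)\big(g(\zu)-g(\zubis)\big)$ over the edges of $\DomBall{r}$. Your write-up is in fact slightly more careful, correctly displaying the symmetric bilinear edge-sum where the paper's one-line proof contains a typo in that expression.
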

\begin{proof}
Separate the contribution to each edge
in both sums and observe that both sides are equal to
$-\sum \big( f(\zubis) - g(\zu) \big)^2$, where the sum is over the edges
$\set{ \zu, \zubis}$ of the graph~$\DomBall{r}$.
\end{proof}

Let us now define the harmonic measures in a manner that facilitates
applying the divergence theorem.
Fix an interior vertex $\zubis \in \DomBallInt{r}$ and
consider the Dirichlet Green's function $\zu \mapsto \Green_{\DomBall{r}}^{\Dirichlet}(\zu, \zubis)$
in $\DomBall{r}$. Note that
$\gLapl^{\Neumann; \DomBall{r}} \Green_{\DomBall{r}}^{\Dirichlet} (\,\cdot\,, \zubis)$
vanishes outside~${\bdry \DomBall{r} \cup \set{\zubis}}$, because
the Neumann Laplacian $\gLapl^{\Neumann; \DomBall{r}}$ and the
Dirichlet Laplacian $\gLapl^{\Dirichlet; \DomBall{r}}$ agree at all interior
vertices and
$\gLapl^{\Dirichlet; \DomBall{r}} \Green_{\DomBall{r}}^{\Dirichlet} (\,\cdot\,, \zubis) = - \delta_{\zubis}(\,\cdot\,)$.
In fact, $\gLapl^{\Neumann; \DomBall{r}} \Green_{\DomBall{r}}^{\Dirichlet} (\,\cdot\,, \zubis)$ also vanishes on $\bdry \DomBall{r} \setminus \DomCirc{r}$, i.e.,
on the black boundary points in Figure~\ref{sec:linear}.\ref{sfig: discrete ball},
because all neighbors of such a boundary point are themselves boundary points and
the Dirichlet Green's function vanishes on the boundary.
We therefore define the harmonic measure
$H_{\bu}(\zubis)$ of $\bu \in \DomCirc{r}$ seen from
$\zubis \in \DomBallInt{r}$ by the formula
\begin{align*}
H_{\bu}(\zubis) = \gLapl^{\Neumann; \DomBall{r}} \Green_{\DomBall{r}}^{\Dirichlet} (\bu, \zubis),
\end{align*}
so that
the Neumann Laplacian of the Dirichlet Green's function
takes the following form.
\begin{lemma}\label{lem: Neumann Laplacian of Dirichlet Green}
For $\zubis \in \DomBallInt{r}$ we have
\begin{align*}
\gLapl^{\Neumann; \DomBall{r}} \Green_{\DomBall{r}}^{\Dirichlet} (\cdot, \zubis)
\; = \; - \delta_{\zubis}(\cdot) + \sum_{\bu \in \DomCirc{r}} H_\bu(\zubis) \, \delta_{\bu} (\cdot) .
\end{align*}
\end{lemma}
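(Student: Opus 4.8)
The plan is to verify the asserted identity pointwise on the three parts of the domain, namely the interior $\DomBallInt{r}$, the distinguished boundary part $\DomCirc{r}$, and the remaining boundary $\bdry\DomBall{r}\setminus\DomCirc{r}$; throughout I abbreviate $g:=\Green_{\DomBall{r}}^{\Dirichlet}(\cdot,\zubis)$. Since a function on $\DomBall{r}$ is determined by its values, it suffices to show that $\gLapl^{\Neumann;\DomBall{r}} g$ equals $-\delta_\zubis$ on the interior, equals $H_\bu(\zubis)$ at each $\bu\in\DomCirc{r}$, and vanishes on $\bdry\DomBall{r}\setminus\DomCirc{r}$.

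First I would treat the interior. For $\zu\in\DomBallInt{r}$, i.e. with $\|\zu\|_1<r$, each of the four nearest neighbors $\zu\pm 1,\zu\pm\ii$ has Manhattan norm at most $r$ and hence belongs to $\DomBall{r}=\DomBallInt{r}\cup\DomCirc{r}$. Therefore all four off-diagonal terms are present in the Neumann Laplacian~\eqref{eq: discrete Neumann Laplacian} at $\zu$, so it coincides there with the combinatorial Laplacian~\eqref{eq: combinatorial Laplacian}, which in turn agrees with the Dirichlet Laplacian $\gLapl^{\Dirichlet;\DomBall{r}}$ because $g$ vanishes on the boundary. The defining property~\eqref{eq: defining properties of discrete Dirichlet Green function} then gives $\gLapl^{\Neumann;\DomBall{r}} g(\zu)=-\delta_\zubis(\zu)$, which produces the first term on the right.

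For the boundary, at each $\bu\in\DomCirc{r}$ the value $\gLapl^{\Neumann;\DomBall{r}} g(\bu)$ is exactly $H_\bu(\zubis)$ by the very definition of the harmonic measure, giving the summation term. It remains to check that $\gLapl^{\Neumann;\DomBall{r}} g$ vanishes on the black boundary $\bdry\DomBall{r}\setminus\DomCirc{r}$: by the shape of the domain $\DomBall{r}$ (Figure~\ref{sec:linear}.\ref{sfig: discrete ball}), every neighbor of such a vertex is again a boundary vertex, where $g$---being the Dirichlet Green's function---vanishes, and $g$ also vanishes at the vertex itself, so every term of~\eqref{eq: discrete Neumann Laplacian} is zero. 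Collecting the three cases yields the formula. I do not expect a genuine obstacle here: the argument merely assembles observations already recorded in the paragraph preceding the statement, and the one slightly geometric ingredient---that the neighbors of a black boundary vertex are all boundary vertices---is immediate from the explicit construction of $\DomBall{r}$.
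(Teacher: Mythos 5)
Your proposal is correct and follows essentially the same route as the paper: the paper's (very terse) proof simply appeals to the observations stated immediately before the lemma, which are precisely your three pointwise verifications — agreement of the Neumann and Dirichlet Laplacians at interior vertices where $g$ vanishes on the boundary, the definition of $H_\bu(\zubis)$ on $\DomCirc{r}$, and the vanishing on $\bdry\DomBall{r}\setminus\DomCirc{r}$ because all neighbors of a black boundary vertex are boundary vertices. Nothing further is needed.
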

\begin{proof}
This follows by construction of $H_\bu(\zubis)$ and the observations preceding it.
\end{proof}
Observe also that the Neumann Laplacian always yields a zero-average
function, so the harmonic measure terms above must compensate the
term~$-\delta_\zubis(\cdot)$,
i.e., we have
\begin{align}\label{eq: harmonic measures sum up to one}
\sum_{\bu \in \DomCirc{r}} H_\bu(\zubis) \; = \; 1 \, .
\end{align}

With Lemma~\ref{lem: Neumann Laplacian of Dirichlet Green},
we are ready to give the (almost) canonical representatives for linear
local fields supported on the discrete
ball $\Ball_\primary(r) = \DomBallInt{r} \cup \DomCirc{r}$.
\begin{prop}\label{prop: canonical representatives}
For any $\zubis \in \DomBallInt{r}$, we have
\begin{align*}
\field(\zubis) + \Null = \sum_{\bu \in \DomCirc{r}} H_\bu(\zubis) \field(\bu) + \Null .
\end{align*}
\end{prop}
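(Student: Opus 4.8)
The plan is to show that the field polynomial $\field(\zubis) - \sum_{\bu \in \DomCirc{r}} H_\bu(\zubis)\,\field(\bu)$ is a null field, so that the two sides of the asserted identity coincide in $\dFields = \dLocFi/\dNuFi$. The mechanism is the divergence theorem of Lemma~\ref{lem: divergence thm}, applied not to two scalar functions but to the $\dLocFi$-valued coordinate function $\zu \mapsto \field(\zu)$ on $\DomBall{r}$ together with the scalar function $\zu \mapsto \Green_{\DomBall{r}}^{\Dirichlet}(\zu,\zubis)$. Since that symmetry identity is linear in each of its two arguments, it holds verbatim when one of the functions takes values in the vector space $\dLocFi$; no new input beyond componentwise extension is required.

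With $f = \field$ and $g = \Green_{\DomBall{r}}^{\Dirichlet}(\cdot,\zubis)$, Lemma~\ref{lem: divergence thm} gives
\begin{align*}
\sum_{\zu \in \DomBall{r}} \field(\zu)\,\big( \gLapl^{\Neumann; \DomBall{r}} \Green_{\DomBall{r}}^{\Dirichlet}(\cdot,\zubis)\big)(\zu)
\; &= \; \sum_{\zu \in \DomBall{r}} \big( \gLapl^{\Neumann; \DomBall{r}} \field\big)(\zu)\,\Green_{\DomBall{r}}^{\Dirichlet}(\zu,\zubis).
\end{align*}
I would evaluate the left-hand side using Lemma~\ref{lem: Neumann Laplacian of Dirichlet Green}, which identifies $\gLapl^{\Neumann; \DomBall{r}} \Green_{\DomBall{r}}^{\Dirichlet}(\cdot,\zubis)$ with $-\delta_\zubis + \sum_{\bu \in \DomCirc{r}} H_\bu(\zubis)\,\delta_\bu$; substituting this collapses the left-hand side to exactly $-\field(\zubis) + \sum_{\bu \in \DomCirc{r}} H_\bu(\zubis)\,\field(\bu)$, which is, up to sign, the field polynomial whose nullity we want. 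For the right-hand side, the key observation is that $\Green_{\DomBall{r}}^{\Dirichlet}(\zu,\zubis)$ vanishes whenever $\zu \in \bdry\DomBall{r}$, so only interior vertices $\zu \in \DomBallInt{r}$ contribute; but at interior vertices the Neumann Laplacian coincides with the combinatorial Laplacian, so the right-hand side equals $\sum_{\zu \in \DomBallInt{r}} \Green_{\DomBall{r}}^{\Dirichlet}(\zu,\zubis)\,\gLapl\field(\zu)$. Each $\gLapl\field(\zu)$ is null by Example~\ref{ex: Laplacian nulls}, and $\dNuFi$ is a linear subspace, so this finite scalar combination of Laplacian-nulls is itself null. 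Equating the two sides then yields the claim.

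The proof is short precisely because the three auxiliary lemmas do the real work; the one point requiring care — and the only place the structure of the argument is genuinely exploited — is the reconciliation of the two different Laplacians. The divergence theorem forces the \emph{Neumann} Laplacian upon us, yet nullity is only available for the \emph{combinatorial} Laplacian $\gLapl\field(\zu)$; the boundary vanishing of the Dirichlet Green's function is exactly what discards the boundary vertices, where the two Laplacians differ, leaving only interior vertices, where they agree. I would therefore state explicitly that the extension of Lemma~\ref{lem: divergence thm} to $\dLocFi$-valued functions is by linearity, and that the finiteness of $\DomBall{r}$ makes every sum a finite combination, so that membership in $\dNuFi$ is unproblematic.
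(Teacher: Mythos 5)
Your proposal is correct and is essentially identical to the paper's own proof: both apply the divergence theorem of Lemma~\ref{lem: divergence thm} to $\field$ and $\Green_{\DomBall{r}}^{\Dirichlet}(\cdot,\zubis)$, use Lemma~\ref{lem: Neumann Laplacian of Dirichlet Green} to collapse one side to $-\field(\zubis)+\sum_{\bu}H_\bu(\zubis)\field(\bu)$, and use the boundary vanishing of the Dirichlet Green's function to reduce the other side to a combination of the null fields $\gLapl\field(\zu)$ at interior vertices. Your explicit remarks on the $\dLocFi$-valued extension of the divergence theorem and on reconciling the Neumann and combinatorial Laplacians are points the paper leaves implicit, but they do not change the argument.
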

\begin{proof}
Use the
divergence theorem, Lemma~\ref{lem: divergence thm}, with 
$f = \Green_{\DomBall{r}}^{\Dirichlet} (\cdot, \zubis)$ and $g = \field$ to get
\begin{align*}
\sum_{\zu \in \DomBallInt{r}} \Green_{\DomBall{r}}^{\Dirichlet} (\zu, \zubis) \, \big( (\gLapl^{\Neumann; \DomBall{r}} \field) (\zu) \big)
\; = \; - \field(\zubis) + \sum_{\bu \in \DomCirc{r}} H_{\bu}(\zubis) \, \field(\bu) ,
\end{align*}
where on the left-hand-side we used the vanishing of the Dirichlet Green's function~$f$
on the boundary to omit the boundary terms, and on the right-hand-side we used the
formula of Lemma~\ref{lem: Neumann Laplacian of Dirichlet Green} for the Neumann
Laplacian of the Dirichlet Green's function~$f$ to omit all terms except $\zu = \zubis$
and $\zu \in \DomCirc{r}$. On the left hand side, since~$\zu$ is an interior point
of~$\DomBall{r}$, the Neumann Laplacian coincides with the square grid Laplacian, so
$(\gLapl^{\Neumann; \DomBall{r}} \field) (\zu) = \gLapl \field (\zu)$,
which is null by Example~\ref{ex: Laplacian nulls}. As a linear combination of null
fields, the entire left hand side is therefore null.
This proves the asserted equality modulo null fields, 
since the right hand side is exactly the difference of the nonnull terms on the
two sides of the asserted equality.
\end{proof}

We thus obtain representatives which are supported just on the boundary.
\begin{lemma}\label{lem: spanning set in filtration}
For every $r \in \Zpos$, we have
\begin{align*}
\dLinFieldsRad{r}
\; = \; \spn_\bC \set{ \field(\bu) + \Null \; \Big| \; \bu \in \DomCirc{r}} \, .
\end{align*}
\end{lemma}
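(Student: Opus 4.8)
The plan is to reduce everything to Proposition~\ref{prop: canonical representatives}, which already contains the analytic content; what remains is pure linear-algebra bookkeeping. By the definition~\eqref{eq: space of corr equiv linear local fields of given radius} together with~\eqref{eq: space of linear local fields of given radius}, the space $\dLinFieldsRad{r}$ is the image in the quotient of $\dLocLinFiRad{r} = \spn_\bC\set{\field(\zu) \colon \zu\in\Ball_\primary(r)}$, so it is spanned by the classes $\field(\zu)+\Null$ as $\zu$ ranges over the discrete ball $\Ball_\primary(r)$. I would then invoke the decomposition $\Ball_\primary(r) = \DomBallInt{r}\cup\DomCirc{r}$ recorded just before the proposition to split this spanning set into interior generators and boundary generators, and handle the two containments separately.

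The inclusion $\supseteq$ is immediate: since $\DomCirc{r}\subset\Ball_\primary(r)$, each boundary generator $\field(\bu)+\Null$ with $\bu\in\DomCirc{r}$ already lies in $\dLinFieldsRad{r}$, whence the span of these classes is contained in $\dLinFieldsRad{r}$.

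For the inclusion $\subseteq$, I would treat the two kinds of generators separately. Boundary generators $\field(\bu)+\Null$ with $\bu\in\DomCirc{r}$ are manifestly in the claimed span. For an interior generator $\field(\zubis)+\Null$ with $\zubis\in\DomBallInt{r}$, Proposition~\ref{prop: canonical representatives} gives
\[
\field(\zubis)+\Null = \sum_{\bu\in\DomCirc{r}} H_\bu(\zubis)\,\field(\bu)+\Null ,
\]
a finite $\bC$-linear combination of boundary generators, hence an element of $\spn_\bC\set{\field(\bu)+\Null \mid \bu\in\DomCirc{r}}$. As every generator of $\dLinFieldsRad{r}$ thus lands in this span, so does all of $\dLinFieldsRad{r}$, proving $\subseteq$ and therefore the asserted equality.

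I do not expect a genuine obstacle here: the only nontrivial input is Proposition~\ref{prop: canonical representatives}, whose proof (via the divergence theorem of Lemma~\ref{lem: divergence thm}, the harmonic-measure identity of Lemma~\ref{lem: Neumann Laplacian of Dirichlet Green}, and the Laplacian null fields of Example~\ref{ex: Laplacian nulls}) does the real work of eliminating interior field values modulo $\dNuFi$. The present lemma merely repackages that fact as a statement about spanning sets, so the hard part is already behind us; the only care needed is to ensure that \emph{every} generator of $\dLinFieldsRad{r}$ is accounted for, which the clean partition $\Ball_\primary(r)=\DomBallInt{r}\cup\DomCirc{r}$ guarantees.
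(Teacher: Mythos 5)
Your proposal is correct and follows essentially the same route as the paper: both reduce to the spanning set $\set{\field(\zu)+\Null \colon \zu\in\Ball_\primary(r)}$, split via $\Ball_\primary(r)=\DomBallInt{r}\cup\DomCirc{r}$, and apply Proposition~\ref{prop: canonical representatives} to rewrite each interior generator in terms of boundary generators. The extra explicit verification of the inclusion $\supseteq$ is harmless but not needed beyond what the paper already states.
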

\begin{proof}
By definition, the space~$\dLocLinFiRad{r}$ is spanned by
$\field(\zubis)$ for $\zubis \in \Ball_\primary(r) = \DomBallInt{r} \cup \DomCirc{r}$,
and $\dLinFieldsRad{r}$ is correspondingly spanned by 
the equivalence classes modulo null fields. For any $\zubis \in \DomBallInt{r}$,
Proposition~\ref{prop: canonical representatives} expresses $\field(\zubis) + \Null$
as a linear combination of fields of the form $\field(\bu) + \Null$
with $\bu \in \DomCirc{r}$, so these suffice to span~$\dLinFieldsRad{r}$.
\end{proof}

This also leads to the desired dimension upper bounds.
\begin{lemma}\label{lem: dimension upper bounds in filtration}
For every $r \in \Zpos$, we have
\begin{align*}
\dmn \big( \dLinFieldsRad{r} \big) = 4 r - 1 .
\end{align*}
A basis of $\dLinFieldsRad{r}$ is given by
$\dHolCurrMode{-k} \idField$ and $\dAntHolCurrMode{-k} \idField$, for $1 \le k \le 2r - 1$,
and $\dHolCurrMode{-2r} \idField -\dAntHolCurrMode{-2r} \idField$.
\end{lemma}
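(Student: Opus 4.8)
The plan is to sandwich $\dmn\big(\dLinFieldsRad{r}\big)$ between the explicit spanning set of Lemma~\ref{lem: spanning set in filtration} and the lower bound of Lemma~\ref{lem: dimension lower bounds in filtration}, which already agree up to a single dimension. So the entire content of the upper bound is to locate \emph{one} nontrivial linear relation among the boundary-supported representatives. First I would observe that $\DomCirc{r} = \set{\zu \in \bZ^2 : \|\zu\|_1 = r}$ has exactly $|\DomCirc{r}| = 4r$ elements, so Lemma~\ref{lem: spanning set in filtration}, which writes $\dLinFieldsRad{r} = \spn_\bC \set{ \field(\bu) + \Null : \bu \in \DomCirc{r}}$, immediately gives the crude bound $\dmn\big(\dLinFieldsRad{r}\big) \le 4r$.

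To shave off the last dimension, I would exploit the origin. Since $0 \in \DomBallInt{r}$ (as $\|0\|_1 = 0 < r$ for $r \in \Zpos$), Proposition~\ref{prop: canonical representatives} applies at $\zubis = 0$ and yields $\field(0) + \Null = \sum_{\bu \in \DomCirc{r}} H_\bu(0)\,\field(\bu) + \Null$. But $\field(0)$ is null by Example~\ref{ex: trivialer null field}, so the left-hand side is $0 \in \dFields$, giving the relation $\sum_{\bu \in \DomCirc{r}} H_\bu(0)\,\field(\bu) + \Null = 0$ among the $4r$ spanning vectors. By the normalization~\eqref{eq: harmonic measures sum up to one}, the coefficients satisfy $\sum_{\bu} H_\bu(0) = 1 \ne 0$, so they are not all zero and the relation is genuinely nontrivial. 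Hence the spanning vectors are linearly dependent and $\dmn\big(\dLinFieldsRad{r}\big) \le 4r - 1$. Combined with the matching lower bound $\dmn\big(\dLinFieldsRad{r}\big) \ge 4r - 1$ from Lemma~\ref{lem: dimension lower bounds in filtration}, this forces $\dmn\big(\dLinFieldsRad{r}\big) = 4r - 1$.

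For the asserted basis, I would simply count: the $4r-1$ fields $\dHolCurrMode{-k}\idField$ and $\dAntHolCurrMode{-k}\idField$ for $1 \le k \le 2r-1$, together with $\dHolCurrMode{-2r}\idField - \dAntHolCurrMode{-2r}\idField$, all lie in $\dLinFieldsRad{r}$ by Example~\ref{ex: modes in filtration} and Lemma~\ref{lemma: -2r and -2r bar}, and they are linearly independent by Corollary~\ref{cor: easy inclusion} (exactly as already recorded in the proof of Lemma~\ref{lem: dimension lower bounds in filtration}). As their number equals the dimension just computed, they are a basis.

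The only genuinely load-bearing step is the identification of the single relation via the null field $\field(0)$ and the canonical-representative formula; the rest is bookkeeping, and in particular the lower bound guarantees that there are no \emph{further} relations, so I need not verify that the harmonic-measure relation is the unique one. The one point I would take care to check is nontriviality of this relation, and it is precisely the normalization $\sum_{\bu} H_\bu(0) = 1$ that secures it — without it, one could a priori worry that all harmonic-measure weights seen from the center conspire to vanish.
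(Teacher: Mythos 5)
Your proposal is correct and follows essentially the same route as the paper's own proof: the spanning set on $\DomCirc{r}$ of size $4r$, the single relation obtained by applying Proposition~\ref{prop: canonical representatives} at $\zubis=0$ together with the nullity of $\field(0)$, nontriviality secured by $\sum_{\bu} H_\bu(0)=1$, and the matching lower bound from Lemma~\ref{lem: dimension lower bounds in filtration}. Nothing is missing.
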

\begin{proof}
Recall that by Lemma~\ref{lem: dimension lower bounds in filtration}, we
have $\dmn (\dLinFieldsRad{r}) \ge 4 r - 1$ and the given elements 
are linearly independent in~$\dLinFieldsRad{r}$. It therefore
only remains to show that $\dmn (\dLinFieldsRad{r}) \le 4 r - 1$.
The spanning set given in Lemma~\ref{lem: spanning set in filtration}
is indexed by~$\DomCirc{r}$, which has $|\DomCirc{r}| = 4 r$ points;
one more than the asserted dimension.
However, applying Proposition~\ref{prop: canonical representatives} with $\zubis=0$
we see that
\begin{align*}
\field(0) + \Null \; = \; \sum_{\bu \in \DomCirc{r}} H_{\bu}(0) \, \field(\bu) + \Null .
\end{align*}
Since the left-hand-side $\field(0)$
is null by Example~\ref{ex: trivialer null field},
the right-hand-side gives one linear relation in the spanning set, provided that the
coefficients~$H_{\bu}(0)$ are not all zero.
Equation~\eqref{eq: harmonic measures sum up to one} shows that they are not.
\end{proof}

We are now ready to finish the proofs of the two main statements
of the section.

\begin{proof}[Proof of Theorem~\ref{thm: basis lin loc fields}]
We want to show that the fields
$\dHolCurrMode{-k} \idField$ and $\dAntHolCurrMode{-k} \idField$, for $k \in \Zpos$,
form a basis of the space~$\dLinFields$ of linear local fields.
We already observed, based on Corollary~\ref{cor: easy inclusion}, that these
fields are linearly independent. It remains to show that they span.
Using the filtration by radius, we see that any field $F \in \dLinFields$ belongs
to $\dLinFieldsRad{r}$ for some $r \in \Zpos$. Then by
Lemma~\ref{lem: dimension upper bounds in filtration},
$F$ is indeed a linear combination of the basis elements of
$\dLinFieldsRad{r}$, which are manifestly already linear combinations
of elements of the asserted basis~\eqref{eq: basis of lin loc fields} of~$\dLinFields$.
\end{proof}

\begin{proof}[Proof of Corollary~\ref{coro: linear nulls}]
By the filtrations by radii, it is again enough to focus on~$\dLocLinFiRad{r}$
for a fixed~$r \in \Zpos$.
The field polynomials $\field(\zu)$ with $\zu \in \Ball_\primary(r)$
are a basis of~$\dLocLinFiRad{r}$ by construction, so the dimension is
$\dmn \big( \dLocLinFiRad{r} \big) = |\Ball_\primary(r)| = 2r^2 + 2 r + 1$.
In Lemma~\ref{lem: dimension upper bounds in filtration} we found
that the dimension of the quotient
$\dLinFieldsRad{r} = \dLocLinFiRad{r} / \big( \dNuFi \cap \dLocLinFiRad{r} \big)$
equals~$4r-1$, so
the dimension of $\dNuFi \cap \dLocLinFiRad{r}$ is the difference
\begin{align*}
\dmn \Big( \dNuFi \cap \dLocLinFiRad{r} \Big) =  2r^2 + 2 r + 1 - (4 r - 1) 
= \; & 2 r^2 - 2 r + 2 \\
= \; & 2(r-1)^2 + 2(r-1) + 2 \\
= \; & |\DomBallInt{r}| + 1 \, .
\end{align*}
The known null fields
$\gLapl \field(\zu)$, for $\zu \in \DomBallInt{r}$, and $\field(0)$
are all linearly independent in $\dNuFi \cap \dLocLinFiRad{r}$
(by standard considerations with the discrete Laplacians)
and their number matches the above dimension of $\dNuFi \cap \dLocLinFiRad{r}$
calculated above, so they form a basis.
\end{proof}

\subsection{Homogeneous linear local fields}
\label{subsec: homogeneous linear local fields}

Recall that homogeneous local fields, i.e., eigenvectors of~$\dvirL{0}$ and~$\dvirBarL{0}$,
are important for the reason that their scaling limits are to be formed with definite renormalization
related to the eigenvalues. With our complete understanding of the space of linear local fields
from Theorem~\ref{thm: basis lin loc fields}, let us see the consequences about homogeneous
linear local fields.

In the same fashion as in~\eqref{eq: homogeneous local fields}, define
the \term{homogeneous linear local fields} of {conformal dimensions}
$\Delta,\barDelta\in\C$ as the subspace
\begin{align*}
\dHomLinFi{\Delta}{\bar\Delta} :=
\ker\Big(\, \dvirL{0}\big\vert_{\dLinFields}-\Delta\,\id_{\dLinFields} \,\Big)
\cap \ker\Big(\, \dvirBarL{0}\big\vert_{\dLinFields}-\barDelta\,\id_{\dLinFields} \,\Big)
\subset \dLinFields\,.
\end{align*}

\begin{prop}
	The Virasoro modes $\dvirL{0}$ and $\dvirBarL{0}$ are simultaneously diagonalizable on $\dLinFields$,
	i.e., we have the vector space direct sum
	$\dLinFields = \bigoplus_{\Delta, \bar{\Delta}} \dHomLinFi{\Delta}{\bar\Delta}$.
	Moreover, the nontrivial joint eigenspaces are one-dimensional; explicitly
	\begin{align*}
	\dHomLinFi{\Delta}{\bar\Delta} \, = \, \begin{cases} 
	\ \C \,\dHolCurrMode{-k} \idField &
	\mspace{30mu} \text{ if } \text(\Delta,\barDelta) = (k,0) \text{ with } k\in\Zpos \\
	\ \C \, \dAntHolCurrMode{-k} \idField &
	\mspace{30mu} \text{ if } (\Delta,\barDelta) = (0,k) \text{ with } k\in\Zpos \\
	\mspace{12mu}\ \set{0} &
	\mspace{30mu} \text{ otherwise .}
	\end{cases}
	\end{align*}
\end{prop}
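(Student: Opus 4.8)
The plan is to deduce everything directly from Theorem~\ref{thm: basis lin loc fields} together with Example~\ref{ex: L0 eigenvalues}, since the substantive work has already been done in establishing the basis. Theorem~\ref{thm: basis lin loc fields} tells us that $\set{\dHolCurrMode{-k}\idField \mid k \in \Zpos} \cup \set{\dAntHolCurrMode{-k}\idField \mid k \in \Zpos}$ is a basis of $\dLinFields$, so the whole statement reduces to identifying the action of $\dvirL{0}$ and $\dvirBarL{0}$ on these particular basis vectors and checking that it is diagonal.

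First I would invoke Example~\ref{ex: L0 eigenvalues}: the basis vectors of $\dLinFields$ are the special cases of the Fock-space basis fields $\dHolCurrMode{-k_m}\cdots\dHolCurrMode{-k_1}\,\dAntHolCurrMode{-k'_{m'}}\cdots\dAntHolCurrMode{-k'_1}\idField$ with $(m,m')=(1,0)$, giving $\dHolCurrMode{-k}\idField$ of conformal dimensions $(\Delta,\bar\Delta)=(k,0)$, and $(m,m')=(0,1)$, giving $\dAntHolCurrMode{-k}\idField$ of conformal dimensions $(0,k)$. Thus each basis vector is already a joint eigenvector of $\bigl(\dvirL{0},\dvirBarL{0}\bigr)$. (One could equally rederive these eigenvalues on the spot from the Sugawara formulas, the Heisenberg relations of Proposition~\ref{prop: comm relations}, and the vanishing $\dHolCurrMode{k}\idField=\dAntHolCurrMode{k}\idField=0$ for $k\ge 0$ from Example~\ref{ex: id field}, but this is precisely the computation recorded in Example~\ref{ex: L0 eigenvalues}.) Since a basis of $\dLinFields$ consists of joint eigenvectors, $\dvirL{0}$ and $\dvirBarL{0}$ are simultaneously diagonalizable, which gives the direct sum decomposition $\dLinFields=\bigoplus_{\Delta,\bar\Delta}\dHomLinFi{\Delta}{\bar\Delta}$.

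To pin down the eigenspaces explicitly, I would finally observe that the assignment of a basis vector to its pair of conformal dimensions is injective: distinct $k$ yield distinct pairs $(k,0)$ (respectively $(0,k)$), and a holomorphic generator $\dHolCurrMode{-k}\idField$ of dimensions $(k,0)$ with $k>0$ can never share its eigenvalue pair with an antiholomorphic one $\dAntHolCurrMode{-\ell}\idField$ of dimensions $(0,\ell)$ with $\ell>0$. Hence each nontrivial joint eigenspace is spanned by exactly one basis element, which is one-dimensional and of the asserted form; in particular the pair $(0,0)$ does not arise, because $\idField$ is the constant field polynomial and therefore $\idField\notin\dLinFields$, and no pairs other than $(k,0)$ and $(0,k)$ with $k\in\Zpos$ occur. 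There is essentially no obstacle remaining at this stage — the only point requiring any care is this elementary bookkeeping showing that the dimension-to-eigenvector assignment is injective, so that the case distinction in the statement is both exhaustive and yields eigenspaces of dimension exactly one.
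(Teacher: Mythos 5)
Your proposal is correct and follows essentially the same route as the paper: both reduce the statement to Theorem~\ref{thm: basis lin loc fields} and the eigenvalue computation for the basis vectors $\dHolCurrMode{-k}\idField$ and $\dAntHolCurrMode{-k}\idField$ (recorded in Example~\ref{ex: L0 eigenvalues}, which the paper cites as ``the same calculation as in the Fock space''). The only difference is that you spell out the final bookkeeping about distinct eigenvalue pairs, which the paper leaves implicit; that is harmless and correct.
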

\begin{proof}
	Theorem \ref{thm: basis lin loc fields} says that the fields $\dHolCurrMode{-k}\idField$ and $\dAntHolCurrMode{-k}\idField$ for $k\in\Zpos$ constitute a basis of $\dLinFields$,
	and for the basis vectors we have
	\begin{align*}
	\dvirL{0}
	\big(\dHolCurrMode{-k}\idField\big)
	=
	k\,\dHolCurrMode{-k}\idField\,,
	\mspace{20mu}
	\dvirBarL{0}
	\big(\dAntHolCurrMode{-k}\idField\big)
	=
	k\,\dAntHolCurrMode{-k}\idField
	\mspace{20mu}
	\text{ and }
	\mspace{20mu}
	\dvirBarL{0}
	\big(\dHolCurrMode{-k}\idField\big)
	=
	\dvirL{0}
	\big(\dAntHolCurrMode{-k}\idField\big)
	=
	0+\Null\,
	\end{align*}
	by the same calculation as in the Fock space~$\FullFock$.
\end{proof}

\begin{rmk}\label{rmk: currents and currents}
	Exactly two Virasoro primary states among the ones listed in~\eqref{eq: primary vacuum}~--~\eqref{eq: primary nine} correspond to
	linear local fields: the counterparts of
	\eqref{eq: primary holom current} and~\eqref{eq: primary antiholom current}
	are
	\begin{align*}
	& \dHolCurrMode{-1}\idField
	& \quad\text{ with a representative } \quad
	& \dRep{-1} = \phantom{-} \ii \, \left( \frac{1}{2} \, \frac{\field(1) - \field(-1)}{2}
	- \frac{\ii}{2} \, \frac{ \field(\ii) - \field(-\ii)}{2} \right) \\
	& \dAntHolCurrMode{-1}\idField
	& \quad\text{ with a representative } \quad
	& \dRepBar{-1} = -\ii \left( \frac{1}{2} \, \frac{\field(1) - \field(-1)}{2}
	+ \frac{\ii}{2} \, \frac{ \field(\ii) - \field(-\ii)}{2} \right),
	\end{align*}
	where the representatives where obtained from~\eqref{eq: representative for Jk}.
	Note that although \eqref{eq: primary holom current} and~\eqref{eq: primary antiholom current}
	are as CFT fields interpreted as the holomorphic and antiholomorphic currents,
	in our discrete current mode constructions~\eqref{eq: discrete current modes}
	we needed slightly different discretizations
	of the holomorphic and antiholomorphic currents.
	The representatives of~$\dHolCurrMode{-1}\idField$ and~$\dAntHolCurrMode{-1}\idField$
	above are, respectively, the averages of~$\holcurrfield(\zu_\medial)$ and~$\antiholcurrfield(\zu_\medial)$
	over the four midpoints~$\zu_\medial \in \set{\frac{\pm 1}{2}, \frac{\pm \ii}{2}}$ of edges
	adjacent to the origin.
	\hfill~$\diamond$
\end{rmk}

\section{Local fields of higher degree}
	\label{sec:higher}
	In the previous section, we fully characterized linear local fields of the DGFF, so the remaining task
for proving the isomorphism $\dFields \isom \FullFock$ is
to characterize local fields of higher degree. By definition~\eqref{eq: space of field polynomials},
the space $\dLocFi$ of field polynomials is a polynomial algebra, i.e., a symmetric tensor algebra,
which is naturally graded
\begin{align*}
\dLocFi \, = \, \bigoplus_{d \in \Znn} \dLocFiDeg{d}
\end{align*}
by the degrees of the polynomials, i.e., so that the degree~$d$ component is
the $d$th symmetric tensor power of
the space~\eqref{eq: space of linear field polynomials}
of linear field polynomials,
\begin{align*}
\dLocFiDeg{d}
     \, = \, \spn_\bC \set{ \prod_{i=1}^d \field(\zu_i)
            \; \bigg| \; \zu_1, \ldots, \zu_d \in \Z^2 } \,.
\end{align*}
The space~\eqref{eq: space of correlation equivalence classes of fields} of local fields
inherits a grading
\begin{align*}
\dFields \, = \, \bigoplus_{d \in \Znn} \dFieldsDeg{d}
\end{align*}
where $\dFieldsDeg{d} = \dLocFiDeg{d} / \dNuFi$ is the quotient of the space of degree~$d$
field polynomials by null fields. Recall, however, that while 
$\dLocFi = \bigoplus_{d \in \Znn} \dLocFiDeg{d}$ is a graded algebra, the space
$\dFields = \dLocFi / \dNuFi$ of local fields does not even inherit
a multiplication from the polynomials: the null fields do not form an ideal, let alone a graded ideal.
The space
$\dFields = \bigoplus_{d \in \Znn} \dFieldsDeg{d}$
of local fields is merely a
graded vector space.

In order to fully characterize higher degree fields and to finish the proof of
isomorphism $\dFields \isom \FullFock$, we will introduce normal ordering in~$\dFields$,
which partially rectifies the lack of an algebra structure and allows us to reduce the analysis
of higher degree local fields to that of the linear local fields.
Let us also note that the normal ordering will be
combinatorially similar with Wick products with respect to a Gaussian
measure, but it is not a Wick product for the DGFF measure in any discrete domain
(with any boundary conditions). Indeed, we are operating at the level of
abstract local fields, which can be evaluated in an arbitrary discrete domain and with
either Dirichlet or Neumann boundary conditions.
Following the idea from CFT that only a local
coordinate specification should affect the normal ordering procedure,
we use infinite square lattice quantities in the normal ordering contractions.

The normal ordering is defined in Section~\ref{sub: normal ordering}.
Besides essentially combinatorial lemmas, we prove
Lemma~\ref{lem: normal ordering with null fields}, which
expresses a crucial property that could be interpreted as null fields
being an ideal with respect to normal ordering.
In Section~\ref{sub: main result A} we are then ready to state and prove
our main result, Theorem~\ref{thm: main theorem about Fock space structure},
by combining this lemma with the full classification
of linear local fields from the previous section.
The proof also involves making a connection between the normal
ordering and the algebraic structure provided by the current modes, and we illustrate the
concrete computational nature of this procedure by giving a few
examples of higher degree fields in
Section~\ref{sub: examples of higher degree}.

\subsection{Normal ordering}\label{sub: normal ordering}

Let $\Green$ denote the Green's function on the infinite square lattice~$\Z^2$~\cite{LL-random_walk},
i.e., the unique function
\begin{align}\label{eq: square grid Green function}
\Green \colon \Z^2 \to \R
\quad \text{ satisfying } \quad
    \begin{cases}
    \gLapl \Green(\zu)=-\delta_0(\zu) \\
    \Green(0) = 0 \\
    \Green(\zu) = -\frac{1}{2\pi}\log |\zu| + C + \OO \big( |\zu|^{-2} \big)
	\text{ as } |\zu| \to \infty ,
    \end{cases}
\end{align}
where $C=-\frac{1}{2\pi}(\gamma+\frac{3}{2}\log 2)$ and $\gamma$ is the Euler--Mascheroni constant.
The infinite lattice Green's function is symmetric in the sense that $\Green(-\zu)=\Green(\zu)$
for any $\zu \in \Z^2$.

If $\Green(\zu)$ could be informally thought of as~``$\EX[\phi(\zu_0+\zu)\phi(\zu_0)]$''
(for any~$\zu_0 \in \Z^2$ by virtue of translation invariance),
then an informal counterpart two-point function appropriate for the
evaluations~$\field(\zu) \mapsto \phi(\zu) - \phi(0)$ would be
``$\EX\big[\big(\phi(\zu) - \phi(0)\big)\big(\phi(\zubis) - \phi(0)\big)\big]$''.
This motivates the definition of
\begin{align}\label{eq: full plane grad Green}
\GreenGrad \colon \Z^2 \times \Z^2 \to \R
\qquad \text{ as } \qquad
\GreenGrad(\zu,\zubis) := \Green(\zu-\zubis) - \Green(\zu) - \Green(\zubis) ,
\end{align}
which has the symmetricity property~$\GreenGrad(\zu, \zubis) = \GreenGrad(\zubis,\zu)$.
We use~$\GreenGrad$ in normal ordering contractions as follows.
We define \term{contractions} of linear field polynomials
\begin{align*}
\dLocLinFi \tens \dLocLinFi \to \; & \bC \\
L_1 \tens L_2 \mapsto \; & \wick{\c L_1 \c L_2}
\end{align*}
by bilinear extension from the contractions of the linear basis monomials given by the formula
\begin{align*}
\wick{\c \field(\zu) \c \field(\zubis)}
\coloneqq
4\pi\GreenGrad(\zu,\zubis)\,
\qquad \text {for $\zu,\zubis\in\Z^2$.}
\end{align*}
The contractions are symmetric,
$\wick{\c L_1 \c L_2} = \wick{\c L_2 \c L_1}$, and thus well-defined on~$\dLocFiDeg{2}$.
\begin{ex}\label{ex: Wick cont of Lap}
For $\zu,\zubis\in\Z^2$, we have $\wick{\big(\gLapl\c\field(\zu) \big)\c\field(\zubis)}
  = -4 \pi \delta_{\zubis}(\zu) + 4 \pi \delta_0(\zu)$.
\hfill
$\diamond$
\end{ex}

The \term{normal ordering} is a linear operation
\begin{align*}
\no{\cdots} \colon \dLocFi \to \dLocFi
\end{align*}
on the space $\dLocFi = \bigoplus_{d \in \Znn} \dLocFiDeg{d}$ of field polynomials,
defined on the degree~$d$ subspace~$\dLocFiDeg{d}$ by
\begin{align}\label{eq: normal ordering def}
\no{ \prod_{i=1}^d L_i } :=
	\sum_{P' \in \ParPair(d)} (-1)^{|P'|} \prod_{j \notin \bigcup P'} L_j \prod_{\{k,\ell\} \in P'} \wick{\c L_k \c L_\ell} ,
\end{align}
where $\ParPair(d)$ denotes the set of \term{partial pairings}
of the $d$ indices,
i.e., collections~$P'$ of disjoint two-element subsets of $\set{1,\ldots,d}$. One
can alternatively view~\eqref{eq: normal ordering def} as a sum of partitions~$\varpi$ of
the index set~$\set{1,\ldots,d}$
to subsets of size at most two; then the number of pairs reads $|P'| = d-|\varpi|$, and the first
and second products above range over the parts of size one and two, respectively.

Let us record two combinatorial formulas for later calculations.
\begin{lemma}\label{lem: explicit triangularity of normal ordering}
\label{lemma: NormOrd_Wick}
Let $L_1 , \ldots , L_d \in \dLocLinFi$ be linear field polynomials. Then we have
\begin{align}\label{eq: explicit triangularity of normal ordering}
\no{ \prod_{i=1}^d L_i }
\, = \; & \prod_{i=1}^d L_i + R
\qquad \text{where} \qquad
R \, \in \, \bigoplus_{d' \le d-2} \dLocFiDeg{{d'}}
\end{align}
and
\begin{align}\label{eq: partial normal ordered product}
L_1 \,\no{ \prod_{i=2}^d L_i }
\, = \; & \no{ L_1 \, \prod_{i=2}^d L_i } \, + \;
 	\sum_{i=2}^d \wick{\c L_1 \c L_i} \;
 	\no{\prod_{\substack{2 \le j \le d \\ j\neq i\,}} L_j}
\end{align}
In particular, \eqref{eq: explicit triangularity of normal ordering}
shows that $\no{\cdots} \colon \dLocFi \to \dLocFi$ is bijective.
\end{lemma}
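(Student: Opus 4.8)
The plan is to prove both identities as direct consequences of the defining formula~\eqref{eq: normal ordering def}, by reorganizing the sum over partial pairings, and then to deduce bijectivity from the triangular structure exhibited by~\eqref{eq: explicit triangularity of normal ordering}.

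First I would establish~\eqref{eq: explicit triangularity of normal ordering}. In~\eqref{eq: normal ordering def}, isolate the empty partial pairing $P' = \emptyset \in \ParPair(d)$, which carries sign $(-1)^0 = 1$ and contributes exactly $\prod_{i=1}^d L_i \in \dLocFiDeg{d}$. Every other partial pairing has $|P'| \ge 1$, so its summand is a product of $d - 2|P'|$ of the $L_j$'s times $|P'|$ scalar contractions, hence lies in $\dLocFiDeg{d - 2|P'|}$ with $d - 2|P'| \le d - 2$. Collecting these nonempty-pairing terms into $R$ gives the claim.

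Next, for the recursion~\eqref{eq: partial normal ordered product}, the key step is to split $\ParPair(\{1,\ldots,d\})$, which indexes the sum defining $\no{L_1 \prod_{i=2}^d L_i}$, according to the role of the index $1$. If $1$ is left unpaired, the remaining data is a partial pairing of $\{2,\ldots,d\}$, and those summands assemble precisely into $L_1 \, \no{\prod_{i=2}^d L_i}$, with the factor $L_1$ pulled out. If instead $1$ is paired with some $i \in \{2,\ldots,d\}$, then the pairing is $\{\{1,i\}\} \sqcup P''$ with $P''$ a partial pairing of $\{2,\ldots,d\} \setminus \{i\}$; since $|\{\{1,i\}\} \sqcup P''| = |P''| + 1$, the extra pair produces a factor $(-1)\,\wick{\c L_1 \c L_i}$, and summing over $i$ and over $P''$ reproduces $-\sum_{i=2}^d \wick{\c L_1 \c L_i}\, \no{\prod_{2 \le j \le d,\, j \ne i} L_j}$. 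Equating $\no{L_1 \prod_{i=2}^d L_i}$ with the sum of these two contributions and transposing the second yields~\eqref{eq: partial normal ordered product}.

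Finally, bijectivity follows from~\eqref{eq: explicit triangularity of normal ordering}: write $\no{\cdots} = \id + N$, where $N$ sends a degree-$d$ monomial to its remainder $R \in \bigoplus_{d' \le d-2}\dLocFiDeg{d'}$. With respect to the exhaustive filtration $\dLocFi^{\le D} := \bigoplus_{d \le D}\dLocFiDeg{d}$, the operator $N$ is strictly degree-lowering, mapping $\dLocFi^{\le D} \to \dLocFi^{\le D-2}$; hence on each $\dLocFi^{\le D}$ it is nilpotent with $N^{\lfloor D/2 \rfloor + 1} = 0$, so $\id + N$ is invertible there with inverse $\sum_{k \ge 0}(-N)^k$, a finite sum. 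Since the filtration exhausts $\dLocFi$ and $N$ is the same operator on each piece, these inverses are compatible and patch to a two-sided inverse on all of $\dLocFi$. The only points requiring care are the sign bookkeeping for the added pair in the second identity and the observation that, although each $\dLocFiDeg{d}$ is infinite-dimensional, the filtration argument still applies because $N$ lowers degree while degrees are bounded below; neither is a genuine obstacle.
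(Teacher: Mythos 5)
Your proposal is correct and follows essentially the same route as the paper: isolate the empty partial pairing to get the triangularity, regroup the partial pairings of $\{1,\ldots,d\}$ according to whether the index $1$ is unpaired or paired to obtain the recursion (with the correct sign bookkeeping), and invert the unipotent-plus-degree-lowering operator for bijectivity. The paper phrases the last step as an induction on degree rather than a Neumann series, but the two are interchangeable here.
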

\begin{proof}
Formula~\eqref{eq: explicit triangularity of normal ordering} is
obtained by separating in~\eqref{eq: normal ordering def} the
summand corresponding to the empty partial pairing~$P' = \emptyset$,
which gives the term $\prod_{i=1}^d L_i$, and the rest,
which include at least one pair and therefore give terms of degrees $d' = d - 2|P'| \le d-2$.
This ``upper triangularity''
formula
allows one to uniquely solve $\prod_{i=1}^d L_i$ in terms of normally ordered products,
inductively on the degree~$d$, so it shows bijectivity of $\no{\cdots} \colon \dLocFi \to \dLocFi$.

Formula~\eqref{eq: partial normal ordered product} is obtained by
grouping the terms in~$\no{ \prod_{i=1}^d L_i }$ to those where the index~$1$ has no pair
in the partial pairing~$P'$ (these produce the left hand side) and those where
$1$ has a pair~$i$
(these produce, up to a sign, the $i$th summand on the right hand side).
\end{proof}

\begin{ex}\label{ex: base case NormOrd null}
For $\zu,\zubis\in\Z^2$, with the formula from Example~\ref{ex: Wick cont of Lap}, we get
\begin{align*}
\no{\big(\gLapl\field(\zu) \big)\field(\zubis)}
= & \ \big(\gLapl \field(\zu) \big) \field(\zubis) - \wick{\big(\gLapl\c\field(\zu) \big)\c\field(\zubis)}
	\\
= & \ \big(\gLapl \field(\zu) \big) \field(\zubis) + 4 \pi \delta_{\zubis}(\zu) - 4 \pi \delta_0(\zu) \,.
\end{align*}
Observe that this field polynomial is null by Example~\ref{ex: quadratic nulls}.
\hfill
$\diamond$
\end{ex}

The observation in the example above generalizes.

\begin{lemma}\label{lem: normal ordering with null fields}
Let $L_1, \ldots,L_{d} \in \dLocLinFi$ be linear field polynomials.
If $L_j \in \dNuFi$ for some $j$, then also
$\no{ \prod_{i=1}^{d} L_i } \in \dNuFi $.
\end{lemma}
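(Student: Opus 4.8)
The plan is to exploit the multilinearity of normal ordering together with the complete classification of linear null fields. Since $\no{\cdots}$ is symmetric and multilinear in its linear factors $L_1,\dots,L_d$ (immediate from~\eqref{eq: normal ordering def}, as both the surviving products and the contractions $\wick{\c L_k \c L_\ell}$ are multilinear), I may permute the factors so that the null one is $L_1$, and then expand $L_1$ in the basis of $\dLocLinFi\cap\dNuFi$ furnished by Corollary~\ref{coro: linear nulls}. It therefore suffices to treat the two cases $L_1 = \field(0)$ and $L_1 = \gLapl\field(\zu)$ with $\zu\in\Z^2$.

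The case $L_1=\field(0)$ is easy. A direct computation with~\eqref{eq: full plane grad Green} gives $\wick{\c\field(0)\c\field(\zubis)} = 4\pi\GreenGrad(0,\zubis) = 4\pi\big(\Green(-\zubis)-\Green(0)-\Green(\zubis)\big) = 0$, using $\Green(0)=0$ and the symmetry $\Green(-\zubis)=\Green(\zubis)$. Hence every contraction involving $\field(0)$ vanishes, and the partial-pairing expansion~\eqref{eq: normal ordering def} collapses to $\no{\field(0)\prod_{i=2}^d L_i} = \field(0)\,\no{\prod_{i=2}^d L_i}$. Since $\ev^{\ddomain}_\zs$ is an algebra homomorphism with $\ev^{\ddomain}_\zs(\field(0))=0$, this field polynomial evaluates to zero in every discrete domain, hence is null.

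For the main case write $N := \gLapl\field(\zu)$ and let $C_N$ be the constant-coefficient derivation of $\dLocFi$ determined by $C_N(\field(\zubis)) := \wick{(\gLapl\c\field(\zu))\c\field(\zubis)} = -4\pi\delta_{\zu,\zubis}+4\pi\delta_{\zu,0}$ (cf. Example~\ref{ex: Wick cont of Lap}). The crux is the discrete equation-of-motion identity
\begin{equation*}
N\cdot P \; \equiv \; C_N(P) \pmod{\dNuFi} \qquad \text{for every } P\in\dLocFi,
\end{equation*}
which by linearity reduces to $P = \field(\zubis_1)\cdots\field(\zubis_m)$ a monomial. To prove it I would set $F := N\cdot P - C_N(P)$, test it against arbitrary $\dgff{f_1}\cdots\dgff{f_n}$ with the $f_j$ supported away from $\zs+\meshsz\,\PolySupp F$, and expand by Wick's formula~\eqref{eq: Wick formula for DGFF}. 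In $\EX^{\DorN}_{\ddomain}[\ev^{\ddomain}_\zs(N\cdot P)\,\prod_j\dgff{f_j}]$ the distinguished factor $\gLapl\dgffptwise(\zs+\meshsz\zu)$ cannot pair with any $\dgff{f_j}$, since its two-point function equals $-4\pi\meshsz^2 f_j(\zs+\meshsz\zu)=0$ by Remark~\ref{rmk: lap 0 corr} (here $\zu\in\PolySupp F$ because the top-degree monomial $\field(\zu)\,P$ survives in $N\cdot P$ with coefficient $-4$ and cannot be cancelled by the strictly lower-degree $C_N(P)$). It must therefore pair with one of the $\ev^{\ddomain}_\zs\field(\zubis_k)$, contributing exactly $\wick{(\gLapl\c\field(\zu))\c\field(\zubis_k)}$ and leaving the free Wick sum of the remaining factors. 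Summing over $k$ reproduces precisely $\EX^{\DorN}_{\ddomain}[\ev^{\ddomain}_\zs(C_N(P))\,\prod_j\dgff{f_j}]$, so the two correlations coincide and $F$ is null. I expect this to be the main obstacle: it is the one point forcing a return to the probabilistic definition and to the domain-independent identity of Remark~\ref{rmk: lap 0 corr}, and it cannot be obtained by purely algebraic manipulation from the $m=1$ case (Example~\ref{ex: quadratic nulls}), precisely because $\dNuFi$ fails to be an ideal.

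Finally I would assemble the pieces. Formula~\eqref{eq: partial normal ordered product} with $L_1=N$ reads
\begin{equation*}
\no{N\prod_{i=2}^d L_i} \; = \; N\,\no{\prod_{i=2}^d L_i} \; - \; \sum_{i=2}^d \wick{\c N \c L_i}\,\no{\prod_{\substack{2\le j\le d\\ j\neq i}} L_j}.
\end{equation*}
The equation-of-motion identity applied to $P=\no{\prod_{i=2}^d L_i}$ turns the first term into $C_N\big(\no{\prod_{i=2}^d L_i}\big)$ modulo null fields. Now $C_N$ is a derivation that annihilates the constant pair-contractions appearing in~\eqref{eq: normal ordering def} and acts on a surviving linear factor by the scalar $C_N(L_i)=\wick{\c N\c L_i}$; a direct bookkeeping of~\eqref{eq: normal ordering def} (for each $i$, the partial pairings of $\{2,\dots,d\}$ leaving $i$ unpaired are exactly the partial pairings of $\{2,\dots,d\}\setminus\{i\}$) then yields
\begin{equation*}
C_N\Big(\no{\prod_{i=2}^d L_i}\Big) \; = \; \sum_{i=2}^d \wick{\c N\c L_i}\,\no{\prod_{\substack{2\le j\le d\\ j\neq i}} L_j}.
\end{equation*}
This matches the subtracted sum exactly, so $\no{N\prod_{i=2}^d L_i}\equiv 0 \pmod{\dNuFi}$, which completes the proof.
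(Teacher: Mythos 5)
Your proof is correct. It rests on the same two pillars as the paper's own argument --- the classification of linear null fields (Corollary~\ref{coro: linear nulls}) and the cancellation between the infinite-lattice contraction of $\gLapl\field(\zu)$ and its DGFF two-point function at interior points (Remark~\ref{rmk: lap 0 corr}, Example~\ref{ex: quadratic nulls}) --- but it organizes them genuinely differently. The paper expands $\EX_{\ddomain}^{\DorN}\big[\ev_\zs^{\ddomain}\big(\no{\prod_i L_i}\big)\prod_a\dgff{f_a}\big]$ simultaneously over normal-ordering partial pairings and Wick pairings, and matches, for each partner index $m$, the term where $1$ is contracted with $m$ against the term where $1$ is Wick-paired with $m$, reducing everything to the vanishing of $-\wick{\c L_1 \c L_m}+\EX_{\ddomain}^{\DorN}\big[(\ev_\zs^{\ddomain}L_1)(\ev_\zs^{\ddomain}L_m)\big]$, checked via the classification exactly as you do. You instead isolate the probabilistic content into the single identity $\gLapl\field(\zu)\cdot P\equiv C_N(P)\pmod{\dNuFi}$ for arbitrary $P\in\dLocFi$ --- a higher-degree generalization of Example~\ref{ex: quadratic nulls} whose Wick bookkeeping involves only the one distinguished factor --- and then dispose of the normal-ordering combinatorics purely algebraically via~\eqref{eq: partial normal ordered product} and the derivation property of $C_N$. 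This buys a reusable ``equation of motion modulo null fields'' lemma and a clean separation of probability from combinatorics; your separate treatment of $L_1=\field(0)$ (all its contractions vanish since $\GreenGrad(0,\cdot)\equiv 0$, so no Wick expansion is needed at all) is a nice simplification the paper leaves implicit as ``trivial''. The one delicate support point --- that $\zu\in\PolySupp\big(\gLapl\field(\zu)\cdot P-C_N(P)\big)$, so the test functions cannot see $\zs+\meshsz\zu$ and the Laplacian factor is forced to Wick-pair with a factor of $P$ --- you flag and resolve correctly by the top-degree argument; the paper relies on the same interiority of $\zs+\meshsz\zu$ at the corresponding step of its own proof.
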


We give the proof shortly, but we first note the important consequence.
Lemma~\ref{lem: normal ordering with null fields} implies that the normally ordered product
of $d$ linear field polynomials
${\no{\cdots} \colon (\dLocLinFi)^{\tens d} \to \dLocFi}$
factorizes through the quotient by null fields in each tensorand, giving rise to
well-defined normally ordered products
$\noQuo{\cdots} \colon (\dLinFields)^{\tens d} \to \dFields$
of $d$ local fields by
\begin{align}\label{eq: quotient normal ordering definition}
\noQuo{(L_1 + \Null) \tens \cdots \tens (L_d + \Null)} := \no{\prod_{i=1}^d L_i} + \Null
\end{align}
(we write the tensor products explicitly on the left hand side
in order to avoid a false impression of a ring structure on $\dFields$).
Extending linearly to all degrees~$d$ and noting symmetricity in the arguments,
we get a normal ordering operation
\begin{align*}
\noQuo{\cdots} \colon \dFields \to \dFields
\end{align*}
on the space of local fields of the DGFF.
Moreover, it follows from
Lemma~\ref{lem: explicit triangularity of normal ordering} that this
normal ordering is surjective.

\begin{proof}[Proof of Lemma~\ref{lem: normal ordering with null fields}]
Without loss of generality, assume $L_1 \in \dNuFi$.
By definition of $\dNuFi$, we must check the nullity condition~\eqref{eq: null field condition} for $F = \no{ \prod_{i=1}^{d} L_i }$.
Let $\ddomain \subset \SqLatMesh$ be a discrete domain,
$(f_a)_{a \in A}$ a finite collection of zero-average test functions $f_a \in \ddistribMC{\ddomain}$,
and $\zs \in \ddomain$ a point, such that the supports are meaningful and distinct as required in the
condition~\eqref{eq: null field condition}. We must show the vanishing of
\begin{align*}
& \EX_{\ddomain}^{\DorN} \Big[ \Big( \ev^{\ddomain}_\zs \no{ \prod_{i=1}^{d} L_i } \Big)
		\, \prod_{a \in A} \dgff{f_a} \Big] \\
= \; & \sum_{P'} (-1)^{|P'|} \; \bigg( \prod_{\{k,\ell\} \in P'}  \wick{\c L_k \c L_\ell} \bigg)
	\; \EX_{\ddomain}^{\DorN} \Big[
        \prod_{j \notin P'} \big( \ev^{\ddomain}_\zs L_j \big) \, \prod_{a \in A} \dgff{f_a} \Big] \, .
\end{align*}
Use Wick's formula to calculate the expected value on the second line as a sum over
pairings~$P$ of $A \, \sqcup \, (\{1,\ldots,d\}\setminus P')$, schematically (avoiding cumbersome
notation that would be needed to explicitly spell out three possible forms of the pairs in~$P$)
\begin{align*}
& \EX_{\ddomain}^{\DorN} \Big[ \Big( \ev^{\ddomain}_\zs \no{ \prod_{i=1}^{d} L_i } \Big)
		\, \prod_{a \in A} \dgff{f_a} \Big] \\
= \; & \sum_{P'}  (-1)^{|P'|} \; \bigg( \prod_{\{k,\ell\} \in P'}  \wick{\c L_k \c L_\ell} \bigg)
	\sum_{P} \bigg( \prod_{{\textrm{pairs} \in P}} \EX_{\ddomain}^{\DorN} \Big[ \textrm{(pair product)} \Big] \bigg).
\end{align*}
Note that for each~$(P',P)$,
the index~$1$ of the linear null field~$L_1$ appears either in one pair in~$P'$ (leading to a normal ordering
contraction factor) or in one of the pairs in~$P$ (leading to a suitable DGFF two-point function factor),
and it is in fact useful to consider three separate cases of which exactly one occurs:
\begin{itemize}
\item $\set{1,a} \in P$ for some $a \in A$
\item $\set{1,m} \in P$ for some $m \in \set{2,\ldots,d}$
\item $\set{1,m} \in P'$ for some $m \in \set{2,\ldots,d}$.
\end{itemize}
In the first case, there is a factor
\begin{align*}
\EX_{\ddomain}^{\DorN} \Big[ \big( \ev^{\ddomain}_\zs L_1 \big) \, \dgff{f_a} \Big] ,
\end{align*}
which vanishes since $L_1$ is a null field and the supports of $L_1$ and $f_a$ satisfy the disjointness
and well-definedness conditions. Therefore only the last two cases contribute to our quantity of interest.
Moreover, the contributions of those can be naturally matched pairwise.
Namely, fixing $m \in \set{2,\ldots,d}$, in the second
case we may remove the pair $\set{1,m}$ from $P$ to get a pairing~$\hat{P}$ of
$(\set{1,\ldots,d}\setminus \set{1,m}) \sqcup A$, and in the third case
we may remove the pair $\set{1,m}$ from $P'$ to get a partial pairing~$\hat{P}'$ of
$\set{1,\ldots,d}\setminus \set{1,m}$, and then the sum of all contributions with the fixed $m$
combine to take the form
\begin{align*}
& \sum_{\hat{P}'} \sum_{\hat{P}} \bigg( (-1)^{|\hat{P}'| + 1} \wick{\c L_1 \c L_m} + (-1)^{|\hat{P}'|} \;
	\EX_{\ddomain}^{\DorN} \Big[ \big( \ev^{\ddomain}_\zs L_1 \big) \, \big( \ev^{\ddomain}_\zs L_m \big) \Big] \bigg)
	A(\hat{P}') \, B(\hat{P}) \; ,
\end{align*}
where $A(\hat{P}')$ and $B(\hat{P})$ are products of common contraction factors and common DGFF two-point function
factors, respectively (writing them explicitly is possible but will not be needed). We will show the vanishing of
the first factor, which up to a sign equals
\begin{align}\label{eq: the key vanishing factor for higher order null fields}
- \wick{\c L_1 \c L_m}
+ \EX_{\ddomain}^{\DorN} \Big[ \big( \ev^{\ddomain}_\zs L_1 \big) \, \big( \ev^{\ddomain}_\zs L_m \big) \Big] \; ,
\end{align}
and this will conclude the proof.

The key to prove the vanishing of~\eqref{eq: the key vanishing factor for higher order null fields}
is the complete classification of linear null fields from the previous section, Corollary~\ref{coro: linear nulls}.
The classification shows that $L_1 \in \dLocLinFi \cap \dNuFi$ is a linear combination of field
polynomials of the forms $\gLapl \field(\zu)$, for $\zu \in \Z^2$, and $\field(0)$. By linearity in~$L_1$,
we may assume~$L_1$ is exactly one of these. The latter case is trivial, so we will
assume $L_1 = \gLapl \field(\zu)$ for some $\zu \in \Z^2$.
By linearity in~$L_m$ we may also assume~$L_m = \field(\zubis)$ for some $\zubis \in \Z^2$.
In this case~\eqref{eq: the key vanishing factor for higher order null fields} reads
\begin{align*}
& - \wick{\big( \gLapl \c \field(\zu) \big) \c \field(\zubis)}
  + \EX_{\ddomain}^{\DorN} \Big[ \big( \gLapl \dgffptwise(\zs + \meshsz \zu) \big) \,
		\big( \dgffptwise(\zs + \meshsz \zubis) - \dgffptwise(\zs) \big) \Big] \; .
\end{align*}
The first term here was calculated in Example~\ref{ex: Wick cont of Lap}; it
is~$4 \pi \delta_{\zubis}(\zu) - 4 \pi \delta_{0}(\zu)$.
Directly by the defining covariance of the DGFF, the second term equals
\begin{align*}
\gLapl \Green_{\ddomain}^{\DorN}(\zsbis , \zs + \meshsz \zubis) \big|_{\zsbis = \zs + \meshsz \zu}
- \gLapl \Green_{\ddomain}^{\DorN}(\zsbis , \zs) \big|_{\zsbis = \zs + \meshsz \zu} .
\end{align*}
Since $\zs + \meshsz \zu \in \ddomain \setminus \bdry \ddomain$ by the
requirements~\eqref{eq: null field condition} of supports, this
second term simplifies to $- 4 \pi \delta_{\zubis}(\zu) + 4 \pi \delta_{0}(\zu)$.
Therefore it cancels the first term, completing the proof.
\end{proof}

Let us emphasize again that Lemma~\ref{lem: normal ordering with null fields} is key to us,
since it allows us to define normal ordering on the level of local fields of the DGFF, not just
field polynomials. The key ingredients enabling its proof, in turn, were the full classification of
linear null fields from Section~\ref{sec:linear}, and the fact that
the infinite lattice Green's functions used in our domain-independent
normal orderings had the exact same form of local singularities as the DGFF two-point functions in
any discrete domain.

\subsection{Local fields form a Fock space}\label{sub: main result A}

In this section we will prove that the space~$\dFields$ of local fields of
the DGFF is a Fock space. The local fields
$\dAntHolCurrMode{-k'_{m'}} \cdots \dAntHolCurrMode{-k'_1}
	\dHolCurrMode{-k_m} \cdots \dHolCurrMode{-k_1}\idField$
obtained by repeated current mode actions on the identity field
correspond to the Fock space basis vectors. In order to show that
these fields span~$\dFields$, we first want to express them in terms of
explicit normal ordered products of linear local fields.

\begin{lemma}\label{lem: basis elements as normal ordered products}
For any $k_1,\ldots,k_m, k'_1,\ldots, k'_{m'}\in\Zpos$,
we have the following equality in~$\dFields$:
\begin{align*}
\noQuo{\dAntHolCurrMode{- k'_{m'}}\idField \tens \cdots \tens \dAntHolCurrMode{- k'_1}\idField
	\tens \dHolCurrMode{-k_n}\idField \tens \cdots \tens \dHolCurrMode{-k_1}\idField}
= \dAntHolCurrMode{-k'_{m'}} \cdots \dAntHolCurrMode{-k'_1}
	\dHolCurrMode{-k_m} \cdots \dHolCurrMode{-k_1}\idField \, .
\end{align*}
\end{lemma}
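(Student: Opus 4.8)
The plan is to prove Lemma~\ref{lem: basis elements as normal ordered products} by induction on the total degree $d=m+m'$, peeling off the outermost current mode and reducing everything to a single structural formula describing how one current mode acts on a normal ordered product. Writing the inductive hypothesis as $G+\Null=\no{\prod_{i=1}^{d-1}L_i}+\Null$ (so the $L_i\in\dLocLinFi$ are the representatives $\dRep{-k_i}$, $\dRepBar{-k'_j}$ of the currents already prepended), the whole statement follows once I establish the \emph{intermediate formula}
\begin{align*}
\dHolCurrMode{-k}\Big(\no{\textstyle\prod_{i=1}^{d-1}L_i}+\Null\Big)
  = \no{\dRep{-k}\,\textstyle\prod_{i=1}^{d-1}L_i}+\Null
\end{align*}
together with its antiholomorphic analogue (with $\dAntHolCurrMode{-k}$ and $\dRepBar{-k}$), valid for \emph{arbitrary} linear field polynomials $L_1,\dots,L_{d-1}$. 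The right-hand side is exactly $\noQuo{(\dRep{-k}+\Null)\tens\bigotimes_i(L_i+\Null)}$, and since $\noQuo{\cdots}$ is symmetric in its tensor arguments the inductive step is immediate; crucially, because the intermediate formula makes no reference to the chiralities of the $L_i$, the interleaving of holomorphic and antiholomorphic factors needs no separate treatment. The base cases $d\le 1$ are trivial, the empty and single-factor normal orderings being $\idField$ and the field itself.

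To prove the intermediate formula I would evaluate $\dHolCurrMode{-k}(G+\Null)$ on the representative $G=\no{\prod_iL_i}$ via the contour-integral definition~\eqref{eq: discrete current modes} and the discrete Stokes formula~\eqref{eq: discrete Stokes}, taking $\zu_\diamond^{[-k]}$ as the diamond-function and $\holcurrfield(\zu_\medial)\,G$ as the medial-function. As $G$ is independent of the contour variable, this splits the result into a \emph{creation part} equal to $\dRep{-k}\,G$ (the medial sum over $\Poles{-k}_\medial$, reproducing $\dRep{-k}$ as in Example~\ref{ex: id primary}) and a \emph{correction part} $\tfrac{\ii}{4\pi}\sum_{\zu_\primary}\zu_\primary^{[-k]}\,(\gLapl\field(\zu_\primary))\,G$, arising from the diamond sum through $\gdeebar\holcurrfield=\tfrac{\ii}{2}\gLapl\field$ on primal vertices (and $0$ on dual ones) by~\eqref{eq: factorizations of Laplacian with primal graph derivatives}. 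Now applying the partial normal ordering formula~\eqref{eq: partial normal ordered product} to $\dRep{-k}\,\no{\prod_iL_i}$ rewrites the creation part, and combining~\eqref{eq: partial normal ordered product} with Lemma~\ref{lem: normal ordering with null fields} (a null factor renders a normal ordered product null) gives, modulo $\dNuFi$, the identity $(\gLapl\field(\zu_\primary))\,\no{\prod_iL_i}\equiv\sum_i\wick{(\gLapl\c\field(\zu_\primary))\c L_i}\,\no{\prod_{j\ne i}L_j}$ for each interior $\zu_\primary$ (the sum localizing to finitely many terms). Together these yield, modulo $\dNuFi$,
\begin{align*}
\dHolCurrMode{-k}(G+\Null)
 \equiv {} & \no{\dRep{-k}\,\textstyle\prod_iL_i} \\
 & + \sum_i \Big(\wick{\c\dRep{-k}\c L_i}
   + \frac{\ii}{4\pi}\sum_{\zu_\primary}\zu_\primary^{[-k]}\,\wick{(\gLapl\c\field(\zu_\primary))\c L_i}\Big)\,\no{\textstyle\prod_{j\ne i}L_j},
\end{align*}
so the intermediate formula is equivalent to the vanishing of each bracketed coefficient.

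By bilinearity it suffices to treat $L_i=\field(\zubis)$. Using $\wick{(\gLapl\c\field(\zu_\primary))\c\field(\zubis)}=4\pi(\delta_{\zu_\primary,0}-\delta_{\zu_\primary,\zubis})$ from Example~\ref{ex: Wick cont of Lap}, the correction collapses to $\ii\,(0^{[-k]}-\zubis^{[-k]})$, so the bracket vanishes precisely when $\wick{\c\dRep{-k}\c\field(\zubis)}=\ii\,\zubis^{[-k]}-\ii\,0^{[-k]}$. Expanding the left-hand side with $\dRep{-k}=\tfrac{\ii}{2\pi}\sum_{\zu_\medial\in\Poles{-k}_\medial}\gdeebar\zu_\medial^{[-k]}\pdee\field(\zu_\medial)$, $\wick{\c\field(\zu)\c\field(\zubis)}=4\pi\GreenGrad(\zu,\zubis)$ and $\GreenGrad(\zu,\zubis)=\Green(\zu-\zubis)-\Green(\zu)-\Green(\zubis)$ reduces this, after applying it at the point $\zubis$ and at the origin, to the \emph{discrete Cauchy (reproducing) identity}
\begin{align*}
2\sum_{\zu_\medial}\gdeebar\zu_\medial^{[-k]}\,(\pdee\Green)(\zu_\medial-\zubis)=\zubis^{[-k]}
\qquad\text{for all }\zubis\in\ZPrimary .
\end{align*}

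I expect this identity to be the main obstacle: it is a discrete analogue of the Cauchy integral formula, with $\pdee\Green$ playing the role of the Cauchy kernel and $\gdeebar\zu^{[-k]}$ the role of the pole, and it is exactly here that the \emph{modified} discrete Laurent monomials of Proposition~\ref{prop: monomials} (property~5, the precise local failure of holomorphicity of $\zu^{[-1]}$) and the matching local singularity of the infinite-lattice Green's function~\eqref{eq: square grid Green function} must conspire. I would prove it by rewriting the sum as a discrete contour integral through~\eqref{eq: discrete Stokes}, shifting $\gdeebar$ onto the kernel by discrete integration by parts~\eqref{eq: integration by parts}, and evaluating with the discrete residue formula of Proposition~\ref{prop: monomials}(7); the identity needed for $\dAntHolCurrMode{-k}$ is its complex conjugate, since $\Green$ is real and $\overline{\zu^{[n]}}=\cconj{\zu}^{[n]}$. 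With the reproducing identity established, every bracket vanishes, the intermediate formula holds for all linear $L_i$ irrespective of chirality, and the induction on $d=m+m'$ closes, proving the asserted equality in $\dFields$.
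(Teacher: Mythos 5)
Your argument has the same architecture as the paper's proof: induction on the total degree, unraveling the action of the outermost mode on a normal-ordered representative via Stokes' formula, splitting off the creation term $\dRep{-k}\,\no{\prod_i L_i}$ and a correction carried by $\gLapl\field(\zu_\primary)$, and then combining the partial normal-ordering identity~\eqref{eq: partial normal ordered product} with Lemma~\ref{lem: normal ordering with null fields} to reduce everything to the vanishing of the bracketed coefficients. Where you genuinely diverge is the last step. The paper keeps the second contraction slot equal to $\dRep{-\ell}$ or $\dRepBar{-\ell}$ and kills the bracket by rewriting it as a single discrete contour integral pushed to infinity (Lemma~\ref{lemma: technical}), exploiting the $\OO(|\zu|^{-k-2})$ decay of the \emph{double} derivative of $\GreenGrad$. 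You instead evaluate the bracket on the basis $L_i=\field(\zubis)$, which proves the intermediate formula for \emph{arbitrary} linear $L_i$ --- a mild strengthening that pays off by making the interleaving of the two chiralities in the induction uniform, where the paper resorts to ``decorating symbols with bars and primes.'' The price is that your endpoint is an exact reproducing identity rather than a vanishing statement, and that identity is the one place where your write-up is not yet a proof.

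The identity itself is true, and your reduction to it is correct (including the disappearance of the $-\Green(\zu)$ and $-\Green(\zubis)$ pieces of $\GreenGrad$ upon evaluating the formula at $\zubis$ and at $0$). But the tools you name for proving it do not directly apply: the discrete integration by parts~\eqref{eq: integration by parts} requires both functions to be discrete holomorphic near the contour, and the residue formula of Proposition~\ref{prop: monomials} pairs two discrete \emph{monomials}, not a monomial against $\pdee\Green$. The mechanism that actually works is one application of Stokes' formula~\eqref{eq: discrete Stokes} to
\begin{align*}
\dcoint{\gamma_R} \zu_\diamond^{[-k]} \, 2\,(\pdee\Green)(\zu_\medial-\zubis)\; \dd{\zu} \, ,
\end{align*}
with $\gamma_R$ a square contour at distance $R$ surrounding $\Poles{-k}\cup\set{\zubis}$: the diamond-lattice sum contributes $-\ii\,\zubis^{[-k]}$ because $\gdeebar\pdee\Green=\tfrac{1}{2}\gLapl\Green=-\tfrac{1}{2}\delta_0$ on primal vertices (and $0$ on dual ones) by~\eqref{eq: factorizations of Laplacian with primal graph derivatives}, the medial-lattice sum gives $\ii$ times the left-hand side of your reproducing identity, and the contour integral itself is $\OO(R^{-k})\to 0$ since the integrand is $\OO(R^{-k-1})$ on a contour of length $\OO(R)$ --- the same decay argument as in Lemma~\ref{lemma: technical}, only with one power less to spare. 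With that substitution your proof closes completely.
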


This expression is the final ingredient of the proof of our first main result, so before giving
the proof of the lemma, let us show how we conclude with it.

\begin{thm}\label{thm: main theorem about Fock space structure}
The space~$\dFields$ of local fields of the discrete Gaussian free field
and the Fock space~$\FullFock$ are isomorphic,
\begin{align*}
\dFields \; \isom \; \FullFock \; ,
\end{align*}
as representations of two commuting Heisenberg algebras, and as representations of
two commuting Virasoro algebras.
\end{thm}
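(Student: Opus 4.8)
The plan is to reduce the theorem to the single inclusion $\dFields \subseteq (\UHei \tens \AntUHei)\idField$. Indeed, Corollary~\ref{cor: easy inclusion} already supplies the isomorphism $\FullFock \isom (\UHei \tens \AntUHei)\idField$ together with the containment $(\UHei \tens \AntUHei)\idField \subseteq \dFields$; once the reverse inclusion is established we obtain the equality $\dFields = (\UHei \tens \AntUHei)\idField$, and hence $\dFields \isom \FullFock$ as representations of the two commuting Heisenberg algebras $\UHei$ and $\AntUHei$.

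To obtain the reverse inclusion, I would first invoke the surjectivity of the normal ordering map $\noQuo{\cdots} \colon \dFields \to \dFields$, recorded after Lemma~\ref{lem: normal ordering with null fields} as a consequence of the upper-triangularity formula~\eqref{eq: explicit triangularity of normal ordering}. Surjectivity means that every local field is a finite linear combination of normally ordered products $\noQuo{F_1 \tens \cdots \tens F_d}$ with linear local fields $F_i \in \dLinFields$. Since $\noQuo{\cdots}$ is multilinear and symmetric in these arguments, and since Theorem~\ref{thm: basis lin loc fields} exhibits $\set{\dHolCurrMode{-k}\idField} \cup \set{\dAntHolCurrMode{-k}\idField}$ for $k \in \Zpos$ as a basis of $\dLinFields$, it suffices to treat products in which every factor $F_i$ is one of these basis fields.

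For such products, Lemma~\ref{lem: basis elements as normal ordered products} provides exactly the identification
\[
\noQuo{\dAntHolCurrMode{-k'_{m'}}\idField \tens \cdots \tens \dHolCurrMode{-k_1}\idField}
= \dAntHolCurrMode{-k'_{m'}} \cdots \dHolCurrMode{-k_1}\idField \; \in \; (\UHei \tens \AntUHei)\idField \, ,
\]
so every normally ordered product of basis linear local fields, and therefore every local field, lies in $(\UHei \tens \AntUHei)\idField$. This gives $\dFields \subseteq (\UHei \tens \AntUHei)\idField$ and completes the Heisenberg part of the statement.

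The Virasoro claim then needs no additional work: on both $\dFields$ and $\FullFock$ the commuting Virasoro generators $\dvirL{n}, \dvirBarL{n}$ and $\VirL{n}, \VirBarL{n}$ are produced from the respective Heisenberg generators by identical Sugawara formulas (Lemma~\ref{lemma: sugawara Fock} and the Sugawara construction recalled in Section~\ref{sub: Sugawara}), so any isomorphism of $\UHei \tens \AntUHei$-representations automatically intertwines them, with central charge $c=1$. I expect the genuine difficulty to reside not in this assembly, which is short, but in the supporting Lemma~\ref{lem: basis elements as normal ordered products}: commuting a current mode past a normally ordered product generates correction terms of the shape dictated by~\eqref{eq: partial normal ordered product}, and one must verify that the contractions built from the infinite-lattice gradient Green's function $\GreenGrad$ reproduce precisely the Heisenberg commutators $[\dHolCurrMode{k},\dHolCurrMode{-k}] = k\,\id_\dFields$ of Proposition~\ref{prop: comm relations}. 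That inductive matching is where the real content of the reduction lies.
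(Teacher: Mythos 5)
Your proposal is correct and follows essentially the same route as the paper: reduce to the inclusion $\dFields \subseteq (\UHei \tens \AntUHei)\idField$ via Corollary~\ref{cor: easy inclusion}, use the surjectivity of normal ordering from Lemma~\ref{lem: explicit triangularity of normal ordering} together with the basis of linear local fields from Theorem~\ref{thm: basis lin loc fields} to reduce to normally ordered products of the fields $\dHolCurrMode{-k}\idField$ and $\dAntHolCurrMode{-k}\idField$, and conclude with Lemma~\ref{lem: basis elements as normal ordered products}; the Virasoro claim follows from the common Sugawara construction exactly as you say. Your closing remark that the real work is hidden in Lemma~\ref{lem: basis elements as normal ordered products} also matches the paper's structure, where that lemma is proved separately by induction using~\eqref{eq: partial normal ordered product} and Lemma~\ref{lemma: technical}.
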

\begin{proof}
On both $\dFields$ and~$\FullFock$, the Virasoro representations are obtained from the
Heisenberg representations by the Sugawara construction, so it suffices to prove isomorphism
as representations of Heisenberg algebras.
In Corollary~\ref{cor: easy inclusion} we already established
$\FullFock \, \isom \, (\UHei \tens \AntUHei) \idField \, \subset \, \dFields$
as representations of
Heisenberg algebras, so to complete the proof we must
show $\dFields \subset (\UHei \tens \AntUHei) \idField$.

So let $F + \Null \in \dFields$ be an arbitrary local field, with $F \in \dLocFi$ denoting
a representative field polynomial. By the surjectivity of the normal ordering
that was observed in Lemma~\ref{lem: explicit triangularity of normal ordering},
we can find some $\tilde{F} \in \dLocFi$
such that $\no{\tilde{F}} = F$.
Split the field polynomial $\tilde{F}$
to homogeneous pieces, i.e., write $\tilde{F} = \sum_{d=0}^D \tilde{F}_d$ with
$\tilde{F}_d \in \dLocFiDeg{d}$. By linearity, it suffices to show
that~$\no{\tilde{F}_d} + \Null \in (\UHei \tens \AntUHei) \idField$ for each~$d$.
Factorize the homogeneous field polynomial $\tilde{F}_d$ into linear factors, i.e.,
write  $\tilde{F}_d = \prod_{i=1}^d L_i$ for some $L_1 , \ldots , L_d \in \dLocLinFi$,
and note that by definition~\eqref{eq: quotient normal ordering definition} we then have
\begin{align*}
\no{\tilde{F_d}} + \Null
\; = \; \no{L_1 \cdots L_d } + \Null
\; = \; \noQuo{(L_1 + \Null) \tens \cdots \tens (L_d + \Null)} .
\end{align*}
Theorem~\ref{thm: basis lin loc fields} gives a basis for linear local fields, and
it in particular guarantees that for each $i = 1, \ldots, d$, we can write $L_i + \Null$
as a linear combination of $\dHolCurrMode{-k} \idField$ and
$\dAntHolCurrMode{-k} \idField$, $k \in \Zpos$.
Lemma~\ref{lem: basis elements as normal ordered products} explicitly shows that
normal ordering applied to such linear factors is expressible in terms of
local fields of the form $\dAntHolCurrMode{-k'_{m'}} \cdots \dAntHolCurrMode{-k'_1}
	\dHolCurrMode{-k_m} \cdots \dHolCurrMode{-k_1}\idField$,
which are manifestly in $(\UHei \tens \AntUHei) \idField$.
We conclude that $F + \Null \in (\UHei \tens \AntUHei) \idField$ as desired.
\end{proof}

For the proof of Lemma~\ref{lem: basis elements as normal ordered products},
we still need one auxiliary result. In it, and in the proof of
Lemma~\ref{lem: basis elements as normal ordered products} itself,
we will use the explicit expressions given in Example~\ref{ex: id primary}
for the field polynomials $\dRep{-k}, \dRepBar{-k} \in \dLocLinFi$
which are representatives of the linear local fields
${\dHolCurrMode{-k} \idField, \dAntHolCurrMode{-k} \idField \in \dLinFields}$,
respectively. Recall that the explicit expressions are
\begin{align*}
\dRep{-k} := \frac{\ii}{2\pi}
	\sum_{\zu_\medial\in\Poles{-k}_\medial}\gdeebar \zu_\medial^{[-k]}\pdee\field(\zu_\medial)
\qquad \text{ and } \qquad
\dRepBar{-k} := \frac{-\ii}{2\pi}
	\sum_{\zu_\medial\in\Poles{-k}_\medial}\gdee \cconj{\zu}_\medial^{[-k]}\pdeebar\field(\zu_\medial) \; ,
\end{align*}
where as in~\eqref{eq: monomial pole support set},
we denote by $\Poles{-k}\subset\ZDiamond\cup\ZMedial$ the neighborhood
of the origin where the function $\zu \mapsto \zu^{[-k]}$ is not discrete holomorphic,
and $\Poles{-k}_\medial = \Poles{-k} \cap \ZMedial$.

\begin{lemma}\label{lemma: technical}
For any $k,\ell\in\Zpos$ and any positively oriented corner contour $\gamma$ satisfying
$\Poles{k}\cup\Poles{\ell} \subset \interior\gamma$, we have
\begin{align*}
\dcoint{\gamma} \zu_\diamond^{[-k]} \wick{ \big( \pdee\c\field(\zu_\medial) \big) \c {\dRep{-\ell}} } \,\dd{\zu}
\,=\, \dcoint{\gamma} \zu_\diamond^{[-k]} \wick{ \big( \pdee \c \field(\zu_\medial) \big) \c {\dRepBar{-\ell}} } \,\dd{\zu}
\,=\, & \ 0 \,, \\
\dcoint{\gamma} \cconj{\zu}_\diamond^{[-k]} \wick{ \big(\pdeebar \c \field(\zu_\medial) \big) \c {\dRep{-\ell}} } \,\dd{\cconj \zu}
\,=\,  \dcoint{\gamma} \cconj{\zu}_\diamond^{[-k]}
		\wick{ \big( \pdeebar\c \field(\zu_\medial) \big) \c { \dRepBar{-\ell}}} \,\dd{\cconj \zu}
\,=\, & \ 0\,.
\end{align*}
\end{lemma}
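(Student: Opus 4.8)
The plan is to treat all four integrals uniformly, since each has the shape of a Laurent monomial times a single contraction of a discrete derivative of $\field$ against one of the representatives $\dRep{-\ell}, \dRepBar{-\ell}$, integrated around $\gamma$. I would first reduce the contraction to a discrete derivative of a Green's-function convolution. Writing the representative as $\dRep{-\ell} = \sum_{\zubis \in \Z^2} c_\zubis\, \field(\zubis)$ (a linear combination of primal field values), bilinearity of the contraction and the defining formula $\wick{\c \field(\zu) \c \field(\zubis)} = 4\pi\GreenGrad(\zu,\zubis)$ give
\[
\wick{ \big( \pdee\c\field(\zu_\medial) \big) \c {\dRep{-\ell}} } = \pdee_{\zu_\medial} G, \qquad G(\zu) := 4\pi \sum_{\zubis \in \Z^2} c_\zubis\, \Green(\zu - \zubis).
\]
The two self-terms $-\Green(\zu)-\Green(\zubis)$ in $\GreenGrad$ drop out: the $\Green(\zubis)$ term is a constant killed by $\pdee$, and the $\Green(\zu)$ term comes multiplied by $\sum_\zubis c_\zubis$, which vanishes because $\dRep{-\ell}$ is assembled from the differences $\pdee\field(\zubis_\medial)$, each of which has zero total coefficient. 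This zero-sum property is the linchpin of the argument.

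Next I would record two properties of $G$. First, $\gLapl G(\zu) = -4\pi\, c_\zu$, so $G$ is discrete harmonic away from $\PolySupp \dRep{-\ell}$, a neighborhood of the origin contained in $\interior\gamma$; consequently, by the factorization \eqref{eq: factorizations of Laplacian with primal graph derivatives}, the medial function $g := \pdee G$ satisfies $\gdeebar g = \tfrac12 \gLapl G$ on primal vertices and $\gdeebar g = 0$ on dual ones, i.e. $g$ is discrete holomorphic outside $\interior\gamma$. Second, the zero-sum property together with the asymptotics $\Green(\zu) = -\frac{1}{2\pi}\log|\zu| + C + \OO(|\zu|^{-2})$ forces $G(\zu) = \OO(|\zu|^{-1})$ as $|\zu|\to\infty$, whence the finite difference $g = \pdee G$ decays like $\OO(|\zu|^{-2})$.

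With these in hand the integral is evaluated by pushing the contour to infinity, exactly as in the proof of Lemma~\ref{lemma: comm integrals}. Both the diamond function $\zu_\diamond^{[-k]}$ (holomorphic outside $\Poles{-k}\subset\interior\gamma$, decaying like $\OO(|\zu|^{-k})$ with $k\geq 1$) and the medial function $g$ are discrete holomorphic on the annulus between $\gamma$ and a large symmetric square contour $\gamma_R$, so the contour-deformation consequence of the discrete Stokes formula \eqref{eq: discrete Stokes} gives $\dcoint{\gamma}(\cdots) = \dcoint{\gamma_R}(\cdots)$. On $\gamma_R$ the integrand is $\OO(R^{-k-2})$ while the contour has length $\OO(R)$, so the integral is $\OO(R^{-k-1})\to 0$; being independent of $R$, it must vanish.

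Finally, the remaining three integrals follow by the same mechanism. For $\dRepBar{-\ell}$ the coefficients again sum to zero (it is built from $\pdeebar\field$), so the contracted factor is once more $\pdee$ of a Green's convolution that is harmonic outside the origin and decaying, hence holomorphic and decaying, and the same deformation applies. The two $\dd\cconj\zu$ integrals are the complex-conjugate analogues: there $\cconj\zu_\diamond^{[-k]}$ is discrete antiholomorphic, the contracted factor is $\pdeebar$ of such a $G$ (antiholomorphic outside the origin via $\gdee\pdeebar = \tfrac12\gLapl$), and one deforms using the $\dd\cconj\zu$ Stokes formula. I expect the only real work --- the main obstacle --- to be the careful bookkeeping establishing the zero-sum of coefficients and the resulting $\OO(|\zu|^{-1})$ decay of $G$; once these are in place, holomorphicity outside the origin and the deform-to-infinity estimate are routine.
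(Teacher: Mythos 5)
Your proposal is correct and follows essentially the same route as the paper: both reduce the contracted factor to a discrete derivative of (combinations of) the full-plane Green's function that is discrete holomorphic away from a finite neighborhood of the origin and decays like $\OO(|\zu|^{-2})$, and then push the contour to a large square $\gamma_R$ where the integrand is $\OO(R^{-k-2})$ over a contour of length $\OO(R)$. The only cosmetic difference is that you expand $\dRep{-\ell}$ into primal field values and use the zero-sum of the coefficients to kill the self-terms of $\GreenGrad$ and to get the decay, whereas the paper keeps $\dRep{-\ell}$ in its medial form and obtains the same decay from the double derivative $(\pdee)_{\zu}(\pdee)_{\zubis}\GreenGrad$.
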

\begin{proof}
The proofs of vanishing of all four integrals are similar, so let us just look at the top left one.
Using the explicit expression of $\dRep{-\ell}$ recalled above and the definition of contractions, we find
\begin{align*}
\dcoint{\gamma} \zu_\diamond^{[-k]} \wick{ \big( \pdee \c \field(\zu_\medial) \big) \c {\dRep{-\ell}} } \,\dd{\zu}
\, =& \, \frac{\ii}{2\pi} \sum_{\zubis_\medial\in\Poles{\ell}_\medial}\gdeebar \zubis_\medial^{[-\ell]}
	\dcoint{\gamma} \zu_\diamond^{[-k]} \wick{ \big( \pdee\c\field(\zu_\medial) \big) \big( \pdee\c\field(\zubis_\medial) \big) } \,\dd{\zu} \\
= & \, \frac{\ii}{2\pi} \sum_{\zubis_\medial\in\Poles{\ell}_\medial}\gdeebar \zubis_\medial^{[-\ell]}
	\dcoint{\gamma} \zu_\diamond^{[-k]} (\pdee)_{\zu}(\pdee)_\zubis \GreenGrad(\zu_\medial , \zubis_\medial) \,\dd{\zu} \,,
\end{align*}
Note moreover, that the function $\zu_\medial \mapsto(\pdee)_{\zu}(\pdee)_\zubis
\GreenGrad(\zu_\medial,\zubis_\medial)$ is discrete holomorphic away from the
neighbors of $\zubis_\medial$
by the factorization~\eqref{eq: factorizations of Laplacian with primal graph derivatives} of the Laplacian
and the discrete harmonicity of $\GreenGrad(\cdot, \zubis)$ away from~$0$ and~$\zubis$.
The integrand in the last discrete contour integral has the asymptotic
behaviour $\OO (|\zu|^{-k-2})$ as $|\zu| \to \infty$.
By discrete holomorphicity of the integrand, the contour~$\gamma$ can be taken arbitrarily
large without changing the result. On a square contour at distance~$R$ from the origin,
the integral is $\OO(R^{-k-1})$, so taking~$R \to \infty$ shows that it vanishes for~$k > 0$.
\end{proof}

We are then ready to provide the proof that we postponed earlier.

\begin{proof}[Proof of Lemma~\ref{lem: basis elements as normal ordered products}]
We use the representative field polynomials
$\dRep{-k}, \dRepBar{-k} \in \dLocLinFi$ given in Example~\ref{ex: id primary} for the linear local fields
$\dHolCurrMode{-k}\idField = \dRep{-k} + \Null$ and $\dAntHolCurrMode{-k}\idField = \dRepBar{-k} + \Null$.
In particular, the left hand side of the asserted equality becomes
\begin{align*}
\noQuo{\dAntHolCurrMode{- k'_{m'}}\idField \tens \cdots \tens \dAntHolCurrMode{- k'_1}\idField
	\tens \dHolCurrMode{-k_m}\idField \tens \cdots \tens \dHolCurrMode{-k_1}\idField}
\; = \;
	\no{\prod_{i=1}^m \dRep{-k_{i}} \, \prod_{i'=1}^{m'} \dRepBar{-k'_{i'}} }
	\,+\, \Null\, .
\end{align*}
To compare this with the right hand side in the assertion,
\begin{align*}
\dAntHolCurrMode{-k'_{m'}} \cdots \dAntHolCurrMode{-k'_1}
	\dHolCurrMode{-k_m} \cdots \dHolCurrMode{-k_1}\idField \, ,
\end{align*}
we argue by induction on the degree~${m+m'}$. In order to simplify notation, we spell out the details only in
the case of~$m'=0$; the proof of the general case would just consist of splitting the
induction step into two similarly handled cases and would involve
decorating some of the symbols with bars and primes
($\dAntHolCurrMode{-k'_{i'}}$, $\dRepBar{-k'_{i'}}$, $\pdeebar$, etc.).

Focusing on $m'=0$, we seek to prove the equality of
\begin{align*}
\no{ \prod_{i=1}^m \dRep{-k_{i}} } + \Null
\qquad \text{ and } \qquad
\dHolCurrMode{-k_m} \cdots \dHolCurrMode{-k_1}\idField \, .
\end{align*}
Normal ordering does nothing to polynomials of degree at most one
(there are no nontrivial partial pairings of a set with fewer than two elements),
so the cases $m=0$ and $m=1$ are clear:
for $m=0$ the equality is just the definition of the
identity local field, and for $m=1$ it is our preferred representative choice
$\dRep{-k_1} + \Null = \dHolCurrMode{-k_1}\idField$.

For brevity, from here on, denote
\begin{align*}
F = \prod_{1 \le i \le m} \dRep{-k_{i}}
\qquad \text{ and } \qquad
\widehat{F}_j = \prod_{\substack{1 \le i \le m \\ i \ne j}} \dRep{-k_{i}} \; \text{ for } j = 1 , \ldots, m .
\end{align*}
Inductively, we assume
$
\no{ F } + \Null = \dHolCurrMode{-k_m} \cdots \dHolCurrMode{-k_1}\idField
$, 
and we must prove the equality of
\[ {\no{ F \dRep{-k_{m+1}} } + \Null
\qquad \textrm{ and } \qquad
\dHolCurrMode{-k_{m+1}} \dHolCurrMode{-k_m} \cdots \dHolCurrMode{-k_1}\idField}
\; = \; \dHolCurrMode{-k_{m+1}} \Big( \no{ F } + \Null \Big) . \]
We will use~\eqref{eq: partial normal ordered product} to rewrite both sides,
and we then compare the results.
On the left hand side, rewriting directly gives
\begin{align}\label{eq: lhs of the normal ordering ingredient without decorations}
\no{F \dRep{-k_{m+1}}}
\; = \; \dRep{-k_{m+1}} \, \no{ F }
	-\sum_{j=1}^m \wick{ \c {\dRep{-k_{m+1}}} \c {\dRep{-k_j}} } \, \no{\widehat{F}_j} \,.
\end{align}
On the right hand side we first take a suitable corner contour $\gamma$ for the action
of $\dHolCurrMode{-k_{n+1}}$ and unravel the definition of the action of the current mode
\begin{align*}
\dHolCurrMode{-k_{m+1}} \big( \no{ F } + \Null \big)
= \, \frac{1}{2\pi} \dcoint{\gamma} \zu_\diamond^{[-k_{m+1}]}
	\pdee\field(\zu_\medial)\, \no{ F } \, + \, \Null \, .
\end{align*}
Calculating the discrete contour integral using discrete Stokes' formula~\eqref{eq: discrete Stokes}
and recalling the explicit expression for $\dRep{-k_{m+1}}$, we get
\begin{align*}
\dHolCurrMode{-k_{m+1}} \big( \no{ F } + \Null \big)
= \; & \bigg( \dRep{-k_{m+1}} + \frac{\ii}{2\pi} \sum_{\zu_\diamond\in\interior\gamma}
		\zu_\diamond^{[-k_{n+1}]} \gdeebar^{}\pdee\field(\zu_\diamond) \bigg) \no{ F }
	\,+\, \Null .
\end{align*}
In the second term, we then recall the
factorization~\eqref{eq: factorizations of Laplacian with primal graph derivatives}
of the discrete Laplacian and
rewrite using~\eqref{eq: partial normal ordered product} to see that
\begin{align*}
& \frac{\ii}{2\pi} \sum_{\zu_\diamond\in\interior\gamma}
		\zu_\diamond^{[-k_{n+1}]} \gdeebar^{}\pdee\field(\zu_\diamond) \no{ F }
		\\
= \; &
	\frac{\ii}{4\pi} \sum_{\zu_\diamond\in\interior_\primary\gamma} \zu_\diamond^{[-k_{m+1}]}
	\bigg( \no{ \gLapl\field(\zu_\diamond) F }
		+ \sum_{j=1}^m \wick{ \big(\gLapl\c\field(\zu_\diamond)\big)  \c {\dRep{-k_j}} } \, \no{ \widehat{F}_j } \bigg) \, .
\end{align*}
The linear factor~$\gLapl\field(\zu_\diamond)$ is null in the fully normal ordered term, so
by Lemma~\ref{lem: normal ordering with null fields} the corresponding normal ordered product is also null.
Combining, we have found that
\begin{align}\nonumber
& \dHolCurrMode{-k_{m+1}} \big( \no{ F } + \Null \big) \\
\label{eq: rhs of the normal ordering ingredient without decorations}
= \; & \dRep{-k_{m+1}}\, \no{ F }
	+ \frac{\ii}{4\pi} \sum_{j=1}^m \sum_{\zu_\diamond\in\interior_\primary\gamma} \zu_\diamond^{[-k_{m+1}]}
		\wick{ \big(\gLapl\c\field(\zu_\diamond)\big)  \c {\dRep{-k_j}} } \, \no{ \widehat{F}_j }
		\, + \, \Null \, .
\end{align}
Comparing \eqref{eq: lhs of the normal ordering ingredient without decorations}
and~\eqref{eq: rhs of the normal ordering ingredient without decorations},
we see that it suffices to prove that for each~$j=1,\ldots,m$, the difference
\begin{align*}
\wick{ \c {\dRep{-k_{m+1}}} \c {\dRep{-k_j}} }
    + \frac{\ii}{4\pi} \sum_{\zu_\diamond\in\interior_\primary\gamma} \zu_\diamond^{[-k_{m+1}]}
		\wick{ \big(\gLapl\c\field(\zu_\diamond)\big)  \c {\dRep{-k_j}} }
\end{align*}
of the explicitly written terms is null.
Again writing out the definition of the representative~$\dRep{-k_{m+1}}$
and using the same factorization of the discrete Laplacian and Stokes' formula,
this difference becomes
\begin{align*}
& \frac{\ii}{2\pi} \sum_{\zu_\medial\in\Poles{k_{m+1}}_\medial} \gdeebar \zu_\medial^{[-k_{m+1}]}
		\wick{ \c {\pdee\field(\zu_\medial)} \c {\dRep{-k_j}} }
    + \frac{\ii}{2\pi} \sum_{\zu_\diamond\in\interior_\primary\gamma} \zu_\diamond^{[-k_{m+1}]}
		\wick{ \big(\gdeebar^{}\pdee\c\field(\zu_\diamond)\big)  \c {\dRep{-k_j}} } \\
= \; & \frac{1}{2\pi} \dcoint{\gamma} \zu_\medial^{[-k_{m+1}]}
		\wick{ \big( \pdee \c  \field(\zu_\medial) \big) \c {\dRep{-k_j}} } \, \dd \zu \, .
\end{align*}
This discrete contour integral indeed vanishes by Lemma~\ref{lemma: technical}, and
thus the proof of the induction step is complete, too.
\end{proof}

\subsection{Concrete examples of higher degree local fields}
\label{sub: examples of higher degree}

Lemma~\ref{lem: basis elements as normal ordered products} was necessary for our proof
of Theorem~\ref{thm: main theorem about Fock space structure}, but it can also
be applied directly to give concrete representatives of local fields.
Let us give two examples.

In our first example, we elaborate
on the details of the discretization of the gradient squared of the DGFF which
was discussed in the intruduction.
For this, we start by admitting an observation about our conformal field theory
(this fact would properly make sense only in Section~\ref{sec:scal_lim}):
the CFT field $\frac{1}{4} \, \|\nabla \gff\|^2$, suitably regularized,
is the Virasoro primary with conformal weights~$(1,1)$ seen
in~\eqref{eq: quadratic primary}, i.e.,
$\HeiJ{-1} \AntHeiJ{-1} \FockId$.
The local field of the DGFF corresponding to
it is
\begin{align*}
\dHolCurrMode{-1}\dAntHolCurrMode{-1} \idField
\; = \; \noQuo{\dHolCurrMode{-1} \idField \tens \dAntHolCurrMode{-1} \idField }
\; = \; \no{\dRep{-1} \dRepBar{-1} } + \Null \, .
\end{align*}
A concrete representative for this quadratic primary field is therefore
obtained with formulas of~Example~\ref{ex: id primary} and some simplification
\begin{align*}
\no{\dRep{-1} \dRepBar{-1} }
= \; & \dRep{-1} \dRepBar{-1} - \wick{ \c {\dRep{-1}} \c {\dRepBar{-1}} } \\
= \; & \frac{1}{4} \, \bigg( \Big( \frac{\field(1) - \field(-1)}{2} \Big)^2 + \Big( \frac{\field(\ii) - \field(-\ii)}{2} \Big)^2 \bigg)
    - (\pi-2) \,.
\end{align*}
This representative of the discrete local field
contains a clear discrete analogue of~$\frac{1}{4} \, \|\nabla \gff\|^2$ and an
explicit additive constant that a priori comes from normal ordering.
The constant is exactly what is needed to make the corresponding random variable
(obtained by evaluation) have asymptotically zero expected value in the scaling limit:
the two expected squares of gaussians are cancelled by the negative additive constant.
This subtraction is \emph{necessary} before it is meaningful to renormalize
by the diverging prefactor~$\meshsz^{-2}$, which corresponds to the scaling dimension~$2$ of this local field.

Giving concrete representives also for the other Virasoro primary fields
\eqref{eq: primary vacuum}~--~\eqref{eq: primary nine} etc.
would be possible, but the expressions become long and not as enlightening.

As the other explicit example, we choose the holomorphic stress tensor~$T = \VirL{-2} \FockId$ of the CFT,
which in terms of Heisenberg generators reads~$T = \frac{1}{2} \, \HeiJ{-1} \HeiJ{-1} \FockId$.
The local field of the DGFF corresponding to
the holomorphic stress tensor is
\begin{align*}
\frac{1}{2} \, \dHolCurrMode{-1}\dHolCurrMode{-1} \idField
\; = \; \frac{1}{2} \,\noQuo{\dHolCurrMode{-1} \idField \tens \dHolCurrMode{-1} \idField }
\; = \; \frac{1}{2} \, \no{\dRep{-1} \dRep{-1} } + \Null \, .
\end{align*}
A concrete representative is
\begin{align*}
\frac{1}{2} \, \no{\dRep{-1} \dRep{-1} }
= \; & \frac{1}{2} \, \dRep{-1} \dRep{-1} - \frac{1}{2} \, \wick{ \c {\dRep{-1}} \c {\dRep{-1}} } \\
= \; & -\frac{1}{2} \left( \frac{1}{2} \, \frac{\field(1) - \field(-1)}{2} - \frac{\ii}{2} \, \frac{ \field(\ii) - \field(-\ii)}{2} \right)^2 \,.
\end{align*}

\section{Scaling limits of local field correlations}
    \label{sec:scal_lim}
    The conformal field theory (CFT) of interest to us is known in physics as the massless free boson.
The constructive quantum field theory approach to it amounts to studying the continuum
probabilistic model called the Gaussian Free Field~(GFF)
\cite{Gawedzki-lectures_on_CFT, KM-GFF_and_CFT, PW-lecture_notes_on_GFF}.
Analogously to our conventions for the discrete GFF, we will consider the
gradient of the GFF with either Dirichlet or Neumann boundary conditions in
simply connected planar domains of general shape.
Changing the domain and/or boundary conditions results in a different
probablistic object.\footnote{In the continuum, conformal invariance relates the free
fields with the same boundary conditions in different simply connected domains, but nevertheless
defining and viewing these as different probabilistic models is a good perspective, since
in the presence of nontrivial conformal moduli (e.g., for multiply
connected domains or for domains with marked points) no such reduction by simple coordinate changes
could be used.
Moreover, Dirichlet and Neumann boundary condition free fields are rather evidently 
different probabilistic models.}
Nevertheless, our space of local fields of the CFT will always be the same:
the full Fock space~$\FullFock$ defined in Section~\ref{ssec: Heisenberg basics}.

Throughout this section, $\domain \subset \bC$ is taken to be a nonempty open simply
connected proper subset
of the complex plane.

Section~\ref{subsec: GFF def} contains the definition of
(the gradient of) the GFF, and a discussion of the sense in which
this basic field has pointwise defined correlation functions.
In Section~\ref{sub: current correlation kernel}, we then describe
correlation functions of something akin to linear local fields of the GFF,
which include, most notably, the holomorphic and antiholomorphic currents
$\HolCurr$ and~$\AntiHolCurr$.
From these correlation kernels of currents, a standard construction of
a bosonic CFT is given in Section~\ref{subsec: local fields of GFF}.
More precisely, using the currents and operator product expansions as building blocks,
we characterize and construct the $n$-point correlation functions of general Fock
space fields, i.e., for a general domain~$\domain$, boundary conditions
($\Dirichlet$ or~$\Neumann$), and number $n \in \N$ of points,
we describe a map
\begin{align*}
\CFTcorr{\domain}{\DorN}{\,\cdots\,} \; \colon \; (\FullFock)^{\tens n}
    \; \to \; \ContFun^\omega(\ConfigSp{n}{\domain},\bC) ,
\end{align*}
assigning in a symmetric multilinear way to an $n$-tuple of
Fock space field a complex-valued real-analytic correlation
function defined on the \term{configuration space}
\begin{align}\label{eq: configuration space}
\ConfigSp{n}{\domain} :=
    \domain^n \setminus
        \bigcup_{1 \le i < j \le n} \set{(z_1 , \ldots, z_n) \in \domain^n \; \big| \; z_i = z_j }
\end{align}
of (ordered) $n$-tuples of distinct points in~$\domain$.

The following main result will be precisely stated and proven in
Section~\ref{subsec: renormalized correlations}.
If the domain~$\domain$ is suitably approximated by discrete
domains~$\domain_\meshsz$ as $\meshsz \to 0$, then the correlations of local fields
of the DGFF on~$\domain_\meshsz$, renormalized by the lattice mesh~$\meshsz$ to the power
of the scaling dimensions of the fields, converge to the corresponding
correlation functions of Fock space fields in the bosonic CFT built from the GFF in~$\domain$.

\subsection{Gaussian Free Fields}
\label{subsec: GFF def}
Analogously to the discrete GFF,
one would like to think of the continuum GFF in~$\domain$ as a random real-valued
function~$\gff$ on~$\domain$ with a centered Gaussian distribution and covariance
$\EX[\gff(z) \gff(w)]$ equal to
$4 \pi$ times a continuum Green's function~$\Green_\domain(z,w)$.
However, due to the logarithmic singularity~\eqref{eq: Green's function log singularity}
of the Green's function when approaching~$z = w$, one cannot actually assign pointwise values to the GFF; rather 
the GFF in~$\domain$ should be constructed as a random generalized function, i.e., 
a random distribution. Moreover, in the case of Neumann boundary conditions, there is
an ambiguity of an additive constant (and in fact no unique choice of the Green's
function). We thus seek to define only the
gradient of the GFF
\[ \nabla \gff = \Big( \pderof{x}{\gff} , \; \pderof{y}{\gff} \Big) \]
as a two-component distribution-valued random variable.
Gradients are not arbitrary two-component vector fields ---they are necessarily
curl-free--- so a convenient equivalent
perspective is to view the field~$\gff$ itself as defined up to additive constants.
Let us furthermore note already that it will soon be convenient to
repackage the two components of the gradient in $\bC$-linear combinations:
the Wirtinger derivatives
\begin{align*}
\pder{z} = \frac{1}{2} \pder{x} - \frac{\ii}{2} \pder{y}
\qquad \text{ and } \qquad
\pder{\bar{z}} = \frac{1}{2} \pder{x} + \frac{\ii}{2} \pder{y}
\end{align*}
contain exactly the same information as the partial derivatives $\pder{x}$ and $\pder{y}$.
With suitable normalization constants,
$\dee \gff := \frac{\partial \gff}{\partial z}$ and
$\deebar \gff := \frac{\partial \gff}{\partial \bar{z}}$ will be the currents
in our CFT.

As ordinary \term{test functions}, we use compactly supported
$C^\infty$-smooth functions \linebreak[4] ${g \colon \domain \to \R}$.
The set of such test functions
is denoted by $\testfun{\domain}$,
and is equipped with the topology of uniform convergence
on compact subsets for all derivatives up to an arbitrary order.
The space of \term{distributions} is denoted by $\distrib{\domain}$; it is the dual
of the space of test functions, consisting of all continuous linear maps $\testfun{\domain} \to \R$.
The appropriate topology on $\distrib{\domain}$ is the weak-* topology. We denote the duality pairing
of a distribution $\phi \in \distrib{\domain}$ with a test function $g \in \testfun{\domain}$ by
$\langle\phi, g\rangle \in \R$.

The constant distribution $1' \in \distrib{\domain}$ is defined by
\begin{align*}
\langle 1', g \rangle = \int_\domain g(z) \, \ud^2 z
\qquad \text{ for } g \in \testfun{\domain}.
\end{align*}
The subspace of \term{zero-average test functions} is
\begin{align*}
\testfunZA{\domain} := \Kern{\;1'} = \set{f \in \testfun{\domain} \; \bigg| \; \int_\domain f(z) \, \ud^2 z = 0}
\subset \testfun{\domain} .
\end{align*}
Note that two distributions $\phi, \psi \in \distrib{\domain}$ differ by a multiple of the constant
distribution $1'$ if and only if $\langle \phi , f \rangle = \langle \psi , f \rangle$ for
all $f \in \testfunZA{\domain}$. The quotient space $\distrib{\domain} \, / \, \bR 1'$ of
\term{distributions up to additive constants} can therefore be identified with the space
$\distribMC{\domain}$ of continuous linear functionals on $\testfunZA{\domain}$.

We define the \term{GFF up to additive constants}, in a domain~$\domain$, with Dirichlet~($\Dirichlet$)
or Neumann~($\Neumann$) boundary conditions, as the $\distribMC{\domain}$-valued
random variable~$\gff$ whose characteristic function is
\begin{align}\label{eq: GFF characteristic function}
\EX^\DorN_\domain \big[ e^{\ii \langle \gff , f \rangle} \big]
    \; = \;\; & \exp \Big( - 2 \pi \iint_{\domain \times \domain}
        f(z) \, \Green^{\DorN}_\domain(z,w) \, f(w) \; \ud^2 w \, \ud^2 z \Big)
\quad \text{ for } f \in \testfunZA{\domain} .
\end{align}
Note that even in the Neumann case,
the double integral above does not depend on the choice of the Green's function.
This characteristic function uniquely determines the law of~$\gff$,
and an explicit construction of such a process~$\gff$
can be found, e.g., in~\cite{BP-GFF_and_LQG}.

Observe that for any $g \in \testfun{\domain}$, the partial derivatives are zero-average,
$\pderof{x}{g}, \pderof{y}{g} \in \testfunZA{\domain}$.
In particular the distributional derivatives $\pderof{x}{\gff}, \pderof{y}{\gff}$
of the GFF up to additive constants
are random ordinary distributions defined by
\begin{align*}
\big\langle \pderof{x}{\gff}, g \big\rangle = - \big\langle \gff, \pderof{x}{g} \big\rangle
\quad \text{ and } \quad
\big\langle \pderof{y}{\gff}, g \big\rangle = - \big\langle \gff, \pderof{y}{g} \big\rangle
\qquad \text{ for } g \in \testfun{\domain} ,
\end{align*}
and from the $\distribMC{\domain}$-valued random variable $\gff$
we thus obtain the \term{gradient of the Gaussian free field (GFF)}
${\nabla \gff = \big( \pderof{x}{\gff} , \; \pderof{y}{\gff} \big)}$ as a two-component
$\distrib{\domain}$-valued random variable.

From the characteristic function~\eqref{eq: GFF characteristic function}
one gets, in particular, the finite-dimensional marginals,
i.e., the joint distributions of $(\langle \gff , f_i \rangle)_{i=1}^n$ for any finite
number~$n$ of zero-average test functions $f_1 , \ldots , f_n \in \testfunZA{\domain}$.
These marginals are $n$-dimensional centered Gaussians with covariances
\begin{align*}
C_{ij} \, = \, 4 \pi
    \iint_{\domain \times \domain} f_i(z) \Green^{\DorN}_\domain(z,w) f_j(w) \, \ud^2 w \, \ud^2 z ,
\qquad i,j = 1 , \ldots , n .
\end{align*}
The directional derivatives~$\nabla^\mu$ of~\eqref{eq: continuum directional derivative}
are convenient for concisely writing down the covariances of the components of $\nabla \gff$.
For any test functions $g_1 , \ldots , g_n \in \testfun{\domain}$, the joint distribution
of $(\langle \nabla^{\mu_i} \gff , g_i \rangle)_{i=1}^n$ is, by construction,
the centered $n$-dimensional Gaussian with covariances
\begin{align*}
C^{\mu_i \, \mu_j}_{\; i \; j}
= \; & 4 \pi \iint_{\domain \times \domain}
        (\nabla^{\mu_i}_i g_i)(z_i) \; \Green^{\DorN}_\domain(z_i,z_j) \; (\nabla^{\mu_j}_j g_j)(z_j) \, \ud^2 z_i \, \ud^2 z_j \\
= \; & 4 \pi \iint_{\domain \times \domain}
        g_i(z_i) \, \big( \nabla^{\mu_i}_i \nabla^{\mu_j}_j \Green^{\DorN}_\domain(z_i,z_j) \big) \, g_j(z_j) \, \ud^2 z_i \, \ud^2 z_j 
\qquad i,j = 1 , \ldots , n ,
\end{align*}
where on the second line the double-derivative of the Green's function is in the
distributional sense (delta-like terms do appear on the diagonal $\{z_i=z_j\}$, but
if the supports of $g_i, g_j$ are disjoint, then even the second line only involves
ordinary integration of smooth functions).
Note that even in the Neumann case,
the double derivative of the Green's function on the last line above does not
depend on the precise choice of the Green's function.
As for any centered Gaussians, \term{Wick's formula} applies, and in this case gives

\begin{align}
    \label{eq: bona fide n point function of gff}
& \EX_\domain^{\DorN} \Big[ \prod_{i=1}^{n} \langle \nabla^{\mu_i} \gff , g_i \rangle \Big] \\
\nonumber
= \; & \sum_{P \in \Pair(n)} \prod_{\set{i,j} \in P} 4\pi \iint_{\domain \times \domain}
    g_i(z_i) \, \big( \nabla^{\mu_i}_i \nabla^{\mu_j}_j \Green_\domain^{\DorN}(z_i,z_j) \big) \,  g_j(z_j)
    \; \ud^2 z_i \, \ud^2 z_j \\
\nonumber
= \; & \idotsint_{\domain^n} g_1(z_1) \cdots g_n(z_n)
    \bigg( (4\pi)^{n/2} 
    \sum_{P \in \Pair(n)} \prod_{\set{i,j} \in P} \nabla^{\mu_i}_{\, i} \nabla^{\mu_j}_{\, j} \Green_\domain^{\DorN}(z_i, z_j) \bigg)
    \; \ud^2 z_1 \cdots \ud^2 z_n ,
\end{align}
where the sum is over the set~$\Pair(n)$ of
pairings~$P$ of the index set~$\set{1,2,\ldots,n}$.
In this sense of integral kernels (distributional in general, but literal
integration of smooth functions when the supports
of $g_1, \ldots, g_n$ are disjoint), we may thus interpret
the $n$-point \term{correlation function}
of the
components of the
gradient GFF~$\nabla \gff$ at $(z_1 , \ldots, z_n) \in \ConfigSp{n}{\domain}$
as
\begin{align}\label{eq: n point correlation of gff}
\text{\large{``}}
\EX_{\domain}^{\DorN} \big[ (\nabla^{\mu_1}\gff)(z_1) \cdots (\nabla^{\mu_n}\gff)(z_n) \big]
\text{\large{''}}
  := (4\pi)^{n/2} \sum_{P \in \Pair(n)} \prod_{\set{i,j} \in P} \nabla^{\mu_i}_{\, i} \nabla^{\mu_j}_{\, j} \Green^{\DorN}_\domain(z_i, z_j) .
\end{align}
One can also view~\eqref{eq: n point correlation of gff} as the limit
of~\eqref{eq: bona fide n point function of gff}
with mollifiers taken as approximate delta-functions
(``$g_i^{(\varepsilon)} \underset{\eps \to 0}{\longrightarrow} \delta(\cdot - z_i)$'')
at distinct points~$z_1, \ldots, z_n \in \domain$.

\subsection{Current correlation kernels of the GFF}
\label{sub: current correlation kernel}

Before addressing correlation functions of general local fields in the CFT, we slightly
generalize the above integral kernels for correlations of the gradient of the GFF,
and we transform them to the most convenient form for the purposes of constructing the
free boson CFT.
Namely, consider derivatives of arbitrary order of~$\gff(z)$~--- they serve as the natural
counterparts of linear local fields in the CFT. It is convenient to change basis from the
horizontal and vertical partial derivatives $\pderof{x}{\gff}$ and~$\pderof{y}{\gff}$ to the
holomorphic $\dee \gff = \frac{1}{2} \pderof{x}{\gff} - \frac{\ii}{2} \pderof{y}{\gff}$ and
antiholomorphic $\deebar \gff = \frac{1}{2} \pderof{x}{\gff} + \frac{\ii}{2} \pderof{y}{\gff}$
Wirtinger derivatives, and similarly in higher order.

In the same sense as~\eqref{eq: n point correlation of gff},
the correlation functions of the fields $\dee^{a}{\deebar} {}^{b} \gff$, for $a,b \in \N$ with $a+b>0$,
are defined by
\begin{align}
\nonumber
& \text{\large{``}}
\EX_{\domain}^{\DorN} \big[ (\dee^{a_1}{\deebar} {}^{b_1} \gff)(z_1) \cdots
        (\dee^{a_n}{\deebar} {}^{b_n} \gff)(z_n) \big]
\text{\large{''}} \\
\label{eq: n point correlation of linear local fields of gff}
:= \; & (4 \pi)^{n/2} \frac{\partial^{a_1}}{\partial z_1^{a_1}} \frac{\partial^{b_1}}{\partial \bar{z}_1^{b_1}}
        \cdots \frac{\partial^{a_n}}{\partial z_n^{a_n}} \frac{\partial^{b_n}}{\partial \bar{z}_n^{b_n}}
    \sum_{P \in \Pair(n)} \prod_{\set{i,j} \in P} \Green^{\DorN}_\domain(z_i, z_j) ;
\end{align}
these correlation functions $\ConfigSp{n}{\domain} \to \bC$ are the integral kernels
for correlations of the distributional Wirtinger derivatives of the GFF~$\gff$ mollified by test
functions, analogously to~\eqref{eq: bona fide n point function of gff}.
Observing the factorization of the Laplacian into Wirtinger derivatives,
$\Lapl = 4 \pder{z} \pder{\bar{z}}$, and noting that
\eqref{eq: n point correlation of linear local fields of gff} only contains sums of products
of the Green's functions,
we see that  the correlation~\eqref{eq: n point correlation of linear local fields of gff} is zero if
for some~$j$ we have both $a_j>0$ and $b_j>0$.
This is the continuum
counterpart of a null field property for $\Lapl \gff$ and its further derivatives.
In particular, the only derivative-fields with interesting correlations are the holomorphic
derivatives~$\dee^a \gff$ of order~$a \in \Zpos$ and the antiholomorphic derivatives ${\deebar} {}^b \gff$
of order~$b \in \Zpos$.

We define the holomorphic and antiholomorphic \term{currents} by $\HolCurr := \ii \dee \gff$
and $\AntiHolCurr := - \ii \deebar \gff$, respectively.
The integral kernel for $m$ holomorphic and $m'$ antiholomorphic currents
will play a key role in the construction of the CFT, and we therefore use the notation
\begin{align}\label{eq: integral kernel for holom and antiholom currents}
\GFFkernelWO{\domain}{\DorN} \in \ContFun^\omega \big(\ConfigSp{m+m'}{\domain} , \bC \big) 
\end{align}
for it.
This kernel has the informal interpretation
\begin{align*}
\GFFkernel{\domain}{\DorN}{z_1,\ldots,z_m}{w_1,\ldots,w_{m'}} = \; &
\text{\large{``}}
\EX_{\domain}^{\DorN} \big[ \HolCurr (z_1) \cdots \HolCurr (z_m)
         \AntiHolCurr (w_1) \cdots \AntiHolCurr (w_{m'}) \big]
\text{\large{''}} \; ;
\end{align*}
the explicit formula would be notationally cumbersome, but it is naturally just
$\ii^{m-m'}$ times~\eqref{eq: n point correlation of linear local fields of gff}
with $n = m+m'$,
$(a_i,b_i)=(1,0)$ for $i=1,\ldots,m$, and
$(a_{m+j},b_{m+j})=(0,1)$ for $j=1,\ldots,m'$, together with the relabeling
$w_{j}:=z_{m+j}$ for $j=1,\ldots,m'$.
Derivatives of any nonzero order of the GFF can be written in terms of the currents:
${\dee^a \gff = - \ii \dee^{a-1} \HolCurr}$ and ${{\deebar} {}^b \gff = \ii \dee^{b-1} \AntiHolCurr}$,
and the integral kernels for their correlations are obtained as further
derivatives of~$\GFFkernelWO{\domain}{\DorN}$.

The Green's functions~\eqref{eq: Green's function log singularity}
differentiated in both variables have the forms
\begin{align*}
\pder{z} \pder{w} \Green_\domain^{\DorN}(z, w)
= \; & \frac{-1/4\pi}{(z-w)^2} + \pder{z} \pder{w} g_\domain^\DorN(z,w) , \;\; &
\pder{z} \pder{\bar{w}} \Green_\domain^{\DorN}(z, w)
= \; & \pder{z} \pder{\bar{w}} g_\domain^\DorN(z,w) , \\
\pder{\bar{z}} \pder{\bar{w}} \Green_\domain^{\DorN}(z, w)
= \; & \frac{-1/4\pi}{(\bar{z}-\bar{w})^2} + \pder{\bar{z}} \pder{\bar{w}} g_\domain^\DorN(z,w) , \;\; &
\pder{\bar{z}} \pder{w} \Green_\domain^{\DorN}(z, w)
= \; & \pder{\bar{z}} \pder{w} g_\domain^\DorN(z,w) \,
\end{align*}
where $g_\domain^\DorN \colon \domain \times \domain \to \bR$ is real-analytic, and harmonic
in both variables separately.
By virtue of the factorization~$\Lapl = 4 \pder{z} \pder{\bar{z}}$ of the
Laplacian,
the double Wirtinger derivatives of~$g_\domain^\DorN$ are then
real-analytic functions $\domain \times \domain \to \bC$ which are
holomorphic or antiholomorphic separately in the two variables.
We obtain, in particular, the following explicit formula for the two-point function of the holomorphic current
\begin{align*}
\text{\large{``}}
\EX_{\domain}^{\DorN} \big[ \HolCurr (z_1) \HolCurr (z_2) \big]
\text{\large{''}}
\;=\;
\GFFkernel{\domain}{\DorN}{z_1,z_2}{}
\;=\;
\frac{1}{(z_1-z_2)^2}
    - 4\pi \underbrace{\pder{z_1}\pder{z_2} g^{\DorN}_\domain(z_1,z_2)}_{\text{regular as $z_1 \to z_2$}} .
\end{align*}
This shows that
$z_1 \mapsto \GFFkernel{\domain}{\DorN}{z_1,z_2}{} - \frac{1}{(z_1-z_2)^2}$
is holomorphic in the whole~$\domain$. Similarly, 
$w_1 \mapsto \GFFkernel{\domain}{\DorN}{}{w_1,w_2} - \frac{1}{(\cconj{w}_1-\cconj{w}_2)^2}$
is antiholomorphic in the whole~$\domain$. In the mixed two-point functions, there are no
singularities: just by themselves,
$z \mapsto \GFFkernel{\domain}{\DorN}{z}{w}$ is holomorphic and 
$w \mapsto \GFFkernel{\domain}{\DorN}{z}{w}$ is antiholomorphic
in the whole~$\domain$.
Furthermore, since multi-point correlations are obtained by Wick's formula,
$\GFFkernel{\domain}{\DorN}{z_1,\ldots,z_m}{w_1,\ldots,w_{m'}}$
is meromorphic in each of the variables $z_i$ with poles of order $2$ at 
$z_1, \ldots, z_{i-1}, z_{i+1}, \ldots, z_m$ only,
and antimeromorphic in each of the variables $w_i$ with (anti)poles of order $2$ at
$w_1, \ldots, w_{i-1}, w_{i+1}, \ldots, w_{m'}$ only.

\subsection{Correlation functions of CFT fields from current OPEs}
\label{subsec: local fields of GFF}

One possible way to construct higher degree fields would be by forming normally
ordered products of the linear fields, see, e.g., \cite[Section~1.2]{KM-GFF_and_CFT}.
Making sense of these as distribution-valued random variables and defining
pointwise correlation functions as integral kernels is genuinely more complicated than
for the linear fields, see~\cite[Section~2.2]{KM-GFF_and_CFT}.
There is, however, an even more serious drawback:
the (probabilistic) normal ordering is not intrinsic to the fields themselves.
Indeed, the probabilistic normal ordering depends on the Gaussian measure (the law of~$\gff$),
which depends on the domain~$\domain$ and boundary conditions, so,
from a CFT point of view,
normal ordering is \emph{not} the right definition of an abstract higher degree field
whose correlation functions can be evaluated in all domains and with any reasonable boundary conditions.

The intrinsic (domain-agnostic and boundary-condition-agnostic) way
to construct higher degree local fields is by recursively
extracting operator-product expansion (OPE) coefficients from lower order fields,
as discussed, e.g., in \cite[Section~3]{KM-GFF_and_CFT}.
Starting from the currents $\HolCurr(z)$ and $\AntiHolCurr(w)$,
general Fock space fields
are generated via such OPEs.
The holomorphic current $\HolCurr(z)$ has a purely meromorphic OPE with any Fock space field, so
an easy way to extract the OPE coefficients is by suitable weighted contour integrals.
Similarly the  antiholomorphic current $\AntiHolCurr(w)$ has a purely antimeromorphic OPE with any
Fock space field, and again OPE coefficients can be extracted by suitable contour integrals.
This is, roughly speaking, the route we take below to define arbitrary
$n$-point correlation functions
\begin{align}\label{eq: the type of the correlation function}
\CFTcorr{\domain}{\DorN}{\,\cdots\,} \; \colon \; (\FullFock)^{\tens n}
    \; \to \; & \; \ContFun^\omega(\ConfigSp{n}{\domain},\bC)  . \\ \nonumber
F_1 \tens \cdots \tens F_n
    \mapsto \; & \bigg( (z_1, \ldots, z_n) 
    \mapsto \CFTcorrBig{\domain}{\DorN}{F_1(z_1) \cdots F_n(z_n)} \bigg)
\end{align}
for a bosonic CFT with the Fock space~$\FullFock$ as its space of local fields.

\begin{prop}\label{prop: folklore}
Consider the domain~$\domain$ and boundary conditions~$\DorN$ fixed.
Then there exists a unique collection of
linear assignments of type~\eqref{eq: the type of the correlation function},
indexed by $n \in \bN$, such that the following conditions hold:
\begin{description}
\item[(BOS)] For any permutation $\sigma \in \SymmGrp_{n}$, we have
\begin{align*}
\CFTcorrBig{\domain}{\DorN}{F_1(z_1) \cdots F_n(z_n)}
= \CFTcorrBig{\domain}{\DorN}{F_{\sigma(1)}(z_{\sigma(1)}) \cdots F_{\sigma(n)}(z_{\sigma(n)})} \; .
\end{align*}
\item[(CUR)] For $(z_1, \ldots, z_m, w_1, \ldots, w_{m'}) \in \ConfigSp{m+m'}{\domain}$,
we have
\begin{align*}
& \CFTcorrBig{\domain}{\DorN}{(\HeiJ{-1} \FockId)(z_1) \cdots (\HeiJ{-1} \FockId)(z_{m})
    (\AntHeiJ{-1} \FockId)(w_1) \cdots (\AntHeiJ{-1} \FockId)(w_{m'}) } \\ \nonumber
= \; &
\GFFkernel{\domain}{\DorN}{z_1,\ldots,z_m}{w_1,\ldots,w_{m'}}
\end{align*}
where the right-hand side is the integral kernel~\eqref{eq: integral kernel for holom and antiholom currents}
of GFF current correlations in~$\domain$ with the chosen boundary conditions~$\DorN$.
\item[($\HolCurr$-OPE)] For any $F_1, \ldots,F_n \in \FullFock$ and any $(z_1, \ldots, z_n) \in \ConfigSp{n}{\domain}$,
we have the following Laurent series expansion of the $n+1$-point function
\begin{align}
& \CFTcorrBig{\domain}{\DorN}{ F_1(z_1) \cdots F_j(z_j) \cdots F_n(z_n) \, (\HeiJ{-1} \FockId)(z)} \\ \nonumber
= \; & \sum_{k \in \bZ} (z-z_j)^{-1-k}
    \CFTcorrBig{\domain}{\DorN}{ F_1(z_1) \cdots (\HeiJ{k} F_j)(z_j) \cdots F_n(z_n) } \, ,
\end{align}
in the region $0 < |z-z_j| < \min \big\{ \min_{i \ne j} |z_j-z_i| , \, \dist (z_j , \bdry \domain) \big\}$.
\item[($\AntiHolCurr$-OPE)] For any $F_1, \ldots,F_n \in \FullFock$ and any $(z_1, \ldots, z_n) \in \ConfigSp{n}{\domain}$,
we have the following anti-Laurent series expansion of the $n+1$-point function
\begin{align}
& \CFTcorrBig{\domain}{\DorN}{ F_1(z_1) \cdots F_j(z_j) \cdots F_n(z_n) \, (\AntHeiJ{-1} \FockId)(w)} \\ \nonumber
= \; & \sum_{k \in \bZ} (\cconj{w}-\cconj{z}_j)^{-1-k}
    \CFTcorrBig{\domain}{\DorN}{ F_1(z_1) \cdots (\AntHeiJ{k} F_j)(z_j) \cdots F_n(z_n) } \, ,
\end{align}
in the region $0 < |w-z_j| < \min \big\{ \min_{i \ne j} |z_j-z_i| , \, \dist (z_j , \bdry \domain) \big\}$.
\end{description}
\end{prop}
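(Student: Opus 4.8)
The plan is to establish existence and uniqueness separately, with the four conditions playing complementary roles: (CUR) anchors the correlations of the elementary current fields to the concrete kernels $\GFFkernelWO{\domain}{\DorN}$, while the two OPE conditions, read as \emph{mode-extraction formulas}, let one both build up and pin down the correlations of all other Fock space fields.

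For \textbf{uniqueness}, I would first record that the ($\HolCurr$-OPE) condition is equivalent, by extracting the coefficient of $(z-z_j)^{k-1}$ with a small positively oriented contour $\oint_{z_j}$ around $z_j$, to the formula
\begin{align*}
\CFTcorrBig{\domain}{\DorN}{F_1(z_1)\cdots(\HeiJ{-k}F_j)(z_j)\cdots F_n(z_n)}
= \oint_{z_j}\frac{\ud z}{2\pi\ii}\,(z-z_j)^{-k}\,
\CFTcorrBig{\domain}{\DorN}{F_1(z_1)\cdots F_n(z_n)\,(\HeiJ{-1}\FockId)(z)}
\end{align*}
for every $k\in\Zpos$, and likewise for $\AntHeiJ{-k}$ with $\cconj{z}$. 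Since every basis vector \eqref{eq: full Fock basis} is obtained from $\FockId$ by successively applying creation modes, this lets me \emph{peel} one mode off a chosen field at the cost of inserting one extra current $\HeiJ{-1}\FockId$ (resp. $\AntHeiJ{-1}\FockId$) at a new point. I would run this as an induction controlled by the potential $\Phi = \sum_i (\Delta_i+\bar\Delta_i)^2$, where $(\Delta_i,\bar\Delta_i)$ are the conformal dimensions of the (basis) fields: peeling a mode $\HeiJ{-k}$ off a field of total level $N_j=\Delta_j+\bar\Delta_j\ge 2$ removes the summand $N_j^2$ and creates the summands $(N_j-k)^2$ and $1$, and since $(N_j-k)^2+1\le (N_j-1)^2+1<N_j^2$ for $N_j\ge 2$, the potential strictly decreases. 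Iterating, any correlation is reduced to a $\bC$-linear combination of base correlations in which every field is \emph{elementary}, i.e. one of $\FockId$, $\HeiJ{-1}\FockId$, $\AntHeiJ{-1}\FockId$. Those with no vacuum factor are fixed outright by (CUR). Because (BOS) and the OPEs are shared by any two solutions, the identical reduction shows that they agree, giving uniqueness.

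The remaining base-case ingredient is a vacuum-removal step: inserting the identity field $\FockId$ changes nothing, $\CFTcorr{\domain}{\DorN}{\FockId(z_0)F_1(z_1)\cdots F_n(z_n)} = \CFTcorr{\domain}{\DorN}{F_1(z_1)\cdots F_n(z_n)}$. I would derive it from the OPE axioms together with $\HeiJ{k}\FockId = \AntHeiJ{k}\FockId = 0$ for $k\ge 0$: these vanishings force the Laurent expansion of a current correlation about a vacuum insertion point to have no singular terms, so that, using also the holomorphic/antiholomorphic structure of current correlations in $\domain$ recorded at the end of Section~\ref{sub: current correlation kernel} and the fact that the Sugawara generator $\VirL{-1}$ annihilates $\FockId$, the correlation is holomorphic and antiholomorphic in $z_0$ throughout $\domain$, hence locally constant in $z_0$; matching against the limit $z\to z_0$ in the OPE then identifies the value with the vacuum removed. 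I expect this and the termination bookkeeping to be routine once the mode-extraction formula is in place.

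For \textbf{existence}, the plan is to exhibit an explicit solution by a bosonic Wick formula and then verify the four conditions. Writing each basis field $F_i$ at $z_i$ as a product of holomorphic ``legs'' $\HeiJ{-a}$ and antiholomorphic ``legs'' $\AntHeiJ{-b}$ acting on $\FockId$, I would define $\CFTcorr{\domain}{\DorN}{F_1(z_1)\cdots F_n(z_n)}$ as the sum over all perfect matchings of the legs that pair no two legs of the same field (this exclusion is exactly the intrinsic normal ordering), where a matched pair contributes the corresponding holomorphic, antiholomorphic, or mixed two-point current kernel from $\GFFkernelWO{\domain}{\DorN}$, differentiated $a-1$ and $b-1$ times at the respective insertion points. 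Symmetry (BOS) is then manifest, and (CUR) is immediate because single-current fields carry one leg each, with no self-pairings and no derivatives, so the formula collapses to the Wick pairing that defines $\GFFkernelWO{\domain}{\DorN}$. The substance is the verification of ($\HolCurr$-OPE): inserting a single holomorphic leg at $z$ and expanding a matched factor $\partial_{z_j}^{a-1}\bigl(\frac{1}{(z-z_j)^2}+\text{regular}\bigr)$ in powers of $(z-z_j)$ in the stated annulus, I must check that the coefficient of $(z-z_j)^{-1-k}$ reproduces exactly the Wick formula for $\CFTcorr{\domain}{\DorN}{\cdots(\HeiJ{k}F_j)(z_j)\cdots}$. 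The main obstacle is precisely this residue-versus-algebra matching: the pole structure of the differentiated kernel must reproduce the Heisenberg action $[\HeiJ{k},\HeiJ{-a}]=a\,\delta_{k,a}$ on the paired leg, while pairings with other fields and with antiholomorphic legs (through the regular mixed kernel) must account for the non-singular ($k\le-1$) terms. Once this compatibility computation is carried out — a standard but bookkeeping-heavy free-field calculation — the ($\AntiHolCurr$-OPE) condition follows by conjugation, completing existence.
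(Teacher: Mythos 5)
The existence half of your proposal is where it breaks down. Excluding \emph{all} same-field pairings from the Wick sum is the domain-dependent probabilistic normal ordering, not the intrinsic OPE normal ordering, and the resulting correlation functions violate ($\HolCurr$-OPE). Concretely, take $n=1$ and $F_1=\HeiJ{-1}\HeiJ{-1}\FockId$: your matching formula returns $0$, since the only perfect matching of the two legs is a self-pairing, whereas applying ($\HolCurr$-OPE) to the two-current correlation $\GFFkernel{\domain}{\DorN}{z_1,z}{}=\frac{1}{(z-z_1)^2}-4\pi\,\pder{z_1}\pder{z}g_\domain^\DorN(z_1,z)$ forces the constant Laurent coefficient to equal $\CFTcorrBig{\domain}{\DorN}{(\HeiJ{-1}\HeiJ{-1}\FockId)(z_1)}=-4\pi\,\pder{\zeta}\pder{\xi}g_\domain^\DorN(\zeta,\xi)\big|_{\zeta=\xi=z_1}$, which is nonzero for general~$\domain$ (and already $\HeiJ{-1}\AntHeiJ{-1}\FockId$ has a nonzero one-point function in the unit disk, through the mixed kernel). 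The correct prescription keeps same-field contractions but evaluates them against the \emph{regular part} of the kernel only: the OPE subtraction removes just the universal $(z-w)^{-2}$ singularity, not the whole two-point function. This is exactly what the paper's nested contour integrals~\eqref{eq: CFT correlation} produce automatically, and it mirrors the discrete normal ordering of Section~\ref{sub: normal ordering}, whose contractions use a full-plane Green's function rather than excluding self-pairings. The ``residue-versus-algebra matching'' you defer would therefore fail precisely at the regular, coincident-point terms.

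Your uniqueness argument is essentially the paper's recursion with the termination made explicit (the potential $\sum_i(\Delta_i+\bar\Delta_i)^2$ does strictly decrease under mode peeling), but the vacuum-removal step is not justified from the axioms: nothing in (BOS), (CUR), and the two OPE conditions relates $\pder{z_0}$ of a correlation to $\VirL{-1}$, so ``holomorphic and antiholomorphic in $z_0$, hence constant'' presupposes a translation-covariance property you would first have to derive. A repair that stays inside the axioms: since $\HeiJ{1}\HeiJ{-1}\FockId=\FockId$, the correlation $\CFTcorrBig{\domain}{\DorN}{\FockId(z_j)\cdots}$ is the coefficient of $(z-z_j)^{-2}$ in the ($\HolCurr$-OPE) expansion of an extra current around a $(\HeiJ{-1}\FockId)$-insertion at $z_j$; iterating this converts every vacuum insertion into current insertions, so the base case of your induction is then genuinely covered by (CUR).
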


\begin{proof}
The uniqueness of such~$\CFTcorr{\domain}{\DorN}{\cdots}$
is straightforward by a recursive argument: (CUR) provides the base
case, and with repeated coefficient extraction using ($\HolCurr$-OPE) and ($\AntiHolCurr$-OPE)
one obtains
expressions for the general correlation functions. The notation introduced
in the existence proof will make the explicit expression clear.

To prove the existence of~$\CFTcorr{\domain}{\DorN}{\cdots}$ as
in~\eqref{eq: the type of the correlation function},
we start by constructing correlation functions for
$n$ formal expressions
\begin{align}\label{eq: Fock fields}
F_i := \; & \HeiJ{k_{i;m_i}} \cdots \HeiJ{k_{i;2}} \, \HeiJ{k_{i;1}} \,
    \AntHeiJ{k'_{i;m'_i}} \cdots \AntHeiJ{k'_{i;2}} \, \AntHeiJ{k'_{i;1}} \FockId , \\
\nonumber
\text{ with } \; &
    m_i, m_i' \in \Znn ,  \quad
    k_{i;1}, \ldots, k_{i;m_i}, k'_{i;1}, \ldots, k_{i;m'_i} \in \bZ .
\end{align}
indexed by $i = 1, \ldots, n$. To then finally
define~\eqref{eq: the type of the correlation function}, we extend
from the formal expressions $n$-multilinearly, and check that the
correlation functions become well-defined, i.e., in each tensorand,
they factor through a quotient which defines the Fock space~$\FullFock$.
The quotient has to account for both the Heisenberg commutation relations and the
quotient construction~\eqref{eq: Fock representation} of the Fock representation
in both chiralities.

The $n$-point correlation function
\begin{align*}
\CFTcorrBig{\domain}{\DorN}{F_1(z_1) \cdots F_n(z_n)}
\end{align*}
of formal expressions~\eqref{eq: Fock fields}
at $(z_1, \ldots, z_n) \in \ConfigSp{n}{\domain}$
is constructed starting from
the kernel~\eqref{eq: integral kernel for holom and antiholom currents}
for current correlations
\begin{align*}
\GFFkernel{\domain}{\DorN}
    {\overbrace{\underbrace{\zeta_{1;1},\ldots,\zeta_{1;m_1}}_{\text{$m_1$ variables}}, \ldots, \underbrace{\zeta_{n;1},\ldots,\zeta_{n;m_n}}_{\text{$m_n$ variables}}}^{\text{$n$ groups of variables}}}
    {\overbrace{\underbrace{\xi_{1;1},\ldots,\xi_{1;m'_1}}_{\text{$m'_1$ variables}}, \ldots, \underbrace{\xi_{n;1},\ldots,\xi_{n;m'_n}}_{\text{$m'_n$ variables}}}^{\text{$n$ groups of variables}}}
\end{align*}
with $\sum_i m_i$ holomorphic currents and $\sum_i m'_i$
antiholomorphic currents, by integrating each of the
variables~$\zeta_{i;s_i}$ and $\xi_{i;t_i}$ around~$z_i$
along suitable radially ordered contours
with the appropriate weight depending on~$k_{i;s_i}$ or~$k'_{i;t_i}$.
Specifically, for each $i=1,\ldots,n$, choose radii
\begin{align*}
0 < r'_{i;1} < r'_{i;2} < \cdots < r'_{i;m'_i} < r_{i;1} < r_{i;2} < \cdots < r_{i;m_i} < R ,
\end{align*}
where $R$ is small enough so that the disks $\diskRC{R}{z_i}$, $i=1,\ldots,n$,
are disjoint and contained in the domain~$\domain$. Then the defining formula for the
correlation function of the expressions~\eqref{eq: Fock fields} is
\begin{align}
\label{eq: CFT correlation}
& \CFTcorrBig{\domain}{\DorN}{F_1(z_1) \cdots F_n(z_n)} \\ \nonumber
:= \; & 
    \oint \!\cdot\!\cdot\!\cdot\! \oint
    \GFFkernel{\domain}{\DorN}{\zeta_{1;1},\ldots}{\ldots,\xi_{n;m'_n}} 
    \prod_{i=1}^n \Bigg( \prod_{s=1}^{m_i} (\zeta_{i;s}-z_i)^{k_{i;s}} \frac{\ud \zeta_{i;s}}{2 \pi \ii}
        \prod_{t=1}^{m'_i} (\overline{\xi}_{i;s}-\overline{z}_i)^{k'_{i;t}} \frac{\ud \bar{\xi}_{i;t}}{(-2 \pi \ii)} \Bigg) ,
\end{align}
where $\zeta_{i;s_i}$ is integrated over the positively oriented circle~$\bdry\diskRC{r_{i;s_i}}{z_i}$
and $\xi_{i;t_i}$ over the positively oriented circle~$\bdry\diskRC{r'_{i;t_i}}{z_i}$.
By holomorphicity and antiholomorphicity of the correlation kernel $\GFFkernelWO{\domain}{\DorN}$
of the currents~$\HolCurr$ and~$\AntiHolCurr$,
the value of~\eqref{eq: CFT correlation} only depends on the homotopy class of these
integrations, and in particular any choice of radii with the prescribed ordering yields the same result.
Moreover, since there are no singularities at $\zeta = \xi$ of $\HolCurr$-$\AntiHolCurr$-correlations
$\GFFkernel{\domain}{\DorN}{\ldots, \zeta, \ldots}{\ldots, \xi, \ldots}$
the homotopies
can be used to move holomorphic current integrations past antiholomorphic current integrations, so
the radial ordering convention can be relaxed to separately requiring
\begin{align*}
0 < r'_{i;1} < r'_{i;2} < \cdots < r'_{i;m'_i} < R
\qquad \text{ and } \qquad 
0 < r_{i;1} < r_{i;2} < \cdots < r_{i;m_i} < R .
\end{align*}

We must then check that
the multi-linear extension of~\eqref{eq: CFT correlation}
gives rise to well-defined correlation functions
\begin{align*}
\CFTcorr{\domain}{\DorN}{\,\cdots\,} \; \colon \; (\FullFock)^{\tens n}
    \; \to \; \ContFun^\omega(\ConfigSp{n}{\domain},\bC)  .
\end{align*}
The first thing to check is that
the correlation functions~\eqref{eq: CFT correlation} of the formal
expressions~\eqref{eq: Fock fields}
satisfy equations
corresponding
to the commutation relations~\eqref{eq: Heisenberg bracket} of the two commuting Heisenberg
algebras in each of the $n$ tensorands.
Concretely, for
\begin{align*}
G_i := \HeiJ{k} \HeiJ{\ell} F_i
\qquad \text{ and } \qquad
H_i := \HeiJ{\ell} \HeiJ{k} F_i ,
\end{align*}
by standard satellite integral arguments and the explicit poles of the current
correlation functions~\eqref{eq: integral kernel for holom and antiholom currents},
one can prove equalities of correlation functions~\eqref{eq: CFT correlation} of the form
\begin{align*}
\CFTcorrBig{\domain}{\DorN}{\cdots G_i(z_i) \cdots}
- \CFTcorrBig{\domain}{\DorN}{\cdots H_i(z_i) \cdots}
\; = \; k \delta_{k+\ell} \CFTcorrBig{\domain}{\DorN}{\cdots F_i(z_i) \cdots} .
\end{align*}
Similar equations hold relating correlations of $\AntHeiJ{k} \AntHeiJ{\ell} F_i$,
$\AntHeiJ{\ell} \AntHeiJ{k} F_i$ and $F_i$.
From such commutation equations it follows
that the correlation functions of formal linear combinations
of symbols~\eqref{eq: Fock fields} at least factor through
$\UHei \tens \AntUHei$ in each tensorand, and~\eqref{eq: CFT correlation}
gives rise to a map
\begin{align*}
(\UHei \tens \AntUHei)^{\tens n} \to \ContFun^\omega(\ConfigSp{n}{\domain},\bC) .
\end{align*}
For well-definedness on the Fock space~$\FullFock$, it therefore remains to check
equations corresponding to the quotient construction~\eqref{eq: Fock representation}
of the Fock representations in the two chiralities.

If $F_i = \HeiJ{k}\FockId$ for some~$i$, then there is exactly one integration around $z_i$
in~\eqref{eq: CFT correlation}, with
$\zeta_{i;1}$ as the integration variable and $(\zeta_{i;1} - z_i)^k$ as the weight.
But the correlation kernel~\eqref{eq: integral kernel for holom and antiholom currents}
of the currents is holomorphic in~$\zeta_{i;1}$, so this integration
(alone) picks the coefficient of $(\zeta_{i;1} - z_i)^{-1-k}$ in the Taylor series
expansion of the holomorphic function
${\zeta_{i;1} \mapsto \GFFkernel{\domain}{\DorN}{\zeta_{1;1},\ldots}{\ldots,\xi_{n;m'_n}}}$
at~$z_i$, which is zero for $k \ge 0$.
This shows factorization through $\ChiralFock = \UHei \big/ (\HeiJ{k} : k \ge 0)$
in the holomorphic chirality of the $i$th tensorand.
For the factorization
\begin{align*}
(\FullFock)^{\tens n} \to \ContFun^\omega(\ConfigSp{n}{\domain},\bC) 
\end{align*}
through the full Fock space in each factor, one repeats an entirely similar
argument with the antiholomorphic current.

We thus obtain a collection of well-defined maps~\eqref{eq: the type of the correlation function}
indexed by~$n$. It remains to check the asserted properties
(BOS), (CUR), ($\HolCurr$-OPE), and ($\AntiHolCurr$-OPE).

Symmetricity~(BOS) is evident from the symmetricity of the
kernel~$\GFFkernelWO{\domain}{\DorN}$ and of the integral~\eqref{eq: CFT correlation}.

The property~(CUR) is evident
by choosing ${m_1=\cdots=m_m=1}$, ${m'_1=\cdots=m'_{m'}=1}$,
and ${k_{1;1}=\cdots=k_{m;1}=-1}$, ${k'_{1;1}=\cdots=k'_{m';1}=-1}$, and
noticing that the current kernel
$\GFFkernel{\domain}{\DorN}{z_1,\ldots,z_m}{w_1,\ldots,w_{m'}}$ is then
obtained as the residue of the defining integrals~\eqref{eq: CFT correlation}.

The proofs of ($\HolCurr$-OPE) and ($\AntiHolCurr$-OPE) are similar, so let us
only comment on the former.
Consider a correlation function of the form~\eqref{eq: CFT correlation}
with $n+1$ fields of the form~\eqref{eq: Fock fields}, where the $(n+1)$st one is taken to be
simply $F_{n+1} = \HeiJ{-1}\FockId$.
This correlation function is meromorphic in the variable~$z_{n+1}$
(as shown by induction on the total degree $\sum_i (m_i + m'_i)$).
Its Laurent series expansion at~$z_i$ is of the form
\begin{align*}
\CFTcorrBig{\domain}{\DorN}{F_1(z_1) \cdots F_n(z_n) \, (\HeiJ{-1}\FockId) (z_{n+1})}
= \sum_{p=-p_0}^\infty (z_{n+1} - z_i)^p \; C_p(z_1, \ldots, z_n) .
\end{align*}
The Laurent series coefficients admit contour integral expressions,
\begin{align*}
C_p(z_1, \ldots, z_n) = \frac{1}{2\pi\ii} \oint_{\bdry \diskRC{R}{z_i}}
    \CFTcorrBig{\domain}{\DorN}{F_1(z_1) \cdots F_n(z_n) \, (\HeiJ{-1}\FockId) (z_{n+1})}
    \, (z_{n+1}-z_i)^{-1-p} \, \ud z_{n+1} .
\end{align*}
Writing out the definition of the correlation function and extracting the residue in
its single integration around $z_{n+1}$ as in~(CUR), we obtain an expression
for the coefficient~$C_p$ which coincides,
by the definition~\eqref{eq: CFT correlation}, with the
$n$-point correlation function
\begin{align*}
C_p(z_1, \ldots, z_n) = \CFTcorrBig{\domain}{\DorN}{F_1(z_1) \cdots (\HeiJ{-k} F_i)(z_i) \cdots F_n(z_n)}
\qquad \text{ with } k = {-1-p}
\end{align*}
involving the higher degree field~$\HeiJ{-k} F_i$.
\end{proof}

Finally, let us make a few further remarks on the interpretations
of the fields in the correlation functions~\eqref{eq: CFT correlation}.

\emph{Identity as a Fock space field:}
If for some $i$ we have $F_i = \FockId$, i.e., $m_i = 0 = m'_i$, then there are no integrations
around $z_i$ and no $z_i$-dependent weights in~\eqref{eq: CFT correlation}. The resulting correlation
function does not depend on~$z_i$, and its value is the same as the correlation obtained by
omitting~$F_i(z_i)$. Such an~$F_i$ is interpreted as the ``identity field''.

\emph{Holomorphic and antiholomorphic currents and their derivatives as Fock space fields:}
The fields $\HeiJ{-1}\FockId$ and $\AntHeiJ{-1}\FockId$ should be interpreted as
the holomorphic and antiholomorphic current fields in the bosonic CFT.
Correlation functions which only involve these are exactly the probabilistic GFF
current correlation kernels,
\begin{align*}
\CFTcorrBig{\domain}{\DorN}{(\HeiJ{-1} \FockId)(z_1) \cdots
    (\AntHeiJ{-1} \FockId)(w_{m'}) } 
= \; & \;\, \GFFkernel{\domain}{\DorN}{z_1,\ldots,z_m}{w_1,\ldots,w_{m'}} \\
= \; & \text{\large{``}} \EX_{\domain}^{\DorN} \big[ \HolCurr (z_1)
            \cdots \HolCurr (z_m) \AntiHolCurr (w_1) \cdots \AntiHolCurr (w_{m'}) \big]
    \text{\large{''}} \, ,
\end{align*}
according to the property~(CUR). The CFT fields
$\HeiJ{-1}\FockId$ and $\AntHeiJ{-1}\FockId$, however,
have correlation functions not only among themselves, but more generally with any other
Fock space fields. In view of this, the holomorphic and antiholomorphic current can be interpreted as fields in the Fock space by the identifications
\begin{align*}
\HolCurr \coloneqq \HeiJ{-1}\FockId \in \FullFock
\qquad \text{ and } \qquad
\AntiHolCurr \coloneqq \AntHeiJ{-1}\FockId \in \FullFock \; .
\end{align*}
Similarly, for $p \ge 0$, the Fock space fields
$\HeiJ{-1-p} \, \FockId$ and $\AntHeiJ{-1-p} \, \FockId$
have the interpretations as the higher order derivative linear fields
$\frac{1}{p!} {\dee} {}^p \HolCurr$  and
$\frac{1}{p!} {\deebar} {}^p \AntiHolCurr$, respectively.
Indeed, as in the proof of (CUR) above, the construction of the correlation
functions of these involves picking a residue which recovers the corresponding
derivative of the GFF correlation kernel.

\subsection{Renormalized limits of correlations of lattice local fields}
\label{subsec: renormalized correlations}
In the scaling limit statements, we let the lattice mesh $\meshsz>0$ tend to zero.
Continuum domains $\domain \subset \bC$ are approximated by discrete
domains $\domain_\meshsz \subset \meshsz \bZ^2$ on the $\meshsz$-mesh square grid.
We will show that expected values $\EX_{\ddomain}^{\DorN} [\,\cdots]$
of products of lattice local fields w.r.t. the DGFF measures of
Section~\ref{sub: DGFF def} converge, when suitably renormalized,
to the CFT correlation functions $\CFTcorr{\domain}{\DorN}{\cdots}$ given
by Proposition~\ref{prop: folklore}.

The correspondence between CFT fields and lattice local fields, used
in the expected values and CFT correlations, respectively, is provided by the
isomorphism
\begin{align*}
\dFields \cong \FullFock
\end{align*}
of Theorem~\ref{thm: main theorem about Fock space structure}.
We (ab)use this one-to-one correspondence to interpret a given lattice local field
$F = P + \Null \in \dFields$
as a Fock space field $F \in \FullFock$, or vice versa.
The former has well-defined correlations
$\EX_{\ddomain}^{\DorN} \big[ \ev_{\zs}^{\domain_\meshsz}(P) \cdots \big]$
independent of the chosen representative~$P$
in cases relevant to the scaling limit, and the latter has
meaningful CFT correlations $\CFTcorr{\domain}{\DorN}{F(z)\cdots}$.
To facilitate stating the scaling limit as locally uniform convergence on
the configuration spaces $\ConfigSp{n}{\domain}$, we extend the definition
of evaluation 
of field polynomials~$P$ to points~$z \in \domain$ in the continuum domain,
by setting~$\ev^{\domain_\meshsz}_{z}(P) \coloneqq \ev^{\domain_\meshsz}_{\zs}(P)$
with $\zs\in\domain_\meshsz$ a closest lattice point to~$z\in\domain$.
The lattice local fields need to be renormalized in the scaling limit according
to their~$\VirL{0} + \VirBarL{0}$ eigenvalue, so we phrase the statement in terms of
homogeneous fields ${F \in \dHomFi{\Delta}{\bar\Delta}}$ defined
in~\eqref{eq: homogeneous local fields}. Note that by virtue of the isomorphism
$\dFields \cong \FullFock$, any local field can
be written as a finite linear combination of such homogeneous local fields.

\begin{thm}\label{thm: main theorem about scaling limits}
Fix boundary conditions~$\DorN$ and an approximation~$(\ddomain)_{\meshsz > 0}$
of a domain~$\domain$
in the Carath\'eodory sense
by discrete domains~$\ddomain \subset \SqLatMesh$.
Let ${F_1 , \ldots , F_n \in \dFields \cong \FullFock}$ be local fields
satisfying $F_i\in\dHomFi{\Delta_i}{\bar\Delta_i}$, and
fix representative field polynomials~$P_i \in \dLocFi$ so that
$F_i = P_i + \Null$.
Then, as $\meshsz \to 0$, we have
\begin{align}\label{eq: main scaling limit result}
	\meshsz^{-\sum_j (\Delta_j+\bar\Delta_j)} \;
	\EX_{\domain_\meshsz}^{\DorN} 
	\Big[\,
	\ev_{z_1}^{\domain_\meshsz}(P_1)
	\,\cdots\,
	\ev_{z_n}^{\domain_\meshsz}(P_n)
	\,\Big]
	\; \longrightarrow \;
	\CFTcorrBig{\domain}{\DorN}{F_1(z_1) \cdots F_n(z_n)}
\end{align}
uniformly for $(z_1,\ldots,z_n)$ in compact subsets of~$\ConfigSp{n}{\domain}$.
\end{thm}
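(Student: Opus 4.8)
The plan is to prove \eqref{eq: main scaling limit result} by writing both sides as sums over pairings and matching them pairing by pairing. First I would reduce to the case in which each $F_j$ is one of the basis vectors $\dAntHolCurrMode{-k'_{m'_j}} \cdots \dAntHolCurrMode{-k'_1} \dHolCurrMode{-k_{m_j}} \cdots \dHolCurrMode{-k_1} \idField$. Each such vector lies in a single joint eigenspace $\dHomFi{\Delta_j}{\bar\Delta_j}$ with $\Delta_j = \sum_s k_{j;s}$ and $\bar\Delta_j = \sum_t k'_{j;t}$, so by the isomorphism of Theorem~\ref{thm: main theorem about Fock space structure} and the bigrading~\eqref{eq: Fock bigrading}, a general homogeneous $F_j \in \dHomFi{\Delta_j}{\bar\Delta_j}$ is a finite linear combination of such vectors sharing the same value of $\Delta_j + \bar\Delta_j$; since both sides of~\eqref{eq: main scaling limit result} are multilinear and the renormalization exponent is common to the whole eigenspace, this reduction is legitimate. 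Because the insertion points are distinct and the representatives have finite support, for all small enough $\meshsz$ the evaluated supports are pairwise separated by a macroscopic distance, so by the null-field condition~\eqref{eq: null field condition} the discrete expectation does not depend on the choice of representatives. I would therefore fix the normal-ordered representatives $P_j = \no{\prod_s \dRep{-k_{j;s}} \prod_t \dRepBar{-k'_{j;t}}}$ provided by Lemma~\ref{lem: basis elements as normal ordered products}, with $\dRep{-k}$ and $\dRepBar{-k}$ as in~\eqref{eq: representative for Jk}.

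Next I would make both sides explicit as pairing sums. On the continuum side, Proposition~\ref{prop: folklore} expresses $\CFTcorr{\domain}{\DorN}{F_1(z_1) \cdots F_n(z_n)}$ as the iterated contour integral~\eqref{eq: CFT correlation} of the current correlation kernel $\GFFkernelWO{\domain}{\DorN}$, and Wick's formula for the GFF expands this kernel into a sum over pairings of the current-integration variables, each pair contributing a two-point current kernel with its singular $\tfrac{1}{(\zeta-\zeta')^2}$ part together with the regular part built from $g^\DorN_\domain$. On the discrete side, unwinding the current-mode definition~\eqref{eq: discrete current modes} writes $\ev^{\domain_\meshsz}_{z_j}(P_j)$ as an iterated discrete contour integral $\dcoint{\gamma}$ of the discrete currents $\HolCurr, \AntiHolCurr$ weighted by the discrete monomials $\zu^{[-k]}$, centered at the lattice point nearest to $z_j$; applying Wick's formula~\eqref{eq: Wick formula for DGFF} to the discrete expectation produces the same combinatorial sum over pairings, a pair now contributing a discrete double derivative of $4\pi \Green^\DorN_{\ddomain}$. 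I would then match the two pairing sums term by term, splitting each pairing into cross-contractions between currents inserted at distinct points and self-contractions between two currents of the same $F_j$.

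For the cross-contractions the convergence is essentially a consequence of Section~\ref{subsec: scaling limit Green functions}. Using the discrete holomorphicity of current correlations (Lemma~\ref{lem: discrete holomorphicity properties of currents}) and of the monomials away from $\Poles{-k}$, I would deform each contour from its microscopic scale out to a circle of fixed radius about $z_j$, staying inside $\domain$ and away from the other insertion points. There the monomial asymptotics $\zu^{[-k]} = \zu^{-k} + \oo(|\zu|^{-k})$ of~\eqref{eq: discrete monomial asymptotics}, after the substitution $\zeta = z_j + \meshsz \zu$, supply a factor $\meshsz^{k}(\zeta - z_j)^{-k}$ per mode, and the discrete contour sum becomes a Riemann sum for the continuum integral $\oint (\zeta - z_j)^{-k} \tfrac{\ud\zeta}{2\pi\ii}$; simultaneously each contracted pair converges, after extraction of a factor $\meshsz^2$, to the continuum current kernel by~\eqref{eq: Dirichlet Greens double derivative scaling limit} and Lemma~\ref{lem: discrete Neumann Green function convergence}. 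Counting powers for the purely cross-contraction pairings, each of the $\sum_j (m_j + m'_j)$ contour measures carries $\meshsz^{-1}$, each contracted pair carries $\meshsz^{2}$, and each monomial carries $\meshsz^{k}$, so the contour and pair factors cancel and the monomials leave exactly the $\meshsz^{\sum_j(\Delta_j + \bar\Delta_j)}$ extracted on the left of~\eqref{eq: main scaling limit result}. Locally uniform convergence on compact subsets of $\ConfigSp{n}{\domain}$ follows from the corresponding uniformity of the Green's-function limits and from uniform control of the finitely many Riemann sums.

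The main obstacle is the self-contractions, and it is here that the domain-independent normal ordering of Section~\ref{sub: normal ordering} is indispensable. A self-contraction pairs two currents at microscopically close points, where the double derivative of $\Green^\DorN_{\ddomain}$ is of order $\OO(1)$ rather than $\OO(\meshsz^2)$ and would ruin the naive count. The resolution is that the normal-ordered representatives subtract exactly the full-plane contractions $4\pi \GreenGrad$ built from the infinite-lattice Green's function of~\eqref{eq: square grid Green function}, and the short-distance singularity of $\Green^\DorN_{\ddomain}$ agrees with that of $\Green$; this is the same matching that made normal ordering well-defined on local fields in Lemma~\ref{lem: normal ordering with null fields}. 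After this cancellation only the regular part of each self-contraction survives, and the microscopic contour integrals of the monomials against it extract, via the discrete residue formula of Proposition~\ref{prop: monomials}, the Taylor coefficients of the regular part at $z_j$; this reproduces exactly the continuum computation, where the singular self-contraction weighted by $(\zeta - z_j)^{-k_s}(\zeta - z_j)^{-k_{s'}}$ leaves an integrand with a pole of order $k_s + k_{s'} \ge 2$ and hence vanishing residue, so that only the derivative of $g^\DorN_\domain$ remains. Carrying out this cancellation uniformly, verifying that the surviving self-contractions contribute at the same order $\meshsz^{k_s + k_{s'}}$, and keeping track of all the combinatorial bookkeeping of pairings, nested contours, and monomial poles, is the technical heart of the proof; once it is done, summing the matched contributions over all pairings reconstructs~\eqref{eq: CFT correlation} in the limit and establishes~\eqref{eq: main scaling limit result}.
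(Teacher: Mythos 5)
Your overall architecture (reduce to Fock basis vectors, expand both sides by Wick's formula into pairing sums, match pairing by pairing) is the same as the paper's, and your treatment of cross-contractions --- deforming the discrete contours to macroscopic scale and invoking the Green's function convergence of Section~\ref{subsec: scaling limit Green functions} together with the monomial asymptotics~\eqref{eq: discrete monomial asymptotics} --- is essentially the paper's argument, with the correct power counting. The genuine divergence, and the gap, is in how you handle contractions between two current modes attached to the \emph{same} insertion point. You apply Wick's formula directly to the product of the microscopically supported normal-ordered representatives, so a self-contraction becomes a fixed finite sum over $\Poles{-k_s}_\medial\times\Poles{-k_t}_\medial$ of the quantity $4\pi(\pdee)(\pdee)\Green^{\DorN}_{\ddomain}-\wick{\c\field\c\field}$ at lattice-adjacent points. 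For the limit to match the continuum Taylor coefficient of $-4\pi\,\partial\partial g^{\DorN}_\domain$ at $(z_j,z_j)$, this term must be of order exactly $\meshsz^{k_s+k_t}$; a crude estimate only gives $\OO(\meshsz^2)$, and the extra powers can only come from cancellations in the discrete residue extraction acting on the \emph{regular part} of the discrete Green's function. Making that precise requires convergence of $\Green^{\DorN}_{\ddomain}(\zs+\meshsz\zu,\zs+\meshsz\zubis)-\Green(\zu-\zubis)$ together with its discrete derivatives up to order $k_s+k_t$ at microscopically separated arguments, for both Dirichlet and Neumann boundary conditions. None of this is covered by Lemmas~\ref{lem: discrete Dirichlet Green function convergence}--\ref{lem: discrete Neumann Green function convergence}, which only address double derivatives at macroscopically separated points, and you do not supply it; labelling it ``the technical heart'' does not discharge it. This is a real piece of refined discrete potential theory (nontrivial especially in the Neumann case), not bookkeeping.

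The paper avoids the issue entirely by reversing the order of operations: before applying Wick's formula, it uses discrete contour deformation inside the expectation (legitimate because the correlation of a discrete current with everything supported away from it is discrete holomorphic) to rewrite each $\ev^{\domain_\meshsz}_{z_i}(P_i)$ as nested discrete integrals along discretizations $\gamma^{(\meshsz)}_{i;s}$ of \emph{radially ordered concentric macroscopic square contours} $\Gamma_{i;s}$ around $z_i$, mirroring the radial ordering in~\eqref{eq: CFT correlation}. After this, every Wick pair --- including pairs of modes belonging to the same $F_i$ --- contracts two currents at macroscopically separated points on distinct concentric squares, so the already-established locally uniform convergence of Green's function double derivatives suffices, and the corresponding continuum self-contraction (including its $1/(\zeta-\xi)^2$ singular part, harmless on disjoint contours) is reproduced directly. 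If you want to salvage your route you must either prove the refined near-diagonal asymptotics described above, or adopt the paper's deform-first strategy, noting that the deformation is only available at the level of the full expectation and is therefore incompatible with performing the Wick expansion of the fixed microscopic representatives first.
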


The proof strategy should already naturally suggest itself.
If~$F_i$ are basis vectors of the Fock space of fields, then
the right hand side of~\eqref{eq: main scaling limit result} is explicitly
given by the multiple contour integral~\eqref{eq: CFT correlation}. The left hand side
expected value is given by analogous discrete contour integrations, with discrete
monomial weights.
The discrete integration contours can be chosen
to approximate rectangular continuum contours.
The DGFF expected values inside the discrete integrations involve only
discrete currents, so they are simply
written in terms of discrete double derivatives of discrete Green's functions,
which converge to their continuum counterparts, and in the limit they reproduce the integral
kernel~$\GFFkernelWO{\domain}{\DorN}$ of GFF currents.
The discrete monomial weights also converge to the monomial weights in~\eqref{eq: CFT correlation},
when an appropriate scaling by lattice mesh is taken into account.
Together, the scaling factors from the discrete monomials account for the
renomalization factor~$\meshsz^{-\sum_j (\Delta_j+\bar\Delta_j)}$.
Finally, the discrete integrations can be viewed as Riemann sum approximations, so the
locally uniform convergence of the integrands implies the locally uniform
convergence~\eqref{eq: main scaling limit result}. The proof below provides
concrete details following this natural strategy.

\begin{proof}
By linearity of expected values, it suffices to prove the statement for basis vectors of
the joint eigenspaces~$\dHomFi{\Delta_i}{\bar\Delta_i}$ of $\VirL{0}$ and $\VirBarL{0}$
with eigenvalues $\Delta_i$ and $\bar{\Delta}_i$.
Fields of the form
$F_i = \HeiJ{-k_{i;m_i}} \cdots \HeiJ{-k_{i;1}} \HeiJ{-k'_{i;m'_i}} \cdots \HeiJ{-k'_{i;1}} \FockId$,
with $\sum_{s = 1}^{m_i} k_{i;s} = \Delta_i$ and $\sum_{s=1}^{m'} k'_{i;s} = \bar{\Delta}_i$
form the convenient basis for us.
The essence of the proof becomes clear already from the case
$F_i = \HeiJ{-k_{i;m_i}} \cdots \HeiJ{-k_{i;1}} \FockId$, for $i=1,\ldots,n$ so
for notational simplicity let us assume this.

The CFT correlation on the right-hand side of the assertion is defined
by~\eqref{eq: CFT correlation}, and for
fields of the form $F_i = \HeiJ{-k_{i;m_i}} \cdots \HeiJ{-k_{i;1}} \FockId$ it
simplifies to
\begin{align*}
& \CFTcorrBig{\domain}{\DorN}{F_1(z_1) \cdots F_n(z_n)} \\
:= \; & 
    \oint \!\cdot\!\cdot\!\cdot\! \oint
    \GFFkernel{\domain}{\DorN}{\zeta_{1;1} , \ldots , \zeta_{n;m_n}}{} 
    \prod_{i=1}^n \prod_{s=1}^{m_i} (\zeta_{i;s}-z_i)^{-k_{i;s}} \frac{\ud \zeta_{i;s}}{2 \pi \ii} \\
= \; & \sum_{P} \prod_{\{(i, s), (j, t)\} \in P} 4\pi \oint\oint
        \frac{\partial^2 \Green^{\DorN}_\domain (\zeta_{i;s}, \zeta_{j;t})}{\partial{\zeta_{i;s}} \partial{\zeta_{j;t}}}
        (\zeta_{i;s}-z_{i})^{-k_{i;s}}
        (\zeta_{j;t}-z_{j})^{-k_{j;t}}
        \frac{\ud \zeta_{i;s}}{2 \pi} \frac{\ud \zeta_{j;t}}{2 \pi} ,
\end{align*}
where $P$ is summed over the pairings of the set
$\set{(i,s) \; \big| \; 1 \le i \le n, \, 1 \le s \le m_i}$
and each $\zeta_{i;s}$ is integrated in the positive direction around~$z_i$ along
radially ordered circular contours as described below~\eqref{eq: CFT correlation}.
We can, however, deform the
integrations to 
radially-ordered concentric square contours~$\Gamma_{i;s}$
contained in $\domain$  and centered at $z_i$~---
see Figure~\ref{fig: square contours}.
We fix a choice of such contours for the rest of the proof.
We will not write full details about the local uniformity of the convergence below,
but locally uniform error estimates are routine to obtain if one
ensures here that the chosen $\Gamma_{i;s}$ is only changed slightly for nearby values of~$z_i$.

To write down the expected value on the left-hand side of the assertion for a given
small mesh size~$\meshsz>0$, we choose the unit mesh
corner contours $\gamma_{i;s}^{(\meshsz)}$
that are closest from inside to
the blown-up and shifted continuum square contours $\meshsz^{-1}(\Gamma_{i;s}-z_i)$
such that each of their sides has an even number of corner steps~--- see
Figure~\ref{fig: square contours}.
Then, for small $\meshsz>0$, the expected value involving evaluations of
specific representatives~$P_i$ of $F_i$ can, by discrete contour deformation
(without changing the value of the expectation), be written in the form
\begin{align}\label{eq: main discrete formula}
	& \EX_{\domain_\meshsz}^{\DorN} 
	\Big[\,
	\ev_{z_1}^{\domain_\meshsz}(P_1)
	\,\cdots\,
	\ev_{z_n}^{\domain_\meshsz}(P_n)
	\,\Big]
	\\
	\nonumber
	= \; &	
    \dcoint{} \!\cdot\!\cdot\!\cdot\!\, \dcoint{}
    \EX_{\domain_\meshsz}^{\DorN} \left[\, \prod_{i=1}^n \prod_{s=1}^{m_i}
	\dHolCurr \big( \zs^\meshsz_i + \meshsz \zu^{i;s}_\medial \big) \,\right]
    \prod_{i=1}^n \prod_{s=1}^{m_i} \big( \zu^{i;s}_\diamond \big)^{[-k_{i;s}]}
    \frac{\dd{\zu^{i;s}}}{2\pi\ii} \\ \nonumber
= \; & \sum_{P} \prod_{\{(i, s), (j, t)\} \in P}
    4\pi \dcoint{\gamma_{i;s}^{(\meshsz)}} \dcoint{\gamma_{j;t}^{(\meshsz)}}
        (\pdee)_{\zu^{i;s}} (\pdee)_{\zu^{j;t}}
        \Green_{\domain_\meshsz}^{\DorN}
            \big( \zs_i^\meshsz + \meshsz\zu^{i;s}_\medial , \zs_j^\meshsz + \meshsz\zu^{j;t}_\medial \big)
    \\  \nonumber
& \mspace{180mu}        \big( \zu^{i;s}_\diamond \big)^{[-k_{i;s}]}
        \big( \zu^{j;t}_\diamond \big)^{[-k_{j;t}]}
        \frac{\dd{\zu^{i;s}}}{2\pi} \frac{\dd{\zu^{j;t}}}{2\pi} ,
\end{align}
where $\zs^\meshsz_i\in\domain_\meshsz$ is the closest vertex to $z_i\in\domain$.

Both the CFT correlation and the discrete expectation were written above as
sums of products of double-integral factors,
so it suffices to handle the convergence of each such factor.
More precisely, remembering the form of the renormalizing prefactor
\[{\meshsz^{-\sum_i \Delta_i} = \meshsz^{-\sum_i \sum_{s = 1}^{m_i} k_{i;s}}} , \]
it suffices to prove, with $k = k_{i;s}$ and $\ell = k_{j;t}$, that as $\meshsz \to 0$ we have
\begin{align}\label{eq: foo}
& \meshsz^{- k - \ell} \dcoint{\gamma_{i;s}^{(\meshsz)}} \dcoint{\gamma_{j;t}^{(\meshsz)}}
        (\pdee)_{\zu} (\pdee)_{\zubis}
        \Green_{\domain_\meshsz}^{\DorN}
            \big( \zs_i^\meshsz + \meshsz\zu_\medial , \zs_j^\meshsz + \meshsz\zubis_\medial \big)
        \, \zu_\diamond^{[-k]} \, \zubis_\diamond^{[-\ell]}
        \; {\dd{\zu}} \, {\dd{\zubis}} \\
\label{eq: bar}
\longrightarrow \; & \oint_{\Gamma_{i;s}} \oint_{\Gamma_{j;t}}
        \frac{\partial^2 \Green^{\DorN}_\domain (\zeta, \xi)}{\partial{\zeta} \partial{\xi}}
        (\zeta-z_{i})^{-k} (\xi-z_{j})^{-\ell} \; \ud \zeta \, \ud \xi \; ,
\end{align}
with error estimates of the form that ensure local uniformity of the convergence.

It is natural to split the discrete and continuous integrals
in~\eqref{eq: foo} and \eqref{eq: bar} over rectangle contours to the four
sides of the rectangles; with 16 terms in total from the double integrations.
Let us for concreteness focus on the term when both integration variables
are on the bottom side of their corresponding rectangles.

\begin{figure}[h!]
	\centering
	\begin{overpic}[scale=0.78]{./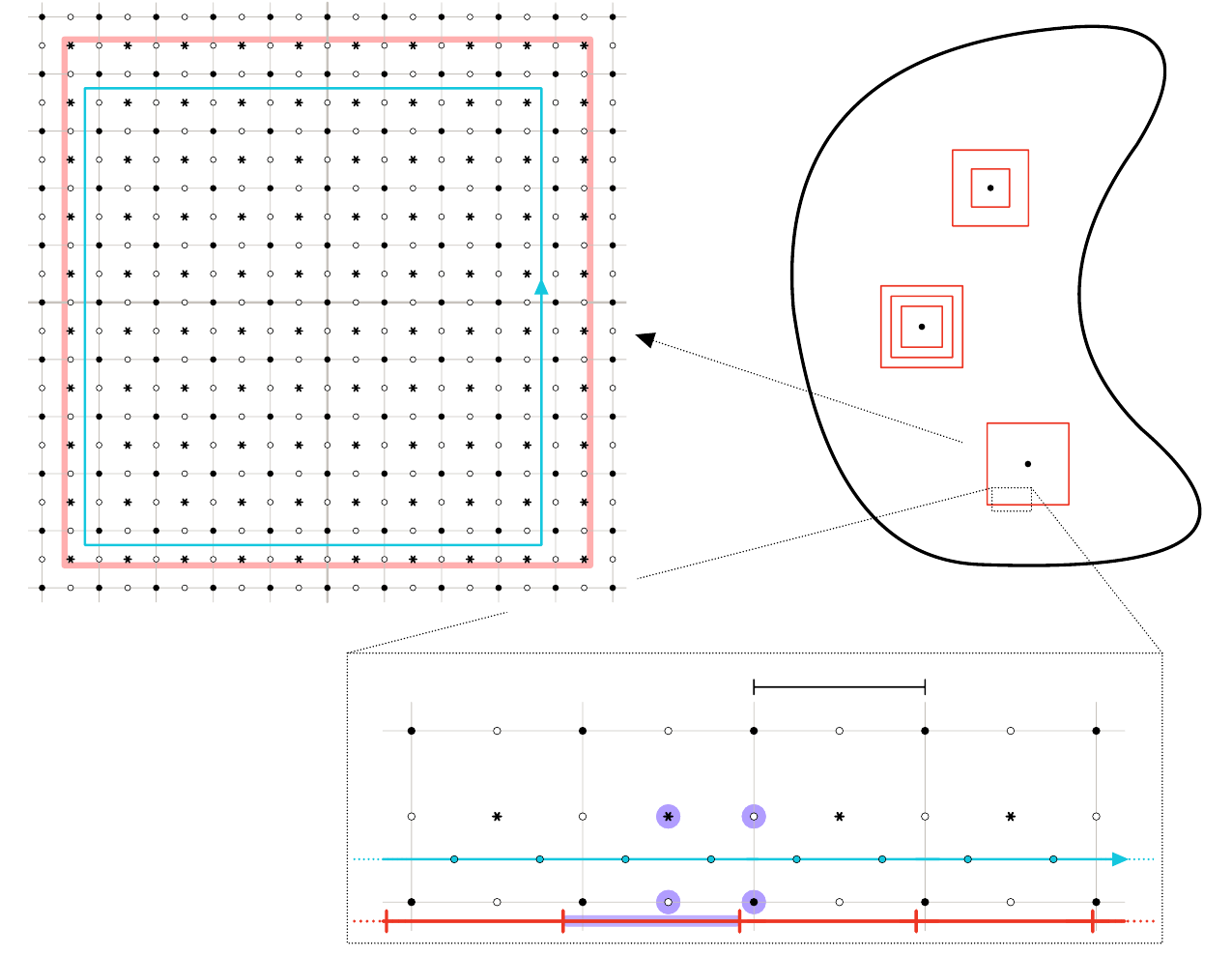}
		\put(65.8,23.6){
			\pgfsetfillopacity{0.9}
			\colorbox{white}{\centering
				\parbox{0pt}{\pgfsetfillopacity{1}}\color{black} \hspace{0pt}$\meshsz$}}
		\put(87.5,42){$\Gamma_{i;s}$}
		\put(81,41){$z_i$}
		\put(61.3,13.5){
			\pgfsetfillopacity{0.9}
			\colorbox{white}{\centering
				\parbox{0pt}{\pgfsetfillopacity{1}}\color{black} \hspace{0pt}$\zs_i^\meshsz+\meshsz\zu^{\meshsz}_{\vertical}$}}  
		\put(41.5,13.5){
			\pgfsetfillopacity{0.9}
			\colorbox{white}{\centering
				\parbox{0pt}{\pgfsetfillopacity{1}}\color{black} \hspace{0pt}$\zs_i^\meshsz+\meshsz\zu^{\meshsz}_{\dual}$}}
		\put(41.5,6.2){
			\pgfsetfillopacity{0.9}
			\colorbox{white}{\centering
				\parbox{0pt}{\pgfsetfillopacity{1}}\color{black} \hspace{0pt}$\zs_i^\meshsz+\meshsz\zu^{\meshsz}_{\horizontal}$}}
		\put(61.3,6.2){
			\pgfsetfillopacity{0.9}
			\colorbox{white}{\centering
				\parbox{0pt}{\pgfsetfillopacity{1}}\color{black} \hspace{0pt}$\zs_i^\meshsz+\meshsz\zu^{\meshsz}_{\primary}$}}  
		\put(52,76){$\Z^2$}
		\put(92,60){$\domain$}
		\put(9.5,77){
			\pgfsetfillopacity{0.9}
			\colorbox{white}{\centering
				\parbox{0pt}{\pgfsetfillopacity{1}}\color{black} \hspace{0pt}$\meshsz^{-1}\big(\Gamma_{i;s}-z_i\big)$}} 
		\put(10.5,67.1){
			\pgfsetfillopacity{0.9}
			\colorbox{white}{\centering
				\parbox{0pt}{\pgfsetfillopacity{1}}\color{black} \hspace{0pt}$\gamma^{(\meshsz)}_{i;s}$}} 
	\end{overpic}
	\centering
	\caption{The top right figure illustrates a domain~$\domain$ with marked points $z_1, \ldots, z_n$ and
	concentric radially-ordered square contours~$\Gamma_{i;s}$ around~$z_i$ for the integrals
	in~\eqref{eq: CFT correlation}. The top left figure illustrates a
	corner lattice discrete integration contour~$\gamma^{(\meshsz)}_{i;s}$ with even side
	lengths approximating a shifted and rescaled version of~$\Gamma_{i;s}$ from the inside.
	The bottom figure shows a part of the bottom side of the discrete and continuous
	integration contours,
	and illustrates the lattice points $\zu^{\meshsz}_{\dual}$,
	$\zu^{\meshsz}_{\horizontal}$,
	$\zu^{\meshsz}_{\primary}$, and $\zu^{\meshsz}_{\vertical}$
	involved in two consecutive steps of discrete integration.
	The piecewise constant function~$Q^{(\delta)}_{k,l}$ is defined in terms of these.
	}
	\label{fig: square contours}
\end{figure}

The remaining slight complication in interpreting~\eqref{eq: foo} as a Riemann sum
approximation is related to the combinatorics of the corner contour
integrations and the primal lattice discrete Wirtinger derivatives~$\pdee$
of \eqref{eq: primal lattice dee}.
Namely, every other step of a single corner contour integral
contributes with a horizontal discrete derivative of the Green's function,
whereas the subsequent step contributes with a vertical discrete derivative,
and one of these comes with a real and the other with an imaginary prefactor.
For this reason, in the discrete integrations with even numbers of steps, we group together the four
terms of two consecutive steps in each of the two variables, and
interpret the sum as $\meshsz^{-2}$ times a double integral of a function
$Q^{(\meshsz)}_{k,\ell}$, to be described below, which is constant on
$\meshsz \times \meshsz$ squares.
We then estimate the error
by triangle inequality for integrals
\begin{align*}
& \Big| \text{\eqref{eq: foo}} - \text{\eqref{eq: bar}} \Big| \\
\le \; & \iint
        \bigg| \meshsz^{- k - \ell - 2}  Q^{(\meshsz)}_{k,\ell}(\zeta, \xi)
        - \frac{\partial^2 \Green^{\DorN}_\domain (\zeta, \xi)}{\partial{\zeta} \partial{\xi}}
        (\zeta-z_{i})^{-k} (\xi-z_{j})^{-\ell}  \bigg| \; |\ud \zeta| \, |\ud \xi| + \OO(\meshsz) ,
\end{align*}
where the last error term is from the $\OO(\meshsz)$ discrepancy of lengths of the
rectangle side length and its discretization.
The terms combined in the piecewise constant function $Q^{(\meshsz)}_{k,\ell}$ are
the contributions of the following form. 
Among two consecutive corner lattice steps along the bottom side of the rectangle
contour~$\gamma^{(\meshsz)}_{i;s}$,
one separates a horizontal edge
$\zu^{\meshsz}_{\horizontal}\in\ZMedial$ from a dual vertex
$\zu^{\meshsz}_{\dual}\in\ZDual$, and the other separates 
a vertical edge $\zu^{\meshsz}_{\vertical} \in\ZMedial$ from a primal
vertex $\zu^{\meshsz}_{\primary}\in\ZPrimary$~--- see Figure~\ref{fig: square contours}.
Similarly two consecutive steps along the bottom side of the other
rectangle~$\gamma^{(\meshsz)}_{j;t}$,
involve $\zubis^{\meshsz}_{\horizontal}\in\ZMedial$ separated from
$\zubis^{\meshsz}_{\dual}\in\ZDual$, and
$\zubis^{\meshsz}_{\vertical} \in\ZMedial$ separated from
$\zubis^{\meshsz}_{\primary}\in\ZPrimary$. These lattice points
depend (in a piecewise constant way) on the variables $\zeta$
and $\xi$, respectively, but we omit writing the explicit
dependence~$\zu^{\meshsz}_{\horizontal} = \zu^{\meshsz}_{\horizontal}(\zeta), \ldots, \zubis^{\meshsz}_{\primary} = \zubis^{\meshsz}_{\primary}(\xi)$ below. The value
of~$Q^{(\meshsz)}_{k,\ell}$ is, then, the combined contribution
to the discrete integration from the two pairs of steps,
\begin{align}\label{eq: individual steps}
Q^{(\meshsz)}_{k,\ell}(\zeta,\xi) := \;
	\frac{1}{4}\,
	\bigg( 
	\big( \zu_\primary^{\meshsz} & \big)^{[-k]}
	\big( \zubis_\primary^{\meshsz} \big)^{[-\ell]}\,
	(\pdee)_{\zu}
	(\pdee)_{\zubis}
	\Green_{\domain_\meshsz}^\DorN
	\big( \zs_i^\meshsz +  \meshsz \zu^{\meshsz}_\vertical,\zs_j^\meshsz + \meshsz \zubis^{\meshsz}_\vertical \big)
	\\
	\nonumber
	+\;\; &
	\big( \zu_\dual^{\meshsz} \big)^{[-k]}
	\big( \zubis_\primary^{\meshsz} \big)^{[-\ell]}\,
	(\pdee)_{\zu}
	(\pdee)_{\zubis}
	\Green_{\domain_\meshsz}^\DorN
	\big( \zs_i^\meshsz +  \meshsz \zu^{\meshsz}_\horizontal,\zs_j^\meshsz + \meshsz \zubis^{\meshsz}_\vertical \big)\phantom{\bigg\vert}
	\\
	\nonumber
	+\;\; &
	\big( \zu_\primary^{\meshsz} \big)^{[-k]}
	\big( \zubis_\dual^{\meshsz} \big)^{[-\ell]}\,
	(\pdee)_{\zu}
	(\pdee)_{\zubis}
	\Green_{\domain_\meshsz}^\DorN
	\big( \zs_i^\meshsz +  \meshsz \zu^{\meshsz}_\vertical,\zs_j^\meshsz + \meshsz \zubis^{\meshsz}_\horizontal \big)\phantom{\bigg\vert}
	\\
	\nonumber
	+\;\; &
	\big( \zu_\dual^{\meshsz} \big)^{[-k]}
	\big( \zubis_\dual^{\meshsz} \big)^{[-\ell]}\,
	(\pdee)_{\zu}
	(\pdee)_{\zubis}
	\Green_{\domain_\meshsz}^\DorN
	\big( \zs_i^\meshsz +  \meshsz \zu^{\meshsz}_\horizontal,\zs_j^\meshsz + \meshsz \zubis^{\meshsz}_\horizontal \big)
	\bigg)
	\,.
\end{align}
The discrete monomial asymptotics~\eqref{eq: discrete monomial asymptotics} from
Proposition~\ref{prop: monomials} yield the following locally uniform
convergence as $\meshsz \to 0$
\begin{align*}
	\meshsz^{-k}\big(\zu^{\meshsz}_{\dual/\primary}\big)^{[-k]}
	=
	(\zeta-z_i)^{-k}
	+
	o(1)
	\mspace{20mu}\textnormal{ and }\mspace{20mu}
	\meshsz^{-\ell}\big(\zubis^{\meshsz}_{\dual/\primary}\big)^{[-\ell]}
	=
	(\xi-z_j)^{-\ell}
	+
	o(1)\, .
\end{align*}
By locally uniform convergence of derivatives of discrete Green's
functions,
Lemmas~\ref{lem: discrete Dirichlet Green function convergence}
and~\ref{lem: discrete Neumann Green function convergence},
we also have
\begin{align*}
	\meshsz^{-2}(\pdee)_{\zu}(\pdee)_{\zubis}
	\Green_{\domain_\meshsz}^\DorN
	\big( \zs_i^\meshsz + \meshsz\zu_\horizontal , \zs_j^\meshsz + \meshsz\zubis_\vertical \big)
	\, = & \ 
	\big(\dee^x_1\big)\big(-\ii\dee^y_2\big)
	\Green_{\domain}^\DorN
	\big( \zeta , \xi \big)
	+
	o(1)\,,\phantom{\Big\vert}
	\\
	\meshsz^{-2}(\pdee)_{\zu}(\pdee)_{\zubis}
	\Green_{\domain_\meshsz}^\DorN
	\big( \zs_i^\meshsz + \meshsz\zu_\vertical , \zs_j^\meshsz + \meshsz\zubis_\horizontal \big)
	\, = & \ 
	\big(-\ii\dee^y_1\big)\big(\dee^x_2\big)
	\Green_{\domain}^\DorN
	\big( \zeta , \xi \big)
	+
	o(1)\,,\phantom{\Big\vert}
	\\
	\meshsz^{-2}(\pdee)_{\zu}(\pdee)_{\zubis}
	\Green_{\domain_\meshsz}^\DorN
	\big( \zs_i^\meshsz + \meshsz\zu_\horizontal , \zs_j^\meshsz + \meshsz\zubis_\horizontal \big)
	\, = & \ 
	\big(\dee^x_1\big)\big(\dee^x_2\big)
	\Green_{\domain}^\DorN
	\big( \zeta , \xi \big)
	+
	o(1)\,,\phantom{\Big\vert}
	\\
	\meshsz^{-2}(\pdee)_{\zu}(\pdee)_{\zubis}
	\Green_{\domain_\meshsz}^\DorN
	\big( \zs_i^\meshsz + \meshsz\zu_\vertical , \zs_j^\meshsz + \meshsz\zubis_\vertical \big)
	\, = & \ 
	\big(-\ii\dee^y_1\big)\big(-\ii\dee^y_2\big)
	\Green_{\domain}^\DorN
	\big( \zeta , \xi \big)
	+
	o(1)\,.\phantom{\Big\vert}
\end{align*}
Regrouping these derivatives into Wirtinger derivatives, we obtain
the desired estimate for the error integrand,
\begin{align*}
\bigg| \meshsz^{- k - \ell - 2}  Q^{(\meshsz)}_{k,\ell}(\zeta, \xi)
        - \frac{\partial^2 \Green^{\DorN}_\domain (\zeta, \xi)}{\partial{\zeta} \partial{\xi}}
        (\zeta-z_{i})^{-k} (\xi-z_{j})^{-\ell}  \bigg| \; = \; \oo(1)
\end{align*}
when $\meshsz \to 0$. This error integrand is integrated over a fixed finite
pair of rectangle sides, so the total error is
$\big| \text{\eqref{eq: foo}} - \text{\eqref{eq: bar}} \big| = \oo(1)$.
This completes the proof.
\end{proof}

\newpage

\titleformat{\section}
{\normalfont\Large\bfseries}{\thesection}{0pt}{}

\end{document}